\declaretheorem[name=Theorem,numberwithin=section]{thm}
\declaretheorem[name=Theorem,numberlike=thm]{theorem}
\declaretheorem[name=Lemma,numberlike=thm]{lemma}
\declaretheorem[name=Claim,numberlike=thm]{claim}
\declaretheorem[name=Claim,numberlike=thm]{clm}
\declaretheorem[name=Corollary,numberlike=thm]{cor}
\declaretheorem[name=Definition,numberlike=thm,style=definition]{defn}
\newcommand{\multiline}[1]{%
  \begin{tabularx}{\dimexpr\linewidth-\ALG@thistlm}[t]{@{}X@{}}
    #1
  \end{tabularx}
}
\DeclarePairedDelimiter{\ceil}{\lceil}{\rceil}
\DeclarePairedDelimiter{\floor}{\lfloor}{\rfloor}
\DeclareMathOperator*{\argmax}{arg\,max}
\DeclareMathOperator*{\argmin}{arg\,min}
\global\long\def\norm#1{\big\|#1\big\|}
 \global\long\def\normFull#1{\left\Vert #1\right\Vert }
\newcommand{\runningtimedelta}{O\left(\btt^2\boo\log^{O(1)}(\frac{\boo\btt}{\delta}) +\btt^3\log^{O(1)}(\frac{\boo\btt}{\delta}) \right)}
\newcommand{\boo}{b_1}
\newcommand{\btt}{b_2}
\newcommand{\otbtt}{[1,\btt]}
\newcommand{\ztbtt}{[0,\btt]}
\newcommand{\otboo}{[1,\boo]}
\newcommand{\bttpo}{(\btt+1)}
\newcommand{\Tn}{\Phih^{n}}
\newcommand{\F}{\phi}
\newcommand{\Fd}{\phid}
\newcommand{\1}{\overrightarrow{1}}
\newcommand{\R}{\mathbb{R}}
\newcommand{\bbP}{\mathbb{P}}
\newcommand{\E}{\mathbb{E}}
\newcommand{\Z}{\mathbb{Z}_{+}}
\newcommand{\simplex}{\Delta^{\bX}}
\newcommand{\psimplex}{\Delta_{\mathrm{pseudo}}^{\bX}}
\newcommand{\dsimplex}{\Delta_{\mathrm{discrete}}^{\bX}}
\newcommand{\nd}{n'}
\newcommand{\lt}{n^2}
\newcommand{\pp}{\pvec}
\newcommand{\nn}{\mvec}
\newcommand{\probpml}{\bbP}
\newcommand{\probsdpml}{\bw_{\mathrm{sdpml}}}
\newcommand{\sd}{c}
\newcommand{\phid}{\phi '}
\newcommand{\phih}{\lphi}
\newcommand{\phihd}{\phih '}
\newcommand{\Phihd}{\Phih '}
\newcommand{\Phih}{\Psi}
\newcommand{\Phid}{\Phi '}
\newcommand{\bw}{\textbf{w}}
\newcommand{\bD}{\textbf{D}}
\newcommand{\bd}{\textbf{d}}
\newcommand{\bA}{\textbf{A}}
\newcommand{\bh}{\textbf{h}}
\newcommand{\bg}{\textbf{g}}
\newcommand{\ba}{\textbf{a}}
\newcommand{\bk}{X}
\newcommand{\bp}{\textbf{p}}
\newcommand{\bq}{\textbf{q}}
\newcommand{\bff}{\textbf{f}}
\newcommand{\ff}{\textbf{f}}
\newcommand{\bX}{\mathcal{D}}
\newcommand{\bK}{\textbf{K}}
\newcommand{\bH}{\textbf{H}}
\newcommand{\bP}{\textbf{P}}
\newcommand{\bZ}{\textbf{Z}}
\newcommand{\bY}{\textbf{Y}}
\newcommand{\epssec}{\epsilon_{1}}
\newcommand{\epsone}{\epsilon}
\newcommand{\epstwo}{\epsilon}
\newcommand{\bS}{\textbf{S}}
\newcommand{\bM}{\textbf{M}}
\newcommand{\expo}[1]{\exp \left(#1 \right)}
\newcommand{\exps}[1]{\exp (#1 )}
\newcommand{\boundx}{\frac{\|\vd\|_2^2}{4\lambda^2} + \frac{n^9}{\reg^3}}
\newcommand{\boundalpha}{\max(\sqrt{\frac{\|d\|^2+\sd^2}{\lambda}+\delta\lambda+|\sd|} ~, ~\frac{1}{4\lambda^2 n^2}(\|d\|^2+\sd^2)+\lambda n^2+|\sd|+\frac{\delta}{n^2}~,~1)}
\newcommand{\vC}{\textbf{C}}
\newcommand{\opt}{\textbf{OPT}}
\newcommand{\reg}{\gamma}
\newcommand{\fnH}{\textbf{H}}
\newcommand{\fnHa}{\textbf{H}}
\newcommand{\fnf}{\hat{\textbf{f}}}
\newcommand{\fnh}{\textbf{h}}
\newcommand{\fng}{\textbf{g}}
\newcommand{\vx}{X}
\newcommand{\vvx}{x}
\newcommand{\vxe}{X^{(\epsilon)}}
\newcommand{\ste}{\textbf{t}^{(\epsilon)}}
\newcommand{\vxa}{X^{(\alpha)}}
\newcommand{\vya}{Y^{(\alpha)}}
\newcommand{\vyk}{Y^{(K)}}
\newcommand{\vy}{Y}
\newcommand{\vxopt}{X^{*}}
\newcommand{\vxo}{X^{(1)}}
\newcommand{\vxt}{X^{(2)}}
\newcommand{\vxstar}{X^{*}}
\newcommand{\ao}{\alpha^{(1)}}
\newcommand{\astar}{\alpha^{*}}
\newcommand{\at}{\alpha^{(2)}}
\newcommand{\st}{\textbf{t}}
\newcommand{\ststar}{\textbf{t}^{*}}
\newcommand{\sto}{\textbf{t}^{(1)}}
\newcommand{\stopt}{\textbf{t}^{*}}
\newcommand{\stt}{\textbf{t}^{(2)}}
\newcommand{\sta}{\textbf{t}^{\left(\alpha\right)}}
\newcommand{\vd}{D}
\newcommand{\setK}{\textbf{K}}
\newcommand{\imply}{\Rightarrow}
\newcommand{\runningtime}{O\left(\btt^2\boo\log^{O(1)}(\boo\btt) +\btt^3\log^{O(1)}(\boo\btt) \right)}
\newcommand{\epsrunningtime}{O\left(\phi_{size}+\frac{1}{\epst^2  \epso} \log^{O(1)}(\frac{1}{\epso \epst})+\frac{1}{\epst^3} \log^{O(1)}(\frac{1}{\epso \epst}) \right)}
\newcommand{\epsnrunningtime}{O\left(\sqrt{n}+\frac{1}{\epst^2  \epso} \log^{O(1)}(\frac{1}{\epso \epst}) +\frac{1}{\epst^3} \log^{O(1)}(\frac{1}{\epso \epst}) \right)}
\newcommand{\otilde}{\widetilde{O}}
\newcommand{\timeoracle}{O(\boo\cdot \btt \log(\frac{\bxmax}{\epsilon_{2}})\log(\bamaxo))}
\newcommand{\dxij}{\Delta_{\vx_{ij}}}
\newcommand{\dxt}{\Delta_{\st}}
\newcommand{\sa}{\textbf{a}}
\newcommand{\sK}{\textbf{K}}
\newcommand{\sKai}{\textbf{K}^{\alpha,i}}
\newcommand{\sbb}{\textbf{b}}
\newcommand{\lb}{\ell}
\newcommand{\ub}{u}
\newcommand{\fnga}{\textbf{h}^{\left(\alpha \right)}}
\newcommand{\fngo}{\textbf{h}^{\left(1 \right)}}
\newcommand{\fngt}{\textbf{h}^{\left(2 \right)}}
\newcommand{\Cphid}{\binom{\nd}{\phid}}
\newcommand{\cphi}{C_{\phi}}
\newcommand{\cphid}{C_{\phid}}
\newcommand{\pmin}{2n^2}
\newcommand{\pk}{\bq_{\bk}}
\newcommand{\lphi}{\psi}
\newcommand{\levelq}{\ell^{\bq}}
\newcommand{\conset}{\textbf{K}}
\newcommand{\ma}{\textbf{A}}
\newcommand{\mb}{\textbf{B}}
\newcommand{\mzero}{\textbf{0}}
\newcommand{\vb}{\textbf{b}}
\newcommand{\pvec}{\zeta}
\newcommand{\vvec}{\textbf{v}}
\newcommand{\logpvec}{\log(\pvec)}
\newcommand{\mvec}{\mathrm{m}}
\newcommand{\epso}{\epsilon_1}
\newcommand{\epst}{\epsilon_2}
\newcommand{\discloss}{\epso n +7\epst n \log n}
\newcommand{\odiscloss}{\epso n +\epst n \log n}
\newcommand{\grouploss}{ \frac{\log^3 n}{\epso \epst} }
\newcommand{\extrel}{\expo{-O\left( \frac{\log^3 n}{\epso \epst}  \right)}}
\newcommand{\eqdef}{\stackrel{\mathrm{def}}{=}}
\newcommand{\defeq}{\eqdef}
\newcommand{\tr}{\mathrm{tr}}
\newcommand{\onevec}{\mathrm{1}}
\newcommand{\slo}{\sum_{i}p_i}
\newcommand{\slt}{\sum_{i}{p_{i}^2}}
\newcommand{\vecc}{c}
\newcommand{\vecb}{b}
\newcommand{\vecy}{y}
\newcommand{\vecx}{x}
\newcommand{\vecz}{z}
\newcommand{\del}{\delta}
\newcommand{\dimn}{\btt}
\newcommand{\timeora}{OO^{(2)}_{\lambda.\del}(\conset)}
\newcommand{\bamax}{\textbf{B}_{\alpha,\epsone}}
\newcommand{\bamaxo}{\textbf{B}_{\alpha,\delta}}
\newcommand{\bxmax}{\textbf{B}_{\vx}}
\newcommand{\poly}{\mathrm{poly}}
\title{Efficient Profile Maximum Likelihood for \\
Universal Symmetric Property Estimation}
\author{
  Moses Charikar\\
  Stanford University\\
  \texttt{moses@cs.stanford.edu}
  \thanks{Supported by NSF grant CCF-1617577, a Simons Investigator Award and a Google Faculty Research Award.}
   \\
   \and
Kirankumar Shiragur\\
  Stanford University\\
  \texttt{shiragur@stanford.edu} \\
     \and
Aaron Sidford\\
  Stanford University\\
  \texttt{sidford@stanford.edu}
  \thanks{Supported by NSF CAREER Award CCF-1844855.} \\
}
\begin{document}

\maketitle

\begin{abstract}
Estimating symmetric properties of a distribution, e.g. support size, coverage, entropy, distance to uniformity, are among the most fundamental problems in algorithmic statistics. While each of these properties have been studied extensively and separate optimal estimators are known for each, in striking recent work, Acharya et al. \cite{ADOS16} showed that there is a single estimator that is competitive for all symmetric properties. 
This work proved that computing the distribution that approximately maximizes \emph{profile likelihood (PML)}, i.e. the probability of observed frequency of frequencies, and returning the value of the property on this distribution is sample competitive with respect to a broad class of estimators of symmetric properties. Further, they showed that even computing an approximation of the PML suffices to achieve such a universal plug-in estimator. 
Unfortunately, prior to this work there was no known polynomial time algorithm to compute an approximate PML and it was open to obtain a polynomial time universal plug-in estimator through the use of approximate PML. 

In this paper we provide a algorithm (in number of samples) that, given $n$ samples from a distribution, computes an approximate PML distribution up to a multiplicative error of $\exps{n^{2/3} \poly \log(n)}$ in time nearly linear in $n$. Generalizing work of \cite{ADOS16} on the utility of approximate PML we show that our algorithm provides a nearly linear time universal plug-in estimator for all symmetric functions up to accuracy $\epsilon = \Omega(n^{-0.166})$. Further, we show how to extend our work to provide efficient polynomial-time algorithms for computing a $d$-dimensional generalization of PML (for constant $d$) that allows for universal plug-in estimation of symmetric relationships between distributions.
\end{abstract}

\newpage

\section{Introduction}
Estimating a symmetric property of a distribution given a small number of samples is a fundamental problem in algorithmic statistics.
Formally, a property is symmetric if it is invariant to permutation of the labels, i.e. it is a function only of the multiset of probabilities and does not depend on the symbol labels. 
For many natural properties, including support size, coverage, distance from uniform and entropy, there has been extensive work that has led to designing efficient estimators both with respect to computational time and sample complexity \cite{HJW17, HJM17, AOST14, RVZ17, ZVVKCSLSDM16, WY16a, RRSS07, WY15, OSW16, VV11a, WY16, JVHW15, JHW16, VV11b}. In many cases these estimators are tailored to the particular property of interest.
This paper is motivated by the goals of unifying the development of efficient estimators of symmetric properties of distributions and designing a single efficient universal algorithm for estimating arbitrary symmetric properties of distributions. 

Our approach stems from the observation that a sufficient statistic for the problem of estimating a symmetric property from a sequence of samples is the profile of the sequence, i.e. the multiset of the frequencies (i.e multiplicities) of symbols in the sequence, e.g. the profile of $ababc$ is $\{2,2,1\}$.
Profiles are also called histograms of histograms, histogram order statistics, or fingerprints.
Our approach to obtaining a universal estimator is based on the elegant problem of \emph{profile maximum likelihood (PML)} introduced by Orlitsky et al.~\cite{OSSVZ04}: Given a sequence of $n$ samples, find the distribution that maximizes the probability of the observed profile. This problem has been studied in several papers since, applying heuristic approaches such as Bethe approximation \cite{Von12, Von14}, the EM algorithm \cite{OSSVZ04}, and some algebraic approaches \cite{ADMOP10} to calculate the PML. Recently Pavlichin, Jiao and Weissman~\cite{PJW17} introduced an efficient dynamic programming heuristic for PML that can be computed in linear time. While there are no approximation guarantees for the solution they produce, their approach was the initial impetus for our work.

A recent paper of Acharya et al.~\cite{ADOS16} showed that a distribution that optimizes the PML objective can be used to obtain a plug-in estimator for various symmetric properties of distributions. In fact it suffices to compute a distribution that approximates the PML objective to within a factor $\exps{n^{1-\delta}}$ for constant $\delta > 0$ where $n$ is the size of the sample. Unfortunately, no polynomial time computable PML estimator with such an approximation guarantee was known previously. In this paper, we provide an estimator with an approximation factor of $\exps{n^{2/3} \poly\log(n)}$, leading to a universal estimator for a host of symmetric properties. Moreover, our estimator is computable in time nearly linear in $n$.
Our techniques extend to computing a $d$-dimensional generalization of PML, where we have access to samples from multiple distributions on a common domain.
This allows for universal plug-in estimation of various symmetric relationships between multiple distributions.

\subsection{Overview of approach}
The bulk of our work is dedicated to find a distribution that approximately maximizes the PML objective within an $\exps{n^{1-\delta}}$ factor for a constant $\delta>0$.
We call such a  distribution an \emph{approximate PML distribution}.
Given a sequence $y^n$ and its corresponding profile $\phi$, the PML optimization problem is a maximization problem over all distributions $\bp \in \simplex$.
The objective function of the PML optimization problem is the probability of observing profile $\phi$ with respect to a distribution $\bp \in \simplex$, which in turn is equal to the summation of probabilities of sequences (with respect to $\bp$) that have $\phi$ as their corresponding profile. The distribution that maximizes this objective is called a \emph{profile maximum likelihood (PML) distribution}. (See \Cref{sec:prelimsmain} for formal definitions.)

To efficiently compute an approximate PML distribution, we first restrict ourselves to maximizing the PML objective for a discretized version of the profile over a class of distributions we call \emph{discrete pseudo-distributions} (See \Cref{sec:structuremain}). 
Here, the probability values of the distribution are restricted to belong to a small set $\bP$ of permissible values (See \Cref{sec:prob_discrete})), and the frequencies in the profile are similarly restricted to belong to a small set $\bM$ (See \Cref{sub:mult_discrete}). We call the resulting maximizing distribution, a \emph{discrete PML (DPML) distribution} and the corresponding optimization problem as \emph{DPML optimization} (See \Cref{subsec:DPMLmain}). 

There are two main features of the DPML optimization problem. Firstly, the maximizing distribution DPML is an approximate PML distribution with an approximation guarantee that we can control (as a function of the sizes of $\bP$ and $\bM$). Secondly, the DPML optimization problem has a simpler equivalent formulation, in which sequences that have the same associated probability value with respect to a discrete pseudo-distribution are combined together into sub groups and the whole summation is written as a summation over a small number of subgroups. The number of these subgroups is a function of the sizes of $\bP$ and $\bM$ which we control (See \Cref{subsec:DPMLmain} for both these results).

As an illustration of DPML,  
consider the profile $\{2,1,1\}$ and a probability distribution on 5 elements: two with a value of $\frac{1}{4}$ and three with a value of $\frac{1}{6}$. 
Note that the probability values come from the set $\bP=\{1/4,1/6\}$.
One way to get the profile $\{2,1,1\}$ is to have an element of probability $1/4$ appear twice and two elements of probability $1/6$ appear once. There are ${2 \choose 1} {3 \choose 2}$ choices of such elements and for each such choice, $\frac{4!}{2!\cdot 1!\cdot 1!}$ sequences of length 4 with these elements.
The probability of any such sequence is the same: $\left( \frac{1}{4} \right)^2 \left(\frac{1}{6}\right)\left(\frac{1}{6}\right)$.
We consider the set of all these sequences as one subgroup.
Different subgroups are identified by specifying, for each permissible probability value, the frequencies with which elements of that probability value are seen in the sample.
The DPML objective then sums up the contributions of each such subgroup.

Reformulating the problem in terms of summation over a small number of subgroups is crucial to our approach.  It allows us to focus on the subgroup that gives the largest contribution to the objective instead of summing over all the subgroups. We call the optimization problem that optimizes the contribution of a single subgroup (instead of summing over all terms) as \emph{single discrete PML (SDPML)}. 
We show that the SDPML optimization problem has a convex relaxation and can be solved efficiently. Since there were a small number of these subgroups in the summation, the optimizing discrete pseudo-distribution that optimizes over just one subgroup has objective function value that is lower by at most the number of subgroups.
Hence the maximizing discrete pseudo-distribution for this new objective function approximately optimizes the earlier objectives (PML and DPML) with bounded loss (See \Cref{subsec:DPMLmain}).

Ultimately, our algorithm first solves this convex relaxation to the SDPML optimization problem to obtain a fractional solution (in some representation space of these discrete pseudo-distributions) (See \Cref{subsec:convexrelaxmain}). Then we apply a rounding algorithm that finds a distribution which maintains the approximation guarantee need to obtain an approximate PML distribution (See \Cref{subsec:algmain}). 

\subsection{Related work}
As discussed in the introduction, PML was introduced by Orlitsky et al.~\cite{OSSVZ04} in 2004. Many heuristic approaches such as Bethe approximation \cite{Von12, Von14}, the EM algorithm \cite{OSSVZ04}, algebraic approaches \cite{ADMOP10} and a dynamic programming approach~\cite{PJW17} have been proposed to calculate the approximate PML.

The connection between PML and universal estimators was first studied in~\cite{ADOS16}. There have been several other approaches for designing universal estimators for symmetric properties. Valiant and Valiant~\cite{VV11a} adopted and rigorously analyzed a linear programming based approach for universal estimators proposed by \cite{ET76} and showed that it is sample complexity optimal in the constant error regime for estimating certain symmetric properties (namely, entropy, support size, support coverage, and distance to uniformity). Recent work of Han, Jiao and Weissman~\cite{HJW18} applied a local moment matching based approach in designing efficient universal symmetric property estimators for a single distribution. \cite{HJW18} achieves the optimal sample complexity in all error regimes for estimating the power sum function, support and entropy.

Estimating symmetric properties of a distribution is a rich field and extensive work has been dedicated to studying their optimal sample complexity for estimating each of these properties. Optimal sample complexities for estimating many symmetric properties were resolved in the past few years, including all the properties studied here: support~\cite{VV11a, WY15}, support coverage~\cite{OSW16,ZVVKCSLSDM16}, entropy~\cite{VV11a, WY16} and distance from uniform~\cite{VV11b,JHW16}.

Symmetric properties for distribution pairs have been studied in the literature as well. For instance, optimal sample complexity for estimation of KL divergence between two distributions were given by \cite{BZLV16, HJW16}.

\subsection{Paper organization}

The rest of the paper is structured as follows. In \Cref{sec:prelimsmain}, we provide definitions and notations. In \Cref{sec:results}, we state our main results of the paper. Our main contribution is to provide an algorithm that efficiently compute an approximate PML and in \Cref{sec:structuremain} we prove this result. In this section, we also present an almost linear time algorithm based on cutting plane methods for solving our convex relaxation to SDPML; however we defer all of its analysis to the appendix. Finally, in \Cref{app:universal}, we provide the connection between approximate PML distribution and a universal estimator for symmetric property estimation. The proof presented in \cite{ADOS16} showed this connection for an $\exp(\sqrt{n})$-approximate PML estimator and we show it for an $\exp(n^{2/3})$-approximate PML estimator. However it is easy to see the proof presented in \cite{ADOS16} works for any $\exp(n^{1-\delta})$-approximate PML estimator for constant $\delta >0$. In \Cref{app:multipmlstart} we show that the techniques presented here generalize to a higher dimensional version of PML. 
\newcommand{\setd}{\textbf{D}}
\newcommand{\eled}{d}

\section{Preliminaries}\label{sec:prelimsmain}
Let $[a,b]$ and $[a,b]_{\R}$ denote the interval of integers and reals $\geq a$ and $\leq b$ respectively and let $[a] \defeq [1,a]$.  Let $\simplex \subset [0,1]_{\R}^{\bX}$ be the set of all distributions supported on domain $\bX$ and let $N$ be the size of the domain. We use the word distribution to refer to discrete distributions. Throughout this paper we assume that we receive a sequence of $n$ independent samples from an underlying distribution $\bp \in \simplex$. Let $\bX ^n$ be the set of all length $n$ sequences and $y^n \in \bX^n$ be one such sequence with $y^n_{i}$ denoting its $i$th element.
The probability of observing sequence $y^n$ is:
$$\bbP(\bp,y^n) \defeq \prod_{x \in \bX}\bp_x^{\bff(y^n,x)}$$
where $\bff(y^n,x)= |\{i\in [n] ~ | ~ y^n_i = x\}|$ is the frequency (multiplicity) of symbol $x$ in sequence $y^n$ and $\bp_x$ is the probability of domain element $x\in \bX$.

We extend and use the definition for $\bbP(\vvec,y^n)$ to any vector $\vvec \in \R^{\bX}$ by letting $\bbP(\vvec,y^n) \defeq \prod_{x \in \bX}\vvec_x^{\bff(y^n,x)}$. Further, for functions of probability distributions $\bp$, we assume those expressions are also defined for any vector $\vvec \in \R^{\bX}$ just by replacing $\bp_{x}$ by $\vvec_{x}$ everywhere.

For any given sequence one could define its \emph{type} (histogram) and \emph{profile} (histogram of a histogram or fingerprint) that are sufficient statistics for symmetric property estimation. The histogram of histogram perspective comes from viewing type as a histogram and profile as histogram of type.
\begin{defn}[Type]
A \emph{type} $\phih=\Phih(y^n) \in \Z^{\bX}$ of a sequence $y^n \in \bX^{n}$ is the vector of frequencies $\phih_x\defeq \bff(y^n,x)$ of domain elements in $y^n$.
We call $n$ the length of type $\phih$ and use $\Tn$ to represent the set of all types of length $n$.
\end{defn}
 To simplify notation we use just $\phih$ to denote type and the associated sequence will be clear from context. For a distribution $\bp \in \simplex$, the probability of a type $\phih \in \Tn$ is:  
\[
\bbP(\bp,\phih) \defeq \sum_{\{y^n \in \bX^n~|~ \Phih (y^n)=\phih \}}\bbP(\bp,y^n)=\binom{n}{\phih}\prod_{x \in \bX}\bp_x^{\phih_x},
\]
where $\binom{n}{\phih}\defeq\frac{n!}{\prod_{x \in \bX}\phih_x!}$ and $0!\defeq1$.
%
\begin{defn}[Profile]
For any sequence $y^n \in \bX^n$, let $\setd=\{ \bff(y^n,x) \}_{x \in \bX}$ be the set of all its distinct frequencies and $\eled_1,\eled_2,\dots, \eled_{|\setd|}$ be elements of the set $\setd$. The \emph{profile} of a sequence $y^n \in \bX^{n}$ denoted $\phi=\Phi(y^n) \in \Z^{|\setd|}$ 
is $\phi\defeq(\F_j)_{j=1\dots |\setd|} $ where $\F_j=\F_j(y^n)\defeq|\{x\in \bX ~|~\bff(y^n,x)=\eled_{j} \}|$ is the number of domain elements with frequency $\eled_{j}$ in $y^{n}$. We call $n$ the length of profile $\F$ and as a function of profile $\phi$, $n = \sum_j \eled_{j} \cdot \F_j$. We let $\Phi^n$ denote the set of all profiles of length $n$.
\footnote{The number of unseen domain elements is not part of the profile, because the domain size is unknown.}
\end{defn}
For any distribution $\bp \in \simplex$, the probability of a profile $\phi \in \Phi^n$ is defined as:
\begin{equation}\label{eqpml1main}
\probpml(\bp,\phi)\defeq\sum_{\{y^n \in \bX^n~|~ \Phi (y^n)=\phi \}} \bbP(\bp,y^n)\\
\end{equation}
One can also define the profile of a type $\phih$. We overload notation and use $\phi=\Phi(\phih)$ to denote the profile associated with type $\phih$
and $\F_j=\F_j(\phih)\defeq|\{x\in \bX ~|~\phih_x = \eled_{j} \}|$. 

For future use, we also write the probability of a profile $\phi \in \Phi^{n}$ in terms of its types. All types $\lphi$ with $\Phi(\lphi)=\phi$ have the same $\binom{n}{\lphi}$ value and we use notation $\cphi$ to represent this quantity. The explicit expression for $\cphi$ is written below:
\begin{equation}\label{cphi}
\cphi \defeq \frac{n!}{\prod_{j=1}^{|\setd|}(\eled_{j}!)^{\phi_{i}}}, 
\mbox{\ \  where\ } n=\sum_j \eled_{j} \cdot \phi_i
\end{equation}
We next derive an expression for the probability of a profile in terms of its types:
\begin{equation}\label{eqlabeled}
\begin{split}
\probpml(\bp,\phi)&=\sum_{\{y^n \in \bX^n~|~ \Phi (y^n)=\phi \}} \bbP(\bp,y^n)
=\sum_{\{\phih\in \Tn ~|~ \Phi(\phih)=\phi\}}\bbP(\bp,\phih)
=\cphi \sum_{\{\phih \in \Tn ~|~ \Phi(\phih)=\phi\}} \prod_{x \in \bX}\bp_x^{\phih_x}
\end{split}
\end{equation}

The distribution which maximizes the probability of a profile $\phi$ is called a profile maximum likelihood distribution.
\begin{defn}[Profile maximum likelihood] For any profile $\phi \in \Phi^{n}$, a \emph{profile maximum likelihood} (PML) distribution $\bp_{pml,\phi} \in \simplex$ is:
 $$\bp_{pml,\phi} \in \argmax_{\bp \in \simplex} \probpml(\bp,\phi)$$
 and $\probpml(\bp_{pml,\phi},\phi)$ is the maximum PML objective value.
\end{defn}

The central goal of this paper is to define efficient algorithms for computing approximate PML distributions defined as follows.
 
\begin{defn}[Approximate PML]
 For any profile $\phi \in \Phi^{n}$, a distribution $\bp^{\beta}_{pml,\phi} \in \simplex$ is a $\beta$-\emph{approximate PML} distribution if $$\probpml(\bp^{\beta}_{pml,\phi},\phi)\geq \beta \cdot \probpml(\bp_{pml,\phi},\phi)$$
\end{defn}
Throughout this paper we use the phrase \emph{approximate PML} to denote a $\beta$-approximate PML distribution for some non-trivial $\beta$.

\subsection{Representation of a profile}\label{subsec:representation}
For any profile $\phi \in \Phi^n$, we represent $\phi$ using the set of $(frequency,count)$ tuples, where a tuple $(a,b)$ denotes that $b$ number of domain elements have frequency $a$ in the sequence. We use $\phi_{size}$ to denote the size of profile $\phi$ in this representation. It is not hard to see that for any length $n$ profile $\phi_{size} \in O(\sqrt{n})$. Further it takes $O(n)$ time to write the profile in this representation.

For all our algorithmic results, when we are given a profile, we assume the above representation. We will explicitly state running times when we start with a sequence instead of a profile.

\newcommand{\bpalg}{\textbf{p}_{\mathrm{approx}}}
\newcommand{\condo}{\textbf{C1}}
\newcommand{\condt}{\textbf{C2}}
\newcommand{\condth}{\textbf{C3}}
\newcommand{\n}{\textbf{n}}
\newcommand{\nkk}{\n(k)}
\newcommand{\epsok}{\epsilon(k)}
\newcommand{\epstk}{\gamma(k)}
\newcommand{\odisclossd}{}
\newcommand{\md}{d}
\newcommand{\grouplossd}{\prod_{k=1}^{d}\frac{\log^3 \nkk}{\epsok \epstk}}
\newcommand{\disclossd}{\sum_{k=1}^{\md}\epsok \nkk+\left(\sum_{k=1}^{\md}\epstk \nkk \right)\log \left( \sum_{k=1}^{\md} \epstk \nkk \right)}
\newcommand{\otdisclossd}{\sum_{k=1}^{\md}\epsok \nkk+\sum_{k=1}^{\md}\epstk \nkk }
\newcommand{\grouplossepsd}{\prod_{k=1}^{\md}\frac{\log^3 \nkk}{\epsok \epstk}}
\newcommand{\otgrouplossepsd}{\prod_{k=1}^{\md}\frac{1}{\epsok \epstk}}
\newcommand{\runningtimed}{\prod_{k=1}^{\md}\frac{1}{\epsok (\epstk)^2} + \prod_{k=1}^{\md}\frac{1}{(\epstk)^3}}
\newcommand{\epsnrunningtimed}{run}
\newcommand{\writeprofile}{\sum_{k=1}^{\md}\nkk}
\section{Results}\label{sec:results}
Here we state the main results of this paper. Our first main theorem provides an algorithm to efficiently compute an approximate PML distribution. Our approximation guarantee in this result is something that depends on the running time itself and we can achieve sub-linear running times (in size of the sample) if we allow for weaker approximation guarantees.
\begin{restatable}[Efficient and approximate PML distribution]{theorem}{thmapproxpml}
\label{thm:approxpml}
Given a profile $\phi \in \Phi^n$, let $\bp_{pml}$ be its corresponding PML distribution. There is an algorithm that for any $\frac{1}{\poly(n)}<\epso,\epst<1$, computes an $\exps{-O(\odiscloss+\grouploss)}$-approximate PML distribution $\bp_{approx}$, i.e.
$$\probpml(\bp_{approx},\phi) \geq \expo{-O\left(\odiscloss+\grouploss \right)}\probpml(\bp_{pml,\phi},\phi)$$
in $\epsrunningtime$ time. Using $\phi_{size} \in O(\sqrt{n})$ this running time simplifies to $\epsnrunningtime$.
\end{restatable}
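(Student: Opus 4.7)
The plan is to realize the three-stage strategy outlined in \Cref{sec:structuremain}: discretize, reduce to a single subgroup via a convex relaxation, and round. I sketch each stage, pin down where the error terms $\odiscloss$ and $\grouploss$ come from, and identify what I expect to be the main technical hurdle.

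\textbf{Stage 1: Discretization.} First I would define a geometric grid $\bP\subset[0,1]$ of permissible probability values with ratio $(1+\epso)$ and a geometric grid $\bM\subset[0,n]$ of permissible frequencies with ratio $(1+\epst)$, and prove that $|\bP|,|\bM|=O(\epso^{-1}\log n)$ and $O(\epst^{-1}\log n)$ respectively. I would then define a \emph{discrete pseudo-distribution} by requiring each coordinate to lie in $\bP$ (with total mass allowed to differ slightly from $1$), and define the DPML objective by rounding all frequencies in $\phi$ down to the nearest element of $\bM$. Standard calculation comparing $\prod \bp_x^{\phih_x}$ to its rounded version shows each rounding step multiplies the objective by at most $\exps{-O(\epso n)}$ (from probability rounding) and $\exps{-O(\epst n\log n)}$ (from frequency rounding), accounting for the $\odiscloss$ term.

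\textbf{Stage 2: Subgroup reformulation and SDPML.} Using \eqref{eqlabeled}, once probabilities come from $\bP$ and frequencies from $\bM$, sequences naturally group by an integer matrix $X=(X_{i,j})$ where $X_{i,j}$ counts domain elements of probability level $i\in\bP$ assigned frequency level $j\in\bM$. The DPML objective becomes a sum over such matrices $X$ compatible with the discretized profile of a product $\prod_{i,j}(\text{combinatorial factor})\cdot p_i^{j\cdot X_{i,j}}$. Because the number of admissible matrices is at most $\exps{O(|\bP|\cdot|\bM|\log n)}=\exps{O(\grouploss)}$, the single largest subgroup already attains the DPML value up to a multiplicative $\exps{-O(\grouploss)}$ factor. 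Maximizing over a single subgroup is the SDPML problem; I would take logarithms to obtain the concave program whose existence is asserted in \Cref{subsec:convexrelaxmain}, relax the integer matrix $X$ to $\R_{\ge0}^{|\bP|\times|\bM|}$, and solve it by the cutting-plane method advertised in \Cref{sec:structuremain}. The running time of this solver is polynomial in $|\bP|\cdot|\bM|$ and hence of the form $\epst^{-2}\epso^{-1}\log^{O(1)}$ plus an $\epst^{-3}\log^{O(1)}$ term, which dominates the time after reading the profile.

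\textbf{Stage 3: Rounding and assembling bounds.} The optimum of the relaxation is a fractional matrix $X^{*}$ in the probability-by-frequency table. I would round $X^{*}$ coordinatewise to integers $\widehat{X}$ while preserving the row and column marginals up to $O(|\bP|+|\bM|)$ perturbation, then interpret $\widehat{X}$ as a multiset of probability values forming a pseudo-distribution and finally normalize to get $\bpalg\in\simplex$. A careful bookkeeping shows that the multiplicative loss from integer rounding, marginal correction, and renormalization is again absorbable into $\exps{-O(\odiscloss+\grouploss)}$ because changing a single entry of $X$ by one adjusts the SDPML objective by at most a factor in $\bP\cup\bM$. Chaining the three inequalities (PML $\to$ DPML $\to$ SDPML $\to$ rounded) gives exactly the claimed guarantee. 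The final running-time expression follows by adding the $\phi_{size}=O(\sqrt{n})$ preprocessing cost to the SDPML solver cost.

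\textbf{Main obstacle.} The step I expect to be delicate is the rounding in Stage 3 together with its interaction with the SDPML objective: one must simultaneously (i) keep the marginal constraint $\sum_j j\cdot X_{i,j}$ close enough to the discretized profile for it to correspond to a valid distribution, (ii) control how much each coordinate change perturbs the log-objective, and (iii) guarantee that the resulting vector sums to $1$ after the final normalization step without triggering an additional $\exps{\Omega(n)}$ loss. Getting all three to fit within the $\exps{O(\odiscloss+\grouploss)}$ budget requires the geometric grid spacing and the rounding tolerance to be tuned jointly, and is where I would spend the bulk of the technical work.
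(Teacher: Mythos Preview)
Your high-level plan matches the paper's approach: discretize, pass to SDPML via the subgroup reformulation, convexify, solve, round. Stages 1 and 2 are essentially what the paper does, and your identification of the $\odiscloss$ and $\grouploss$ error terms is correct.

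The genuine gap is in Stage 3. Your rounding argument rests on the claim that ``changing a single entry of $X$ by one adjusts the SDPML objective by at most a factor in $\bP\cup\bM$.'' This is false for the probability term. The SDPML objective contains the factor $\prod_i \pp_i^{(\bk\mvec)_i} = \prod_i \pp_i^{\sum_j \nn_j \bk_{ij}}$, so flipping $\bk_{ij}$ by $1$ multiplies this by $\pp_i^{\pm \nn_j}$. Since $\nn_j$ can be as large as $n$ and $\pp_i$ as small as $1/(2n^2)$, a single entry change can cost a factor of $(2n^2)^n = \exps{\Theta(n\log n)}$, not something polynomial in $n$. Coordinatewise rounding with marginal repair, summed over $O(|\bP|\cdot|\bM|)$ entries, therefore cannot be contained in the $\exps{O(\grouploss)}$ budget by the per-entry argument you describe.

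The paper's fix is qualitatively different from what you propose: after flooring each $\bk'_{ij}$, for each column $j$ it \emph{creates a new probability level} $\pp_{\boo+j}$ equal to the weighted arithmetic mean $\bigl(\sum_i (\bk'_{ij}-\lfloor\bk'_{ij}\rfloor)\pp_i\bigr)\big/\sum_i(\bk'_{ij}-\lfloor\bk'_{ij}\rfloor)$ and puts the entire (integer) column deficit there. The point is that by AM--GM,
\[
\prod_i \pp_i^{(\bk'_{ij}-\lfloor\bk'_{ij}\rfloor)} \;\le\; \pp_{\boo+j}^{\,\sum_i(\bk'_{ij}-\lfloor\bk'_{ij}\rfloor)},
\]
so after raising both sides to the $\nn_j$ and taking the product over $j$, the extended probability term (on the enlarged level set) is at least the fractional one. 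In other words, the probability term \emph{does not decrease} under this rounding; only the multinomial counting term needs to be controlled, and that part is where your per-entry argument actually works and gives the $\exps{-O(\boo\btt\log n)}$ loss. Without the new-level-plus-AM--GM device, I do not see how your Stage 3 goes through.
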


In the above result, the best approximation is achieved for $\epso,\epst=n^{-1/3}$ and we get an $\exps{-O(n^{2/3}\log^3n)}$-approximate PML distribution in nearly linear time (in the number of samples). This result is summarized below.
\begin{cor}[Nearly linear time $\exps{-O(n^{2/3}\log^3n)}$- approximate PML distribution]\label{cor:twothirdpml}
Let $y^n \in \bX^{n}$ be a sequence and $\phi=\Phi(y^n)$ be its corresponding profile. There is an algorithm that computes an $\exps{-O(n^{2/3}\log^3n)}$-approximate PML distribution in time $\otilde(n)$.
\end{cor}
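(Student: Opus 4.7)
The plan is to derive the corollary as a direct specialization of \Cref{thm:approxpml} with the parameters $\epso = \epst = n^{-1/3}$, after first converting the input sequence into its profile. Since we are given the sequence $y^n$ rather than its profile $\phi$, I would begin by constructing $\phi = \Phi(y^n)$ in the $(\mathrm{frequency}, \mathrm{count})$ representation of \Cref{subsec:representation}; this is a standard counting pass taking $O(n)$ time, which fits inside the $\otilde(n)$ budget. Note also that $\epso,\epst = n^{-1/3}$ lies comfortably in the admissible range $1/\poly(n) < \epso,\epst < 1$ required by \Cref{thm:approxpml}.

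Next I would substitute these parameter choices into the approximation-factor exponent $O(\odiscloss + \grouploss) = O(\epso n + \epst n \log n + \log^3 n/(\epso \epst))$. Each of the three summands becomes
\[
\epso n = n^{2/3}, \qquad \epst n \log n = n^{2/3}\log n, \qquad \frac{\log^3 n}{\epso \epst} = n^{2/3}\log^3 n,
\]
so the total exponent collapses to $O(n^{2/3}\log^3 n)$, yielding exactly the claimed $\exps{-O(n^{2/3}\log^3 n)}$-approximate PML distribution.

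For the running time, I would plug the same parameters into $\epsnrunningtime$. Here $1/(\epst^2 \epso) = n^{2/3} \cdot n^{1/3} = n$ and $1/\epst^3 = n$, while $\log^{O(1)}(1/(\epso\epst)) = \log^{O(1)}(n^{2/3}) = \log^{O(1)}(n)$. Thus the second and third terms are each $n \cdot \log^{O(1)}(n)$, the first term $\sqrt{n}$ is dominated, and the overall cost is $\otilde(n)$. Adding the $O(n)$ preprocessing to obtain $\phi$ leaves the bound unchanged.

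I do not expect any genuine obstacle: both the approximation exponent and the running time are monotone, simply-parameterized functions of $(\epso,\epst)$, and the choice $\epso = \epst = n^{-1/3}$ is the unique balancing that equalizes the three terms of $O(\odiscloss + \grouploss)$, which is already identified in the paragraph preceding the corollary. The only minor care needed is to confirm that the representational assumption $\phi_{size} \in O(\sqrt n)$ invoked in \Cref{thm:approxpml} applies, which it does since $\phi$ is itself a length-$n$ profile, and to handle the (essentially cosmetic) difference between starting from a sequence versus starting from a profile by absorbing the $O(n)$ conversion cost into $\otilde(n)$.
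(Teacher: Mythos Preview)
Your proposal is correct and follows essentially the same approach as the paper: the paper too obtains the corollary by substituting $\epso=\epst=n^{-1/3}$ into \Cref{thm:approxpml}, checking that each term of the exponent is $O(n^{2/3}\log^3 n)$ and each term of the running time is $\otilde(n)$, and noting that the $O(n)$ cost of converting the sequence to its profile representation is absorbed in $\otilde(n)$.
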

This results constitutes the first  polynomial time algorithm to compute an $\exps{-n^{1-\delta}}$ -approximate PML for any constant $\delta>0$.
In the corollary above we start with a sequence instead of a profile; in this case our algorithm still runs in $\otilde(n)$ because we only need $O(n)$ time to compute the profile of a sequence in the representation discussed in \Cref{subsec:representation}.

Our next result relates an approximate PML distribution to a universal plug-in estimator that is sample complexity optimal for support size, coverage, entropy and distance from uniform. In \Cref{app:universal}, we prove this result. However it is easy to see the proof presented in \Cref{app:universal} proves a more general result that approximate PML is sample complexity optimal for a broad class of symmetric properties $\bff(\cdot)$ satisfying certain conditions. One such set of conditions (informally) is the existence of an estimator $\widehat{\bff}$ for $\bff(\cdot)$ with following properties: $(1)$ the estimator $\widehat{\bff}$ is sample complexity optimal, $(2)$ the estimator $\widehat{\bff}$ has low bias, and $(3)$ the output of the estimator is not changed by much when we change any individual sample. This result was already shown in \cite{ADOS16} for an $\exps{-n^{0.5}}$-approximate PML distribution. Using the same proof with slight modifications we get the following result.

\begin{restatable}[Universal estimator using approximate PML]{theorem}{theoremADOSuniversal}
\label{theoremADOSuniversal}
Let $n$ be the optimal sample complexity of estimating entropy, support, support coverage and distance to uniformity and $c$ be a large positive constant. Let $\epsilon \geq \frac{3c}{n^{1/6-\eta}}$ for any constant $\eta>0$, then for any $\beta>\exps{-O(n^{2/3}\log^3n)}$, the $\beta$-approximate PML estimator estimates entropy, support, support coverage, and distance to uniformity to an accuracy of $4\epsilon$ with probability at least $1-\exps{-n^{2/3}}$.
\end{restatable}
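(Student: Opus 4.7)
The plan is to reproduce the argument of \cite{ADOS16}, adjusting parameters to tolerate the weaker approximation factor $\beta\geq\exps{-O(n^{2/3}\log^3 n)}$ guaranteed by \Cref{thm:approxpml} in place of their $\exps{-\sqrt{n}}$. Two ingredients are needed: (i) a sub-Gaussian tail bound for a sample-optimal estimator $\widehat{\bff}$ of each of the four target properties, showing that with $n$ samples its deviation probability is at most $\exps{-\Omega(n\epsilon^2/\log^2 n)}$, and (ii) the change-of-measure inequality $\bbP(\bp,\phi)\leq \frac{1}{\beta}\bbP(\bp^\beta_{pml,\phi},\phi)$, which follows immediately from the definition of a $\beta$-approximate PML distribution together with the fact that $\bp_{pml,\phi}$ maximizes $\bbP(\cdot,\phi)$ over $\simplex$.

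First I would fix a property $\bff$ from the list and invoke its known sample-optimal estimator $\widehat{\bff}$. Each such estimator can be shown to be a bounded-differences function of the $n$ i.i.d.\ samples with Lipschitz constant $O(\log n/n)$ in the relevant range and to have bias $O(\epsilon)$ at sample size $n$; these facts are extracted in \cite{ADOS16} from the original constructions. McDiarmid's inequality together with the bias bound then yields, for every $\bp\in\simplex$,
\[
\Pr_{Y^n\sim \bp}\bigl[\,|\widehat{\bff}(\Phi(Y^n))-\bff(\bp)|>2\epsilon\,\bigr]\;\leq\;\exps{-\Omega(n\epsilon^2/\log^2 n)}.
\]
This is the only place where the specific properties enter, and I would simply cite the concentration analyses of \cite{ADOS16} rather than reprove them.

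Next I would execute the core \cite{ADOS16} reduction. Let $\bp^\beta_\phi$ abbreviate the approximate PML distribution our algorithm returns on input $\phi$, and define the bad set $B_{\bp'}\defeq\{\phi\in\Phi^n:|\widehat{\bff}(\phi)-\bff(\bp')|>2\epsilon\}$. If $|\bff(\bp^\beta_\phi)-\bff(\bp)|>4\epsilon$ then the triangle inequality forces $\phi\in B_\bp\cup B_{\bp^\beta_\phi}$. The first event has probability at most $\exps{-\Omega(n\epsilon^2/\log^2 n)}$ by the concentration bound. For the second, I would union bound over all distributions $\bp'$ in the image of $\phi\mapsto \bp^\beta_\phi$ and, for each fixed $\bp'$, apply the change-of-measure inequality:
\[
\Pr_{\bp}[\phi\in B_{\bp'}]=\sum_{\phi\in B_{\bp'}}\bbP(\bp,\phi)\;\leq\;\tfrac{1}{\beta}\sum_{\phi\in B_{\bp'}}\bbP(\bp',\phi)\;\leq\;\tfrac{1}{\beta}\exps{-\Omega(n\epsilon^2/\log^2 n)}.
\]
Since $|\Phi^n|\leq\exps{O(\sqrt{n})}$ by the Hardy--Ramanujan partition bound, the total failure probability from the second event is at most $\tfrac{|\Phi^n|}{\beta}\exps{-\Omega(n\epsilon^2/\log^2 n)}$.

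Finally I would plug in $\beta\geq\exps{-O(n^{2/3}\log^3 n)}$ and $\epsilon\geq 3c/n^{1/6-\eta}$ to obtain
\[
\tfrac{|\Phi^n|}{\beta}\exps{-\Omega(n\epsilon^2/\log^2 n)}\;\leq\;\exps{O(\sqrt{n})+O(n^{2/3}\log^3 n)-\Omega(c^2\, n^{2/3+2\eta}/\log^2 n)},
\]
which is at most $\exps{-n^{2/3}}$ once $c$ is a sufficiently large absolute constant, since for any fixed $\eta>0$ the term $n^{2/3+2\eta}/\log^2 n$ eventually dominates $n^{2/3}\log^3 n$. The argument is essentially a reproduction of the \cite{ADOS16} sketch with updated numerics; the main obstacle is ensuring that the tail bound in (i) is sharp enough to beat the enlarged $1/\beta$ factor, which is precisely the reason the hypothesis $\epsilon=\Omega(n^{-1/6+\eta})$ appears in the statement in place of the weaker condition used in \cite{ADOS16}.
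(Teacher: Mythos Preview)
Your proposal is correct and follows essentially the same approach as the paper, which invokes \Cref{thmbeta} (Theorem~4 of \cite{ADOS16}) as a black box rather than re-deriving the change-of-measure/union-bound step. The one quantitative discrepancy is the bounded-differences constant: the paper's cited \Cref{lemchange} (Lemma~2 of \cite{ADOS16}) gives $c\cdot n^{\alpha}/n$ for an arbitrary fixed constant $\alpha>0$ rather than your $O(\log n/n)$, and the paper then sets $\alpha=\eta$ to obtain the tail exponent $\Omega(\epsilon^{2} n^{1-2\eta})$ in place of your $\Omega(n\epsilon^{2}/\log^{2} n)$, leading to the same conclusion.
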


Setting $\eta=1/6-0.166$ in the theorem above and combined with \Cref{cor:twothirdpml}, we obtain the following  result.

\begin{theorem}[Efficient universal estimator using approximate PML]
Let $n$ be the optimal sample complexity of estimating entropy, support, support coverage and distance to uniformity. If $\epsilon \geq \frac{3c}{n^{0.166}}$, then there exists a PML based universal plug-in estimator that runs in time $\otilde(n)$ and is sample complexity optimal for estimating entropy, support, support coverage and distance to uniformity to accuracy $4\epsilon$.
\end{theorem}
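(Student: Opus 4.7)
The plan is straightforward: this theorem is a direct consequence of combining \Cref{cor:twothirdpml} (an efficient approximate PML algorithm) with \Cref{theoremADOSuniversal} (the fact that any sufficiently accurate approximate PML yields a universal plug-in estimator), with only a matter of parameter bookkeeping to carry out.

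First I would instantiate the free parameter in \Cref{theoremADOSuniversal} by choosing $\eta := 1/6 - 0.166$. This choice is admissible because $1/6 = 0.1\overline{6} > 0.166$, so $\eta > 0$ as required. Under this choice, the hypothesis $\epsilon \geq 3c/n^{1/6-\eta}$ of \Cref{theoremADOSuniversal} becomes exactly $\epsilon \geq 3c/n^{0.166}$, which is precisely the hypothesis of our target theorem. Consequently, for any $\beta > \exps{-O(n^{2/3}\log^3 n)}$ and any symmetric property $\bff$ from the list (entropy, support, support coverage, distance to uniformity), the $\beta$-approximate PML plug-in estimator achieves accuracy $4\epsilon$ with probability $1 - \exps{-n^{2/3}}$, using a number of samples that is optimal (up to constants) for each of these properties.

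Next I would supply such a $\beta$-approximate PML distribution efficiently. By \Cref{cor:twothirdpml}, starting from the sample sequence $y^n$ (or its profile, computable in $O(n)$ time), we can compute in time $\otilde(n)$ a distribution $\bp_{approx}$ with $\probpml(\bp_{approx},\phi) \geq \exps{-O(n^{2/3}\log^3 n)} \cdot \probpml(\bp_{pml,\phi},\phi)$. Hence $\bp_{approx}$ is a $\beta$-approximate PML distribution with $\beta > \exps{-O(n^{2/3}\log^3 n)}$, satisfying the approximation hypothesis needed by \Cref{theoremADOSuniversal}.

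Finally I would define the universal plug-in estimator to be: compute $\bp_{approx}$ via \Cref{cor:twothirdpml}, then return $\bff(\bp_{approx})$ for the desired symmetric property $\bff$. The approximation and sample-complexity guarantees transfer immediately from \Cref{theoremADOSuniversal}, and the total running time is dominated by the $\otilde(n)$ cost of computing $\bp_{approx}$, since evaluating any of the four listed symmetric properties on an explicit distribution takes time linear in its support size, which is itself $\otilde(n)$. There is no genuine obstacle here beyond matching the exponents $\eta$ and $\delta$ correctly; the real work is encapsulated in the two prior results being invoked.
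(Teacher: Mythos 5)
Your proposal matches the paper's own argument: the paper obtains this theorem precisely by setting $\eta = 1/6 - 0.166$ in \Cref{theoremADOSuniversal} so that the hypothesis becomes $\epsilon \geq 3c/n^{0.166}$, and then combining it with \Cref{cor:twothirdpml} to supply the required $\beta$-approximate PML distribution in $\otilde(n)$ time. The parameter bookkeeping and the final plug-in construction are exactly as you describe.
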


Our techniques for PML are general and can be extended to a generalization of PML to multiple dimensions (multidimensional PML). We provide a polynomial time (in number of samples) algorithm to compute approximate PML in multiple dimensions when the number of dimensions is constant. This allows for universal plug-in estimation of various symmetric relationships between multiple distributions. We next formally define and state our main results for multidimensional PML. 

\subsection{Results for multidimensional PML}\label{subsec:higherPML}
\newcommand{\cc}{\textbf{c}}

\newcommand{\freq}{\textbf{e}}
\newcommand{\freqk}{\textbf{e}(k)}
\newcommand{\freqj}{\textbf{e}_{j}}
\newcommand{\Freq}{\textbf{f}}
\newcommand{\Freqn}{\textbf{F}^{\n}}
\newcommand{\Freqnd}{\textbf{F}^{\n(\md)}}
\newcommand{\otmd}{[1,\md]}

\newcommand{\simplexd}{\Delta^{\bX,\md}}

\newcommand{\bpd}{\textbf{p}}
\newcommand{\bqd}{\textbf{q}}

\newcommand{\yn}{\textbf{y}^{\n}}
\newcommand{\xn}{\textbf{x}^{\n}}
\newcommand{\ynk}{\textbf{y}^{\nkk}}
\newcommand{\yni}{\textbf{y}^{\nii}}
\newcommand{\ynj}{\textbf{y}^{\njj}}
\newcommand{\yno}{\textbf{y}^{\n(1)}}
\newcommand{\ynt}{\textbf{y}^{\n(2)}}
\newcommand{\ynd}{\textbf{y}^{\n(\md)}}
\newcommand{\bffi}{\textbf{f}_{i}}
\newcommand{\bffk}{\textbf{f}_{k}}
\newcommand{\vveco}{\vvec{(1)}}
\newcommand{\vvecd}{\vvec{(\md)}}
\newcommand{\vveck}{\vvec{(k)}}
\newcommand{\vveci}{\vvec{(i)}}
\newcommand{\vvecj}{\vvec{(j)}}
\newcommand{\vsimplexd}{\Delta_{\mathrm{vector}}^{\bX,\md}}

First we describe the multidimensional setting, then we define multidimensional PML, and then state our main results.
Throughout this paper we assume the number of dimensions is constant.

\paragraph{Multidimensional setup:} For each $k \in \otmd$, we receive a sequence $\ynk$ that consists of $\nkk$ independent samples drawn from an underlying distribution $\bpd(k)$ supported on same domain $\bX$ ($N\defeq |\bX|$), further $\ynk$ is independent of other sequences $\textbf{y}^{\n(k')}$ for $k' \in \otmd$ and $k' \neq k$. We call $\yn=(\yno,\dots \ynd)$ a $\md$-sequence and $\n=(\n(1),\dots, \n(\md))$ its $\md$-length. Let $\bX^{\n}$ be the set of all $\md$-sequences of $\md$-length equal to $\n$. We use $\bpd_{x}(k)$ to denote the probability of domain element $x$ in distribution $\bpd(k)$. We also refer to $\bpd=(\bpd(1),\dots,\bpd(\md))$ as a $\md$-distribution and let $\simplexd$ denote the set of all $\md$-distributions.

For any $\md$-distribution $\bp \in \simplexd$, the probability of a $\md$-sequence $\yn$ is defined as: $$\bbP(\bpd,\yn) \defeq \prod_{k=1}^{\md}\prod_{x \in \bX}(\bpd_{x}(k))^{\bff(\ynk,x)}~.$$ Recall that for each $k \in \otmd$, $\bff(\ynk,x)$ is the frequency of domain element $x$ in sequence $\ynk$. For any $\md$-sequence $\yn$, we call $\bff(\yn,x)=(\bff(\yno,x),\dots, \bff(\ynd,x))$ the $\md$-frequency of domain element $x$ in $\yn$. Let $\Freqn$ be the set of all $\md$-frequencies generated by different domain elements in all possible $\md$-sequences in $\bX^{\n}$ and we let $\freqj \in \Freqn$ denote its $j$th element. We next define multidimensional generalizations of profile, PML, and approximate PML.

\paragraph{$\md$-Profile:} For any $\md$-sequence $\yn \in \bX^{\n}$, we call $\phi=\Phi(\yn)$ a $\md$-profile if $\phi=(\F_j)_{j=1\dots |\Freqn|}$ and $\F_j=|\{x\in \bX ~|~\Freq(\yn,x)=\freqj \}|$ is the number of domain elements with $\md$-frequency $\freqj$. We call $\n$ the $\md$-length of $\F$ and use $\Phi^n$ to denote the set of all $\md$-profiles of $\md$-length equal to $\n$. For any $\md$-distribution $\bp \in \simplexd$, the probability of a $\md$-profile $\phi \in \Phi^n$ is defined as:
\begin{equation}\label{eqpml1d}
\probpml(\bp,\phi)\defeq\sum_{\{\yn \in \bX^{\n}~|~ \Phi (\yn)=\phi \}} \bbP(\bp,\yn)~.
\end{equation}

\paragraph{Profile maximum likelihood:} For any $\md$-profile $\phi \in \Phi^{n}$, a \emph{Profile Maximum Likelihood} $\md$-distribution $\bp_{pml,\phi} \in \simplexd$ is: $$\bp_{pml,\phi} \in \argmax_{\bp \in \simplexd} \probpml(\bp,\phi)$$ and $\probpml(\bp_{pml,\phi},\phi)$ is the maximum PML objective value.

\paragraph{Approximate profile maximum likelihood:} For any $\md$-profile $\phi \in \Phi^{n}$, a $\md$-distribution $\bp^{\beta}_{pml,\phi} \in \simplexd$ is a $\beta$-\emph{approximate PML} $\md$-distribution if $$\probpml(\bp^{\beta}_{pml,\phi},\phi)\geq \beta \cdot \probpml(\bp_{pml,\phi},\phi)$$.

We next state our results for approximate PML $\md$-distributions. In \Cref{thm:PMLd}, we give a algorithm to efficiently compute an approximate PML $\md$-distribution. Then, we substitute $\md=2$ in this result to get \Cref{PMLtwo}. 

\begin{restatable}[Efficient and approximate multidimensional PML]{theorem}{thmPMLd}
Let $\yn$ be a $\md$-sequence of $\md$-length $\n=(\n(1),\dots,\n(\md))$. There is an algorithm that computes an $\expo{-\otilde\left(\sum_{k=1}^{\md}\nkk^{1-1/(2\md+1)}\right)}$-approximate PML $\md$-distribution $\bpalg$ in $\otilde(\sum_{k=1}^{\md} \nkk+\prod_{k=1}^{\md} \nkk^{3/(2\md+1)})$ time\footnote{Here $\otilde$ notation hides all $\prod_{k=1}^{\md}\log^{O(1)} \nkk$ terms and therefore $O(d)$ term as well.}.
\label{thm:PMLd}
\end{restatable}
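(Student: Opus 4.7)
The plan is to lift the entire DPML/SDPML framework of \Cref{sec:structuremain} from scalar probabilities and scalar frequencies to $\md$-tuples of probabilities and $\md$-tuples of frequencies, then choose the discretization parameters in each coordinate so as to balance discretization loss against the loss from collapsing the objective onto a single subgroup. Concretely, I would first define a $\md$-dimensional analogue of a discrete pseudo-distribution: in each coordinate $k \in \otmd$ pick a geometric grid $\bP(k)$ of permissible probability values of size $O(\log(\nkk)/\epsok)$ and a set $\bM(k)$ of permissible frequencies obtained by bucketing $[\nkk]$ into $O(\log \nkk / \epstk)$ geometric intervals. A discrete pseudo-$\md$-distribution then assigns each domain element a tuple in $\bP(1)\times\cdots\times\bP(\md)$, and (following \Cref{eqlabeled} applied to each coordinate) the probability $\probpml(\bpd,\phi)$ factors across coordinates, so rounding each $\bpd(k)$ to $\bP(k)$ and each frequency $\bff(\ynk,\cdot)$ to $\bM(k)$ costs an $\exps{-O(\epsok\nkk + \epstk\nkk\log\nkk)}$ factor per coordinate, giving a total discretization loss of $\expo{-O\bigl(\disclossd\bigr)}$, the natural analog of the 1-D bound.

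Next I would write the DPML objective in the $\md$-dimensional case as a sum over subgroups, each indexed by a tuple $(\bq,\bm)$ where $\bq\in(\bP(1)\times\cdots\times\bP(\md))^{\bX}$ is an assignment of discretized probability $\md$-tuples to domain elements and $\bm\in(\bM(1)\times\cdots\times\bM(\md))^{\bX}$ is an assignment of discretized $\md$-frequencies, and combine elements by their $(\bq,\bm)$-type exactly as in the illustration after \Cref{subsec:DPMLmain}. The number of \emph{distinct} types is bounded by $\prod_k (|\bP(k)|\cdot|\bM(k)|)^{O(1)} = \otilde\bigl(\prod_k (\epsok\epstk)^{-O(1)}\bigr) = \otilde\bigl(\grouplossepsd\bigr)$, so collapsing the DPML objective to its single largest subgroup (SDPML) loses only this multiplicative factor. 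Thus I obtain the same product-form loss $\expo{-O(\grouplossd)}$ that appears in \Cref{thm:approxpml}.

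I would then extend the convex relaxation and rounding of \Cref{subsec:convexrelaxmain,subsec:algmain} to the $\md$-dimensional SDPML: the log of the single-subgroup contribution is a separable concave function in the number of elements assigned to each tuple-bucket, subject to marginal constraints enforcing that (i) the total number of elements is $N$ and (ii) in each coordinate $k$ the induced marginal probability assignment sums to $1$. This is still a concave maximization over a polytope, with $\otilde\bigl(\prod_k (\epsok\epstk)^{-O(1)}\bigr)$ variables and $O(\md)$ nontrivial linear constraints, so the same cutting-plane/approximate oracle approach used in one dimension solves it in time $\otilde\bigl(\writeprofile + \runningtimed\bigr)$; the rounding step then produces an actual $\md$-distribution with only a constant-factor further loss per coordinate, using the same argument as the scalar case applied dimension by dimension.

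Finally, to obtain the theorem statement I would set $\epsok = \epstk = \nkk^{-1/(2\md+1)}$. This balances the discretization loss $\sum_k \nkk^{1-1/(2\md+1)}$ against the group loss $\prod_k \nkk^{2/(2\md+1)}\log^{O(1)}\nkk \le \otilde\bigl(\sum_k \nkk^{1-1/(2\md+1)}\bigr)$ (using $\md = O(1)$ to hide the product as $\otilde$), giving the claimed $\expo{-\otilde(\sum_k \nkk^{1-1/(2\md+1)})}$ approximation factor, while the running time becomes $\otilde\bigl(\sum_k \nkk + \prod_k \nkk^{3/(2\md+1)}\bigr)$. The main obstacle I expect is in the second step: verifying that the subgroup decomposition still gives a clean separable objective when probabilities and frequencies live on a product grid, and bounding the number of subgroups by a pure product over coordinates rather than something larger (since in principle elements can share buckets across some coordinates but not others); this requires carefully identifying subgroups with assignments of $\md$-tuples of (probability bucket, frequency bucket) pairs to domain elements, exactly mirroring the 1-D identification but with tuple-valued buckets, and checking that the multinomial-coefficient factors $\cphi$ factor across coordinates of $\n$.
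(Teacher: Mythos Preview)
Your proposal is correct and follows essentially the same route as the paper: lift probability/frequency discretization to $\md$-tuples, show the DPML reformulation and SDPML collapse carry over verbatim with variables indexed by (probability-tuple bucket, frequency-tuple bucket), solve the log-concave relaxation, round, and finally set $\epsok=\epstk=\nkk^{-1/(2\md+1)}$.

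One slip worth fixing: the convex program does \emph{not} have $O(\md)$ linear constraints. There is one equality constraint per discrete $\md$-frequency bucket (that is, $\btt=\prod_k O(\log\nkk/\epstk)$ constraints enforcing $(\bk^{T}\onevec)_{\otbtt}=\phid$) plus $\md$ inequality constraints from the coordinate-wise normalizations $(\bk\onevec)^{T}\pvec\le 1$, for a total of $\btt+\md$. This is precisely why the cutting-plane solver's running time carries the $\btt^2\boo$ and $\btt^3$ terms (the $r$ in Theorem~56 of \cite{LSW15_arxiv} is $\btt+\md$, not $\md$), which under your parameter choice become $\prod_k \nkk^{3/(2\md+1)}$ as you state. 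Your final running time is correct, but the reasoning you gave for it would have produced something much smaller had you actually used $r=O(\md)$.
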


\begin{cor}[Efficient and approximate PML for two dimensions]\label{PMLtwo}
For $\md=2$, let $\yn$ be a $\md$-sequence of $\md$-length $\n=(\n(1),\n(2))$. There is an algorithm that computes an $\exps{-\otilde\left(\n(1)^{4/5} + \n(2)^{4/5} \right)}$-approximate PML $\md$-distribution $\bpalg$ in $\otilde(\n(1)+\n(2)+ \n(1)^{3/5}\n(2)^{3/5} )$ time.
\end{cor}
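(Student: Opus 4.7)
The plan is entirely straightforward: \Cref{PMLtwo} is obtained by directly instantiating \Cref{thm:PMLd} with $\md=2$, so the only ``work'' is checking that the exponents substitute cleanly and that the $\otilde$ notation absorbs the constant factors arising from $\md=2$.

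First I would recall the general statement of \Cref{thm:PMLd}: for any constant $\md$, given a $\md$-sequence $\yn$ of $\md$-length $\n=(\n(1),\dots,\n(\md))$, there is an algorithm producing an $\expo{-\otilde\!\left(\sum_{k=1}^{\md}\nkk^{1-1/(2\md+1)}\right)}$-approximate PML $\md$-distribution $\bpalg$ in time $\otilde\!\left(\sum_{k=1}^{\md}\nkk+\prod_{k=1}^{\md}\nkk^{3/(2\md+1)}\right)$. Specializing to $\md=2$ we have $2\md+1=5$, so $1-\tfrac{1}{2\md+1}=\tfrac{4}{5}$ and $\tfrac{3}{2\md+1}=\tfrac{3}{5}$.

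Plugging these in, the approximation factor becomes $\expo{-\otilde(\n(1)^{4/5}+\n(2)^{4/5})}$ and the running time becomes
\[
\otilde\!\left(\n(1)+\n(2)+\n(1)^{3/5}\n(2)^{3/5}\right),
\]
which matches the statement of \Cref{PMLtwo} exactly. The footnote accompanying \Cref{thm:PMLd} clarifies that $\otilde$ already absorbs the $\prod_{k=1}^{\md}\log^{O(1)}\nkk$ factors as well as any $O(d)$ dependence, so nothing extra needs to be tracked when $\md=2$.

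There is really no obstacle here: the corollary is a statement-level specialization rather than a proof that requires any new analysis. The only point worth being careful about is verifying that the algorithm promised by \Cref{thm:PMLd} can be invoked in the two-dimensional case; since the theorem is stated for arbitrary constant $\md$, and $\md=2$ is certainly constant, the invocation is immediate. Hence the proof reduces to a single sentence applying \Cref{thm:PMLd} with $\md=2$ and simplifying the resulting exponents.
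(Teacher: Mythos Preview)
Your proposal is correct and matches the paper's approach exactly: the corollary is obtained by substituting $\md=2$ into \Cref{thm:PMLd}, yielding $2\md+1=5$ and hence the exponents $4/5$ and $3/5$. There is no additional argument in the paper beyond this specialization.
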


As mentioned before, one of the important applications of approximate multidimensional PML is in estimating symmetric properties for $\md$-distributions. A symmetric property is a function of $\md$-distributions that is invariant to a permutation of the labels. Here we study one such symmetric property for $\md=2$ called KL divergence that is studied in the context of PML. Estimation of KL divergence between two distributions is well studied and estimators that achieve optimal sample complexity were given by \cite{BZLV16, HJW16}.   In \Cref{theoremADOSuniversald}, we show that approximate PML is sample complexity optimal for estimating KL divergence. A similar result was already shown in \cite{Acharya18} (Theorem 6) for exact PML and we use the same proof with slight modification to prove our result. In \Cref{thm:effKL}, we give an efficient version of \Cref{theoremADOSuniversald} by combining it with \Cref{PMLtwo}.  

\begin{restatable}[Optimal sample complexity for KL divergence]{theorem}{theoremADOSuniversald}
\label{theoremADOSuniversald}
Let $B$ be such that, $\forall x \in \bX$, $\frac{\bpd(1)_{x}}{\bpd(2)_{x}} \leq B$ and let $\n=(\n(1),\n(2))$ be the optimal sample complexity for estimating KL divergence between $\bpd(1)$ and $\bpd(2)$ to an accuracy $\epsilon$. If~\footnote{Recall $N$ here is the size of domain $\bX$.} $\epsilon > \frac{\log^3 N}{N}$ and $B \leq \epsilon^{2.24}N^{0.24}$, then $\beta$-approximate PML $\md$-distribution (for $\md=2$) with $\beta>\exps{-\otilde\left(\n(1)^{4/5} + \n(2)^{4/5} \right)}$ is sample complexity optimal for estimating KL divergence to an accuracy $4\epsilon$.
\end{restatable}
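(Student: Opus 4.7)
The plan is to adapt the Acharya~\cite{Acharya18} argument for exact PML to handle the $\beta$-approximate case, in direct parallel with how Acharya et al.~\cite{ADOS16}'s single-distribution argument was extended from exact to approximate PML. The starting point is the existence of a sample-optimal KL-divergence estimator $\widehat{\bff}$ from \cite{BZLV16, HJW16} that, under the stated conditions on $\epsilon$, $B$, and $N$, requires $\n(1), \n(2)$ samples from $\bpd(1), \bpd(2)$ respectively to estimate $\mathrm{KL}(\bpd(1) \| \bpd(2))$ to accuracy $\epsilon$ with probability at least $9/10$. Since KL divergence is a symmetric property of a $\md$-distribution (invariant under simultaneous relabeling of both distributions), we may assume without loss of generality that $\widehat{\bff}$ is itself a symmetric function of the $\md$-sequence, i.e., a function only of the $\md$-profile $\phi = \Phi(\yn)$.

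The core chain of inequalities goes as follows. First I would show that $\widehat{\bff}$ enjoys a McDiarmid-type concentration: changing any single sample in $\yno$ or $\ynt$ changes $\widehat{\bff}$ by at most some bounded amount (this needs the sensitivity bounds established in \cite{BZLV16, HJW16} for their KL estimator under the assumption $\bpd(1)_x/\bpd(2)_x \leq B$). Combined with the assumed low bias, McDiarmid's inequality yields, for every $\md$-distribution $\bqd$, a bound
\[
\Pr_{\yn \sim \bqd}\left[\left|\widehat{\bff}(\Phi(\yn)) - \mathrm{KL}(\bqd(1)\|\bqd(2))\right| > \epsilon\right] \;\leq\; \expo{-\Omega(\n(1)^{4/5}+\n(2)^{4/5})\cdot \poly\log},
\]
where the exponent is chosen to match the approximation factor $\beta$ in the theorem. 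Let $G$ be the set of $\md$-profiles $\phi$ with $|\widehat{\bff}(\phi) - \mathrm{KL}(\bpd(1)\|\bpd(2))| \leq 2\epsilon$ and let $S = \{\bqd \in \simplexd : |\mathrm{KL}(\bqd(1)\|\bqd(2)) - \mathrm{KL}(\bpd(1)\|\bpd(2))| > 4\epsilon\}$ be the set of $\md$-distributions whose KL value is far from the truth; by the triangle inequality, every $\bqd \in S$ satisfies $\sum_{\phi \in G} \probpml(\bqd, \phi) \leq \expo{-\Omega(\n(1)^{4/5}+\n(2)^{4/5})}$.

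Now I would argue the ``bad event'' bound. Let $\bqd_\phi^\beta$ denote any $\beta$-approximate PML $\md$-distribution for profile $\phi$. For every profile $\phi \in G$ such that $\bqd_\phi^\beta \in S$, one has
\[
\probpml(\bpd, \phi) \;\leq\; \probpml(\bpd_{pml,\phi}, \phi) \;\leq\; \beta^{-1}\, \probpml(\bqd_\phi^\beta, \phi).
\]
Summing over all such profiles and using that each $\bqd \in S$ has total mass on $G$ bounded as above, together with the fact that the number of $\md$-profiles of the given length is at most $\expo{\otilde(\n(1)^{1/2}+\n(2)^{1/2})}$ (which must be absorbed by the exponent), yields
\[
\Pr_{\yn \sim \bpd}[\bqd_\phi^\beta \in S \text{ and } \phi \in G] \;\leq\; \beta^{-1}\cdot \expo{-\Omega(\n(1)^{4/5}+\n(2)^{4/5})}.
\]
Combining with $\Pr[\phi \notin G] \leq 1/10$, and choosing $\beta > \exps{-\otilde(\n(1)^{4/5}+\n(2)^{4/5})}$ to beat the exponent, gives that with high probability the $\beta$-approximate PML $\md$-distribution lies outside $S$, i.e., its KL value is within $4\epsilon$ of the truth.

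The main obstacle is the first step: verifying that a sample-optimal KL estimator with sufficiently sharp McDiarmid sensitivity \emph{and} bias bounds exists in the regime $\epsilon > \log^3 N /N$ and $B \leq \epsilon^{2.24} N^{0.24}$. The specific exponent $4/5$ on $\n(k)$ and the peculiar $0.24$ and $2.24$ constants in the hypothesis are dictated precisely by balancing (i) the McDiarmid concentration rate, (ii) the $\beta^{-1}$ blowup, and (iii) the number-of-profiles union bound, so the bookkeeping must be done carefully using the estimator-specific constants from \cite{BZLV16, HJW16}. The rest of the proof is a near-mechanical transcription of \cite{Acharya18}'s exact-PML argument, with every appearance of $\probpml(\bp_{pml,\phi},\phi)$ replaced by $\beta^{-1}\probpml(\bqd_\phi^\beta,\phi)$ as in the single-distribution $\beta$-approximate PML argument of \cite{ADOS16}.
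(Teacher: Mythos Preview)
Your overall architecture matches the paper's: invoke a concentration bound for a sample-optimal KL estimator, combine it with the $\beta$-approximate PML transfer lemma (Theorem~4 of \cite{ADOS16}, stated here as \Cref{thmbetad}), and pay for the number of $2$-profiles. However, two quantitative points are off, and one of them is a real gap.

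First, the minor one: the number of $2$-profiles of $2$-length $(\n(1),\n(2))$ is $\expo{O(\n(1)^{2/3}+\n(2)^{2/3})}$ (see \Cref{thm:card}), not $\expo{\otilde(\n(1)^{1/2}+\n(2)^{1/2})}$ as you wrote. Since $2/3<4/5$ this still gets absorbed, so it does not break the argument, but it is the correct bound to quote.

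Second, and this is the gap: you assert that McDiarmid yields a failure probability of $\expo{-\Omega(\n(1)^{4/5}+\n(2)^{4/5})\cdot\poly\log}$, ``chosen to match the approximation factor $\beta$.'' But the concentration rate is not a free parameter you get to choose---it is determined by the estimator's sensitivity---and a rate that merely \emph{matches} $\beta^{-1}=\expo{\otilde(\n(k)^{4/5})}$ is insufficient: the product $\delta\cdot|\Phi^{\n}|/\beta$ would then be $\expo{\otilde(\n(k)^{4/5})-\otilde(\n(k)^{4/5})}$, which need not be small. The paper instead invokes Theorem~5 of \cite{Acharya18} (stated here as \Cref{KLchange}), which gives the much stronger concentration $\expo{-2\epsilon^2\min\{\n(1),\n(2)\}^{1-2\alpha}}$ for any fixed small $\alpha>0$; taking $\alpha=0.01$ gives exponent $\approx n^{0.98}$, which comfortably dominates the $n^{4/5}$ from $\beta^{-1}$ and the $n^{2/3}$ from the profile count. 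The conditions $\epsilon>\log^3 N/N$ and $B\leq \epsilon^{2.24}N^{0.24}$ are then exactly what is needed to make $\epsilon^2(N/(\epsilon\log N))^{0.98}$ beat $(BN/(\epsilon\log N))^{4/5}$ in the final calculation. So the missing ingredient is not the framework but the specific near-linear concentration rate of the KL estimator; without it the arithmetic does not close.
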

Theorem 6 in \cite{Acharya18} also requires $\epsilon > \frac{\log^3 N}{N}$ and a slightly weaker version of the other condition ($B^{3/2} \leq \epsilon^{0.99}N^{0.49}$). 

\begin{cor}[Efficient estimator for KL divergence]
\label{thm:effKL}
Let $B$ be such that, $\forall x \in \bX$, $\frac{\bpd(1)_{x}}{\bpd(2)_{x}} \leq B$ and let $\n=(\n(1),\n(2))$ be the optimal sample complexity for estimating KL divergence between $\bpd(1)$ and $\bpd(2)$ to an accuracy $\epsilon$. If $\epsilon > \frac{\log^3 N}{N}$ and $B \leq \epsilon^{2.24}N^{0.24}$, then there exists a PML based universal plug-in estimator that runs in $\otilde(\n(1)+\n(2)+ \n(1)^{3/5}\n(2)^{3/5} )$ time and is sample complexity optimal for estimating KL divergence to an accuracy $4\epsilon$.
\end{cor}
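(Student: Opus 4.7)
The plan is to directly combine \Cref{PMLtwo} with \Cref{theoremADOSuniversald}: the former provides an efficient algorithm to compute a sufficiently good approximate PML $2$-distribution, and the latter guarantees that any such distribution gives a sample-optimal plug-in estimator for KL divergence under the stated conditions on $\epsilon$ and $B$. The corollary is really just a packaging result, and no new analytical work should be needed beyond checking that the approximation factor produced by the algorithm matches the threshold required by the universal estimation theorem.

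Concretely, given the two sequences $\ynk$ ($k=1,2$) of lengths $\n(1)$ and $\n(2)$, I would first compute their joint $2$-profile $\phi$ in time $\otilde(\n(1)+\n(2))$ using the representation discussed in \Cref{subsec:representation} (the $2$-dimensional analogue is stated right before \Cref{thm:PMLd}). I would then invoke the algorithm of \Cref{PMLtwo} on $\phi$ to obtain a $2$-distribution $\bpalg$ satisfying
\[
\probpml(\bpalg,\phi)\ \geq\ \exps{-\otilde\left(\n(1)^{4/5}+\n(2)^{4/5}\right)}\cdot \probpml(\bp_{pml,\phi},\phi)
\]
in time $\otilde(\n(1)+\n(2)+\n(1)^{3/5}\n(2)^{3/5})$. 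Finally, I would output the plug-in value $\mathrm{KL}(\bpalg(1)\,\|\,\bpalg(2))$ as the estimate.

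For correctness, the $\beta$-factor $\exps{-\otilde(\n(1)^{4/5}+\n(2)^{4/5})}$ delivered by \Cref{PMLtwo} is exactly at the threshold required by \Cref{theoremADOSuniversald}. Thus, under the hypotheses $\epsilon>\frac{\log^3 N}{N}$ and $B\leq \epsilon^{2.24}N^{0.24}$, \Cref{theoremADOSuniversald} implies that the $\beta$-approximate PML plug-in estimator computed above recovers $\mathrm{KL}(\bpd(1)\,\|\,\bpd(2))$ to accuracy $4\epsilon$ with the optimal sample complexity $\n=(\n(1),\n(2))$. The total runtime matches that of the PML computation, namely $\otilde(\n(1)+\n(2)+\n(1)^{3/5}\n(2)^{3/5})$.

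Since both ingredients are already established, there is essentially no obstacle; the only care point is a syntactic one—ensuring the approximation exponent from \Cref{PMLtwo} is written in the same form as the $\beta$-threshold in \Cref{theoremADOSuniversald} so that the implication applies verbatim. Once that is noted, the corollary follows by direct substitution.
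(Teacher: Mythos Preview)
Your proposal is correct and matches the paper's approach: the corollary is stated without a separate proof, with the text explicitly noting it follows by combining \Cref{PMLtwo} with \Cref{theoremADOSuniversald}, exactly as you outline.
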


\newcommand{\Phidset}{\Phi_{\mathrm{discrete}}^{n}}
\newcommand{\disc}{\mathrm{disc}}

\section{Existence of Structured Approximate PML for One Dimension}\label{sec:structuremain}
Here we provide the proof for \Cref{thm:approxpml}. First, we show the existence of an approximate PML distribution with a nice structure in  Sections \ref{sec:prob_discrete}, \ref{sub:mult_discrete} and \ref{subsec:DPMLmain}. Then, we exploit this structure in \Cref{subsec:convexrelaxmain} to give an algorithm that returns a fractional solution with running time ranging from nearly linear to sub linear depending on the desired approximation factor. Finally, in \Cref{subsec:algmain} we present a rounding algorithm that takes the fractional solution from the previous step as input and returns an approximate PML distribution within the desired approximation factor.

First, we show the existence of a distribution with minimum non-zero probability value $\Omega(\frac{1}{n^2})$ that is an $\expo{-6}$-approximate PML distribution.

\begin{restatable}[{Minimum probability lemma}]{lemma}{lemmamin}
	\label{lemminmain}
	For any profile $\phi \in \Phi^{n}$, there exists a distribution $\bp'' \in \simplex$ such that $\bp''$ is a $\expo{-6}$-approximate PML distribution and $\min_{x \in \bX:\bp''_x \neq 0}\bp''_x \geq \frac{1}{2n^2}$.
\end{restatable}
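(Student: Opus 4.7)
The plan is to construct $\bp''$ by truncating the small probabilities of $\bp^*=\bp_{pml,\phi}$ in a controlled way. Define $S \defeq \{x \in \bX : \bp^*_x \geq 1/(2n^2)\}$, $T \defeq \{x \in \bX : 0 < \bp^*_x < 1/(2n^2)\}$, $Z \defeq \sum_{x \in S}\bp^*_x$, and $M \defeq 1-Z$. The natural first attempt is $\bp''_x \defeq \bp^*_x/Z$ for $x \in S$ and $0$ otherwise, which immediately gives $\min_{x: \bp''_x > 0} \bp''_x \geq 1/(2n^2)$ because $\bp^*_x/Z \geq \bp^*_x \geq 1/(2n^2)$. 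When $M$ is not tiny this simple truncation may lose too much, so I would augment the construction by greedily partitioning $T$ (in order of decreasing $\bp^*_x$) into clusters of aggregated probability in $[1/(2n^2), 1/n^2]$, introducing one new ``super-label'' per cluster with that aggregated probability, and then renormalizing if needed. Since each cluster has mass at least $1/(2n^2)$ and the total mass moved is $M \leq 1$, there are at most $2n^2 M$ super-labels, and by construction every nonzero probability of $\bp''$ is at least $1/(2n^2)$.

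To bound the loss, I would decompose
\[
\probpml(\bp^*,\phi)\;=\;\cphi\!\!\sum_{\psi:\Phi(\psi)=\phi}\!\prod_x(\bp^*_x)^{\psi_x}\;=\;\cphi\bigl(\text{GoodSum}+\text{BadSum}\bigr),
\]
where GoodSum ranges over types $\psi$ supported on $S$ and BadSum over those that place positive frequency on at least one label in $T$. For types supported on $S$, the contribution to $\probpml(\bp'',\phi)$ is exactly $Z^{-n}\geq 1$ times the corresponding contribution to $\probpml(\bp^*,\phi)$, giving $\probpml(\bp'',\phi)\geq \cphi\cdot\mathrm{GoodSum}$ from the $S$-part alone. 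For the bad types, I would map each $\psi$ to a type $\psi''$ on $\mathrm{supp}(\bp'')$ by replacing each $T$-label used by $\psi$ by the super-label of its cluster; because every super-label has probability at least that of any $T$-label it absorbs, $\prod_x(\bp''_x)^{\psi''_x}\geq \prod_x(\bp^*_x)^{\psi_x}$ whenever the mapping yields a type with profile $\phi$.

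The main obstacle is controlling the mapping's ``collisions'': if two distinct $T$-labels used by the same type fall into the same cluster, their frequencies merge, so $\psi''$ no longer has profile $\phi$ and its contribution is lost. I would bound the collision probability using the first-order (KKT) optimality of the PML distribution. Differentiating $\log\probpml(\bp,\phi)$ with respect to $\bp_x$ at $\bp^*$ yields, for every $x\in\mathrm{supp}(\bp^*)$, the identity $\E_{p_\psi}[\psi_x]=n\bp^*_x$, where $p_\psi\propto\prod_x(\bp^*_x)^{\psi_x}$ is the distribution on types induced by $\bp^*$ and $\phi$. Summing over $x\in T$ yields $\E_{p_\psi}[\sum_{x\in T}\psi_x]=nM$, and because each $x\in T$ satisfies $n\bp^*_x<1/(2n)$, the typical type uses few distinct $T$-labels. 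A birthday-style second-moment estimate, tuned so that each cluster contains only a small fraction of $T$'s labels, then shows that the probability (under $p_\psi$) of a within-cluster collision is bounded by a universal constant strictly less than $1$.

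Combining these pieces gives $\probpml(\bp'',\phi)\geq e^{-6}\probpml(\bp^*,\phi)$, where the constant $6$ emerges as the sum of (i) the renormalization gain from $Z^{-n}\geq 1$, which is free, (ii) the constant loss bounding $\mathrm{BadSum}/(\mathrm{GoodSum}+\mathrm{BadSum})$ by what cannot be recovered through the super-label map, and (iii) a slack factor from the collision accounting. The remaining routine step is to verify that the aggregated clusters are chosen so that the super-label map is injective on the types that contribute most of BadSum; once this is done, the inequality $\probpml(\bp'',\phi)\geq e^{-6}\probpml(\bp^*,\phi)$ follows together with the minimum probability guarantee.
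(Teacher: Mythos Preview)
Your deterministic merging approach is genuinely different from the paper's, but it has a real gap at the collision step.

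The KKT identity $\E_{p_\psi}[\psi_x]=n\bp^*_x$ only controls \emph{marginals} of the conditional type distribution $p_\psi$. To bound within-cluster collisions you need the \emph{pairwise} quantities $\Pr_{p_\psi}[\psi_{t_1}>0,\,\psi_{t_2}>0]$ for $t_1,t_2$ in the same cluster, and nothing in your argument gives access to these: $p_\psi$ is obtained by conditioning an $n$-sample multinomial on its profile, and this conditioning can induce strong positive correlations (e.g.\ conditioning on the profile $\{1,1,\ldots,1\}$ forces many distinct labels to be occupied simultaneously). Even if you granted yourself the unjustified heuristic of negative association, the resulting bound would be
\[
\E_{p_\psi}\bigl[\text{number of collision pairs}\bigr]\ \le\ \sum_C\Bigl(\sum_{t\in C}\Pr[\psi_t>0]\Bigr)^2\ \le\ (\text{number of clusters})\cdot n^{-2}\ \le\ 2M,
\]
which can be as large as $2$ and yields nothing via Markov. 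The phrase ``tuned so that each cluster contains only a small fraction of $T$'s labels'' cannot rescue this, since clusters are constrained by \emph{mass} (to make each super-label $\ge 1/(2n^2)$) and the number of $T$-labels per cluster is uncontrolled. Incidentally, the injectivity worry you raise is a non-issue: because $(\sum_{t\in C}\bp^*_t)^a\ge\sum_{t\in C}(\bp^*_t)^a$ for integers $a\ge 1$, the many-to-one map already satisfies $\prod_x(\bp''_x)^{\psi''_x}\ge\sum_{\psi\to\psi''}\prod_x(\bp^*_x)^{\psi_x}$; the genuine obstacle is the collision event, not multiplicity of preimages.

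The paper avoids the whole type-by-type analysis with a randomized argument on $\bp$ itself. For each $x$ with $\bp^*_x<1/n^2$ it independently sets $\bp_x=1/n^2$ with probability $n^2\bp^*_x$ and $\bp_x=0$ otherwise; the crucial elementary observation is that then $\E[\bp_x^{\,i}]\ge(\bp^*_x)^i$ for every integer $i\ge 0$, so by independence and linearity $\E[\probpml(\bp,\phi)]\ge\probpml(\bp^*,\phi)$ term by term over types. A Chernoff bound on the sum of the underlying Bernoullis then shows that the event $\|\bp\|_1\le 1+3/n$ carries at least a $1/4$ fraction of this expectation; on that event, renormalizing costs at most $(1+3/n)^n\le e^3$, giving the factor $e^{-3}/4\ge e^{-6}$. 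No KKT conditions, no analysis of $p_\psi$, and no collision accounting are needed.
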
 
\begin{proof}
	See \Cref{app:minprob}.
\end{proof}

This lemma allows us define a region in which our approximate PML takes all its probability values and we use this fact throughout the paper. In \Cref{sec:prob_discrete} and \Cref{sub:mult_discrete,} we show how we can further simplify the problem of computing an approximate PML by discretizing the probability and the frequency spaces respectively.

\subsection{Probability discretization}
\label{sec:prob_discrete}
Let $\bP\defeq\{(1+\epso)^{1-i}:i=1,\dots \boo\}$ where $\boo=O(\frac{\log n}{\epso})$ is such that $(1+\epso)^{1-\boo}\leq \frac{1}{2n^2}$ for some $\epso \in (0,1)$. $\bP$ is the set representing discretization of probability space and discretization introduces a technicality of probability values not summing up to one and we define pseudo-distributions and discrete pseudo-distribution to handle it.
\begin{defn}[Pseudo-distribution] $\bq \in [0,1]^{\bX}_{\R}$ is a \emph{pseudo-distribution} if $\|\bq\|_1 \leq 1$ and a \emph{discrete pseudo-distribution} if all its entries are in $\bP$ as well. We use $\psimplex$ and $\dsimplex$ to denote the set of all such pseudo-distributions respectively.
	\footnote{ As discussed in \Cref{sec:prelimsmain} we extend all functions of distributions as functions defined for any general vector in $\R^{\bX}$ and therefore to pseudo-distributions as well. For convenience we refer to $\probpml(\bq,\phi)$ for any pseudo-distribution $\bq$ as the ``probability'' of profile $\phi$ or PML objective value with respect to $\bq$.}
\end{defn}

One of the important structural properties we prove here is the following: there exists a discrete pseudo-distribution $\bq$ that when converted to a distribution by dividing all its entries by its $\ell_1$ norm ($\frac{\bq}{\|\bq\|_1}$) is an approximate PML distribution. Even stronger, the discrete pseudo-distribution $\bq$ itself has $\probpml(\bq,\phi)$ value that approximates $\probpml(\bp_{pml,\phi},\phi)$ within a good factor and converting $\bq$ into a distribution by its $\ell_1$ norm is only going to help us in this probability because $\|\bq \|_1 \leq 1$. In the rest of the paper we refer to such a discrete pseudo-distribution as an approximate PML pseudo-distribution and for the earlier reason we focus on finding an approximate PML pseudo-distribution.


The way we show the existence of such a discrete pseudo-distribution that is an approximate PML pseudo-distribution is by taking the PML distribution and converting it into a discrete pseudo-distribution while still preserving the PML objective value to a desired approximation factor. Our next lemma formally proves a general version of this statement. In the remainder of this paper, for notational convenience, for a scalar $c$ and set $\bS$ we use the notation $\floor{c}_{\bS}$ and $\ceil{c}_{\bS}$ to denote:\\
$$ \floor{c}_{\bS}\defeq\max_{s \in \bS: s \leq c}s \quad \text{ and } \quad \ceil{c}_{\bS}\defeq\min_{s \in \bS: s \geq c}s$$

\begin{defn}[Discrete pseudo-distribution] For any distribution $\bp \in \simplex$, its \emph{discrete pseudo-distribution} $\bq=\disc(\bp) \in \dsimplex$ is defined as:
	$$\bq_x\defeq\floor{\bp_x}_{\bP} \quad \forall x \in \bX$$
\end{defn}
Note that $\floor{\bp_x}_{\bP} \geq \frac{\bp_x}{1+\epso}$. Further, for $\bp \in \simplex$, $\frac{1}{1+\epso} \leq ||\disc(\bp)||_1 \leq 1$.
We next state a result that captures the impact of discretizing the probability space.

\begin{lemma}[{Probability discretization lemma}]\label{lem:probdisc}
	For any profile $\phi \in \Phi^{n}$ and distribution $\bp \in \simplex$, its discrete pseudo-distribution $\bq=\disc(\bp) \in \dsimplex$ satisfies:
	$$\probpml(\bp,\phi) \geq \probpml(\bq,\phi) \geq \expo{-\epso n}\probpml(\bp,\phi)$$
\end{lemma}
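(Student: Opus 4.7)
The plan is to prove both inequalities term by term at the level of individual sequences $y^n$ and then sum over all sequences consistent with the profile $\phi$. Recall the expansion
\[
\probpml(\bp,\phi) = \sum_{\{y^n \in \bX^n \,:\, \Phi(y^n)=\phi\}} \bbP(\bp,y^n), \quad \bbP(\bp,y^n) = \prod_{x \in \bX} \bp_x^{\bff(y^n,x)},
\]
with the analogous expansion for $\bq$ (using the extension of $\bbP$ to pseudo-distributions noted earlier).

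For the upper bound $\probpml(\bp,\phi) \geq \probpml(\bq,\phi)$, I would observe that by definition $\bq_x = \lfloor \bp_x \rfloor_{\bP} \leq \bp_x$ for every $x \in \bX$. Since frequencies are nonnegative integers, this gives $\bq_x^{\bff(y^n,x)} \leq \bp_x^{\bff(y^n,x)}$, and taking the product over $x \in \bX$ yields $\bbP(\bq,y^n) \leq \bbP(\bp,y^n)$ for every sequence $y^n$. Summing over the sequences with $\Phi(y^n) = \phi$ gives the desired inequality.

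For the lower bound, I would use the other side of the discretization, namely $\bq_x \geq \bp_x/(1+\epso)$ (which holds because $\bP$ is a geometric grid with ratio $1+\epso$, extended down to a value below $1/(2n^2)$; for $\bp_x$ smaller than the minimum grid value the floor is $0$, but such $x$ contribute $0$ to both sides only if $\bff(y^n,x) = 0$, which is fine since $0^0$ conventions match as $\bp_x^0 = 1 = \bq_x^0$ — I need to be careful, but for sequences $y^n$ with $\Phi(y^n) = \phi$, any $x$ with $\bff(y^n,x) > 0$ must have $\bp_x \geq \frac{1}{n}$ only in the best case; however the lemma is stated for all $\bp \in \simplex$, so in the edge case where $\bp_x < 1/(2n^2)$ and $\bff(y^n,x) > 0$ we would have $\bq_x = 0$ and $\bbP(\bq,y^n) = 0$, which trivially satisfies the bound with the loss absorbed). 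For the nontrivial terms we have
\[
\bbP(\bq,y^n) \geq \prod_{x \in \bX} \frac{\bp_x^{\bff(y^n,x)}}{(1+\epso)^{\bff(y^n,x)}} = (1+\epso)^{-\sum_{x} \bff(y^n,x)} \bbP(\bp,y^n) = (1+\epso)^{-n} \bbP(\bp,y^n),
\]
using the crucial identity $\sum_{x \in \bX} \bff(y^n,x) = n$, which holds for every length-$n$ sequence. Applying the standard inequality $(1+\epso)^n \leq \expo{\epso n}$ gives $\bbP(\bq,y^n) \geq \expo{-\epso n} \bbP(\bp,y^n)$, and summing over $y^n$ with $\Phi(y^n) = \phi$ yields $\probpml(\bq,\phi) \geq \expo{-\epso n}\probpml(\bp,\phi)$.

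There is no real obstacle here; the proof is essentially a direct pointwise estimate combined with the length identity $\sum_x \bff(y^n,x) = n$. The only subtlety worth flagging explicitly is the boundary case where $\bp_x$ falls below the smallest grid value in $\bP$: then $\bq_x = 0$, but because $\bP$ was chosen so that $(1+\epso)^{1-\boo} \leq \tfrac{1}{2n^2}$, the contribution of any such $x$ to the PML objective is negligible compared to the overall factor $\expo{-\epso n}$, and the inequality is preserved (in particular, if $\bff(y^n,x) > 0$ forces $\bbP(\bq,y^n) = 0$, the inequality $\bbP(\bq,y^n) \geq \expo{-\epso n}\bbP(\bp,y^n)$ still holds in the summed form once such sequences are discarded against the small loss).
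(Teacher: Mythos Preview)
Your core argument---proving both inequalities pointwise for each sequence via $\bq_x \leq \bp_x$ and $\bq_x \geq \bp_x/(1+\epso)$, then using $\sum_x \bff(y^n,x)=n$ and summing---is exactly the paper's proof.

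However, your boundary-case discussion is both unnecessary and incorrect. You write that if $\bp_x < 1/(2n^2)$ and $\bff(y^n,x)>0$ then $\bbP(\bq,y^n)=0$ ``trivially satisfies the bound with the loss absorbed,'' but $0 \geq \expo{-\epso n}\bbP(\bp,y^n)$ is \emph{false} when $\bbP(\bp,y^n)>0$, and ``discarding such sequences against the small loss'' is not a valid step. The paper simply asserts $\lfloor \bp_x\rfloor_{\bP}\geq \bp_x/(1+\epso)$ without qualification; this is justified in context because the lemma is only ever applied after the minimum-probability lemma guarantees $\bp_x \geq 1/(2n^2)$ for nonzero entries (and $\bq_x=0$ when $\bp_x=0$), so the edge case never arises. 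Drop the last paragraph of your proposal and the proof is clean.
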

\begin{proof}
	The first inequality is immediate because $\bq_x=\floor{\bp_x}_{\bP} \leq \bp_{x}$ for all $x \in \bX$. To show second inequality consider any sequence $y^n \in \bX^n$,
	\begin{align*}
	\bbP(\bq,y^n) &= \prod_{x \in \bX}\bq_x^{\bff(y^n,x)}=\prod_{x \in \bX}\floor{\bp_x}_{\bP}^{\bff(y^n,x)} \geq \prod_{x \in \bX} \left( \frac{\bp_x}{1+\epso} \right)^{\bff(y^n,x)} = \frac{1}{(1+\epso)^n}\bbP(\bp,y^n)\\ 
	&\geq \expo{-\epso n}\bbP(\bp,y^n)
	\end{align*}
	In the inequality above we use $\sum_{x \in \bX}\bff(y^n,x)=n$. Now,
	$$ \probpml(\bq,\phi)=\sum_{\{y^n \in \bX^n: \Phi (y^n)=\phi \}} \bbP(\bq,y^n)\geq \sum_{\{y^n \in \bX^n: \Phi (y^n)=\phi \}} \expo{-\epso n}\bbP(\bp,y^n)=\expo{-\epso n}\probpml(\bp,\phi)$$
\end{proof}
\subsection{Multiplicity discretization}
\label{sub:mult_discrete}
Let $\bM=\{\ceil{(1+\epst/2)^1},\ceil{(1+\epst/2)^2},\dots , \ceil{(1+\epst/2)^{k-1}},n \} \cup \{1,2,3,\dots ,\ceil{\frac{1}{\epst}} \}$ be the set representing discretization of multiplicities where $k=O(\frac{\log n}{\epst})$ is such that $\ceil{(1+\epst/2)^{k}}\geq n$, $\ceil{(1+\epst/2)^{k-1}}< n$ and as before $\epst \in(0,1)$ will be carefully choose later. Let $\btt = |\bM|=O(\frac{\log n}{\epst})$ and note the definition of $\bM$ keeps all positive integers $\leq \ceil{\frac{1}{\epst}}$. We use $\nn_j$ to denote elements of set $\bM$ and using this set $\bM$ we define an analogous quantity to profile called \emph{discrete profile}.

\begin{defn}[Discrete profile]
	For a sequence $y^n \in \bX^{n}$, its \emph{discrete} profile $\phid=\Phid(y^n) \in \Z^{\btt}$ is a profile and is defined as: $\phid\defeq(\Fd_j)_{j=1\dots \btt} $, where $\Fd_j=\Fd_j(y^n)\defeq|\{x\in \bX ~|~\ceil{\bff(y^n,x)}_{\bM}=\nn_j \}|$ and $\nd=\sum_{x\in \bX}\ceil{\bff(y^n,x)}_{\bM}=\sum_{j=1}^{\btt} \nn_{j}\Fd_{j}$ is the length of discrete profile $\phid$ with $\nd \leq (1+\epst)n$. We use $\Phidset$  to denote the set of all such discrete profiles.
\end{defn}

\paragraph{Note:} As mentioned in the definition, a discrete profile is also a profile. Note that in the representation of discrete profile we might have indices $i$ with $\phid_{i}=0$,  however we have defined profiiles so that there are no such zero entries. We keep these zero entries in our discrete profile $\phid$ for notational convenience and proof simplification. Further it only takes $O(\phi_{size})$ time to write a discrete profile from access to a profile $\phi$ in the representation discussed in \Cref{subsec:representation}.

A discrete profile $\phid$ is a profile of length $\nd$ and it correspond to profile of some sequences of length $\nd$. One such sequence can be obtained by appending $\ceil{\bff(y^n,x)}_{\bM}-\bff(y^n,x)$ of $x$ symbols to sequence $y^n$ itself.
The probability of $\phid$ with respect to a distribution $\bp$ is straightforward:
$$ \probpml(\bp,\phid)\defeq\sum_{\{y^{\nd} \in \bX^{\nd}~|~ \Phi (y^{\nd})=\phid \}} \bbP(\bp,y^{\nd})$$
We next state a result that captures the impact of discretizing the multiplicity space. It is important to note that probability terms ($\probpml(\bp,\phi)$ and $\probpml(\bp,\phid)$) have different summation terms and yet we show their values approximate each other.

\begin{restatable}[{Profile discretization lemma}]{lemma}{lemmaprofiledisc}
	\label{lem:profiledisc}
	For any distribution $\bp\in \simplex$, and a sequence $y^n \in \bX^{n}$:
	$$\expo{-7\epst n \log n}\probpml(\bp,\phi)\leq \probpml(\bp,\phid) \leq \expo{7\epst n \log n}\probpml(\bp,\phi)$$
	where $\phi=\Phi(y^n)$ and $\phid=\Phid(y^n)$ are {the} profile and discrete profile of $y^n$ respectively.
\end{restatable}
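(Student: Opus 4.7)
The strategy is to unfold $\probpml(\bp,\phi) = \cphi \sum_{\lphi \in T_{\phi}} \prod_x \bp_x^{\lphi_x}$ and the analogous identity for $\phid$ via \eqref{eqlabeled}, then relate the two sums through the coordinate-wise map $\sigma\colon T_{\phi} \to T_{\phid}$ defined by $\sigma(\lphi)_x = \lceil \lphi_x \rceil_{\bM}$. After invoking \Cref{lemminmain} to assume (up to a constant factor absorbed into the $\exp(\pm 7\epst n\log n)$ slack) that $\bp_x \ge 1/(2n^2)$ on the support of $\bp$, I would show that $\sigma$ is onto with every fiber of the same size $M = \prod_k \binom{\phid_k}{\{\phi_j\}_{j \in I_k}}$, where $I_k = \{d_j : \lceil d_j\rceil_{\bM} = \nn_k\}$ indexes the original multiplicities that round to $\nn_k$.

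For a fixed $\lphi' \in T_{\phid}$ and any $\lphi \in \sigma^{-1}(\lphi')$, the ratio $\prod_x \bp_x^{\lphi_x}/\prod_x \bp_x^{\lphi'_x} = \prod_x \bp_x^{-(\lphi'_x-\lphi_x)}$ lies in $[1,(2n^2)^{\nd-n}] \subseteq [1, e^{O(\epst n\log n)}]$ by the min-probability bound and $\nd-n \le \epst n$. Summing over fibers yields $\sum_{\lphi \in T_{\phi}} \prod_x \bp_x^{\lphi_x} \in [M S,\, M S\cdot e^{O(\epst n\log n)}]$, where $S = \sum_{\lphi' \in T_{\phid}} \prod_x \bp_x^{\lphi'_x}$. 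Separately, a Stirling-type estimate bounds the multinomial coefficients as $\cphid/\cphi = (\nd!/n!)\prod_j (d_j!/\lceil d_j\rceil_{\bM}!)^{\phi_j} \in [e^{-O(\epst n\log n)}, e^{O(\epst n\log n)}]$, using $\nd-n\le\epst n$ and $\lceil d_j\rceil_{\bM}-d_j \le \epst d_j$.

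The hardest part is bounding the preimage count $M$. A naive term-wise multinomial estimate $\log M \le \sum_k \phid_k \log |I_k|$ with $|I_k| \le \epst \nn_k$ (valid in the geometric regime of $\bM$; otherwise $|I_k| \le 1$ and contributes zero) is too loose if summed over all $k$---a worst-case contribution per $k$ plus an unconstrained sum yields a bound like $O(n/\epst)$. The fix is a constraint-optimization argument: under the resource constraint $\sum_k \nn_k \phid_k \le \nd \le 2n$, the linear functional $\sum_k \phid_k \log(\epst \nn_k)$ in $(\phid_k)$ is maximized at a single active coordinate $k^*$ with $\phid_{k^*} = 2n/\nn_{k^*}$, and then maximizing the resulting one-variable quantity $(2n/\nn)\log(\epst \nn)$ in $\nn$ (derivative vanishes at $\nn = e/\epst$) gives maximum value $2n\epst/e$. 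Hence $\log M = O(\epst n) \le \epst n \log n$. Assembling these three estimates into $\probpml(\bp,\phid)/\probpml(\bp,\phi) = (\cphid/\cphi) \cdot \bigl(S/\sum_{\lphi}\prod_x \bp_x^{\lphi_x}\bigr)$ yields the claimed two-sided bound in $[e^{-O(\epst n\log n)}, e^{O(\epst n\log n)}]$, completing the proof.
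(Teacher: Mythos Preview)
Your approach is correct but takes a different route from the paper. You work with the type formulation \eqref{eqlabeled}, where the two sums run over different index sets $T_\phi$ and $T_{\phid}$; this forces you to analyze the surjection $\sigma$ and its fiber size $M$. The paper instead uses the permutation formulation \eqref{eqpml2},
\[
\probpml(\bp,\phi)=\Bigl(\textstyle\prod_{j}\tfrac{1}{\phi_j!}\Bigr)\,\cphi\sum_{\sigma\in S_{\bX}}\prod_{x}\bp_x^{\,\phih_{\sigma(x)}},
\]
so that both $\probpml(\bp,\phi)$ and $\probpml(\bp,\phid)$ are sums over the \emph{same} set $S_{\bX}$. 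The comparison is then term-by-term: each permutation contributes a probability ratio bounded exactly as you do, the ratio $\cphi/\cphid$ is bounded the same way, and what remains is the normalizing ratio $\prod_k\phid_k!\big/\prod_j\phi_j!$---which is precisely your $M$. So the two decompositions are equivalent under the hood; the permutation viewpoint just sidesteps the fiber-counting.

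For bounding $M$, your LP argument (put all mass on one $k$, optimize $(2n/\nn)\log(\epst\nn)$) is valid and actually gives the sharper $\log M=O(\epst n)$, but it is more work than needed. The paper's argument is one line: frequencies below $\lceil 1/\epst\rceil$ are fixed points of the rounding (so those multinomial factors are $1$), and at most $\epst n$ domain elements can have frequency $\ge\lceil 1/\epst\rceil$, whence $M\le(\epst n)!\le\exp(\epst n\log n)$. Both suffice for the stated bound.

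One remark: your invocation of \Cref{lemminmain} to assume $\bp_x\ge 1/(2n^2)$ is exactly what the paper's proof does implicitly when it writes $\prod_x\bp_x^{\epst\phih_{\sigma(x)}}\ge(2n^2)^{-\epst n}$; the lemma as stated for arbitrary $\bp$ really only holds under this min-probability hypothesis, which is how it is used downstream.
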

\begin{proof}
	See \Cref{app:profiledisc}.
\end{proof}

Combining both \Cref{lem:probdisc} and \Cref{lem:profiledisc} we bound the impact of discretizing both probabilities and multiplicities.
\begin{cor}[{Discretization lemma}]\label{cordiscmain}
	For any distribution $\bp\in \simplex$, and a sequence $y^n \in \bX^{n}$. If $\bq=\disc(\bp)$ is {the} discrete distribution of $\bp$ then,
	$$\expo{-(\discloss)}\probpml(\bp,\phi)\leq \probpml(\bq,\phid) \leq \expo{\discloss}\probpml(\bp,\phi)$$ 
	where $\phi=\Phi (y^n)$ and $\phid=\Phid(y^n)$ are the profile and discrete profile of $y^n$ respectively.
\end{cor}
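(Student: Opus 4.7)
The plan is to obtain this corollary by simply chaining \Cref{lem:probdisc} and \Cref{lem:profiledisc}. The two lemmas together cover exactly the two axes of discretization: \Cref{lem:probdisc} quantifies the loss from snapping each probability value $\bp_x$ down to $\floor{\bp_x}_{\bP}$, while \Cref{lem:profiledisc} quantifies the loss from rounding multiplicities up to the nearest element of $\bM$ (i.e.\ replacing $\phi$ by $\phid$). Neither step interacts with the other in a delicate way, so combining them should yield the claimed bound of $\expo{\pm(\epso n + 7\epst n \log n)}$ directly.

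Concretely, I would first observe that \Cref{lem:profiledisc}, although stated for a distribution $\bp$, applies equally well when the argument is a pseudo-distribution such as $\bq=\disc(\bp)$: the conventions introduced in \Cref{sec:prelimsmain} allow $\probpml(\cdot,\phi)$ and $\probpml(\cdot,\phid)$ to be evaluated on any vector $\vvec \in \R^\bX$, and the proof of \Cref{lem:profiledisc} (deferred to \Cref{app:profiledisc}) only manipulates combinatorial factors and ratios of sequence probabilities rather than invoking $\|\bp\|_1=1$. Taking this for granted, apply \Cref{lem:profiledisc} with $\bq$ in place of $\bp$ to get
\[
\expo{-7\epst n \log n}\,\probpml(\bq,\phi) \;\leq\; \probpml(\bq,\phid) \;\leq\; \expo{7\epst n \log n}\,\probpml(\bq,\phi).
\]

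Then apply \Cref{lem:probdisc}, which gives $\expo{-\epso n}\probpml(\bp,\phi) \leq \probpml(\bq,\phi) \leq \probpml(\bp,\phi)$. Substituting these into the bounds above yields
\[
\probpml(\bq,\phid) \;\leq\; \expo{7\epst n \log n}\probpml(\bq,\phi) \;\leq\; \expo{7\epst n \log n}\probpml(\bp,\phi) \;\leq\; \expo{\epso n+7\epst n \log n}\probpml(\bp,\phi)
\]
for the upper bound, and
\[
\probpml(\bq,\phid) \;\geq\; \expo{-7\epst n \log n}\probpml(\bq,\phi) \;\geq\; \expo{-7\epst n \log n}\expo{-\epso n}\probpml(\bp,\phi) \;=\; \expo{-(\epso n + 7\epst n \log n)}\probpml(\bp,\phi)
\]
for the lower bound, which together match the statement since $\discloss = \epso n + 7\epst n \log n$.

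The only conceptual hurdle is the extension of \Cref{lem:profiledisc} to pseudo-distributions, which is really a convention rather than a mathematical obstacle given the footnote in \Cref{sec:prelimsmain}. Once that is flagged, the proof is a two-line chain of inequalities; no additional calculation or new idea is needed.
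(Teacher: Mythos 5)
Your proof is correct and follows exactly the route the paper intends: the corollary is presented without an explicit proof, immediately after the sentence "Combining both \Cref{lem:probdisc} and \Cref{lem:profiledisc} we bound the impact of discretizing both probabilities and multiplicities," so chaining those two lemmas in the order you do (profile discretization on $\bq$, then probability discretization relating $\probpml(\bq,\phi)$ to $\probpml(\bp,\phi)$) is precisely the intended argument and recovers the stated exponent $\discloss$. Your observation that \Cref{lem:profiledisc} must be read as applying to the pseudo-distribution $\bq$ — justified by the paper's convention of extending $\probpml(\cdot,\phi)$ to arbitrary vectors and by the fact that the appendix proof never invokes $\|\bp\|_1=1$ — is a genuine and correctly-handled subtlety that the paper leaves implicit.
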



The discretization lemma above suggests that optimizing over over discrete pseudo-distributions with $\phid$ as input is approximately as good as as optimizing over distributions with $\phi$ as input. This result motivates the definition of a new objective function which we introduce and study next.
\subsection{Discrete PML Optimization}\label{subsec:DPMLmain}
Here we define a new optimization problem that admits convex relaxations and further returns an approximate PML pseudo-distribution\footnote{Note we call a pseudo-distribution $\bq$ an approximate PML pseudo-distribution if it satisfies $\probpml(\bq,\phid) \geq \beta \probpml(\bp_{pml,\phi},\phi)$, for some non-trivial $\beta$.}. First, we define a discrete profile maximum likelihood (DPML) which is just the PML objective maximized over discrete pseudo-distributions with discrete profile as input. In \Cref{corfinaldiscmain} we show the optimal discrete pseudo-distribution of this new objective is an approximate PML pseudo-distribution. In \Cref{DPMLequalitymain}, we rephrase the DPML optimization problem. Finally, using this DPML reformulation, we define a new optimization problem that we call a single discrete PML (SDPML) and in \Cref{lemsinglediscmain}, we show the maximizing discrete pseudo-distribution for the SDPML objective is an approximate PML pseudo-distribution.
\begin{defn}[Discrete profile maximum likelihood]
	Let $y^n \in \bX^{n}$ be any sequence, $\phi=\Phi (y^n)$ and $\phid=\Phid(y^n)$ be its profile and discrete profile respectively, a \emph{discrete profile maximum likelihood (DPML) pseudo-distribution} $\bq_{dpml,\phid} \in \dsimplex$ is:
	\begin{equation}\label{eqdpml}
	\bq_{dpml,\phid}\defeq\argmax_{\bq \in \dsimplex} \probpml(\bq,\phid),
	\end{equation}
	and $\probpml(\bq_{dpml,\phid},\phid)$ is {the} maximum objective value.
\end{defn}
\begin{cor}[DPML is an approximate PML]\label{corfinaldiscmain}
	For any sequence $y^n \in \bX^{n}$ if $\phi=\Phi (y^n)$ and $\phid=\Phid(y^n)$ are its profile and discrete profile respectively, then $$\probpml(\bq_{dpml,\phid},\phid) \geq \expo{-(\discloss)} \probpml(\bp_{pml,\phi},\phi)$$
\end{cor}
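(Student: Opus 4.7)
The goal is to show that the DPML pseudo-distribution $\bq_{dpml,\phid}$, which is the maximizer over $\dsimplex$ with the discrete profile $\phid$ as input, achieves objective value comparable (up to the stated $\expo{-O(\discloss)}$ loss) to the true PML distribution $\bp_{pml,\phi}$ on the original profile $\phi$. The plan is to combine three ingredients already proved in the excerpt: \Cref{lemminmain} (minimum probability), \Cref{cordiscmain} (the combined discretization lemma), and the maximality of $\bq_{dpml,\phid}$ over $\dsimplex$.

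First, I would apply \Cref{lemminmain} to obtain a distribution $\bp'' \in \simplex$ satisfying $\min_{x:\bp''_x \neq 0}\bp''_x \geq \frac{1}{2n^2}$ together with $\probpml(\bp'',\phi) \geq \expo{-6}\probpml(\bp_{pml,\phi},\phi)$. The role of the minimum probability guarantee is crucial: the discretization set $\bP$ only goes down to values $\leq \frac{1}{2n^2}$, so without it, the floor operation $\floor{\cdot}_{\bP}$ defining $\disc(\cdot)$ might fail to be meaningful on the tiny-probability coordinates. Thanks to the lemma, the discrete pseudo-distribution $\bq \defeq \disc(\bp'')$ is a well-defined element of $\dsimplex$, with $\bq_x = \floor{\bp''_x}_{\bP} \in \bP$ on the support of $\bp''$ (and $0$ elsewhere).

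Next, I would apply \Cref{cordiscmain} to the distribution $\bp''$ and the sequence $y^n$, yielding
\[
\probpml(\bq,\phid)\;\geq\;\expo{-(\discloss)}\,\probpml(\bp'',\phi).
\]
Chaining this with the conclusion of \Cref{lemminmain} gives
\[
\probpml(\bq,\phid)\;\geq\;\expo{-(\discloss)-6}\,\probpml(\bp_{pml,\phi},\phi).
\]
Finally, since $\bq \in \dsimplex$ and $\bq_{dpml,\phid}$ is by definition the maximizer of $\probpml(\cdot,\phid)$ over $\dsimplex$, we get $\probpml(\bq_{dpml,\phid},\phid) \geq \probpml(\bq,\phid)$, and the constant $6$ is absorbed into $\expo{-(\discloss)}$ without changing its order of magnitude, giving the claimed bound.

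I do not expect any serious obstacle here: the real technical work has already been done in proving \Cref{lemminmain} and \Cref{cordiscmain}. The only subtle point to be careful about is ensuring that $\disc(\bp'')$ lies in $\dsimplex$, which is precisely why the minimum probability lemma is invoked first rather than applying the discretization corollary directly to $\bp_{pml,\phi}$.
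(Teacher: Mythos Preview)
Your proof is correct and follows essentially the same route as the paper's. The paper's two-line proof applies \Cref{cordiscmain} directly to $\bp_{pml,\phi}$, implicitly relying on the blanket assumption (announced after \Cref{lemminmain}) that the PML distribution may be taken to have minimum nonzero probability $\geq \tfrac{1}{2n^2}$; you make this step explicit by routing through $\bp''$ first. The only cosmetic discrepancy is that your argument produces an extra $e^{-6}$ factor, while the corollary as stated has no $O(\cdot)$ in the exponent; the paper avoids this by absorbing the minimum-probability reduction into the standing assumption rather than invoking it inside the proof, but since every downstream use of the corollary carries an $O(\cdot)$, this is immaterial.
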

\begin{proof}
	Note that $\bq_{pml,\phi}=\disc(\bp_{pml,\phi})$ is a discrete pseudo-distribution.
	The result follows from \Cref{cordiscmain} applied to $\bp_{pml,\phi}$.
\end{proof}
In a approximate sense, our \Cref{cordiscmain} suggests that working with discrete profile and discrete pseudo-distributions is no different than original profile and distribution itself.

In {the} next two lemmas we rephrase the DPML optimization problem in forms that are amenable to convex relaxation. To do this, we introduce some new notation.\\
\begin{itemize}
	\item As before {let} $\bP$ and $\bM$ be sets representing discretization of probabilities and frequencies respectively. Recall that we used $1=\nn_1 < \dots < \nn_j\dots < \nn_{\btt}$ to denote {the} elements of set $\bM$ and we use $\pp_1 < \dots <\pp_i \dots  <\pp_{\boo}$ to denote {the} elements of set $\bP$. Let $\pvec \in \R^{\boo}$ be the vector with elements indexed from $1$ to $\boo$ and $i$th element equal to $\pp_i$. Also let $\mvec \in \R^{\bttpo}$ be the vector with elements indexed from $0$ to $\btt$. Its zeroth entry (denoted by $\nn_{0}$) is equal to $0$ and $j$th entry is equal to $\nn_j \in \bM$.
	\item Let $\bk \in \Z^{\boo \times \bttpo}$ be a variable matrix with entries $X_{ij}$ for $i \in \otboo,j \in \ztbtt$.
	As in the case for vector $\nn$, our second index $j$ of variable matrix $\bk$ starts at $0$ and not at $1$. Here the variable $\bk_{ij}$ counts the number of domain symbols $x \in \bX$ with probability value $\pp_i$ and frequency $\nn_j$. Further, $\bk_{i,0}$ counts 
	the number of unseen domain symbols $x \in \bX$ with probability value $\pp_i$.
	\item For any vector $\vvec$ and set $S$, we use $\vvec_{S}$ to denote the $|S|$ length vector corresponding to the portion of vector $\vvec$ associated with index set $S$.
	\item For a discrete profile $\phid=(\Fd_{j})_{j=1\dots \btt}$ (corresponding to sequence $y^n$), define\\
	$~~~~\bK_{\phid}\defeq \{\bk \in \Z^{\boo \times \bttpo}~\Big|~ ~(\bk^{T}\onevec)_{\otbtt} = \Fd, \text{ and } \pvec^{T} \bk \onevec \leq 1 \}$\\
	Note the constraint $(\bk^{T}\onevec)_{\otbtt} = \Fd$ does not involve $\bk_{0,j}$ variables that corresponds to unseen elements. These variables only appear in the constraint $\pvec^{T} \bk \onevec \leq 1$ which ensures our output is always a pseudo-distribution.
	\item For a discrete profile $\phid=(\Fd_{j})_{j=1\dots \btt}$ (of $y^n$) and a discrete pseudo-distribution $\bq$, also define\\
	$~~~~\bK_{\bq,\phid}\defeq \{\bk \in \Z^{\boo \times \bttpo}~\Big|~ ~(\bk^{T}\onevec)_{\otbtt} = \Fd, \text{ and } \bk \onevec=\levelq \}$
	where $\levelq \in \R^{\boo}$ and $\levelq_i$ denote the number of domain elements with probability value $\pp_i \in \bP$ in pseudo-distribution $\bq$. It will be clear from our next lemma why we define these constraint sets.
\end{itemize}
The advantage of probability and profile discretization we described earlier is that many types in the set $\{\phih ~|~ \Phi(\phih)=\phid\}$ share the same probability value of being observed and our goal is to group them using these $\bk_{ij}$ variables. Exploiting this idea, we next give a different formulation for {the} DPML objective.
\begin{lemma}[DPML objective reformulation]\label{DPMLequalitymain}
	For any discrete pseudo-distribution $\bq \in \simplex$ and discrete profile $\phid \in \Phidset$:
	\begin{equation}\label{eq:formulation2main}
	\probpml(\bq,\phid) = \cphid \sum_{\bk\in \bK_{\bq,\phid}} \prod_{i=1}^{\boo}\left(\pp_i^{(\bk \mvec)_i}\frac{(\bk\onevec)_i!}{\prod_{j=0}^{\btt}\bk_{ij}!}\right)
	\end{equation}
\end{lemma}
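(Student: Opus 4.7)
The plan is to start from the type-level expansion of the probability of a discrete profile, which by direct analogy with equation \eqref{eqlabeled} reads
\[
\probpml(\bq,\phid) \;=\; \cphid \sum_{\{\phih \,:\, \Phi(\phih)=\phid\}} \prod_{x\in\bX} \bq_x^{\phih_x},
\]
and then rewrite the sum by grouping types according to the integer matrix $\bk$ that records, for each $(i,j)$, the number of domain elements $x$ with $\bq_x=\pp_i$ and $\phih_x=\nn_j$ (with $\nn_0=0$ capturing unseen symbols). Since $\bq$ is a discrete pseudo-distribution every $\bq_x$ lies in $\bP$, and since $\Phi(\phih)=\phid$ every nonzero $\phih_x$ lies in $\bM$, so this map $\phih \mapsto \bk(\phih)$ is well-defined.

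Next I would verify that the image of this map is exactly $\bK_{\bq,\phid}$. Summing $\bk_{ij}$ over $j\in\ztbtt$ counts all symbols with probability $\pp_i$ under $\bq$, which equals $\levelq_i$, giving the constraint $\bk\onevec=\levelq$. Summing $\bk_{ij}$ over $i\in\otboo$ for a fixed $j\in\otbtt$ counts all symbols with frequency $\nn_j$, which equals $\Fd_j$, giving $(\bk^T\onevec)_{\otbtt}=\Fd$. Conversely, any $\bk\in\bK_{\bq,\phid}$ is realized by some $\phih$: partition the $\levelq_i$ symbols in $\{x:\bq_x=\pp_i\}$ into classes of sizes $\bk_{i,0},\dots,\bk_{i,\btt}$ and set the frequencies accordingly.

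For a fixed $\bk$, I would then count fibers and compute the common weight. The number of types $\phih$ mapping to $\bk$ is the product over $i$ of the multinomial coefficient choosing, inside the $\levelq_i=(\bk\onevec)_i$ symbols of probability $\pp_i$, which receive which frequency, yielding $\prod_{i=1}^{\boo}\frac{(\bk\onevec)_i!}{\prod_{j=0}^{\btt}\bk_{ij}!}$. The weight $\prod_x \bq_x^{\phih_x}$ depends only on $\bk$ via
\[
\prod_{x\in\bX}\bq_x^{\phih_x} \;=\; \prod_{i=1}^{\boo}\pp_i^{\sum_{j=0}^{\btt}\nn_j\bk_{ij}} \;=\; \prod_{i=1}^{\boo}\pp_i^{(\bk\mvec)_i},
\]
using $\nn_0=0$ so that unseen symbols contribute trivially. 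Substituting these two observations into the type-level sum gives the claimed reformulation.

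The only mild subtlety, and the one I would be most careful about, is bookkeeping around the $j=0$ column: unseen symbols do not appear in $\phid$ but they must appear in $\bk$ in order for the row sums $\bk\onevec=\levelq$ to be consistent with $\bq$. The definitions of $\bK_{\bq,\phid}$ and of $\mvec$ (with $\nn_0=0$) are designed precisely so that this column is unconstrained by $\phid$ yet contributes $\pp_i^0=1$ to the weight, so the argument above is internally consistent; the rest of the proof is routine multinomial counting.
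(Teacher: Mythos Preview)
Your proposal is correct and follows essentially the same approach as the paper: both start from the type-level expansion \eqref{eqlabeled}, define the map $\phih \mapsto \bk$ via $\bk_{ij}=|\{x:\bq_x=\pp_i,\ \phih_x=\nn_j\}|$, verify the image is $\bK_{\bq,\phid}$, compute the common weight $\prod_i \pp_i^{(\bk\mvec)_i}$, and count fibers by the multinomial $\prod_i (\bk\onevec)_i!/\prod_j \bk_{ij}!$. If anything, you are slightly more explicit than the paper about the surjectivity direction and the role of the $j=0$ column.
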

\begin{proof}
	Recall from \Cref{eqlabeled},
	$$\probpml(\bq,\phid) = \cphid \sum_{\{\phih ~|~ \Phi(\phih)=\phid\}} \prod_{x \in X}\bq_x^{\phih_x}~.$$
	For convenience, we call a type $\lphi$ valid if it belongs to set $\{\phih ~|~ \Phi(\phih)=\phid\}$. Recall that variable $\bk_{ij}$ represents the number of domain elements with probability value $\pp_i$ and frequency $\nn_j$. In this representation and for the discrete pseudo-distribution $\bq$, each valid type $\lphi$ corresponds to the following unique variable assignment $\bk \in \bK_{\bq,\phid}$:
	$\bk_{ij}=|\{x \in \bX ~| ~\bq_x=\pp_i \text { and } \lphi_x=\nn_j\}|$. Using the previous expression it is not hard to write the exact expression for the probability term associated with the valid type $\lphi$,
	\begin{equation}\label{eq:probtype}
	\prod_{x \in X}\bq_x^{\phih_x}=\prod_{i=1}^{\boo}\prod_{j=0}^{\btt}\pp_{i}^{\bk_{ij}\nn_{j}}=\prod_{i=1}^{\boo}\pp_{i}^{(\bk \mvec)_i}
	\end{equation}
	Previous discussion showed that every valid type corresponds to a unique variable assignment. However this uniqueness property no more holds in the reverse direction and multiple valid types might share the same variable assignment. This where our grouping occurs and is an interesting case that we study next.
	
	For any variable assignment $\bk$, it is clear from the middle term in Equation \ref{eq:probtype} that all valid types $\lphi$ associated with $\bk$ share the same probability value of being observed. With this observation, it is now enough to argue about the number of valid types associated with a variable assignment $\bk$ to prove our lemma. We make this argument next by constructing all valid types associated with $\bk$.
	
	First consider all domain elements with a fixed probability value $\pp_{i}$ and the number of these elements is equal to $\sum_{j=0}^{\btt}\bk_{ij}$. We can generate part of a valid type corresponding to probability value $\pp_{i}$ by picking any partition of these $\sum_{j=0}^{\btt}\bk_{ij}$ domain elements into groups of sizes $\{ \bk_{ij}\}_{j \in \ztbtt}$. This corresponds to a multinomial coefficient and the number of types associated with $\bk$ is just,
$$\frac{(\bk \onevec)_i!}{\prod_{j=0}^{\btt}\bk_{ij}!}~.$$
	Here we only generated partial valid types corresponding to probability value $\pp_{i}$. To generate a full valid type we just need to combine these partial valid types generated for each probability value $\pp_{i}$. Let $S_{\bk}$ denote all such full valid types associated with a variable assignment $\bk$ and generating a full valid type corresponds to groups (for each probability value $\pp_{i}$) of independent possibilities considered conjointly. Further the cardinality of set $S_{\bk}$ is just the multiplication of cardinalities of each of these groups and is explicitly written below,
	$$|S_{\bk}|=\prod_{i=1}^{\boo}\frac{(\bk\onevec)_i!}{\prod_{j=0}^{\btt}\bk_{ij}!}~.$$
	We are almost done with the proof and all we do next is formally derive the expression in our lemma statement to complete the proof.  From \Cref{eqlabeled},
	\begin{align*}
	\probpml(\bq,\phid) & =\cphid \sum_{\{\phih | \Phi(\phih)=\phid\}} \prod_{x \in X}\bq_x^{\phih_x}
	=\cphid \sum_{\bk\in \bK_{\bq,\phid}} \sum_{\{\phih \in S_{\bk}\}} \prod_{x \in X}\bq_x^{\phih_x}\\
	&=\cphid \sum_{\bk\in \bK_{\bq,\phid}} \sum_{\{\phih \in S_{\bk}\}} \prod_{i=1}^{\boo}\pp_{i}^{(\bk \mvec)_i}
	=\cphid \sum_{\bk\in \bK_{\bq,\phid}}  |S_{\bk}| \prod_{i=1}^{\boo}\pp_{i}^{(\bk \mvec)_i} \\
	& = \cphid \sum_{\bk\in \bK_{\bq,\phid}} \prod_{i=1}^{\boo}\left(\pp_i^{(\bk \mvec)_i}\frac{(\bk\onevec)_i!}{\prod_{j=0}^{\btt}\bk_{ij}!}\right)
	\end{align*}
\end{proof}

In the lemma above we wrote the $\probpml(\bq,\phid)$ in terms of constraint set $\bK_{\bq,\phid}$ and to use this definition we need access to pseudo-distribution $\bq$. We overcome this difficulty in our next lemma by giving an inequality that relates $\probpml(\bq,\phid)$ with constraint set $\bK_{\phid}$ that only depends on $\phid$ and not $\bq$ itself. 
\begin{lemma}[DPML objective relaxed]\label{lem:DPMLupperboundmain}
	For any sequence $y^n \in \bX^{n}$, and a discrete pseudo-distribution $\bq \in \simplex$ the DPML objective can be upper bounded by: 
	\begin{equation}\label{eq:w1main}
	\probpml(\bq,\phid) \leq \cphid \sum_{\bk\in \bK_{\phid}} \prod_{i=1}^{\boo}\left(\pp_i^{(\bk \mvec)_i}\frac{(\bk\onevec)_i!}{\prod_{j=0}^{\btt}\bk_{ij}!}\right)
	\end{equation}
	where $\phid = \Phid(y^n) \in \Phidset$ is discrete profile of $y^n$.
\end{lemma}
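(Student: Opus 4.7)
The plan is to combine the equality from \Cref{DPMLequalitymain} with a simple set-containment argument $\bK_{\bq,\phid} \subseteq \bK_{\phid}$, and then use non-negativity of the summand to pass from equality to inequality.

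First I would invoke \Cref{DPMLequalitymain} to write
\[
\probpml(\bq,\phid) \;=\; \cphid \sum_{\bk\in \bK_{\bq,\phid}} \prod_{i=1}^{\boo}\left(\pp_i^{(\bk \mvec)_i}\frac{(\bk\onevec)_i!}{\prod_{j=0}^{\btt}\bk_{ij}!}\right).
\]
The two constraint sets share the condition $(\bk^{T}\onevec)_{\otbtt} = \Fd$, but they differ in how they control the unseen (index $j=0$) column. In $\bK_{\bq,\phid}$ we impose $\bk\onevec = \levelq$ (every row sum is pinned to the count of domain elements having the corresponding probability in $\bq$), while in $\bK_{\phid}$ we impose the weaker $\pvec^{T}\bk\onevec \le 1$.

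Next I would verify the inclusion $\bK_{\bq,\phid} \subseteq \bK_{\phid}$. For any $\bk \in \bK_{\bq,\phid}$ the type constraint $(\bk^T\onevec)_{\otbtt} = \Fd$ already holds, so it suffices to check the pseudo-distribution budget. Using $\bk\onevec = \levelq$,
\[
\pvec^{T}\bk\onevec \;=\; \pvec^{T}\levelq \;=\; \sum_{i=1}^{\boo} \pp_i\,\levelq_i \;=\; \sum_{x \in \bX} \bq_x \;=\; \|\bq\|_1 \;\le\; 1,
\]
where the third equality is just the defining property of $\levelq$ (it counts how many coordinates of $\bq$ equal each $\pp_i$). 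Hence $\bk \in \bK_{\phid}$.

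Finally, since each factor $\pp_i^{(\bk\mvec)_i}\,(\bk\onevec)_i!\,/\,\prod_j \bk_{ij}!$ is non-negative (all $\bk_{ij}$ are non-negative integers, all $\pp_i \in (0,1]$), enlarging the index set of summation from $\bK_{\bq,\phid}$ to $\bK_{\phid}$ can only increase the sum, yielding the claimed inequality. I do not anticipate a real obstacle here; the content of the lemma is entirely captured by the inclusion $\bK_{\bq,\phid} \subseteq \bK_{\phid}$, and the only mildly subtle point is recognizing that the row-sum equality $\bk\onevec = \levelq$ implies the $\ell_1$ budget $\pvec^T \bk \onevec \le 1$ precisely because $\bq$ is a pseudo-distribution.
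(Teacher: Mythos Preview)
Your proposal is correct and follows exactly the paper's approach: invoke \Cref{DPMLequalitymain} and use the inclusion $\bK_{\bq,\phid}\subseteq \bK_{\phid}$. You have simply spelled out the verification of that inclusion (via $\pvec^{T}\bk\onevec = \|\bq\|_1 \le 1$) and the non-negativity of the summand in more detail than the paper, which states the proof in a single sentence.
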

\begin{proof}
	The proof follows because $ \bK_{\bq,\phid}\subseteq \bK_{\phid} $ and invoking \Cref{DPMLequalitymain}.
\end{proof}
In the above lemma we only showed one side of the inequality and it not clear how working with RHS relates to the LHS. Inf \Cref{subsec:algmain} we present an algorithm to achieve the other side of the inequality. The cardinality of set $\bK_{\phid}$ in the above formulation is small and we formalize this next.
\begin{lemma}[Cardinality of $\bK_{\phid}$]\label{lemcardmain}
	For any sequence $y^n \in \bX^{n}$ and its associated discrete profile $\phid = \Phid(y^n)$: $$|\bK_{\phid}| \leq  \expo{(\boo \times \btt)O(\log n)}.$$
\end{lemma}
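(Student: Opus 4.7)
The plan is to bound each entry of $\bk$ individually and then take the product over all entries. The matrix $\bk$ has $\boo \times (\btt+1)$ entries, so if each entry ranges over at most $\mathrm{poly}(n)$ values we immediately get $|\bK_{\phid}| \leq (\mathrm{poly}(n))^{\boo(\btt+1)} = \exp((\boo\btt)\,O(\log n))$, as desired.

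The entries split naturally into two cases depending on whether $j\geq 1$ (seen elements) or $j=0$ (unseen elements). For $j \in [1,\btt]$, the column-sum constraint $(\bk^T\onevec)_{\otbtt}=\Fd$ gives $\sum_{i=1}^{\boo}\bk_{ij} = \Fd_j$, and since $\bk_{ij}\in\Z_{\geq 0}$ this forces $\bk_{ij}\leq \Fd_j$. Moreover $\Fd_j \leq n$ because $\Fd_j$ counts domain elements of a particular (discretized) frequency, and the total number of seen elements is at most $n$. Hence each such entry takes at most $n+1$ possible values.

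For the $j=0$ column (unseen elements), the only constraint involving these variables is the pseudo-distribution constraint $\pvec^T \bk\onevec \leq 1$. From this, $\pp_i\,\bk_{i,0} \leq \pp_i (\bk\onevec)_i \leq \pvec^T \bk\onevec \leq 1$, so $\bk_{i,0} \leq 1/\pp_i$. By the choice of $\bP$ in \Cref{sec:prob_discrete}, the smallest element of $\bP$ satisfies $\pp_1 \geq (1+\epso)^{1-\boo}$, and by definition of $\boo$ this is $\Omega(1/n^2)$ up to constants. Therefore $\bk_{i,0} \leq O(n^2)$, so each entry in the $j=0$ column takes at most $O(n^2)$ possible values.

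Combining the two cases, every entry of $\bk$ takes at most $O(n^2)$ values, so
\[
|\bK_{\phid}| \;\leq\; \bigl(O(n^2)\bigr)^{\boo(\btt+1)} \;\leq\; \expo{(\boo\btt)\,O(\log n)}.
\]
There is no real obstacle here; the only mild care needed is to note that the $j=0$ variables are constrained only indirectly through the pseudo-distribution inequality, but that inequality alone suffices to give each $\bk_{i,0}$ a polynomial bound. The exact same counting argument will be reused later to bound the size of the grid over which we sum when passing from DPML to SDPML.
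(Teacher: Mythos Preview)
Your proof is correct and follows essentially the same approach as the paper: bound each entry of $\bk$ by $\poly(n)$ and take the product over the $\boo(\btt+1)$ entries. The paper's version is terser—it invokes \Cref{lemminmain} together with $\pvec^T\bk\onevec\leq 1$ to conclude uniformly that every coordinate lies in $[0,2n^2]$—whereas you split into the $j\geq 1$ case (using the column-sum constraint $\sum_i\bk_{ij}=\Fd_j\leq n$) and the $j=0$ case (using the probability constraint). Your case split is slightly more explicit and avoids needing \Cref{lemminmain} for the seen columns, but the two arguments are interchangeable here. One small remark: your claim that $\pp_1=\Omega(1/n^2)$ relies on $\boo$ being chosen (essentially) minimally so that $(1+\epso)^{1-\boo}\leq 1/(2n^2)$; even without that, $\boo=O(\log n/\epso)$ already gives $\pp_1\geq 1/\poly(n)$, which suffices for the $O(\log n)$ in the exponent.
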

\begin{proof}
	$\bK_{\phid}$ is a set of vectors in $\Z^{\boo \times \bttpo}$ and each coordinate takes an integer value in $[0,2n^2]$ (\Cref{lemminmain} combined with the constraint $\pvec^{T}\bk \onevec\leq 1$ ensures this fact). The lemma statement follows because $\bK_{\phid} \leq (2n^2)^{\boo \bttpo} \in \expo{(\boo \times \btt)O(\log n)}$.
\end{proof}
In our final optimization problem we just optimize over one term in the set $\bK_{\phid}$ instead of working with summation over all the terms. Focusing on the largest of these terms, gives a $1/|\bK_{\phid}|$ approximation of the sum. Combining this with \Cref{lemcardmain} motivates us to consider the following objective, define:
%
%
$$\probsdpml(\bk)\defeq  \prod_{i=1}^{\boo}\left(\pp_i^{(\bk \mvec)_i}\frac{(\bk \onevec)_i!}{\prod_{j=0}^{\btt}\bk_{ij}!}\right)$$
It is important to note that there is a discrete $\md$-pseudodistribution $\bq_{\bk}$ that correspond to each variable assignment $\bk\in \bK_{\phid}$. The description of this distribution is as follows: For each $i \in \otboo$, the number of domain elements with probability value $\pp_{i}$ in $\bq$ is equal to $(\bk \onevec)_{i}$ \footnote{This description only provides non zero probability values and also does not provide any labels, however it is sufficient for estimating all symmetric properties mentioned in this paper.}. We now go ahead and define the optimization problem involving $\probsdpml(\bk)$ that also help us compute the term that is largest in the summation of terms in \Cref{eq:w1main}. After this definition, we provide a lemma relating the PML objective with this new optimization problem.
\begin{defn}[Single discrete profile maximum likelihood]
	For any sequence $y^n \in \bX^{n}$ and its associated discrete profile $\phid = \Phid(y^n)$, a \emph{single discrete profile maximum likelihood} (SDPML) distribution $\bq_{sdpml,\phid}$ is:
	\begin{equation}\label{eq:sdpml}
	\bk_{sdpml,\phid}\defeq \argmax_{\bk\in \bK_{\phid}}\cphid \probsdpml(\bk)=\argmax_{\bk\in \bK_{\phid}}\probsdpml(\bk)
	\end{equation}
	and $\bq_{sdpml,\phid}$ is the pseudo-distribution corresponding to $\bk_{sdpml,\phid}$.
\end{defn}

\begin{lemma}[SDPML relation to PML]\label{lemsinglediscmain}
	For any sequence $y^n \in \bX^{n}$,
	$$\Cphid\probsdpml(\bk_{sdpml,\phid}) \geq \expo{-O(\odiscloss +\grouploss) }\probpml(\bp_{pml,\phi},\phi)$$
	where $\phi = \Phi(y^n)$ and $\phid = \Phid(y^n)$ are the profile and discrete profile associated with $y^n$.
\end{lemma}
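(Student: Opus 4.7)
The plan is to chain three results that have already been established, then control the loss at each step. First I would invoke \Cref{corfinaldiscmain}, which gives
$$\probpml(\bq_{dpml,\phid},\phid) \geq \expo{-O(\discloss)} \probpml(\bp_{pml,\phi},\phi).$$
This reduces the problem to lower bounding $\Cphid \probsdpml(\bk_{sdpml,\phid})$ in terms of $\probpml(\bq_{dpml,\phid},\phid)$, i.e.\ to showing that picking the single best term is not much worse than the full DPML objective.

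Next I would apply \Cref{DPMLequalitymain} to the DPML optimizer $\bq_{dpml,\phid}$ to expand the DPML objective as a sum of $\probsdpml$ terms:
$$\probpml(\bq_{dpml,\phid},\phid) = \cphid \sum_{\bk \in \bK_{\bq_{dpml,\phid},\phid}} \probsdpml(\bk).$$
Since $\bK_{\bq_{dpml,\phid},\phid} \subseteq \bK_{\phid}$, the sum on the right is dominated by $\cphid \sum_{\bk \in \bK_{\phid}} \probsdpml(\bk)$. A trivial averaging bound then implies that the largest term satisfies $\probsdpml(\bk_{sdpml,\phid}) \geq \tfrac{1}{|\bK_{\phid}|} \sum_{\bk \in \bK_{\phid}} \probsdpml(\bk)$, so losing a multiplicative factor of $|\bK_{\phid}|$ lets us replace the full sum by the single SDPML term.

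The last step is to bound this loss using \Cref{lemcardmain}, which gives $|\bK_{\phid}| \leq \expo{O(\boo \btt \log n)}$. Now I would substitute $\boo = O(\log n / \epso)$ and $\btt = O(\log n / \epst)$ from the definitions in \Cref{sec:prob_discrete} and \Cref{sub:mult_discrete} to get
$$\boo \btt \log n = O\!\left(\frac{\log^3 n}{\epso\,\epst}\right) = O(\grouploss).$$
Combining the three factors of loss — $\expo{O(\discloss)}$ from discretization, $|\bK_{\phid}| = \expo{O(\grouploss)}$ from collapsing the sum to one term, and absorbing the constant difference between $\discloss$ and $\odiscloss$ into the big-$O$ — produces the claimed bound $\expo{-O(\odiscloss+\grouploss)}$.

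The individual ingredients are all in place, so no step is a genuine obstacle; the only subtlety is being careful that the cardinality bound applies to $\bK_{\phid}$ (which is what we have after the inclusion $\bK_{\bq_{dpml,\phid},\phid} \subseteq \bK_{\phid}$), rather than to the smaller set $\bK_{\bq_{dpml,\phid},\phid}$ which depends on the unknown optimizer. This is exactly why \Cref{lem:DPMLupperboundmain} was phrased as an upper bound in terms of $\bK_{\phid}$, so the proof is essentially an assembly of the preceding lemmas with the parameter substitution at the end.
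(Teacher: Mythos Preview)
Your proposal is correct and follows essentially the same route as the paper: invoke \Cref{corfinaldiscmain} to pass to the DPML objective, use \Cref{lem:DPMLupperboundmain} (equivalently the inclusion $\bK_{\bq_{dpml,\phid},\phid}\subseteq \bK_{\phid}$ together with \Cref{DPMLequalitymain}) to bound the sum, then apply the cardinality bound from \Cref{lemcardmain} and substitute $\boo,\btt$. The paper's proof compresses your second step into a single inequality (losing a factor of $\expo{(\boo\btt)\log n}$ via the cardinality bound applied to the sum), but the content is identical.
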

\begin{proof} $~~~~~\Cphid \probsdpml(\bk_{sdpml,\phid}) \geq \Cphid \probsdpml(\bk_{dpml,\phid}) \geq \expo{-(\boo \times \btt)\log n}\probpml(\bq_{dpml,\phid},\phid)~~~~~$\\
	$$\geq  \expo{-O(\odiscloss +\grouploss)}\probpml(\bp_{pml,\phi},\phi)$$
	The second inequality follows from \Cref{lemcardmain}, \ref{lem:DPMLupperboundmain} and last follows from \Cref{corfinaldiscmain}.
\end{proof}

To simplify and better understand the expression in \Cref{lemsinglediscmain} just substitute $\epso=\epst=\frac{1}{n^{1/3}}$ and note that $\bk_{sdpml,\phid}\in \bK_{\bq_{sdpml},\phid}$, and $\probsdpml(\bk_{sdpml,\phid})$ is just one term in the summation of terms in \Cref{eq:formulation2main}. Using \Cref{DPMLequalitymain} we know that $\Cphid\probsdpml(\bk_{sdpml,\phid}) \leq \probpml(\bq_{sdpml,\phid},\phid)$ and combining this with previous lemma we get that the discrete pseudo-distribution $\bq_{sdpml,\phid}$ is an $\exp(-\widetilde{O}(n^{2/3}))$-approximate PML pseudo-distribution. All we do next is provide a convex relaxation for function $\probsdpml(\bk)$ to  arrive at our final optimization problem. This relaxation produces a real valued $\bk$ and later we give a rounding algorithm to get an integral solution.


\subsection{Convex relaxation of SDPML}\label{subsec:convexrelaxmain}
In the previous subsection we showed that the SDPML objective is a good approximation to the PML objective. However the objective function of SDPML is defined only over the integers and in this subsection we present a convex relaxation of SDPML. 

First, we consider the feasible set $\bK_{\phid}$ of SDPML and relax the integer constraint on variables $\bk_{ij}$ to get the following new constraint set:
\begin{equation}\label{setrelax}
\bK^{f}_{\phid}\defeq\{\bk \in \R^{\boo \times \bttpo}~\big|~ (\bk^{T}\onevec)_{\otbtt} = \Fd, \text{ and } \pvec^{T} \bk \onevec \leq  1 \} ~.
\end{equation}
%
In the later subsections, we show how we deal with these fractional solutions by presenting a rounding algorithm with a good approximation ratio.

Secondly, we relax the objective function of SDPML itself. The objective of SDPML is defined only on the integral set. We next define a continuous relaxation of this objective function which is also log-concave.
\begin{equation}\label{contsdpmlorigmain}
\begin{split}
\bg(\bk)&\defeq \prod_{i=1}^{\boo}\left(\pp_i^{(\bk \mvec)_i}\frac{\expo{(\bk \onevec)_i \log (\bk \onevec)_i-(\bk \onevec)_i}}{\prod_{j=0}^{\btt}\expo{\bk_{ij}\log \bk_{ij}-\bk_{ij}}}\right)\\
& =\expo{\log(\pvec)^{T}\bk \mvec + \sum_{i=1}^{\boo}(\bk \onevec)_i \log (\bk \onevec)_i-\sum_{i=1}^{\boo}\sum_{j=0}^{\btt}\bk_{ij}\log \bk_{ij}}
\end{split}
\end{equation}
The lemma below states that continuous version is not far from the actual SDPML objective. 
\begin{restatable}[$\bg(\cdot)$ approximates SDPML objective]{lemma}{lemmastirlingsapproxmain}
	\label{app:sterlingsapproxmain}
	For any sequence $y^n \in \bX^{n}$ and its associated discrete profile $\phid = \Phid(y^n)$. If $\bk \in \bK_{\phid}$, then
	$$\expo{-O(\log n) \boo \btt} \bg(\bk) \leq \probsdpml(\bk)\leq \expo{O(\log n) \boo } \bg(\bk)$$
\end{restatable}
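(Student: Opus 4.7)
The plan is to first cancel the probability factor $\prod_{i}\pp_i^{(\bk\mvec)_i}$ that appears identically in $\probsdpml(\bk)$ and in $\bg(\bk)$, reducing the claim to an estimate for
\[
\frac{\probsdpml(\bk)}{\bg(\bk)} \;=\; \prod_{i=1}^{\boo}\frac{(\bk\onevec)_i!}{\expo{(\bk\onevec)_i\log(\bk\onevec)_i - (\bk\onevec)_i}} \;\cdot\; \prod_{i=1}^{\boo}\prod_{j=0}^{\btt}\frac{\expo{\bk_{ij}\log\bk_{ij} - \bk_{ij}}}{\bk_{ij}!}\,.
\]
Every factor has the form $m!/(m/e)^m$ or its reciprocal, and I would control them via Stirling's two-sided estimate: for integer $m\geq 1$, $(m/e)^m \leq m! \leq e\sqrt{m}\,(m/e)^m$; and for $m=0$ the convention $0\log 0 = 0$ makes both $m!$ and $\expo{m\log m - m}$ equal $1$, so zero entries contribute nothing. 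In particular $\log m! - (m\log m - m) \in [0,\,O(\log m)]$ for every nonnegative integer $m$.

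For the upper bound $\probsdpml(\bk) \leq \expo{O(\boo\log n)}\bg(\bk)$, I would apply the upper half of Stirling to each $(\bk\onevec)_i!$ in the numerator and the lower half to each $\bk_{ij}!$ in the denominator. The numerator factors each contribute at most $e\sqrt{(\bk\onevec)_i}$, while the denominator factors each contribute at most $1$; since \Cref{lemcardmain} bounds every coordinate $\bk_{ij} \leq 2n^2$ (so $(\bk\onevec)_i \leq 2n^2\bttpo = \poly(n)$), each $e\sqrt{(\bk\onevec)_i} = \expo{O(\log n)}$. Taking the product over the $\boo$ values of $i$ gives the desired upper bound. For the lower bound the roles reverse: the lower Stirling estimate on each $(\bk\onevec)_i!$ makes every numerator factor at least $1$, while the upper Stirling estimate on each $\bk_{ij}!$ makes every denominator factor at least $1/(e\sqrt{\bk_{ij}}) = \expo{-O(\log n)}$. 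There are $\boo\bttpo = O(\boo\btt)$ such denominator factors, so their product is at least $\expo{-O(\boo\btt\log n)}$, matching the stated lower bound.

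The only real obstacle is keeping track of the asymmetry between the two bounds: the $\sqrt m$ slack in Stirling is multiplicative and is applied in only $\boo$ positions for the upper bound (the row sums $(\bk\onevec)_i$) but in $\boo(\btt+1)$ positions for the lower bound (the individual $\bk_{ij}$), which is exactly the source of the $\boo$ versus $\boo\btt$ discrepancy in the two exponents. Apart from this bookkeeping and the $\bk_{ij}=0$ boundary case handled by the stated convention, the argument reduces to a direct calculation.
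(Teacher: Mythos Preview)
Your proposal is correct and is essentially the same argument the paper gives: cancel the common probability factor, write the ratio as a product of terms of the form $m!/\expo{m\log m - m}$ (and their reciprocals), and bound each via the two-sided Stirling estimate, using that every $\bk_{ij}$ and every row sum $(\bk\onevec)_i$ is at most $\poly(n)$. The only cosmetic differences are that the paper uses the slightly cleaner form $1\leq m!/\expo{m\log m - m}\leq e\sqrt{m+1}$ (which absorbs the $m=0$ case without a separate convention) and bounds $(\bk\onevec)_i\leq 2n^2$ directly from $\pp_i\cdot(\bk\onevec)_i\leq 1$ rather than summing the coordinate bounds.
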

\begin{proof}
	See \Cref{app:structuremain}.
\end{proof}

A key fact about function $\bg(\bk)$ is that it is log-concave, so we can apply optimization machinery from convex optimization to optimize it.

\begin{restatable}{lemma}{lemmalogconcave}
	Function $\bg(\bk)$ is log-concave in $\bk$.
\end{restatable}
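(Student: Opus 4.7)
The plan is to show that $\log \bg(\bk)$ is a concave function of $\bk$. Taking logs in the second line of \Cref{contsdpmlorigmain} gives
\[
\log \bg(\bk) \;=\; \log(\pvec)^{T}\bk\,\mvec \;+\; \sum_{i=1}^{\boo}\Bigl[(\bk\onevec)_i \log (\bk\onevec)_i \;-\; \sum_{j=0}^{\btt}\bk_{ij}\log \bk_{ij}\Bigr].
\]
The first summand $\log(\pvec)^{T}\bk\,\mvec$ is linear in $\bk$ and hence concave. The bracketed term decomposes as a sum over rows $i$, where the $i$-th summand depends only on $\bk_{i,\cdot}\defeq(\bk_{i0},\ldots,\bk_{i\btt})$. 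So it suffices to show that for each $i$ the function
\[
f_i(\bk_{i,\cdot}) \;\defeq\; r_i \log r_i \;-\; \sum_{j=0}^{\btt} \bk_{ij}\log \bk_{ij}, \qquad r_i \defeq \sum_{j=0}^{\btt}\bk_{ij},
\]
is concave on the nonnegative orthant (with the usual convention $0\log 0 = 0$).

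I would then give two equivalent arguments for concavity of $f_i$ and pick the cleanest for the write-up. The first is a perspective-function argument: rewriting
\[
f_i(\bk_{i,\cdot}) \;=\; -\sum_{j=0}^{\btt} \bk_{ij}\log\frac{\bk_{ij}}{r_i} \;=\; r_i\, H\!\left(\bk_{i,\cdot}/r_i\right),
\]
where $H(q)=-\sum_j q_j\log q_j$ is the entropy, exhibits $f_i$ as the perspective of the concave function $H$, composed with the linear map $\bk_{i,\cdot}\mapsto(\bk_{i,\cdot},\,r_i)$, and perspectives of concave functions preserve concavity on $\{r_i>0\}$.

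The second, more self-contained argument is to compute the Hessian of $f_i$ directly. Differentiating gives $\partial_k f_i = \log(r_i/\bk_{ik})$ and
\[
\partial_{k}\partial_{l} f_i \;=\; -\frac{\mathbf{1}[k=l]}{\bk_{ik}} \;+\; \frac{1}{r_i},
\]
so for any $v\in\R^{\btt+1}$,
\[
v^{T}(\nabla^{2} f_i)v \;=\; \frac{(\sum_j v_j)^{2}}{r_i} \;-\; \sum_{j=0}^{\btt}\frac{v_j^{2}}{\bk_{ij}} \;\le\; 0,
\]
where the inequality is Cauchy--Schwarz applied to $v_j = (v_j/\sqrt{\bk_{ij}})\cdot\sqrt{\bk_{ij}}$, since $\sum_j \bk_{ij}=r_i$. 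Hence $\nabla^{2} f_i \preceq 0$ and $f_i$ is concave. Summing over $i$ and adding the linear term shows $\log\bg(\bk)$ is concave, i.e.\ $\bg(\bk)$ is log-concave. I do not foresee a real obstacle: the only mild subtlety is handling points on the boundary where some $\bk_{ij}=0$, which is dealt with by the usual $0\log 0=0$ convention and continuity (equivalently, by extending $f_i$ as a closed concave function).
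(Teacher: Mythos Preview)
Your proposal is correct and follows essentially the same approach as the paper: decompose $\log\bg(\bk)$ into a linear term plus a sum over rows, and show each row term is concave via its Hessian (the paper proves the equivalent statement that $-f_i$ is convex in a separate lemma, using the congruence $D_a^{1/2}HD_a^{1/2}=I-\tfrac{1}{A}a^{1/2}a^{1/2\top}\succeq 0$, which is the same Cauchy--Schwarz inequality you invoke). Your additional perspective-of-entropy argument is a nice alternative not present in the paper.
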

\begin{proof}
	See \Cref{app:structuremain}.
\end{proof}

Maximizing log concave objective function $\bg(\cdot)$ over the relaxed convex set $\bK^{f}_{\phid}$ easily reduces to a convex optimization problem and can be solved efficiently. Below is the convex relaxation of our SDPML objective,
\begin{equation}\label{eq:convoptoriginalmain}\argmin_{\bk\in \bK^{f}_{\phid}}-\log \bg(\bk)~.
\end{equation}
Formulation above is in the form of a general optimization problem $(11.14 )$ in \cite{LSW15} that solves it using a cutting plane method. The algorithm in \cite{LSW15} requires to implement a $\del$-2nd-order-optimization oracle (defined later in the appendix) and we provide an algorithm to implement this $\del$-2nd-order-optimization oracle for our convex program. Further, to upper bound the number of calls to such an oracle we need to bound the singular values of our constraint matrix. Everything put together we get the following theorem.
\begin{restatable}[Solver for convex relaxation to SDPML]{theorem}{theoremcuttingplanemain}
	\label{thm:cutting_plane_use}
	There exists a cutting plane method based algorithm that outputs a feasible solution $\bk'$ to optimization problem \ref{eq:convoptoriginalmain}, i.e. $\bk' \in \bK^{f}_{\phid}$ and satisfies:
	$$-\log \bg(\bk) \leq \argmin_{\bk\in \bK^{f}_{\phid}}-\log \bg(\bk)+\delta$$
	in $\runningtimedelta$ time.
\end{restatable}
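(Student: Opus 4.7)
The plan is to invoke the cutting plane framework of \cite{LSW15} for their general optimization problem (11.14), applied directly to \Cref{eq:convoptoriginalmain}. Two ingredients are needed: a $\del$-2nd-order-optimization oracle for the objective $-\log \bg(\bk)$ over the polytope $\bK^{f}_{\phid}$, and a bound on the singular values of the equality-constraint matrix, which controls the number of outer iterations and the $\log(1/\delta)$ dependence. Convexity of $-\log \bg$ follows from the log-concavity of $\bg$ established earlier, so the framework applies.

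First I would implement the oracle. Writing out
\[
-\log \bg(\bk) = -\log(\pvec)^{T}\bk \mvec - \sum_{i=1}^{\boo}(\bk \onevec)_i \log (\bk \onevec)_i + \sum_{i=1}^{\boo}\sum_{j=0}^{\btt}\bk_{ij}\log \bk_{ij} ~,
\]
its value and gradient are computable in $O(\boo\btt)$ time. The Hessian is block-diagonal with $\boo$ blocks of size $(\btt+1)\times(\btt+1)$, each of the form ``positive diagonal minus a rank-one term'' (the diagonal coming from the $\bk_{ij}\log \bk_{ij}$ sum, the rank-one term from $(\bk \onevec)_i \log (\bk \onevec)_i$). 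This structure lets each block linear system be solved in $O(\btt)$ time by the Sherman--Morrison formula, so approximate Newton steps restricted to the $\btt$-dimensional equality subspace are computed in $O(\boo\btt + \poly(\btt))$ time per call, with $\log^{O(1)}(\boo\btt/\delta)$ inner iterations to meet the $\del$-accuracy specification of the oracle.

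Next I would bound the singular values of the constraint matrix. The $\btt$ equalities $(\bk^{T}\onevec)_{\otbtt} = \phid$ have disjoint-support $\{0,1\}$-rows of equal $\ell_2$-norm $\sqrt{\boo}$, so all non-zero singular values equal $\sqrt{\boo}$. Combined with the \emph{a priori} box $\bk_{ij} \in [0, 2n^2]$ afforded by \Cref{lemminmain} (via the constraint $\pvec^{T}\bk\onevec \leq 1$), this yields a $\log^{O(1)}(\boo\btt/\delta)$ bound on both the effective diameter of the feasible region and the number of outer cutting-plane iterations. Plugging the per-call oracle cost and the iteration count into the complexity statement for (11.14) in \cite{LSW15} produces the claimed $\runningtimedelta$ running time: the $\btt^2 \boo \log^{O(1)}(\boo\btt/\delta)$ term aggregates the oracle's $O(\boo\btt)$ inner cost across iterations, while the $\btt^3 \log^{O(1)}(\boo\btt/\delta)$ term accounts for the intrinsic $\btt$-dimensional cutting-plane work.

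The main obstacle I expect is the careful implementation and error analysis of the $\del$-2nd-order-optimization oracle within the promised per-call time. In particular, one must control boundary behaviour when some $\bk_{ij}$ approaches zero (so that $\log \bk_{ij}$ blows up), either by an interior-point-style barrier argument or by maintaining a strictly positive iterate bounded away from the boundary, and one must show that the accumulated multiplicative error over the inner Newton-like iterations contributes at most an additive $\delta$ to the final objective value. Once this oracle is in place together with the (essentially immediate) singular-value bound, the theorem follows by a direct appeal to the black-box guarantees of \cite{LSW15}.
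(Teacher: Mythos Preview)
Your proposal misreads the structure of the framework (11.14) in \cite{LSW15}. That framework optimizes a \emph{linear} objective $\vecc^{T}\vecx$ over $\{\vecx\in\conset:\ma\vecx=\vecb\}$, and the $\del$-2nd-order-optimization oracle is attached to the convex \emph{set} $\conset$: given any $\vecc$ and $\lambda>0$ it must approximately solve $\max_{\vecx\in\conset}\vecc^{T}\vecx-\lambda\|\vecx\|^{2}$. It is not an oracle ``for the objective $-\log\bg$ over the polytope $\bK^{f}_{\phid}$'', and Newton steps on $-\log\bg$ (block Hessian, Sherman--Morrison, etc.) do not implement it. To fit \Cref{eq:convoptoriginalmain} into this format the paper first performs the hypograph reformulation: it introduces a scalar $\st$, replaces the nonlinear objective by the linear objective ``maximize $\st$'', and pushes the nonlinearity into the set $\conset=\{(\vx,\st):\fnf(\vx)\geq\st,\ \st\geq -n^{2}\}$, after adding a small quadratic regularizer to $\log\bg$ so that bounded objective values force bounded $\vx$. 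The oracle problem then reads $\max_{(\vx,\st)\in\conset}\vd\cdot\vx+\sd\,\st-\lambda(\|\vx\|_{F}^{2}+\st^{2})$, and the paper solves it not by Newton on $-\log\bg$ but by a Lagrangian scheme: binary search over a multiplier $\alpha\geq 0$ on the constraint $\fnf(\vx)\geq\st$, and for each fixed $\alpha$ exploit the separable first-order conditions (a closed form for $\st$, and for each row $i$ a one-dimensional monotone equation in the row sum, solved by nested binary search). This reformulation and oracle design are the substance of the proof and are absent from your plan; in particular the equality constraints $\ma\vecx=\vecb$ are handled by the outer cutting plane loop, not inside the oracle, so ``Newton steps restricted to the equality subspace'' does not correspond to anything in the framework.

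A second, smaller gap concerns the constraint matrix. In the paper's reduction the row coming from $\pvec^{T}\bk\onevec\leq 1$ is appended to the $\btt$ profile rows, giving $\ma\in\R^{(\btt+1)\times\boo\bttpo}$. The singular values of $\ma$ are then no longer all equal to $\sqrt{\boo}$; the minimum eigenvalue of $\ma\ma^{\top}$ requires a genuine block calculation (the $\pvec$ row interacts with every profile row), and the paper carries this out explicitly to obtain the $\Omega(\boo)$ lower bound that feeds into the running-time guarantee. Your singular-value argument, while correct for the $\btt$ disjoint-support profile rows alone, applies to the wrong matrix.
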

\begin{proof}
	See \Cref{app:cuttingplane}.
\end{proof}

\subsection{Algorithm and runtime analysis}\label{subsec:algmain}
Here we give the complete description of our final algorithm to find an approximate PML distribution. The analysis in previous sections suggests that it suffices to find a discrete pseudo-distribution that approximates SDPML objective, which we replaced by a convex relaxation. First, we give the complete algorithm. Then, we present the algorithm that takes an optimal solution to the convex proxy for SDPML and produces an approximate PML distribution.
Recall that $\bK^{f}_{\phid}\defeq\{\bk \in \R^{\boo \times \bttpo}~\big|~ (\bk^{T}\onevec)_{\otbtt} = \phid, \text{ and } \pvec^{T} \bk \onevec \leq  1 \}$.
\begin{algorithm}[H]
	\caption{Algorithm for approximate PML}\label{euclid}
	\begin{algorithmic}[1]
		\Procedure{Approximate PML}{}
		\State Solve $\bk'=\argmax_{\bk\in \bK^{f}_{\phid}}\bg(\bk)$. 
		\State Round fractional solution $\bk'$ to integral solution $\bk \in \bK_{\phid}$.
		\State Construct discrete pseudo-distribution $\bq_{X}$ corresponding to $\bk$.
		\State \textbf{return} $\frac{\bq_{X}}{\|\bq_{X}\|_1}$
		\EndProcedure
	\end{algorithmic}
\end{algorithm}
In the algorithm we first maximize over the set of fractional solutions $\bK^{f}_{\phid}$ instead of  $\bK_{\phid}$ and we {\em round} our solution $\bk'$ to an integral solution $\bk$ that belongs to extended set of $\bK_{\phid}$. The rounding algorithms is presented next.
\begin{algorithm}[H]
	\caption{Rounding algorithm}
	\begin{algorithmic}[1]
		\Procedure{Rounding}{$\bk'$}
		\State Define $\bk=\mzero^{(\boo+\btt)\times \bttpo}$.
		\State $\bk_{ij}= \floor{\bk'_{ij}} \in \Z \quad \forall i \in \otboo,j \in \ztbtt$ 
		\For{$j \in \otbtt$}
		\State Create a new level set with probability value $\pvec_{\boo+j}=\frac{\sum_{i=1}^{\boo}(\bk'_{ij}-\bk_{ij})\pvec_{i}}{\sum_{i=1}^{\boo}(\bk'_{ij}-\bk_{ij})}$.
		\State Assign $\bk_{\boo+j,j}=\sum_{i=1}^{\boo}(\bk'_{ij}-\bk_{ij})=\Fd_{j}-\sum_{i=1}^{\boo}\bk_{ij} \in \Z$.
		\EndFor
		\State \textbf{return} $\bk$
		\EndProcedure
	\end{algorithmic}
	\label{alg:rounding}
\end{algorithm} 
\newcommand{\pvecext}{\pvec_{ext}}

The solution $\bk$ returned by the rounding procedure is defined on an extended discretized probability space $\bP'$, where $\bP' \defeq  \bP \cup \{\pvec_{\boo+j} \}_{j \in \otbtt}$. To derive the relation between solution $\bk$ and PML objective value we need to extend some definitions studied earlier. First, we define $\pvecext$ as the vector whose entries are exactly the elements of $\bP'$. Note we still use $\pvec_{i}$ for all $i\in [1,\boo+\btt]$ to refer to elements of $\pvecext$. Further, for any pseudo-distribution $\bq$ with all its probability values in set $\bP'$ (we call it an \emph{extended discrete pseudo-distribution)} and discrete profile $\phid$, we first define following extensions of sets $\bK_{\bq,\phid}$ and $\bK_{\phid}$,
$$\bK^{ext}_{\bq,\phid}\defeq \{\bk \in \Z^{(\boo+\btt) \times \bttpo}~\Big|~ ~(\bk^{T}\onevec)_{\otbtt} = \Fd, \text{ and } \bk \onevec=\levelq \}~,$$
$$\bK^{ext}_{\phid}\defeq \{\bk \in \Z^{(\boo+\btt) \times \bttpo}~\Big|~ ~(\bk^{T}\onevec)_{\otbtt} = \Fd, \text{ and } \pvecext^{T} \bk \onevec \leq  1 \}~,$$
where $\levelq \in \R^{\boo+\btt}$ and $\levelq_i$ denote the number of domain elements with probability value $\pp_i \in \bP'$. 

Further by \Cref{DPMLequalitymain}, for any extended discrete pseudo-distribution $\bq$ and a discrete profile $\phid$, the following equality holds,
\begin{equation}\label{eq:formulation3}
\probpml(\bq,\phid) = \cphid \sum_{\bk\in \bK^{ext}_{\bq,\phid}} \prod_{i=1}^{\boo+\btt}\left(\pp_i^{(\bk \mvec)_i}\frac{(\bk\onevec)_i!}{\prod_{j=0}^{\btt}\bk_{ij}!}\right)
\end{equation}
Similarly for any $\bk \in \bK^{ext}_{\bq,\phid}$, below are the natural extension of definitions of functions $\probsdpml(\cdot)$ and $\bg(\cdot)$,
$$\probsdpml(\bk)\defeq \prod_{i=1}^{\boo+\btt}\left(\pp_i^{(\bk \mvec)_i}\frac{(\bk \onevec)_i!}{\prod_{j=0}^{\btt}\bk_{ij}!}\right)$$
$$ \bg(\bk)\defeq\prod_{i=1}^{\boo+\btt}\left(\pp_i^{(\bk \mvec)_i}\frac{\expo{(\bk \onevec)_i \log (\bk \onevec)_i-(\bk \onevec)_i}}{\prod_{j=0}^{\btt}\expo{\bk_{ij}\log \bk_{ij}-\bk_{ij}}}\right)$$

We are ready to analyze our rounding algorithm. First, we provide some interesting properties solution $\bk$ returned by our rounding procedure.

\begin{clm}\label{clmrounding} The solution $\bk \in \Z^{(\boo+\btt) \times \bttpo}$ returned by rounding procedure (\ref{alg:rounding}) above satisfies:
	\begin{enumerate}
		\item $(\bk' \onevec)_{i}-\bttpo \leq (\bk \onevec)_{i} \leq (\bk' \onevec)_{i} \quad \forall i \in \otboo$
		\item $\bk \in \bK^{ext}_{\phid}$.
	\end{enumerate}
\end{clm}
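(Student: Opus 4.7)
The plan is to verify the two parts directly from the definitions in the rounding procedure. Part 1 is an immediate consequence of the rounding step, while part 2 requires checking both defining constraints of $\bK^{ext}_{\phid}$.

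For part 1, I would fix $i \in \otboo$ and observe that since rows $\boo+1,\dots,\boo+\btt$ of $\bk$ are zero in row $i$, we have $(\bk\onevec)_i = \sum_{j=0}^{\btt}\bk_{ij} = \sum_{j=0}^{\btt}\floor{\bk'_{ij}}$. The upper bound $(\bk\onevec)_i \leq (\bk'\onevec)_i$ is then immediate from $\floor{x} \leq x$, and the lower bound $(\bk\onevec)_i \geq (\bk'\onevec)_i - (\btt+1)$ follows by summing the bound $\floor{x} \geq x-1$ over the $\btt+1$ indices $j \in \ztbtt$.

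For part 2, I would verify the two constraints defining $\bK^{ext}_{\phid}$ separately. First, for the column-sum constraint $(\bk^T\onevec)_{\otbtt} = \Fd$: fix $j \in \otbtt$ and note that among rows $\boo+1,\dots,\boo+\btt$ only row $\boo+j$ has a nonzero entry in column $j$ by construction. Thus
\begin{equation*}
(\bk^T\onevec)_j \;=\; \sum_{i=1}^{\boo}\bk_{ij} + \bk_{\boo+j,j} \;=\; \sum_{i=1}^{\boo}\bk_{ij} + \Big(\Fd_j - \sum_{i=1}^{\boo}\bk_{ij}\Big) \;=\; \Fd_j,
\end{equation*}
using precisely the definition of $\bk_{\boo+j,j}$ in the algorithm.

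Second, for the mass constraint $\pvecext^T \bk \onevec \leq 1$, the key observation is that $\pp_{\boo+j}$ was chosen as the weighted average of $\pp_1,\dots,\pp_{\boo}$ with weights $\bk'_{ij}-\bk_{ij}$, precisely so that $\pp_{\boo+j}\cdot \bk_{\boo+j,j} = \sum_{i=1}^{\boo}(\bk'_{ij}-\bk_{ij})\pp_i$, i.e.\ the fractional mass discarded by flooring in column $j$ is exactly re-introduced at level $\boo+j$. Splitting $\pvecext^T \bk\onevec$ into contributions from the original $\boo$ levels and the new $\btt$ levels and then substituting this identity, the telescoping gives
\begin{equation*}
\pvecext^T \bk\onevec \;=\; \sum_{i=1}^{\boo}\pp_i \bk_{i0} + \sum_{i=1}^{\boo}\sum_{j=1}^{\btt}\pp_i \bk'_{ij} \;\leq\; \sum_{i=1}^{\boo}\pp_i \bk'_{i0} + \sum_{i=1}^{\boo}\sum_{j=1}^{\btt}\pp_i \bk'_{ij} \;=\; \pvec^T\bk'\onevec \;\leq\; 1,
\end{equation*}
where the inequality uses $\bk_{i0}=\floor{\bk'_{i0}}\leq \bk'_{i0}$ and the final bound uses $\bk' \in \bK^f_{\phid}$.

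Neither step should be technically difficult — the bookkeeping in part 2 (tracking which rows/columns are nonzero after rounding) is the only place one has to be careful, and the whole argument is essentially a check that the definition of $\pp_{\boo+j}$ in the rounding algorithm is a mass-preserving reassignment of the discarded fractional parts.
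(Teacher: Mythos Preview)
Your proposal is correct and follows essentially the same argument as the paper: part~1 via $\bk'_{ij}-1\leq \floor{\bk'_{ij}}\leq \bk'_{ij}$ summed over $j$, and part~2 by checking that the new level sets restore the column sums and preserve total mass. Your treatment of the $j=0$ column in the mass computation (handling it via the inequality $\bk_{i0}\leq \bk'_{i0}$) is in fact slightly more careful than the paper's displayed chain, which tacitly drops that term.
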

\begin{proof}
	Claim (1) follows because $\bk'_{ij}-1\leq \bk_{ij} \leq \bk'_{ij}$ for all $i \in \otboo,j \in \ztbtt$. Now note $\sum_{i=1}^{\boo+\btt}\bk_{ij}=\sum_{i=1}^{\boo}\bk'_{ij}=\Fd_{\nn_j} \quad \forall j \in \otbtt$ because of the adjustments made by new level sets. Further, 
	\begin{align*}
	\pvecext^{T} \bk \onevec=\sum_{i=1}^{\boo+\btt}\pvec_{i}(\bk \onevec)_i & = \sum_{i=1}^{\boo}\pvec_{i}(\bk \onevec)_i+\sum_{j=1}^{\btt}\pvec_{\boo+j}(\bk \onevec)_{\boo+j}\\
	&=\sum_{i=1}^{\boo}\pvec_{i}(\bk \onevec)_i+\sum_{j=1}^{\btt}\sum_{i=1}^{\boo}(\bk_{ij}'-\bk_{ij})\pvec_i\\
	&=\sum_{j=1}^{\btt}\sum_{i=1}^{\boo}\bk_{ij}'\pvec_i =\pvec^{T} \bk' \onevec \leq  1
	\end{align*}
	The final inequality follows because $\bk' \in \bK^{f}_{\phid}$ and therefore $\bk \in \bK^{ext}_{\phid}$ and Claim (2) follows.
\end{proof}

We next show that for any solution $\bk$ returned by our rounding algorithm (\ref{alg:rounding}), the values $\probsdpml(\bk)$ and $\bg(\bk)$ are close to each other and we summarize this next.
\begin{restatable}{lemma}{lemroundster} 
	\label{app:roundster}
	For any $\bk \in \bK^{ext}_{\phid}$ returned by rounding procedure above satisfies:
	\begin{equation}\label{relation}
	\expo{-O(\log n) \boo \btt} \bg(\bk) \leq \probsdpml(\bk)\leq \expo{O(\log n) (\boo+\btt) } \bg(\bk)
	\end{equation}
\end{restatable}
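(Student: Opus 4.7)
The plan is to reduce the result to essentially the same Stirling-approximation calculation used in Lemma \ref{app:sterlingsapproxmain}, adapted to the fact that $\bk \in \bK^{ext}_{\phid}$ now has $\boo + \btt$ rows (the extra $\btt$ rows being the ``new level sets'' created by the rounding procedure). First I would write the ratio explicitly:
\[
\frac{\probsdpml(\bk)}{\bg(\bk)} = \prod_{i=1}^{\boo+\btt} \frac{(\bk\onevec)_i!}{\exps{(\bk\onevec)_i \log (\bk\onevec)_i - (\bk\onevec)_i}} \cdot \prod_{i=1}^{\boo+\btt}\prod_{j=0}^{\btt} \frac{\exps{\bk_{ij}\log \bk_{ij} - \bk_{ij}}}{\bk_{ij}!},
\]
with the usual conventions $0! = 1$ and $0\log 0 = 0$. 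The core tool is Stirling's bound $n^n e^{-n} \leq n! \leq e^{O(\log n)} n^n e^{-n}$ for $n \geq 1$, applied to every factorial, together with a uniform bound $\bk_{ij} \leq 2n^2$ on the entries of $\bk$ that ensures $\log \bk_{ij} \leq O(\log n)$ whenever $\bk_{ij} \geq 1$.

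For the upper bound, every factor in the first product is at most $e^{O(\log n)}$ (by the upper half of Stirling), and every factor in the second product is at most $1$ (by the lower half of Stirling). There are $\boo + \btt$ factors in the first product, yielding the bound $\expo{O(\log n)(\boo+\btt)}$. For the lower bound, every factor in the first product is at least $1$ (by the lower half of Stirling), and every factor in the second product is at least $e^{-O(\log n)}$. Here it is important to count only the \emph{nonzero} $\bk_{ij}$, since zero factors are exactly $1$. By Claim \ref{clmrounding}, each of the $\btt$ extended rows $i>\boo$ has a single nonzero entry (at column $j$), so the number of nonzero $\bk_{ij}$ is at most $\boo(\btt+1) + \btt = O(\boo\btt)$. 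This yields the lower bound $\expo{-O(\log n)\boo\btt}$, matching the stated inequality.

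The only nontrivial bookkeeping is verifying the uniform magnitude bound $\bk_{ij} \leq 2n^2$ so that the Stirling error term is $O(\log n)$ uniformly across all coordinates. Claim \ref{clmrounding} gives $\bk \in \bK^{ext}_{\phid}$, so $\pvecext^{T}\bk\onevec \leq 1$. Each new level set probability $\pvec_{\boo+j}$ is, by construction in Algorithm \ref{alg:rounding}, a convex combination of $\{\pvec_1,\dots,\pvec_{\boo}\} \subseteq [\tfrac{1}{2n^2},1]$, so $\pvec_i \geq \tfrac{1}{2n^2}$ holds for every $i \in [1,\boo+\btt]$; combined with $\pvecext^T \bk\onevec \leq 1$ this gives $\bk_{ij} \leq 1/\pvec_i \leq 2n^2$ as needed. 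I expect this extension-of-probability-space step, together with careful handling of zero entries when applying Stirling, to be the only subtle point — the rest is a direct term-by-term calculation that mirrors the proof of Lemma \ref{app:sterlingsapproxmain}.
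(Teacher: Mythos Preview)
Your proposal is correct and follows essentially the same argument as the paper: write the ratio $\probsdpml(\bk)/\bg(\bk)$, apply the weak Stirling bound $1 \le n!/\exps{n\log n - n} \le e\sqrt{n+1}$ termwise, and use that $\pvec_i \ge \tfrac{1}{2n^2}$ for all $i\in[1,\boo+\btt]$ (since the new level-set values are convex combinations of the old ones) together with $\pvecext^T\bk\onevec \le 1$ to cap every entry by $2n^2$. One small citation fix: the fact that each extended row $i=\boo+j$ has a single nonzero entry (namely $\bk_{\boo+j,j}$) is a direct consequence of the construction in Algorithm~\ref{alg:rounding}, not of Claim~\ref{clmrounding}.
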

\begin{proof}
	See \Cref{app:structuremain}.
\end{proof}

Further using \Cref{eq:formulation3}, for any $\bk \in \bK^{ext}_{\phid}$, if $\bq_{\bk}$ is its corresponding extended discrete pseudo-distribution, then
\begin{equation}\label{otherside}
\probpml(\bq_{\bk},\phid) \geq \Cphid \probsdpml(\bk)
\end{equation}

In our next lemma, we show that the solution $\bk \in \bK^{ext}_{\phid}$ returned by the rounding procedure approximates $\probsdpml(\bk_{sdpml})$. Note from \Cref{lemsinglediscmain}, we know that $\probsdpml(\bk_{sdpml})$ is a good approximation to the PML objective.

\begin{lemma}\label{lem:roundedsoln}
	The solution $\bk \in \bK^{ext}_{\phid}$ returned by rounding procedure above satisfies:
	$$\probsdpml(\bk) \geq \extrel\probsdpml(\bk_{sdpml})$$
\end{lemma}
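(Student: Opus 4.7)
My plan is to chain four bounds, routing between $\probsdpml(\bk)$ and $\probsdpml(\bk_{sdpml})$ through the concave surrogate $\bg$ and exploiting the fact that $\bk'$ maximizes $\bg$ over $\bK^{f}_{\phid}$. Specifically, I would invoke in order: (i) the lower bound of \Cref{app:roundster}, $\probsdpml(\bk) \geq \expo{-O(\log n)\boo\btt}\bg(\bk)$; (ii) a direct estimate showing that rounding from $\bk'$ to $\bk$ costs at most a factor $\expo{-O(\boo\btt\log n)}$ in $\bg$; (iii) optimality, $\bg(\bk') \geq \bg(\bk_{sdpml})$, which holds since $\bk_{sdpml} \in \bK_{\phid} \subseteq \bK^{f}_{\phid}$; (iv) the upper bound of \Cref{app:sterlingsapproxmain}, giving $\bg(\bk_{sdpml}) \geq \expo{-O(\boo\log n)}\probsdpml(\bk_{sdpml})$. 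Multiplying these yields $\probsdpml(\bk) \geq \expo{-O(\boo\btt\log n)}\probsdpml(\bk_{sdpml})$, and $\boo\btt\log n = O(\log^{3} n/(\epso\epst))$ matches the exponent in $\extrel$.

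The main obstacle is step (ii), which I would prove by explicit computation. Viewing $\bk'$ as a point of $\bK^{ext}_{\phid}$ via zero-padding on the new rows $\boo+1,\dots,\boo+\btt$ (under the convention $0\log 0 = 0$ this leaves $\bg$ unchanged), I would split $\log\bg(\bk) - \log\bg(\bk')$ into a probability part $\log(\pvecext)^{T}\bk\mvec - \log(\pvec)^{T}\bk'\mvec$ and an entropy part. For the probability part, the weighted-average choice $\pvec_{\boo+j} = \frac{\sum_i (\bk'_{ij}-\bk_{ij})\pvec_i}{\sum_i (\bk'_{ij}-\bk_{ij})}$ together with concavity of $\log$ yields $\log\pvec_{\boo+j}\cdot \bk_{\boo+j,j} \geq \sum_{i=1}^{\boo}(\bk'_{ij}-\bk_{ij})\log\pvec_i$; multiplying by $\nn_j$ and summing over $j$ shows the probability part is nonnegative. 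For the entropy part, each new row has only a single nonzero entry, so $(\bk\onevec)_{\boo+j} = \bk_{\boo+j,j}$ and the two logarithmic contributions from row $\boo+j$ cancel exactly, leaving only the first $\boo$ rows to bound; there $|\bk_{ij}-\bk'_{ij}|\leq 1$ and $|(\bk\onevec)_i-(\bk'\onevec)_i|\leq \bttpo$, while \Cref{lemminmain} together with $\pvec^{T}\bk'\onevec\leq 1$ forces $\bk'_{ij},(\bk'\onevec)_i\leq 2n^{2}$. The standard Lipschitz bound $|x\log x - y\log y|\leq |x-y|(\log\max(x,y)+1)$ then contributes $O(\bttpo\log n)$ per row-sum term and $O(\log n)$ per entrywise term, totaling $O(\boo\btt\log n)$ in absolute value, which is exactly the rounding-loss bound we needed.
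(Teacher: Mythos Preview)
Your proof is correct and follows essentially the same route as the paper: the paper also chains \Cref{app:roundster}, a direct bound on $\bg(\bk)/\bg(\bk')$ (probability part via AM--GM, which is exactly your concavity-of-$\log$ argument, and entropy part via $|(\bk\onevec)_i-(\bk'\onevec)_i|\leq\bttpo$), optimality of $\bk'$ over $\bK^{f}_{\phid}\supseteq\bK_{\phid}$, and \Cref{app:sterlingsapproxmain}. One small caveat: your stated Lipschitz bound $|x\log x-y\log y|\leq |x-y|(\log\max(x,y)+1)$ fails when $\max(x,y)<1/e$ (the right-hand side can be negative), but in your application $\bk_{ij}$ and $(\bk\onevec)_i$ are nonnegative integers, so a one-line case split (if $\bk_{ij}=0$ then $|\bk'_{ij}\log\bk'_{ij}|\leq 1/e$, else $\bk_{ij}\geq 1$ and the bound holds) recovers the needed $O(\log n)$ per-entry and $O(\bttpo\log n)$ per-row-sum estimates.
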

\begin{proof}
	For any $\bk' \in \bK^{f}_{\phid}$ and $\bk \in \bK^{ext}_{\phid}$ returned by our rounding procedure below are the explicit expressions for $\bg(\bk)$ and $\bg(\bk')$: 
	$$ \bg(\bk)=\left(\prod_{i=1}^{\boo}\pp_i^{(\bk \mvec)_i}\frac{\expo{(\bk \onevec)_i \log (\bk \onevec)_i}}{\prod_{j=0}^{\btt}\expo{\bk_{ij}\log \bk_{ij}}}\right)\left(\prod_{j=1}^{\btt} \pp_{\boo+j}^{\nn_j\bk_{\boo+j,j} }\cdot 1\right)$$
	$$ \bg(\bk')=\prod_{i=1}^{\boo}\left(\pp_i^{(\bk' \mvec)_i}\frac{\expo{(\bk ' \onevec)_i \log (\bk ' \onevec)_i}}{\prod_{j=0}^{\btt}\expo{\bk'_{ij}\log \bk'_{ij}}}\right)$$
	We first bound the probability term:
	\begin{equation}\label{eq:probterm}
	\begin{split}
	\prod_{i=1}^{\boo}&\pp_i^{(\bk' \mvec)_{i}}=\left(\prod_{i=1}^{\boo}\pp_i^{(\bk \mvec)_i}\right)\left(\prod_{i=1}^{\boo}\pp_i^{\sum_{j=1}^{\btt}\nn_j(\bk'_{ij}-\bk_{ij})}\right)\\
	&=\left(\prod_{i=1}^{\boo}\pp_i^{(\bk \mvec)_i}\right)\left(\prod_{j=1}^{\btt} \prod_{i=1}^{\boo}\pp_i^{\nn_j(\bk'_{ij}-\bk_{ij})}\right)\\
	&=\left(\prod_{i=1}^{\boo}\pp_i^{(\bk \mvec)_i}\right)\left(\prod_{j=1}^{\btt} \left(\prod_{i=1}^{\boo}\pp_i^{(\bk'_{ij}-\bk_{ij})}\right)^{\nn_j}\right)\\
	&\leq \left(\prod_{i=1}^{\boo}\pp_i^{(\bk \mvec)_i}\right)\left(\prod_{j=1}^{\btt} \left(\frac{\sum_{i=1}^{\boo}\pp_i(\bk'_{ij}-\bk_{ij})}{\sum_{i=1}^{\boo}(\bk'_{ij}-\bk_{ij})}\right)^{\nn_j\sum_{i=1}^{\boo}(\bk'_{ij}-\bk_{ij})}\right)\\
	&\leq \left(\prod_{i=1}^{\boo}\pp_i^{(\bk \mvec)_i}\right)\left(\prod_{j=1}^{\btt} \pp_{\boo+j}^{\nn_j\bk_{\boo+j,j}}\right)\\
	\end{split}
	\end{equation}
	The first inequality follows because $\mvec_{0}=0$. The fourth inequality follows from AM-GM inequality. The final expression above is the probability term associated with $\bk$ and the equation above shows that our rounding procedure only increases the probability term and all that matters is to bound the counting term that we do next.
	\begin{equation}\label{eq:countterm}
	\begin{split}
	\frac{\bg(\bk)}{\bg(\bk')} & \geq \prod_{i=1}^{\boo}\frac{\expo{(\bk \onevec)_i \log (\bk \onevec)_i-(\bk' \onevec)_{i} \log (\bk' \onevec)_{i}}}{\prod_{j=0}^{\btt}\expo{\bk_{ij}\log \bk_{ij}-\bk'_{ij}\log \bk'_{ij}}} \geq \prod_{i=1}^{\boo}\expo{(\bk \onevec)_i \log (\bk \onevec)_i-(\bk' \onevec)_{i} \log (\bk' \onevec)_{i}}\\
	&\geq \prod_{i=1}^{\boo} \expo{-\bttpo \log n} \geq \expo{-(\boo \times \bttpo)\log n}
	\end{split}
	\end{equation}
	In the derivation above we used (1) in Claim \ref{clmrounding}. It remains now to lower bound $\probsdpml(\bk)$:
	\begin{align*}
	\probsdpml(\bk)& \geq \extrel\bg(\bk) \geq \extrel\bg(\bk')\\
	& \geq \extrel\bg(\bk_{sdpml}) \geq \extrel\probsdpml(\bk_{sdpml})
	\end{align*}
	The first and second inequality follow from \Cref{app:roundster} and \Cref{eq:countterm} respectively. In the third inequality we used $\bg(\bk') \geq \bg(\bk_{sdpml})$ because $\bk'$ is the optimal solution over the relaxed constraint set $\bK^{f}_{\phid}$ and finally invoked \Cref{app:sterlingsapproxmain} to relate $\probsdpml$ and $\bg$.
\end{proof}

Now construct the extended discrete pseudo-distribution $\pk$ corresponding to the solution $\bk$ returned by \Cref{alg:rounding} by assigning $(\bk \onevec)_{i}$ elements with a probability value of $\pp_{i}$ $(\forall i \in [\boo+\btt])$. We next provide the proof for our main theorem that proves the distribution $\frac{\pk}{\|\pk\|_1}$ is an approximate PML distribution.
Our next theorem proves that the distribution $\frac{\pk}{\|\pk\|_1}$ is an approximate PML distribution.

\thmapproxpml*
\begin{proof}
	Let $\pk$ be the pseudo-distribution corresponding to solution $\bk$ returned by \Cref{alg:rounding}. Set $\bp_{approx}=\frac{\pk}{\|\pk\|_1}$, then:
	\begin{align*}
	\probpml(\bp_{approx},\phi) &\geq \probpml(\pk,\phi) 
	\geq \expo{-7\epst n \log n} \probpml(\pk,\phid) \geq  \expo{-7\epst n \log n} \Cphid \probsdpml(\bk) \\
	&\geq \extrel\Cphid \probsdpml(\bk_{sdpml})\\
	& \geq \expo{-O\left(\odiscloss +\grouploss\right) }\probpml(\bp_{pml,\phi},\phi)
	\end{align*}
	The first inequality follows because $\|\pk\|_1\leq 1$, second inequality from \Cref{cordiscmain}, third inequality follows because $\bk \in \bK^{ext}_{\pk,\phid}$ (because we constructed $\pk$ from $\bk$) and $\probsdpml(\bk)$ computes just one term in the summation over $\bK^{ext}_{\pk,\phid}$ (look at the representation of $\probpml(\pk,\phid)$ as summation over $\bK^{ext}_{\pk,\phid}$ from \Cref{otherside}), fourth inequality comes from \Cref{lem:roundedsoln} and last inequality follows from \Cref{lemsinglediscmain}.
	
	We bound the total running time as follows. Given a profile $\phi$, it takes $O(\phi_{size})$ to write down the discrete profile $\phid$, then we need to solve the convex optimization problem \ref{eq:convoptoriginalmain} which further takes $O\left(\frac{1}{\epst^2 \times \epso} \log^{O(1)}(\frac{1}{\epso \epst})+\frac{1}{\epst^3} \log^{O(1)}(\frac{1}{\epso \epst}) \right)$ and our final rounding algorithm can be implemented in time $O(\frac{\log^2n}{\epso \epst})$ ($=O(\boo\btt)$). The claimed running time follows by combining these bounds.
\end{proof}

\section{Unified optimal sample complexity for symmetric properties}\label{app:universal}
Here we study the connection between a universal estimator and approximate PML. We first recall the following theorem in~\cite{ADOS16}.
 \begin{thm}[Theorem 4 of \cite{ADOS16}]\label{thmbeta}
For a symmetric property $\bff$, suppose there is an estimator $\hat{\bff}:\Phi^n\rightarrow \R$, such that for any $\bp$ and observed profile $\phi$, 
$$\bbP(|\bff(\bp)-\hat{\bff}(\phi)|\geq \epsilon) \leq \delta$$
any $\beta$-approximate PML distribution satisfies:
$$\bbP(|\bff(\bp)-\bff(\bp^{\beta}_{pml,\phi})|) \geq 2 \epsilon) \leq \frac{\delta |\Phi^n|}{\beta}$$
\end{thm}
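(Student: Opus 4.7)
The plan is to follow a triangle-inequality style reduction that exploits both the estimator's worst-case guarantee and the defining property of $\beta$-approximate PML. Let $S \defeq \{\phi \in \Phi^n : |\bff(\bp) - \bff(\bp^{\beta}_{pml,\phi})| \geq 2\epsilon\}$ denote the bad set; the task is to upper bound $\sum_{\phi \in S}\bbP(\bp,\phi)$.

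First, I will split $S$ via the triangle inequality. For every $\phi \in S$, at least one of the following holds: $(A)$ $|\bff(\bp) - \hat{\bff}(\phi)| \geq \epsilon$, or $(B)$ $|\hat{\bff}(\phi) - \bff(\bp^{\beta}_{pml,\phi})| \geq \epsilon$. Letting $S_A, S_B \subseteq S$ be the corresponding subsets, we have $S \subseteq S_A \cup S_B$. The estimator hypothesis applied to the distribution $\bp$ directly gives $\sum_{\phi \in S_A} \bbP(\bp,\phi) \leq \bbP(|\bff(\bp)-\hat{\bff}(\phi)|\geq \epsilon) \leq \delta$, so only $S_B$ remains to be controlled.

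Next, for each fixed $\phi \in S_B$ I apply the estimator hypothesis to the distribution $\bp^{\beta}_{pml,\phi}$ (valid since the hypothesis holds for every distribution). Because $\phi$ itself is a profile on which the estimator errs by more than $\epsilon$ relative to $\bff(\bp^{\beta}_{pml,\phi})$, its single-profile probability mass is at most the mass of the full error event, yielding $\bbP(\bp^{\beta}_{pml,\phi},\phi) \leq \delta$. Now I use the defining property of approximate PML: since $\bp_{pml,\phi}$ maximizes $\bbP(\cdot,\phi)$ and $\bbP(\bp^{\beta}_{pml,\phi},\phi) \geq \beta \cdot \bbP(\bp_{pml,\phi},\phi)$,
\[
\bbP(\bp,\phi) \;\leq\; \bbP(\bp_{pml,\phi},\phi) \;\leq\; \tfrac{1}{\beta}\,\bbP(\bp^{\beta}_{pml,\phi},\phi) \;\leq\; \tfrac{\delta}{\beta}.
\]
Summing this per-profile inequality over the at most $|\Phi^n|$ profiles in $S_B$ gives $\sum_{\phi \in S_B}\bbP(\bp,\phi) \leq \delta|\Phi^n|/\beta$. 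Combining with the bound on $S_A$ (which is absorbed since $\beta \leq 1 \leq |\Phi^n|$) yields the claimed bound.

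The main conceptual step, and the one easiest to get wrong, is the per-profile inequality $\bbP(\bp^{\beta}_{pml,\phi},\phi) \leq \delta$: it requires observing that the estimator guarantee, a statement about the random profile drawn from $\bp^{\beta}_{pml,\phi}$, in particular upper bounds the mass of any single profile on which the estimator has large error relative to that distribution. Everything else is the reduction of profile probabilities under $\bp$ to those under $\bp^{\beta}_{pml,\phi}$ via PML optimality, which is essentially a change-of-measure with loss factor $1/\beta$.
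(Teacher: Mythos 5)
The paper cites this as Theorem~4 of \cite{ADOS16} and does not reprove it, so there is no in-text proof to compare against; your proposal should be judged against the argument in \cite{ADOS16}, which it essentially reproduces. Your decomposition (triangle inequality into event $(A)$ ``$\hat{\bff}$ errs w.r.t.\ $\bp$'' and event $(B)$ ``$\hat{\bff}$ errs w.r.t.\ $\bp^{\beta}_{pml,\phi}$''), the per-profile bound $\bbP(\bp^{\beta}_{pml,\phi},\phi)\leq\delta$ for $\phi\in S_B$, and the change of measure $\bbP(\bp,\phi)\leq\bbP(\bp_{pml,\phi},\phi)\leq\tfrac{1}{\beta}\bbP(\bp^{\beta}_{pml,\phi},\phi)$ are exactly the three ideas that drive the result, and you correctly identify the per-profile step as the crux.

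The one flaw is in the last sentence. You bound $\sum_{\phi\in S_A}\bbP(\bp,\phi)\leq\delta$ using the aggregate estimator guarantee, and separately $\sum_{\phi\in S_B}\bbP(\bp,\phi)\leq\delta|\Phi^n|/\beta$; adding these gives $\delta+\delta|\Phi^n|/\beta$, and the remark that $\delta$ ``is absorbed'' because $\beta\leq 1\leq|\Phi^n|$ only gives $2\delta|\Phi^n|/\beta$, not the stated $\delta|\Phi^n|/\beta$. The clean fix is to treat $S_A$ exactly the way you treat $S_B$: for $\phi\in S_A$, the estimator guarantee applied to $\bp$ gives the per-profile bound $\bbP(\bp,\phi)\leq\delta\leq\delta/\beta$. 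Then every $\phi\in S=S_A\cup S_B$ satisfies $\bbP(\bp,\phi)\leq\delta/\beta$, and summing over the at most $|\Phi^n|$ profiles in $S$ directly yields $\bbP(S)\leq\delta|\Phi^n|/\beta$ with no slack. Once that bookkeeping is tightened, the proof is complete.
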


Our goal here is to prove \Cref{theoremADOSuniversal} that shows the following: computing an $\exp(\otilde(n^{2/3}))$-approximate PML distribution is sufficient to get a plug-in universal estimator that is sample competitive for estimating support size, coverage, entropy and distance from uniform. The proof presented in \cite{ADOS16} showed this connection for an $\exp(\sqrt{n})$-approximate PML estimator and it is easy to see the proof presented in \cite{ADOS16} works for any $\exp(n^{1-\delta})$-approximate PML estimator for constant $\delta >0$. We will need the following two lemmas from \cite{ADOS16, HR18}.
 \begin{lemma}[Lemma 2 of \cite{ADOS16}]\label{lemchange}
 Let $\alpha>0$ be a fixed constant. 
  For entropy, support, support coverage, and distance to uniformity there exist profile based
  estimators that use the optimal number of samples, have bias $\epsilon$ and if we change
  any of the samples, changes by at most $c \cdot \frac{n^{\alpha}}{n}$, where $c$ is a positive constant.
\end{lemma}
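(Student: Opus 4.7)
The plan is to establish the lemma by exhibiting, for each of the four properties separately, a concrete profile-based estimator drawn from the existing literature and verifying the three required properties: (i) optimal sample complexity, (ii) bias at most $\epsilon$, and (iii) bounded change when a single sample is replaced. Since each of the four properties is symmetric, any empirical estimator can without loss of generality be rewritten as a function of the profile, so (i) amounts to invoking known sample-complexity optimality results and (ii) reduces to the bias analyses in those works.

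Concretely, for entropy I would take the polynomial-approximation based estimator of Wu-Yang (or equivalently, Jiao-Venkat-Han-Weissman); for support size, the polynomial estimator of Wu-Yang; for support coverage, the Good-Toulmin style estimator analyzed by Orlitsky-Suresh-Wu; and for distance to uniformity, the estimator of Jiao-Han-Weissman. In each case the optimality of the sample complexity together with the $\epsilon$ bias bound is established in the cited works, and because the estimator depends on the sample only through symbol frequencies it is automatically a function of the profile. Thus the first two ingredients of the lemma follow directly by citation.

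The heart of the argument, and where I expect the main work to lie, is the bounded-differences property. Replacing one sample $y^n_i = x$ by $y^n_i = y$ changes only the frequencies of $x$ and $y$ (each by exactly one). Each estimator above has the form $\sum_{x \in \bX} g(\bff(y^n,x)) + (\text{lower order empirical correction terms})$, where $g$ is a truncation of a low-degree polynomial approximating the relevant per-symbol contribution; the degree is $K = O(\log n)$ and the coefficients can be bounded by $n^{o(1)}$ using standard estimates on Chebyshev expansions. Hence at most two terms in the sum are affected, and each changes by at most $\max_k |g(k+1) - g(k)|$, which the same coefficient bounds control by $n^{o(1)}/n$; an analogous estimate handles the empirical correction terms. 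Choosing $c$ to be the maximum constant across the four properties, and $\alpha$ large enough to absorb all polylogarithmic slack coming from the degree and coefficient bounds, yields the claimed $c \cdot n^{\alpha}/n$ sensitivity.
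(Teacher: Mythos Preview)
The paper does not prove this lemma at all: it is stated as ``Lemma 2 of \cite{ADOS16}'' and simply invoked as a black box in the proof of \Cref{theoremADOSuniversal}. So there is no proof in the paper to compare against.

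Your proposal is essentially the argument that \cite{ADOS16} gives for this lemma. The four estimators you name (Wu--Yang / Jiao--Venkat--Han--Weissman for entropy, Wu--Yang for support, Orlitsky--Suresh--Wu for coverage, Jiao--Han--Weissman for distance to uniformity) are exactly the ones used there, and optimality and bias are indeed imported by citation. The bounded-differences step is also the right one: each estimator is a sum $\sum_x g(\bff(y^n,x))$ with $g$ a degree-$O(\log n)$ polynomial (or a piecewise combination of such a polynomial and a plug-in term), replacing one sample perturbs two frequencies by one, and Markov-type bounds on the derivative of a degree-$K$ polynomial give $|g(k+1)-g(k)| \le \mathrm{poly}(\log n)/n$, which is absorbed into $c\,n^{\alpha}/n$ for any fixed $\alpha>0$. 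One small caution: some of these estimators switch between a polynomial regime and a bias-corrected plug-in regime depending on whether the empirical frequency crosses a threshold, so you also need to check that a single-sample change that pushes a symbol across the threshold does not cause a large jump; this is handled in \cite{ADOS16} but is worth flagging explicitly in your write-up.
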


\begin{lemma}[\cite{HR18}]\label{lem:partition} $|\Phi^n| \leq \expo{3\sqrt{n}}$
	\end{lemma}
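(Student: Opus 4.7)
The plan is to identify $|\Phi^n|$ with the classical partition function $p(n)$ and then bound $p(n)$ using the standard generating function argument, tuning the base point so the resulting constant falls below $3$.

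First I would observe that a profile $\phi \in \Phi^n$ is exactly an unordered way to write $n = \sum_j d_j \cdot \phi_j$ with distinct positive integers $d_j$ and positive multiplicities $\phi_j$, since the definition of the profile excludes unseen domain elements (so all $d_j \geq 1$) and the lengths satisfy $n = \sum_j d_j \phi_j$. This tuple-with-multiplicity encoding is a canonical bijection with integer partitions of $n$, so $|\Phi^n| = p(n)$.

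Next I would invoke the generating function identity $\sum_{n \geq 0} p(n)\,x^n = \prod_{k \geq 1} (1-x^k)^{-1}$, valid for $0 < x < 1$. Since all coefficients are nonnegative, for every such $x$ we have
\[
p(n) \;\leq\; x^{-n} \prod_{k=1}^{\infty} \frac{1}{1 - x^k},
\quad\text{so}\quad
\log p(n) \;\leq\; -n \log x \,+\, \sum_{k=1}^{\infty} \log\frac{1}{1 - x^k}.
\]
To bound the second sum I would expand $-\log(1-x^k) = \sum_{j \geq 1} x^{jk}/j$, swap the order of summation, and get $\sum_{k \geq 1} \tfrac{1}{j} \cdot \tfrac{x^j}{1-x^j}$ (after reindexing). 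Then I would use the AM-GM estimate $1 + x + \cdots + x^{j-1} \geq j x^{(j-1)/2}$, which gives $1-x^j \geq j(1-x)x^{(j-1)/2}$ and therefore $\tfrac{x^j}{j(1-x^j)} \leq \tfrac{x^{(j+1)/2}}{j^2(1-x)}$. Summing over $j$ yields
\[
\sum_{k=1}^{\infty} \log\frac{1}{1 - x^k} \;\leq\; \frac{1}{1-x}\sum_{j \geq 1} \frac{1}{j^2} \;=\; \frac{\pi^2}{6(1-x)}.
\]

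Finally I would choose $x = 1 - c/\sqrt{n}$ with $c = \pi/\sqrt{6}$. Using $-\log(1-y) \leq y + y^2$ for $y \leq 1/2$, one gets $-n \log x \leq c\sqrt{n} + c^2$, while the second term contributes $\pi^2\sqrt{n}/(6c)$. Plugging in,
\[
\log p(n) \;\leq\; \sqrt{n}\Bigl(c + \tfrac{\pi^2}{6c}\Bigr) + c^2 \;=\; \pi\sqrt{\tfrac{2n}{3}} + \tfrac{\pi^2}{6} \;\approx\; 2.566\sqrt{n} + 1.65,
\]
which is at most $3\sqrt{n}$ for all $n$ exceeding a small absolute constant; the finitely many remaining small-$n$ cases are checked by hand (e.g., $p(n) \leq 2^n$ trivially handles $n$ up to any needed threshold). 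The main obstacle is arranging the AM-GM step cleanly so that the sum telescopes into a $\pi^2/6$ factor rather than something larger like $1/(1-x)^2$, which would spoil the constant.
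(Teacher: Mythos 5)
Your proof is correct. It's worth noting, though, that the paper itself does not prove this lemma at all: the bracketed reference is to Hardy and Ramanujan's 1918 paper, and the statement is simply cited as known. So there is no in-paper argument to compare against; what you have supplied is a self-contained elementary derivation filling that gap.

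Your argument is the standard one. The identification $|\Phi^n| = p(n)$ is right: a profile of length $n$ is a set of distinct nonzero frequencies $d_j$ together with positive multiplicities $\phi_j$ satisfying $\sum_j d_j\phi_j = n$, which is exactly a partition of $n$ in multiplicity notation. The generating-function step $p(n) \le x^{-n}\prod_{k\ge 1}(1-x^k)^{-1}$ is valid by positivity of coefficients, and the double-sum manipulation gives $\sum_k -\log(1-x^k) = \sum_{j\ge 1} \frac{1}{j}\cdot\frac{x^j}{1-x^j}$. The AM--GM bound $1-x^j \ge j(1-x)x^{(j-1)/2}$ is correct and does exactly what you want: it lets you discard the $x^{(j+1)/2}\le 1$ factor to get $\frac{\pi^2}{6(1-x)}$ rather than a $1/(1-x)^2$-type quantity. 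Plugging in $x = 1 - \pi/(\sqrt{6n})$ then gives $\log p(n) \le \pi\sqrt{2/3}\,\sqrt{n} + \pi^2/6 \approx 2.566\sqrt{n} + 1.65$, which is $\le 3\sqrt{n}$ once $\sqrt{n} \ge \pi^2/(6(3 - \pi\sqrt{2/3}))$, i.e.\ $n \ge 15$; the finitely many smaller $n$ are handled by $p(n) \le 2^{n-1} \le \exp(3\sqrt{n})$, which holds up to $n \approx 20$. One tiny point of hygiene: you also need $c/\sqrt{n} \le 1/2$ (i.e.\ $n \ge 7$) for the inequality $-\log(1-y)\le y+y^2$, but that is subsumed by the same small-$n$ check. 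The only genuinely different route would have been to cite the sharper Hardy--Ramanujan asymptotic $p(n)\sim \tfrac{1}{4n\sqrt{3}}e^{\pi\sqrt{2n/3}}$ directly, as the paper implicitly does; your version trades the precise asymptotic for an explicit, elementary, fully quantified bound, which is arguably preferable for a self-contained write-up.
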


\theoremADOSuniversal*
\begin{proof}
Let $\bff$ be the property we wish to estimate, $\bp$ be the underlying distribution and $x^n, \phi$ are the observed sequence and profile. Set $\alpha = \eta$ ($\eta$ is a constant and so is $\alpha$) and let $\hat{\bff}$ be the estimator returned by Lemma \ref{lemchange}. The bias of estimator $\hat{\bff}$ is $$|\bff(p)-\E[\hat{\bff}(x^n)]| \leq \epsilon$$
By McDiarmid’s inequality we get:
$$\bbP\left(|\E[\hat{\bff}(x^n)]-\hat{\bff}(x^n)|\geq \epsilon \right)\leq \expo{-\frac{2\epsilon^2}{nc^2_{*}}}$$
where $c_{*}$ is the change in $\hat{\bff}$ when one of the samples is changed. Using these inequalities we get:
\begin{align*}
\bbP \left( |\bff(p)-\hat{\bff}(x^n)|\geq  2\epsilon \right) & \leq \bbP \left(|\bff(p)-\E[\hat{\bff}(x^n)]|+|\E[\hat{\bff}(x^n)]-\hat{\bff}(x^n)| \geq 2\epsilon \right)\\ 
& \leq  \bbP \left(|\E[\hat{\bff}(x^n)]-\hat{\bff}(x^n)| \geq \epsilon \right)\\
& \leq \expo{-\frac{2\epsilon^2}{nc^2_{*}}} = \expo{-\frac{2\epsilon^2}{n \left( \frac{c n^{\alpha}}{n} \right)^2}}=\expo{-\frac{2\epsilon^2 }{c^2}n^{1-2 \alpha}}
\end{align*}
In the derivation above we used $c_{*}\leq c \cdot \frac{n^{\alpha}}{n}$ (\Cref{lemchange}).
Invoke Theorem \ref{thmbeta} with $\delta=\expo{-\frac{2\epsilon^2 }{c^2}n^{1-2 \alpha}}$ we get:
\begin{align*}
\bbP\left(|\bff(\bp)-\bff(\bp_{pml,\phi})|\geq 4\epsilon \right) & \leq \frac{\delta |\Phi^n|}{\beta}  \leq \frac{\expo{-\frac{2\epsilon^2 }{c^2}n^{1-2 \alpha}}\expo{3\sqrt[]{n}}}{\expo{-O(n^{\frac{2}{3}}\log^3n)}}\\ 
&\leq \expo{-5n^{\frac{2}{3}+\eta}}\expo{O(n^{\frac{2}{3}}\log^3n)}\\
& \leq \expo{-n^{\frac{2}{3}}}
\end{align*}
In the first inequality we used \Cref{lem:partition}.
\end{proof}

\bibliographystyle{alpha}
\bibliography{PML}
\appendix
\section{Minimum Probability}\label{app:minprob}
Here we provide the proof for our first technical lemma that gives a lower bound of $\Omega(\frac{1}{n^2})$ for the minimum non-zero probability value of a $\expo{-6}$-approximate PML distribution. To show such a result we use an independent rounding algorithm that is described in the lemma below. We need the following simple claim for the proof of our next lemma.
\begin{claim}\label{clm1}
For any non-negative and non-zero vector $\vvec$ and a profile $\phi \in \Phi^n$,
$$\probpml(\vvec,\phi)\leq (\|\vvec\|_1)^n \probpml(\bp_{pml,\phi},\phi)$$
\end{claim}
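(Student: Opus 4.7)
The plan is to reduce to the distribution case by normalizing $\vvec$. Let $\bp \defeq \vvec/\|\vvec\|_1$, which is well-defined since $\vvec$ is non-negative and non-zero; by construction $\bp \in \simplex$.

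The key observation is that the exponents in $\probpml(\vvec,\phi)$ always sum to $n$. Explicitly, for every sequence $y^n \in \bX^n$, $\sum_{x \in \bX} \bff(y^n,x) = n$, so
\begin{equation*}
\bbP(\vvec, y^n) = \prod_{x \in \bX} \vvec_x^{\bff(y^n,x)} = \|\vvec\|_1^n \prod_{x \in \bX}\left(\frac{\vvec_x}{\|\vvec\|_1}\right)^{\bff(y^n,x)} = \|\vvec\|_1^n \, \bbP(\bp, y^n).
\end{equation*}
Summing over all $y^n \in \bX^n$ with $\Phi(y^n) = \phi$ using the definition in \eqref{eqpml1main} gives $\probpml(\vvec,\phi) = \|\vvec\|_1^n \, \probpml(\bp,\phi)$.

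Finally, since $\bp \in \simplex$, the definition of $\bp_{pml,\phi}$ as the maximizer of $\probpml(\cdot,\phi)$ over $\simplex$ yields $\probpml(\bp,\phi) \leq \probpml(\bp_{pml,\phi},\phi)$, and combining with the previous equality gives the claim.

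There is essentially no obstacle here; the only thing to be a little careful about is ensuring $\vvec$ has $\|\vvec\|_1 > 0$ so that the normalization is valid, which is exactly the ``non-zero'' hypothesis in the claim statement.
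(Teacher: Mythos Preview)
Your proof is correct and takes essentially the same approach as the paper's one-line argument, just with the homogeneity step $\probpml(\vvec,\phi) = \|\vvec\|_1^n\,\probpml(\vvec/\|\vvec\|_1,\phi)$ spelled out more explicitly at the level of individual sequences.
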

 \begin{proof}
 $$\probpml(\vvec,\phi) = (\|\vvec\|_1)^n\probpml\left(\frac{\vvec}{\|\vvec\|_1},\phi\right)\leq (\|\vvec\|_1)^n \probpml(\bp_{pml,\phi},\phi)$$
 \end{proof}
 
\lemmamin*
 \begin{proof}
 We do independent rounding to show the existence of such a distribution. For notational convenience we use $\bp_{pml,\phi}(x)$ to denote the probability of symbol $x$ in the PML distribution $\bp_{pml,\phi}$. Let $\bS\defeq\{ x \in \bX ~|~ \bp_{pml,\phi}(x) <\frac{1}{n^2} \}$ and for all $x \in \bS$ we define a random variable $Y_x$ as follows:
 $$Y_x\defeq
 \begin{cases}
 \frac{1}{n^2} \quad \text{with probability } ~n^2\bp_{pml,\phi}(x)\\
 0 \quad ~~\text{otherwise} \\
 \end{cases}
 $$
 Clearly $\forall x \in S$, 
 \begin{equation}\label{eqS1}
 \E \left[ Y_x\right]=\bp_{pml,\phi}(x)
 \end{equation}
 and in general for any integer power $i$ of random variable $Y_x$ we have:
 \begin{equation}\label{eqS2}
 \E \left[ Y_x^i\right ] \geq  \bp_{pml,\phi}^i(x) \quad \forall i=2,\dots 
 \end{equation}
For the remaining $x \in \bar{\bS}$ ($\bar{\bS}\defeq\bX \backslash \bS$) with $\bp_{pml,\phi}(x) \geq \frac{1}{n^2}$ we define:
 $$Z_x \defeq \bp_{pml,\phi}(x) \quad \text{with probability } ~1$$
 Define $\bY\defeq(Y_x)_{x\in \bS}$ and $\bZ\defeq(Z_x)_{x\in \bar{\bS}}$. 
 $$\mu_{\bS}\defeq\E \left[ \|\bY\|_1\right]= \E \left[\sum_{x \in \bS} Y_x\right]=\sum_{x \in \bS}\E \left[ Y_x\right]=\sum_{x \in \bS}\bp_{pml,\phi}(x)$$ 
 $$\mu_{\bar{\bS}}\defeq\E \left[ \|\bZ\|_1\right]= \E \left[\sum_{x \in \bar{\bS}} Z_x\right]=\sum_{x \in \bar{\bS}}\E \left[ Z_x\right]=\sum_{x \in \bar{\bS}}\bp_{pml,\phi}(x)$$
 $$\mu_{\bS}+\mu_{\bar{\bS}}=1$$
 
Define $\bp\defeq(\bY,\bZ)$ to be the concatenation of random vectors $\bY$ and $\bZ$. All random variables $Y_x,Z_x$ are mutually independent and we have:
\begin{equation}
\E \left[ \probpml(\bp,\phi)\right ] \geq \probpml(\bp_{pml,\phi},\phi)
\label{eq:expected_w1}
\end{equation}
(From Equation~\ref{eqS1},\ref{eqS2} and the fact that $Z_x$ is a constant random variable).
 
 When we generate a random sample $\bp$ from this distribution, we have a lower bound on the expected value of $\probpml(\bp,\phi)$ 
 but this is misleading since $\bp$ may not be a distribution. Scaling $\bp$ to 1 could significantly reduce the value of $\probpml(\bp,\phi)$ if $\|\bp\|_1$ is large. However, we show that a constant fraction of the expectation of $\probpml(\bp,\phi)$ comes from the sample space with bounded $\|\bp\|_1\leq 1+\frac{c}{n}$. Here $c$ is a constant and assume $c\geq 3$. Note that:
 $$\|\bp\|_1 \leq 1+\frac{c}{n} \Leftrightarrow \|\bY\|_1 + \|\bZ\|_1 \leq 1+\frac{c}{n} \Leftrightarrow \|\bY\|_1 \leq \mu_{\bS}+\frac{c}{n}$$
 The last inequality follows because $\bZ$ is a constant random vector.
\begin{equation}\label{eqcondexp}
\begin{split}
\E \left[ \probpml(\bp,\phi) ~\Big| ~\|\bY\|_1 \leq \mu_{\bS}+\frac{c}{n} \right] & \bbP \left[ \|\bY\|_1 \leq \mu_{\bS}+\frac{c}{n} \right] + \E \left[ \probpml(\bp,\phi) ~\Big| ~\|\bY\|_1 > \mu_{\bS}+\frac{c}{n} \right]\bbP \left[ \|\bY\|_1 > \mu_{\bS}+\frac{c}{n} \right]\\
&=\E \left[ \probpml(\bp,\phi)\right ] \geq \probpml(\bp_{pml,\phi},\phi)
\end{split}
\end{equation}
To argue that a constant fraction of the expectation comes from the sample space with small $\|\bp\|_1$ we need a tight upper bound for:
$$\E \left[ \probpml(\bp,\phi) ~\Big| ~\|\bY\|_1 > \mu_{\bS}+\frac{c}{n} \right]\bbP \left[ \|\bY\|_1 > \mu_{\bS}+\frac{c}{n} \right]$$
For $t \geq c$, we first upper bound the probability term: 
 $$\bbP \left[ \|\bY\|_1 \geq \mu_{\bS}+\frac{t}{n} \right]$$
 We will use Chernoff bounds here and to apply them, we convert the $Y_x$ random variables into $\{0,1\}$ Bernoulli random variables. Define $\forall x \in \bS$,
 $$Y'_x\defeq n^2Y_x$$
 Equivalently:
  $$Y'_x\defeq
 \begin{cases}
 1 \quad \text{with probability} ~n^2\bp_{pml,\phi}(x)\\
 0 \quad ~~\text{otherwise} \\
 \end{cases}
 $$
 Define $\bY'\defeq(Y'_x)_{x \in \bS}$ and $\mu'_{\bS}\defeq\E \left[\|\bY'\|_1 \right]=n^2\mu_{\bS} \leq n^2$. For any $t>0$,
 $$\|\bY\|_1 \geq \mu_{\bS}+\frac{t}{n} \Leftrightarrow \|\bY'\|_1 \geq n^2\mu_{\bS}+tn \Leftrightarrow \|\bY'\|_1 \geq \mu'_{\bS}+tn$$
 
 Since $\|\bY'\|_1$ is a sum of Bernoulli random variables, by Chernoff bounds:
\begin{equation}\label{eqchernoff}
\bbP \left[ \|\bY'\|_1 \geq \mu'_{\bS}+tn \right]=\bbP \left[ \|\bY'\|_1 \geq \left( 1+\frac{tn}{\mu'_{\bS}} \right) \mu'_{\bS} \right]\leq \expo{-\frac{t^{2}n^2}{3\mu'^2_{\bS}}\mu'_{\bS}}=\expo{-\frac{t^2n^2}{3\mu'_{\bS}}} \leq \expo{\frac{-t^2}{3}}
\end{equation}
Note from \Cref{clm1} that:
\begin{equation}\label{eq:massbound}
\E \left[ \probpml(\bp,\phi) ~\Big| \|\bY\|_1 \leq \mu_{\bS}+\frac{t}{n}\right] \leq \probpml(\bp_{pml,\phi},\phi) \left( 1+\frac{t}{n}\right)^{n} \leq \probpml(\bp_{pml,\phi},\phi) \cdot e^t \triangleq H(t)
\end{equation}
\begin{align*}
\bbP \left[ \|\bY\|_1 > \mu_{\bS}+\frac{c}{n} \right] & \E \left[ \probpml(\bp,\phi) ~\Big| ~\|\bY\|_1 > \mu_{\bS}+\frac{c}{n}  \right]\\
&= \int_{t=c}^{\infty} \E \left[ \probpml(\bp,\phi) ~\Big| \|\bY\|_1 = \mu_{\bS}+\frac{t}{n}\right] \bbP \left[ \|\bY\|_1 = \mu_{\bS}+\frac{t}{n} \right] dt\\
&\leq  \int_{t=c}^{\infty} H(t) \bbP \left[ \|\bY\|_1 = \mu_{\bS}+\frac{t}{n} \right] dt \quad \text{(By \Cref{eq:massbound})}\\
& \leq \int_{t=c}^{\infty} \frac{d H(t)}{dt} \bbP \left[ \|\bY\|_1 > \mu_{\bS}+\frac{t}{n} \right] dt\\
& = \probpml(\bp_{pml,\phi},\phi) \int_{t=c}^\infty e^t \expo{\frac{-t^2}{3}} dt\\
&= \probpml(\bp_{pml,\phi},\phi) \frac{\expo{3/4} \sqrt{3 \pi}}{2} \left( 1-\mathrm{erf}\left(\frac{2c-3}{2 \sqrt{3}}\right)\right)\\
&\leq 0.75 \cdot \probpml(\bp_{pml,\phi},\phi) \quad \mbox{for $c \geq 3$}
\end{align*}
 Substituting back in Equation \ref{eqcondexp} we have (for $c \geq 3$),
 $$\E \left[ \probpml(\bp,\phi) ~\Big| ~\|\bY\|_1 \leq \mu_{\bS}+\frac{c}{n} \right]  \bbP \left[ \|\bY\|_1 \leq \mu_{\bS}+\frac{c}{n} \right] \geq \frac{1}{4}\probpml(\bp_{pml,\phi},\phi)$$
 $$\Rightarrow \E \left[ \probpml(\bp,\phi) ~\Big| ~\|\bY\|_1 \leq \mu_{\bS}+\frac{c}{n} \right] \geq \frac{1}{4}\probpml(\bp_{pml,\phi},\phi)$$
 $$\Rightarrow \E \left[ \probpml(\bp,\phi) ~\Big| ~\|\bp\|_1 \leq 1+\frac{c}{n} \right] \geq \frac{1}{4}\probpml(\bp_{pml,\phi},\phi)$$
 The above inequality implies existence of a $\bp'$ with $\probpml(\bp',\phi)\geq \frac{1}{4}\probpml(\bp_{pml,\phi},\phi)$ and $ \|\bp'\|_1 \leq 1+\frac{c}{n} $. Define $\bp''\defeq \bp'/\|\bp\|_1$,
 $$\bp''=\frac{\bp'}{\|\bp'\|_1}$$
 $$\probpml(\bp'',\phi) = \|\bp'\|_1 ^{-n} \probpml(\bp',\phi) \geq (1+\frac{c}{n})^{-n}\frac{1}{4}\probpml(\bp_{pml,\phi},\phi)\geq \frac{\expo{-c}}{4}\probpml(\bp_{pml,\phi},\phi)$$
 In the final inequality substitute $c=3$ and observe $\frac{\expo{-c}}{4}\geq 1/100$. Also our rounding procedure always ensures that minimum non-zero entry of $\bp'$ is $\geq \frac{1}{n^2}$ that further implies a lower bound on the minimum non-zero probability value of $\bp''$ to be $\frac{1}{n^2}\frac{1}{\|\bp'\|_1}=\frac{1}{n^2}\frac{1}{1+c/n}\geq \frac{1}{2n^2}$. Hence $\bp''$ is our final distribution satisfying the conditions of lemma.
 \end{proof}
\section{Profile Discretization Lemma}\label{app:profiledisc}
Here we prove our profile discretization lemma. We first introduce a new definition called discrete type and then provide new formulations which help us in our proof.
 \begin{defn}[Discrete type] 
For a sequence $y^n \in \bX^{n}$, its \emph{discrete} type $\phihd=\Phihd(y^n) \in \bM^{\bX}$ is:
$$\phihd_x=\ceil{\bff(y^n,x)}_{\bM}$$
\end{defn}
For a sequence $y^n \in \bX^{n}$ let $\setd=\{ \bff(y^n,x) \}_{x \in \bX} \cup \{1,\dots  \ceil{\frac{1}{\epst}}\}$ be the set of all its distinct frequencies plus all integers less than $\ceil{\frac{1}{\epst}}$ and $\eled_1<\eled_2<\dots< \eled_{|\setd|}$ be elements of the set $\setd$. For this extended set $\setd$, the definition of profile $\phi=(\phi_{j})_{j=1\dots |\setd|}$ is still the same and $\phi_{j}=|\{x \in \bX~|~\bff(y^n,x)=\eled_{j} \}|$. In this extended definition there might be indices $j \in [1,|\setd|]$ with $\phi_{j}=0$ and this extended definition help us write cleaner proof for the next lemma. We first state an equivalent formulation for {the} probability of its profile $\phi=\Phi(y^n)$ (from Equation 20 in \cite{OSZ03}, Equation 15 in \cite{PJW17}) in terms of its type $\phih=\Phih(y^n)$:

\begin{equation}\label{eqpml2}
\probpml(\bp,\phi)=\left(\prod_{j=0\dots |\setd|}\frac{1}{\F_i!}\right)\binom{n}{\phih}\sum_{\sigma \in S_{\bX}}\prod_{x\in X}\bp_{x}^{\phih_{\sigma(x)}}=\left(\prod_{j=0\dots |\setd|}\frac{1}{\F_i!}\right)\cphi \sum_{\sigma \in S_{\bX}}\prod_{x\in X}\bp_{x}^{\phih_{\sigma(x)}}
\end{equation}
where $S_{\bX}$ is the set of all permutations of domain set $\bX$ and $\phi_0$ is the number of unseen domain elements. The difference between \Cref{eqpml2} and \Cref{eqlabeled} is the index set over which they are summed.

\lemmaprofiledisc*
 \begin{proof} 
 Let $\phih=\Phih(y^n)$ and $\phihd=\Phihd(y^n)$ be the type and discrete type of sequence $y^n$ respectively. By \Cref{eqpml2}:
 $$\probpml(\bp,\phi)=\left(\prod_{j=0}^{|\setd|}\frac{1}{\F_i!}\right)\cphi \left(\sum_{\sigma \in S_{\bX}}\prod_{x\in X}\bp_{x}^{\phih_{\sigma(x)}}\right)$$
Similarly: 
 $$\probpml(\bp,\phid)=\left(\prod_{j=0}^{|\bM|}\frac{1}{\Fd_i!}\right)\cphid \left(\sum_{\sigma \in S_{\bX}}\prod_{x\in X}\bp_{x}^{\phihd_{\sigma(x)}}\right),$$
where $\phid_{0}$ is the number of unseen domain elements in profile $\phid$. Note $\phid_{0}=\phi_{0}$ because our discretization procedure does not change the number of unseen domain elements. We now analyze both objectives term by term. For any permutation $\sigma \in S_{\bX}$
\begin{align*}
\prod_{x \in \bX}\bp_{x}^{\phihd_{\sigma(x)}} & \geq \prod_{x \in \bX}\bp_{x}^{\phih_{\sigma(x)}(1+\epst)}=\prod_{x \in \bX}\bp_{x}^{\phih_{\sigma(x)}}\prod_{x \in \bX}\bp_{x}^{\epst\phih_{\sigma(x)}} \geq \prod_{x \in \bX}\bp_{x}^{\phih_{\sigma(x)}} \left(\frac{1}{2n^2}\right)^{\epst n}\\
&\geq \expo{-3\epst n \log n}\prod_{x \in \bX}\bp_{x}^{\phih_{\sigma(x)}}
\end{align*}
The first inequality above follows because $\phihd_{\sigma(x)} \leq \phih_{\sigma(x)} (1+\epst)$ and using $\phih_{\sigma(x)} \leq \phihd_{\sigma(x)}$ we get the following inequality.
\begin{equation}\label{eqpro1}
\expo{3\epst n \log n}\prod_{x \in \bX}\bp_{x}^{\phihd_{\sigma(x)}} \geq \prod_{x \in \bX}\bp_{x}^{\phih_{\sigma(x)}}\geq \prod_{x \in \bX}\bp_{x}^{\phihd_{\sigma(x)}}
\end{equation}
Lets consider terms $\cphi$ and $\cphid$ next:
$$\frac{\cphi}{\cphid}=\frac{\binom{n}{\phih}}{\binom{\nd}{\phihd}}=\frac{n!}{\nd!}\prod_{x\in X}\frac{\phihd_x!}{\phih_x!}=\frac{n!}{\nd!}\prod_{x\in X}\frac{\ceil{\bff(y^n,x)}_{\bM}!}{\bff(y^n,x)!} \leq \prod_{x\in X}\frac{\floor{\bff(y^n,x)(1+\epst)}!}{\bff(y^n,x)!}\leq \prod_{x \in \bX}(n(1+\epst))^{\epst \bff(y^n,x)}$$
$$= (n(1+\epst))^{\epst n}\leq \expo{2\epst n \log n}$$
Next we lower bound the same quantity:
$$\frac{\cphi}{\cphid}=\frac{\binom{n}{\phih}}{\binom{\nd}{\phihd}}=\frac{n!}{\nd!}\prod_{x\in X}\frac{\phihd_x!}{\phih_x!}\geq \frac{n!}{\nd!} \geq \frac{n!}{\floor{n(1+\epst)}!}\geq (n(1+\epst))^{-\epst n} \geq \expo{-2\epst n \log n}$$
Combining both we get:
\begin{equation}\label{eqpro2}
\expo{-2\epst n \log n}\cphid \leq \cphi\leq \expo{2\epst n \log n}\cphid
\end{equation}
To bound our final term we use the extended definition of $\setd$. In this definition of $\setd$ we included all integers less than $\ceil{\frac{1}{\epst}}$ and we have $\eled_{j} =j$ for all $j \leq \ceil{\frac{1}{\epst}}$. Similarly recall all integers less than $\ceil{\frac{1}{\epst}}$ also belong to set $\bM$ and therefore $\nn_{j} =j$ for all $j \leq \ceil{\frac{1}{\epst}}$. Now observe that any frequency strictly less than $\ceil{\frac{1}{\epst}}$ ($\eled_{j} < \ceil{\frac{1}{\epst}}$) is not discretized and,
$$\Fd_j=\F_j \quad  \text{ for all } j < \ceil{\frac{1}{\epst}}$$
The number of domain symbols $x \in \bX$ with $\bff(y^n,x)\geq\ceil{\frac{1}{\epst}}$ is at most $\epst n$ and $\sum_{j \geq\ceil{\frac{1}{\epst}}}{\F_j}\leq \epst n$. This further implies, $\sum_{j \geq  \ceil{\frac{1}{\epst}}}{\phid_j} \leq \epst n$. Hence the ratio evaluates to:
$$1\leq \prod_{j=0}^{|\bM|}\Fd_j! \prod_{j=0}^{|\setd|}\frac{1}{\F_j!}=\prod_{j=0}^{\ceil{\frac{1}{\epst}}-1}\frac{\Fd_j!}{\F_j!}\prod_{j=\ceil{\frac{1}{\epst}}}^{|\bM|}\Fd_j! \prod_{j=\ceil{\frac{1}{\epst}}}^{|\setd|}\frac{1}{\F_j!}\leq \left(\sum_{j\geq \ceil{\frac{1}{\epst}}}{\Fd_j}\right)!\leq \ceil{\epst n}! \leq \expo{\epst n \log n}$$
Rewriting the final inequality:
\begin{equation}\label{eqpro3}
1\leq \prod_{j=0}^{|\bM|}\Fd_j! \prod_{j=0}^{|\setd|}\frac{1}{\F_j!} \leq \expo{2\epst n \log n}
\end{equation}
Combining all \cref{eqpro1,eqpro2,eqpro3} we have our result.
 \end{proof}

 \section{Remaining proofs for \Cref{sec:structuremain}}\label{app:structuremain}
Here we prove multiple lemmas associated with our functions $\probsdpml(\cdot)$ and $\bg(\cdot)$. Our first lemma shows that functions $\probsdpml(\cdot)$ and $\bg(\cdot)$ approximate each other in their values and later we also show that function $\bg(\bk)$ is log-concave in $\bk$. To help readability of this section lets recall definitions of functions $\probsdpml(\cdot)$ and $\bg(\cdot)$. For any $\bk \in \bK^{f}_{\phid}$,
 \begin{align*}
 \bg(\bk)=\expo{\log(\pvec)^{T}\bk \mvec + \sum_{i=1}^{\boo}(\bk \onevec)_i \log (\bk \onevec)_i-\sum_{i=1}^{\boo}\sum_{j=0}^{\btt}\bk_{ij}\log \bk_{ij}}\\
 \end{align*}
 Also for any $\bk \in \bK_{\phid}$,
 \begin{align*}
 \probsdpml(\bk)= \prod_{i=1}^{\boo}\left(\pp_i^{(\bk \mvec)_i}\frac{(\bk \onevec)_i!}{\prod_{j=0}^{\btt}\bk_{ij}!}\right)
 \end{align*}

\lemmastirlingsapproxmain*
 \begin{proof}
 By Stirling's approximation for all integer $n \geq 1$:
 $$\sqrt{2\pi}\leq \frac{n!}{\sqrt{n}\expo{n\log n-n}} \leq e $$
 We slightly use a weaker version of this inequality that holds all integers $n\geq 0$,
 $$1\leq \frac{n!}{\expo{n\log n-n}} \leq e \sqrt{n+1}$$
 $$\frac{\probsdpml(\bk)}{\bg(\bk)}=\prod_{i=1}^{\boo}\left( \frac{(\bk \onevec)_i!}{\expo{(\bk \onevec)_i\log (\bk \onevec)_i-(\bk \onevec)_i}} \prod_{j=0}^{\btt}\frac{\expo{\bk_{ij}\log \bk_{ij} -\bk_{ij}}}{\bk_{ij}!} \right) $$
 $$\leq \prod_{i=1}^{\boo}e \sqrt{1+(\bk \onevec)_i} \leq (e\sqrt{1+\pmin})^{\boo} \leq \expo{O(\log n) \boo}$$
In the above expression we used the fact that each $i \in \otboo$, $(\bk \onevec)_{i}\leq \pmin$ (Lemma \ref{lemminmain} combined with the constraint $\pvec^{T}\bk \onevec\leq 1$ ensures this fact). Also,
 \begin{align*}
 \frac{\probsdpml(\bk)}{\bg(\bk)}&\geq \prod_{i=1}^{\boo}\prod_{j=0}^{\btt}\frac{\expo{\bk_{ij}\log \bk_{ij} -\bk_{ij}}}{\bk_{ij}!} \geq \prod_{i=1}^{\boo}\prod_{j=0}^{\btt} \frac{1}{e\sqrt{1+\bk_{ij}}}\geq \left(\frac{1}{e\sqrt{1+\pmin}}\right)^{\boo (\btt+1)}\\
 &\geq \expo{-O(\log n) \boo \btt}
 \end{align*}
 \end{proof}
 
Next we show that function $\bg(\bk)$ is log-concave in $\bk$ and we need the following lemma to prove it.
 \begin{lemma}\label{lemfconvex}
 The function $h : \R^l_{\geq 0} \rightarrow \R$ defined for all $\ba \in \R^{l}_{\geq 0}$ by 
 \[
\bh(\ba)\defeq\sum_{i \in [l]}\ba_i \log \ba_i- \ba^{T}\onevec \log \ba^{T}\onevec 
\]
is convex.
 \end{lemma}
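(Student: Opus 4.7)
\medskip

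\noindent\textbf{Proof plan for \Cref{lemfconvex}.} My plan is to rewrite $h(\ba)$ in a form where convexity becomes immediate. Letting $s \defeq \ba^T \onevec = \sum_{j \in [l]} \ba_j$, combine the two logarithmic terms into a single sum:
\[
h(\ba) = \sum_{i \in [l]} \ba_i \log \ba_i - \sum_{i \in [l]} \ba_i \log s = \sum_{i \in [l]} \ba_i \log \frac{\ba_i}{s}.
\]
This is the standard trick: the apparent ``convex minus convex'' cancels inside each summand, and what remains is a sum of perspective-type terms (with the convention $0 \log 0 = 0$ handling the boundary).

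The key step is to recognize each summand $\ba_i \log(\ba_i/s)$ as the perspective of a convex function evaluated at an affine image of $\ba$. Specifically, define $g : \R_{\geq 0} \times \R_{> 0} \to \R$ by $g(y,t) \defeq y \log(y/t)$, which is the perspective of the (convex) scalar function $y \mapsto y \log y$; it is a standard fact from convex analysis that the perspective of a convex function is jointly convex in $(y,t)$. Now the map $\ba \mapsto (\ba_i, \ba^T \onevec)$ is linear (hence affine), so $\ba \mapsto g(\ba_i, \ba^T \onevec) = \ba_i \log(\ba_i/s)$ is convex in $\ba$ as a composition of a jointly convex function with an affine map. Summing over $i \in [l]$ preserves convexity, yielding convexity of $h$ on the interior $\R^l_{> 0}$, and the result extends to the closed domain $\R^l_{\geq 0}$ by continuity.

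As a cross-check (and backup route in case a direct argument is preferred), I would verify the Hessian is PSD. A short calculation gives $\partial h / \partial \ba_k = \log(\ba_k/s)$ and $\partial^2 h / \partial \ba_k \partial \ba_l = \delta_{kl}/\ba_k - 1/s$. For any vector $v \in \R^l$,
\[
v^T \nabla^2 h(\ba)\, v = \sum_{k \in [l]} \frac{v_k^2}{\ba_k} - \frac{(\sum_k v_k)^2}{s},
\]
and the Cauchy--Schwarz inequality $(\sum_k v_k)^2 = \bigl(\sum_k (v_k/\sqrt{\ba_k}) \cdot \sqrt{\ba_k}\bigr)^2 \leq \bigl(\sum_k v_k^2/\ba_k\bigr) \cdot s$ makes this nonnegative, confirming $\nabla^2 h(\ba) \succeq 0$ and hence convexity of $h$.

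There is no real obstacle here; the only subtlety is handling boundary points where some $\ba_i = 0$ or $s = 0$, which is dealt with by the usual lower-semicontinuous extension ($0 \log 0 \defeq 0$) and a continuity argument from the interior. Given \Cref{lemfconvex}, log-concavity of $\bg(\bk)$ from \Cref{contsdpmlorigmain} follows by applying it to each row $(\bk_{i,0}, \ldots, \bk_{i,\btt})$ (which handles the term $\sum_j \bk_{ij} \log \bk_{ij} - (\bk \onevec)_i \log (\bk \onevec)_i$), summing the resulting convex functions over $i \in [\boo]$, and adding the linear term $\log(\pvec)^T \bk \mvec$.
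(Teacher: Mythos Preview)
Your proposal is correct. The backup Hessian route is essentially the paper's own argument: the paper also computes $\nabla^2 h = D_{\ba}^{-1} - \tfrac{1}{s}\onevec\onevec^\top$ and checks PSD by conjugating with $D_{\ba}^{1/2}$ to get $I - \tfrac{1}{s}\ba^{1/2}\ba^{1/2\top}$ and observing the rank-one subtrahend has spectral norm $1$; your Cauchy--Schwarz inequality on the quadratic form is exactly the same fact in scalar form.

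Your primary route via the perspective function is a genuinely different argument. Rewriting $h(\ba)=\sum_i \ba_i\log(\ba_i/s)$ and recognizing each summand as the perspective of $y\mapsto y\log y$ precomposed with the affine map $\ba\mapsto(\ba_i,\ba^\top\onevec)$ gives convexity without any derivative computation. This is cleaner and more robust (it sidesteps differentiability at the boundary entirely, modulo the standard $0\log 0$ convention), whereas the paper's approach is purely computational. Both are short; the perspective argument buys conceptual clarity and generalizes immediately (e.g.\ to other convex $f$ in place of $y\log y$), while the Hessian approach gives the explicit structure $\nabla^2 h \succeq 0$ with the exact rank-one-corrected-diagonal form, which could be useful if one later needed quantitative curvature bounds.
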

 \begin{proof}
 Let $\bA \defeq \ba^{T}\onevec$. Direct calculation reveals that for all $i \in [l]$,
 $$\frac{\partial}{\partial \ba_i}\bh(\ba) = 1+\log \ba_i-\log \bA-1=\log \ba_i -\log A
 ~. 
$$
 The Hessian matrix $\bH$ is:
 $$\bH(i,j)=\frac{\bd}{\bd \ba_i \ba_j}\bh=
 \begin{cases}
 \frac{1}{\ba_i}-\frac{1}{\bA} \quad if \quad i=j\\
 -\frac{1}{\bA} \quad if \quad i\neq j
 \end{cases}
 $$
 Let $\bD_{\ba}=diag(\ba)$ and also $\ba^{\frac{1}{2}}$ be the entry wise square root vector,
 $$\bH=\bD_{\ba}^{-1}-\frac{1}{\bA}\1\1^T$$
 $$\bD_{\ba}^{\frac{1}{2}}\bH\bD_{\ba}^{\frac{1}{2}}=I-\frac{1}{\bA}\bD_{\ba}^{\frac{1}{2}}\1\1^T\bD_{\ba}^{\frac{1}{2}}$$
 $$\bD_{\ba}^{\frac{1}{2}}\bH\bD_{\ba}^{\frac{1}{2}}=I-\frac{1}{\bA}\ba^{\frac{1}{2}}\ba^{\frac{1}{2}T} \succeq \textbf{0}\Rightarrow \bH \succeq 0$$
 The last inequality holds because $\frac{1}{\bA}\ba^{\frac{1}{2}}\ba^{\frac{1}{2}T}$ is a rank one matrix and its spectral norm is equal to 1:
 $$\left\|\frac{1}{\bA}\ba^{\frac{1}{2}}\ba^{\frac{1}{2}T} \right\|_2
= 
\tr\left(\frac{1}{\bA}\ba^{\frac{1}{2}}\ba^{\frac{1}{2}T}\right)
=\frac{1}{\bA} \tr(\ba^{\frac{1}{2}T}\ba^{\frac{1}{2}})=\frac{1}{\bA}\bA=1$$
 ~.
 \end{proof}

\lemmalogconcave*
 \begin{proof}
 Recall the definition of $\bg(\bk)$:
  \begin{align*}
 \bg(\bk)=\expo{\log(\pvec)^{T}\bk \mvec + \sum_{i=1}^{\boo}(\bk \onevec)_i \log (\bk \onevec)_i-\sum_{i=1}^{\boo}\sum_{j=0}^{\btt}\bk_{ij}\log \bk_{ij}}\\
 \end{align*}
 Taking $\log$ on both sides:
 $$\log \bg(\bk)=\log(\pvec)^{T}\bk \mvec + \sum_{i=1}^{\boo}(\bk \onevec)_i \log (\bk \onevec)_i-\sum_{i=1}^{\boo}\sum_{j=0}^{\btt}\bk_{ij}\log \bk_{ij}$$
 The first term is linear in $\bk$ and we consider the negative of second and third term and show it is convex. 
 \begin{align*}
 \bh(\bk)&=\sum_{i=1}^{\boo}\left( (\bk \onevec)_i \log (\bk \onevec)_i - \sum_{j=0}^{\btt}\bk_{ij}\log \bk_{ij} \right)\\
 &=\sum_{i=1}^{\boo} \bh_i(\bk_i)
 \end{align*}
 In the above expression $\bk_i \in \R^{\btt}$ is the $i$'th column of matrix $\bk$. By Lemma \ref{lemfconvex} each of the functions $\bh_i(\bk_i)$ is convex and $\bh(\bk)=\sum_{i=1}^{\boo} \bh_i(\bk_i)$ is also convex  ($-\bh(\bk)$ is concave). $\bg(\bk)$ is sum of a linear and a concave function, and is concave.
 \end{proof}

\newcommand{\termo}{T1}
\newcommand{\termt}{T2}
In the remaining part of this section, we prove our final result of this section that is used to bound the approximation guarantee of our rounding procedure. Recall our rounding procedure introduces new probability values resulting in a extended discretized probability space $\bP'$, where $\bP' \defeq  \bP \cup \{\pvec_{\boo+j} \}_{j \in \otbtt}$. To derive the relation between solution $\bk$ and PML objective value we defined extended sets $\bK^{ext}_{\bq,\phid}$ and $\bK^{ext}_{\phid}$.
Further for any $\bk \in \bK^{ext}_{\bq,\phid}$, recall that functions $\probsdpml(\cdot)$ and $\bg(\cdot)$ are defined as follows,
$$\probsdpml(\bk)\defeq \prod_{i=1}^{\boo+\btt}\left(\pp_i^{(\bk \mvec)_i}\frac{(\bk \onevec)_i!}{\prod_{j=0}^{\btt}\bk_{ij}!}\right)$$
$$ \bg(\bk)\defeq\prod_{i=1}^{\boo+\btt}\left(\pp_i^{(\bk \mvec)_i}\frac{\expo{(\bk \onevec)_i \log (\bk \onevec)_i-(\bk \onevec)_i}}{\prod_{j=0}^{\btt}\expo{\bk_{ij}\log \bk_{ij}-\bk_{ij}}}\right)$$
In the following lemma we show that for any $\bk \in \bK^{ext}_{\bq,\phid}$ returned by our rounding procedure the functions $\probsdpml(\bk)$ and $\bg(\bk)$ approximate each other in their values.
 \lemroundster*
 \begin{proof}
For all integers $n\geq 0$, recall the weaker version of sterlings approximation we used earlier ,
 $$1\leq \frac{n!}{\expo{n\log n-n}} \leq e \sqrt{n+1}$$
 Now, 
 \begin{align*}
 \frac{\probsdpml(\bk)}{\bg(\bk)}&=\prod_{i=1}^{\boo+\btt}\left( \frac{(\bk \onevec)_i!}{\expo{(\bk \onevec)_i\log (\bk \onevec)_i-(\bk \onevec)_i}} \prod_{j=0}^{\btt}\frac{\expo{\bk_{ij}\log \bk_{ij} -\bk_{ij}}}{\bk_{ij}!} \right)
 \end{align*}
 and 
 $$ \frac{\probsdpml(\bk)}{\bg(\bk)}\leq \prod_{i=1}^{\boo+\btt}e \sqrt{1+(\bk \onevec)_i} \leq (e\sqrt{1+\pmin})^{\boo+\btt} \leq \expo{O(\log n) (\boo+\btt)}$$
Now $\bP'=\bP \cup \{ \pp_{\boo+j}\}_{j \in \otbtt}$ and for any $j \in \otbtt$, $\pp_{\boo+j}$ is a convex combination of elements in $\bP$ and therefore $\pp_{\boo+j} \geq 1/2n^2$. In the above expression we used the fact that each $i \in \otboo$, $(\bk \onevec)_{i}\leq \pmin$ (For any $i \in [1,\boo+\btt]$, $\pp_{i} \geq 1/2n^2$ and further combined with the constraint $\pvecext^{T}\bk \onevec\leq 1$ (because $\bk \in \bK^{ext}_{\phid}$) ensures this fact). Also,
 \begin{align*}
 \frac{\probsdpml(\bk)}{\bg(\bk)}& \geq \prod_{i=1}^{\boo+\btt}\prod_{j=0}^{\btt}\frac{\expo{\bk_{ij}\log \bk_{ij} -\bk_{ij}}}{\bk_{ij}!}\\ 
&  \geq \left(\prod_{i=1}^{\boo}\prod_{j=0}^{\btt} \frac{1}{e\sqrt{1+\bk_{ij}}}\right) \left( \prod_{j=1}^{\btt}\frac{1}{e\sqrt{1+\bk_{\boo+j,j}}}\right)\\
& \geq \left(\frac{1}{e\sqrt{1+\pmin}}\right)^{\boo \bttpo +\btt}\\
 &\geq \expo{-O(\log n) \boo \btt}
 \end{align*}
 In the second inequality we used the fact that solution $\bk$ returned by our rounding procedure always satisfies $\bk_{\boo+j,k}=0$ for all $j \in \otbtt$, $k \in \ztbtt$ and $k \neq j$.
 \end{proof}

\section{Algorithm for solving our convex program}\label{app:cuttingplane}
To make this section self readable we start by recalling our original SDPML objective.
\begin{equation}\label{eq:w2}
\argmax_{\vx \in \bK_{\phid}}\bw_2(\vx)
\end{equation}
We relaxed it to: 
\begin{equation}\label{eq:g}
\argmax_{\vx\in \bK^{f}_{\phid}}\bg(\vx) \Leftrightarrow \argmax_{\vx\in \bK^{f}_{\phid}}\log \bg(\vx)
\end{equation}
where function $\bg(\vx)$ is defined as:
\begin{equation}\label{contsdpml}
 \bg(\vx)=\expo{\logpvec^{T}\vx \mvec + \sum_{i=1}^{\boo}(\vx \onevec)_i \log (\vx \onevec)_i-\sum_{i=1}^{\boo}\sum_{j=0}^{\btt}\vx_{ij}\log \vx_{ij}}
 \end{equation}
 
For $\ff(\vx)\defeq \log \bg(\vx)$ the optimization problem can be formulated equivalently as:
\begin{equation}\label{eq:convopt}
\argmax_{\vx\in \bK^{f}_{\phid}}\ff(\vx)
\end{equation}
where the constraint set $\bK^{f}_{\phid}$ is given by
\begin{equation}\label{setrelax}
\bK^{f}_{\phid} = \left\{\vx \in \R_{\geq 0}^{\boo \times \bttpo}~\big|~ (\vx^{T} \onevec)_{\otbtt} = \Fd, \text{ and } \logpvec^{T}\vx \onevec \leq  1 \right\} ~.
\end{equation}
and function $\ff(\vx)$ is:
$$\ff(\vx) \defeq \logpvec^{T}\vx \mvec + \sum_{i=1}^{\boo}(\vx \onevec)_i \log (\vx \onevec)_i-\sum_{i=1}^{\boo}\sum_{j=0}^{\btt}\vx_{ij}\log \vx_{ij} ~.$$

Our constraint set $\bK^{f}_{\phid}$ is bounded and for any $\vx \in \bK^{f}_{\phid}$, 
$$\|\vx\|_{F}^{2}=\sum_{i,j}\vx_{i,j}^{2}\leq \left( \sum_{i,j}\vx_{i,j}\right)^{2}= \left(\sum_{j}\Fd_{n_j} \right)^{2}=\nd^{2} \leq O(n^2)$$ 
However on the other hand our function $\ff(\vx)$ is not well behaved as the boundedness of $\ff$ doesn't imply any good polynomial bound on $\|\vx \|_{F}^{2}$.  We leverage the fact that our feasible set is bounded to define a new function which is close to our original function $\ff$ inside the feasible region and is also well behaved outside it.
Define:
$$\fnf(\vx) \defeq \vC \cdot \vx + \sum_{i=1}^{\boo}(\vx \onevec)_i \log (\vx \onevec)_i-\sum_{i=1}^{\boo}\sum_{j=0}^{\btt}\vx_{ij}\log \vx_{ij} -\frac{\reg}{n^3}\|\vx\|_{F}^{2}~.$$
where $\vC=\mvec \logpvec^{T}$ and for any $\vx \in \bK^{f}_{\phi}$: $|\ff(\vx)-\fnf(\vx)| \leq o(\reg)$.
Hence optimizing $\ff(\vx)$ is equivalent to optimzing $\fnf(\vx)$ in an approximate sense:
\begin{equation}\label{eq:convoptapprox}
\argmax_{\vx\in \bK^{f}_{\phid}}\ff(\vx) \stackrel{\epsilon}{\approx} \argmax_{\vx\in \bK^{f}_{\phid}}\fnf(\vx)
\end{equation}
Let $\vx_{pml,\phi} $ be the matrix $\vx \in \bK^{f}_{\phid}$ which corresponds to distribution $p_{pml,\phi}$. Recall the maximum PML objective $\bw_1(p_{pml,\phi},\phi)$ is a probability term and is not hard to see that it is always between $[\expo{-n\log n}, 1]$ (lower bound comes from uniform distribution on $[n]$) and $\bw_2(\vx_{pml,\phi})$, $\bg(\vx_{pml,\phi}) \in [\expo{-2n\log n}, \expo{n \log n}]$ (using a crude approximation) because they approximate the value of $\bw_1(p_{pml,\phi},\phi)$. Combining all we get that optimum value of both optimization problems in Equation \ref{eq:convoptapprox} are always greater than $-n^2$.

In the rest of the section we show how to solve the optimization problem:
\begin{align*}
\argmax_{\vx\in \bK^{f}_{\phid}}\fnf(\vx)
\end{align*}
which can be equivalently written as:
\begin{equation}\label{convoptapp}
\argmax_{(\vx,\st) \in \conset}\st \text{ subject to } (\vx^{T} \onevec)_{\otbtt} = \Fd, \text{ and } \logpvec^{T}\vx \onevec \leq  1
\end{equation}
where the convex set $\conset \defeq \{(\vx,\st) \in \left( \R^{\boo \times \btt}, \R \right) ~|~ \fnf(\vx) \geq \st \text{ and } \st \geq -\lt \}$.

First we show how to solve a simple optimization problem which in turn will act as an oracle to solve our main optimization problem \ref{convoptapp} using cutting plane method from \cite{LSW15}. The simple optimization problem which we will refer to as oracle here on is stated next:
\begin{equation}\label{eq:oracle}
\opt=\max_{(\vx,\st) \in \setK} \vd \cdot \vx+\sd \st-\lambda \left( \|\vx\|_{F}^{2}+\st^2\right)
\end{equation}
where $\vd \in \R^{\boo \times \btt}$, $\sd \in \R$, $\setK$ is the same convex set and $\fnf(\cdot)$ is the same convex function defined above.

We implement the oracle, that is, solve optimization problem \ref{eq:oracle}, by solving a sequence of unconstrained problems that penalize leaving the set $\setK$. Formally, for all $\alpha \in \R_{\geq 0}$ we define:
\begin{equation}\label{eq:fngxta}
\fnga(\vx,\st)\defeq\vd \cdot \vx+\sd\st-\lambda \left( \|\vx\|_{F}^{2} + \st^2\right) + \alpha \left(\fnf(\vx)-\st \right) ~.
\end{equation}
To implement our oracle we will show how solve the following to high precision 
\begin{equation}\label{eq:fixka}
\fnHa(\alpha)\defeq\max_{\{(\vx,\st)~|~\st \geq -\lt \}} \fnga(\vx,\st) ~.
\end{equation}
Our result will then follow by performing binary search on $\alpha$ and invoking this subroutine. 

For any $\alpha$ let $(\vxa,\sta)$ be the optimal solution for optimization problem \ref{eq:fixka} and also let $(\vxopt,\stopt)$ be the optimal solution to \ref{eq:oracle}. It is clear that:
\begin{align*}
\fnga(\vxa,\sta)
&\geq \fnga(\vxopt,\stopt) =\vd \cdot \vxopt+\sd\stopt-\lambda \left( \|\vxopt\|_{F}^2 + \left(\stopt\right)^2\right) + \alpha \left(\fnf(\vxopt)-\stopt \right) \\
&\geq \vd \cdot \vxopt+\sd\stopt-\lambda \left( \|\vx\|_{F}^{2} + \left(\stopt\right)^2\right)=\opt
\end{align*}
The second to last inequality follows because $\fnf(\vxopt)\geq \stopt$. Hence we have:
\begin{equation}\label{eq:optlowerbound}
\fnga(\vxa,\sta)\geq \opt
\end{equation}
Higher the value of $\alpha$ more incentive is to satisfy the constraint.
\begin{lemma}\label{lem:gradg}
For all $\alpha>0$ the following holds 
\[
(\lambda+\frac{\alpha \epsilon}{n^2})\|\vxa\|_{F}^{2} \leq \frac{1}{4\lambda}\|\vd\|_{F}^{2} +\frac{c^2}{4\lambda}-min(\frac{(c-\alpha)^2}{4\lambda},\alpha n^2- \lambda n^4- c n^2)
+ \alpha \left(
\fnf(\vxa) - \sta
\right)
\]
where $(\vxa,\sta)$ is the optimum solution pair for optimization problem \ref{eq:fixka}.
\end{lemma}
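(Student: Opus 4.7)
The plan is to apply the optimality of $(\vxa, \sta)$ for the unconstrained-in-$\vx$ problem~\eqref{eq:fixka} against the specific comparator $(0, \st')$ and then rearrange. Since $\fnf(0) = 0$, the quantity $\fnga(0, \st')$ simplifies to $(\sd - \alpha)\st' - \lambda(\st')^2$. Maximizing this over $\st' \in \R$ gives $(\sd - \alpha)^2/(4\lambda)$, whereas evaluating at the boundary $\st' = -n^2$ gives $\alpha n^2 - \sd n^2 - \lambda n^4$. Both are valid lower bounds on $\fnga(\vxa, \sta)$, so their minimum is as well. This is the source of the $\min$ term in the statement and removes the need for a case split on whether the unconstrained maximizer $(\sd - \alpha)/(2\lambda)$ lies in the feasible region $\st' \geq -n^2$.

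Next, I would expand $\fnga(\vxa, \sta)$ and complete the square on the $\sta$-dependent part using $\sd \sta - \lambda \sta^2 \leq \sd^2/(4\lambda)$. Combined with the previous step, this gives
\[
\vd \cdot \vxa - \lambda \|\vxa\|_{F}^{2} + \frac{\sd^2}{4\lambda} + \alpha\bigl(\fnf(\vxa) - \sta\bigr) \geq \min\!\left(\frac{(\sd-\alpha)^2}{4\lambda},\; \alpha n^2 - \sd n^2 - \lambda n^4\right).
\]
Rearranging isolates $\lambda \|\vxa\|_{F}^{2}$ on the left-hand side. Finally, to handle the lingering inner product $\vd \cdot \vxa$, I would apply the AM--GM-style inequality $\vd \cdot \vxa \leq \frac{1}{4\lambda}\|\vd\|_{F}^{2} + \lambda \|\vxa\|_{F}^{2}$; this inequality on its own would cancel the $\lambda \|\vxa\|_{F}^{2}$ term on the left, so to retain a strictly positive coefficient of $\|\vxa\|_{F}^{2}$ I would extract the quadratic regularization $-\frac{\reg}{n^3}\|\vxa\|_{F}^{2}$ hidden inside $\fnf(\vxa)$ and absorb it into the left-hand side, producing the stated coefficient $\lambda + \alpha \epsilon / n^2$.

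The main obstacle is the delicate bookkeeping of three distinct contributions to $\|\vxa\|_{F}^{2}$: one from the explicit penalty $-\lambda\|\vxa\|_{F}^{2}$ in $\fnga$, one from the AM--GM splitting of $\vd \cdot \vxa$, and one from the implicit quadratic regularizer sitting inside $\fnf$. The splitting parameter in the AM--GM step must be tuned carefully so that these three contributions combine to yield exactly the coefficient in the lemma rather than cancelling out to leave a vacuous bound; a secondary subtlety is keeping the $\alpha(\fnf(\vxa) - \sta)$ expression intact on the right even while borrowing its quadratic piece.
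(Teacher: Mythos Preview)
Your approach has a genuine gap that the ``secondary subtlety'' you flag at the end cannot be resolved. After your AM--GM step $\vd\cdot\vxa \le \tfrac{1}{4\lambda}\|\vd\|_F^2 + \lambda\|\vxa\|_F^2$, the $\lambda\|\vxa\|_F^2$ on the left is cancelled exactly, leaving
\[
0 \le \tfrac{1}{4\lambda}\|\vd\|_F^2 + \tfrac{c^2}{4\lambda} - \min(\cdots) + \alpha\bigl(\fnf(\vxa)-\sta\bigr).
\]
You then want to ``borrow'' $-\tfrac{\reg}{n^3}\|\vxa\|_F^2$ from inside $\fnf(\vxa)$ and move it to the left. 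But that rewrites $\alpha\fnf(\vxa)=\alpha\tilde\fnf(\vxa)-\tfrac{\alpha\reg}{n^3}\|\vxa\|_F^2$ and leaves $\tilde\fnf$ (the unregularised part) on the right, not $\fnf$. If you put $\fnf$ back, the $\tfrac{\alpha\reg}{n^3}\|\vxa\|_F^2$ you just moved returns to the right and you are back to the vacuous inequality above. There is no splitting parameter in the AM--GM step that fixes this: choosing $\mu<\lambda$ to keep $(\lambda-\mu)\|\vxa\|_F^2$ on the left forces $\tfrac{1}{4\mu}\|\vd\|_F^2$ on the right, which no longer matches the statement. Zeroth-order comparison against $(0,\st')$ simply does not carry any information about $\|\vxa\|_F^2$, so the coefficient $\lambda+\tfrac{\alpha\reg}{n^3}$ cannot be produced this way.

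The paper proceeds differently: it uses the \emph{first-order} stationarity condition $\nabla_{\vx}\fnga(\vxa,\sta)=0$. Multiplying that condition by $\vxa_{ij}$ and summing over $i,j$, and using that each term of $\fnf$ is either linear, of entropy type (so that $\sum_{ij}\vx_{ij}\partial_{ij}f=f$), or quadratic, one gets the exact identity
\[
\alpha\,\fnf(\vxa)=\Bigl(2\lambda+\tfrac{\alpha\reg}{n^3}\Bigr)\|\vxa\|_F^2-\vd\cdot\vxa.
\]
Substituting this into $\fnga(\vxa,\sta)$ collapses it to $\bigl(\lambda+\tfrac{\alpha\reg}{n^3}\bigr)\|\vxa\|_F^2$ plus the $\sta$-dependent part, with no AM--GM loss; \emph{then} the upper bound $\fnga(\vxa,\sta)\le\tfrac{1}{4\lambda}\|\vd\|_F^2+\tfrac{c^2}{4\lambda}+\alpha(\fnf(\vxa)-\sta)$ gives the lemma. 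The missing idea in your plan is precisely this Euler-type identity from stationarity, which converts $\vd\cdot\vxa$ into $\|\vxa\|_F^2$ and $\fnf(\vxa)$ exactly rather than via a lossy inequality.
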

\begin{proof}
Direct calculation shows that the following derivatives for $\fnga$ hold for all input:
\begin{align*}
\dxij \fnga(\vx,\st)=\vd_{ij}-2 \lambda \vx_{ij} + \alpha \left(\log (\vx \onevec)_{i}-\vC_{ij}-\log \vx_{ij} -\frac{2\reg}{n^3}\vx_{i,j}\right)
\end{align*}
\begin{align*}
\dxt \fnga(\vx,\st)=\sd-2 \lambda \st - \alpha 
\end{align*}
By the optimality of $\vxa$ and $\sta$ we know these derivatives are $0$ at $(\vxa, \sta)$ and therefore:
\begin{equation}\label{eq:optimality}
\frac{2 \lambda \vxa_{ij}-\vd_{ij}}{  \alpha}+ \frac{\reg}{n^3}\vxa_{i,j}= \log (\vxa \onevec)_{i}-\vC_{ij}-\log \vxa_{ij} -\frac{\reg}{n^3}\vxa_{i,j}
\text{ and }
\sta=\max(-\lt,\frac{\sd-\alpha}{2\lambda}) ~.
\end{equation}
Consequently, 
\begin{align*}
\alpha \fnf(\vxa) &=  \alpha \left(  \sum_{i}\sum_{j}\vxa_{i,j} \left( \log (\vxa \onevec)_{i} - \vC_{i,j} -\log \vxa_{i,j}-\frac{\reg}{n^3}\vxa_{i,j}\right)  \right)\\
&= (2\lambda + \frac{\alpha \reg}{n^3})\| \vxa \|_{F}^{2}  - \vd \cdot \vxa ~.
\end{align*}
and substituting this and the value of $\sta$ into the formula for $\fnHa$ yields
\begin{align}\label{eq:eval}
\fnHa(\alpha) &= \fnga(\vxa,\sta) = \vd \cdot \vxa+\sd\sta-\lambda \left( \|\vxa\|_{F}^{2} + \left(\sta \right)^2\right) + \alpha \left(\fnf(\vxa)-\sta \right)\\
&= (\lambda + \frac{\alpha \reg}{n^3})\|\vxa\|_{F}^{2} + c \sta - \lambda \left(\sta\right)^2 - \alpha \sta \\ 
&= \min (\frac{(\sd-\alpha)^2}{4\lambda},\alpha n^2- \lambda n^4- c n^2)+(\lambda+\frac{\alpha \reg}{n^3}) \|\vxa\|_{F}^{2} \label{eq:end}
\end{align}
Combining this equality with the following upper bound for $\fnHa(\alpha)$ yields the result:
\[
\fnHa(\alpha) \leq
\max_{(\vx,\st)}\vd \cdot \vx+\sd\st-\lambda \left( \|\vx\|_{F}^{2}+\st^2\right)
+ \alpha \left(\fnf(\vxa)-\sta\right)
=\frac{1}{4\lambda}\|\vd\|_{F}^{2} +\frac{\sd^2}{4\lambda}
+ \alpha \left(\fnf(\vxa)-\sta\right)
~.
\]
\end{proof}

\begin{cor}\label{cor:bounded}
For any $\delta>0$ and $\alpha>\bamaxo$, where $\bamaxo=\boundalpha$
$$\fnf(\vxa) \geq \sta+\frac{\delta}{4\alpha}$$
\end{cor}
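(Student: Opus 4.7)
The plan is to use Lemma~\ref{lem:gradg} as the only real input, rearrange it to isolate $\alpha(\fnf(\vxa)-\sta)$, and then verify that the bound on $\alpha$ given by $\bamaxo$ makes the right-hand side at least $\delta/4$. Dividing by $\alpha$ then yields the claim.

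First I would start from the inequality in Lemma~\ref{lem:gradg} and rewrite it as
\[
\alpha\bigl(\fnf(\vxa)-\sta\bigr)
\;\geq\;
\Bigl(\lambda+\tfrac{\alpha\reg}{n^3}\Bigr)\|\vxa\|_F^2
-\tfrac{1}{4\lambda}\|\vd\|_F^2-\tfrac{\sd^2}{4\lambda}
+\min\!\Bigl(\tfrac{(\sd-\alpha)^2}{4\lambda},\;\alpha n^2-\lambda n^4-\sd n^2\Bigr).
\]
Since $\lambda\geq 0$ and $\|\vxa\|_F^2\geq 0$, the first term on the right is nonnegative and may be dropped, so it suffices to show that, for $\alpha>\bamaxo$, both arguments of the $\min$ exceed $\tfrac{\|\vd\|_F^2+\sd^2}{4\lambda}+\tfrac{\delta}{4}$; then the right-hand side is at least $\delta/4$, and dividing by $\alpha$ gives the statement.

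The main obstacle—and the only nontrivial step—is the two branch-by-branch algebraic verifications. For the first branch I would use $\alpha>0$ together with $\alpha\geq\sqrt{\frac{\|\vd\|^2+\sd^2}{\lambda}+\delta\lambda+|\sd|}$ and $\alpha\geq 1$ to control $(\sd-\alpha)^2$ from below (handling the cross term by the bound $-2\alpha\sd\geq -2\alpha|\sd|$ and absorbing the $|\sd|$ into the term under the square root), obtaining $(\sd-\alpha)^2\geq\|\vd\|^2+\sd^2+\lambda\delta$, which gives exactly $\tfrac{(\sd-\alpha)^2}{4\lambda}\geq\tfrac{\|\vd\|^2+\sd^2}{4\lambda}+\tfrac{\delta}{4}$. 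For the second branch, the bound $\alpha\geq \frac{1}{4\lambda^2 n^2}(\|\vd\|^2+\sd^2)+\lambda n^2+|\sd|+\frac{\delta}{n^2}$ is essentially the threshold obtained by rearranging $\alpha n^2-\lambda n^4-\sd n^2\geq\tfrac{\|\vd\|^2+\sd^2}{4\lambda}+\tfrac{\delta}{4}$ and using $\sd\leq|\sd|$; I would simply solve for $\alpha$ and confirm the inequality.

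Combining these two verifications, the $\min$ is bounded below by $\tfrac{\|\vd\|_F^2+\sd^2}{4\lambda}+\tfrac{\delta}{4}$, so
\[
\alpha\bigl(\fnf(\vxa)-\sta\bigr)\;\geq\;\tfrac{\delta}{4},
\]
which rearranges to $\fnf(\vxa)\geq \sta+\tfrac{\delta}{4\alpha}$, as desired. The clause $\alpha\geq 1$ in the definition of $\bamaxo$ is used only to ensure we can legitimately take $\alpha>|\sd|$ (so that $(\sd-\alpha)^2\geq(\alpha-|\sd|)^2$) in the first branch and to avoid degeneracies when $\sd$ is small.
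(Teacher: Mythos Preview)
Your approach is essentially identical to the paper's: both invoke Lemma~\ref{lem:gradg}, discard the nonnegative term $(\lambda+\tfrac{\alpha\reg}{n^3})\|\vxa\|_F^2$, and reduce the claim to verifying that each branch of the $\min$ exceeds $\tfrac{\|\vd\|_F^2+\sd^2}{4\lambda}+\tfrac{\delta}{4}$ once $\alpha>\bamaxo$. The only cosmetic difference is that the paper packages this as a two-line proof by contradiction (assume $\fnf(\vxa)<\sta+\tfrac{\delta}{4\alpha}$ and arrive at $0\le(\lambda+\tfrac{\alpha\reg}{n^3})\|\vxa\|_F^2<0$), whereas you argue directly; the underlying computation is the same.
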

\begin{proof}
Suppose $\fnf(\vxa) < \sta+\epsone$, then by Lemma \ref{lem:gradg}, it holds that:
$$0 \leq (\lambda+\frac{\alpha \reg}{n^3}) \|\vxa\|_{F}^{2} \leq \frac{1}{4\lambda^2}\|\vd\|_{F}^{2} +\frac{\sd^2}{4\lambda^2}-\min(\frac{(c-\alpha)^2}{4\lambda},\alpha n^2- \lambda n^4- c n^2) +\frac{\delta}{4}< 0 ~.$$
The final inequality follows from the conditions of the corollary.
\end{proof}
Next we show that $\vxa$ is differentiable with respect to $\alpha$ and therefore, $\fnHa$, $\fnf(\vxa) - \sta$, and $\|\vxa\|_{F}^{2}$ are continuous with respect to $\alpha$. The crux is the following, simple, possibly well known fact whose proof is a slight modification of that in (cite geometric median).

\begin{lemma}
\label{lem:differentiability_of_paths}
Let $\fnf : \R^{n + 1} \rightarrow \R$ be a twice differentiable function and for all $\vvx \in \R^n$ and $\alpha \in \R$ define the function $\fnf_\alpha : \R^n \rightarrow \R$ by $\fnf_\alpha (\vvx) = \fnf(\vvx, \alpha)$ and let  $\vvx_\alpha \defeq \argmax_{\vvx \in \R^n} f_\alpha(x)$. If $\fnf_\alpha$ is strictly concave for all $\alpha \in \R$ then $\vvx_\alpha$ is differentiable as a function of $\alpha$. 
\end{lemma}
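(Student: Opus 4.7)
The plan is to apply the implicit function theorem (IFT) to the first-order optimality condition characterizing the path $\alpha \mapsto \vvx_\alpha$. Since $\fnf_\alpha$ is strictly concave and (being the restriction of a twice-differentiable function) differentiable on $\R^n$, its maximizer, assumed to lie in the interior, is the unique point satisfying $\nabla_x \fnf(\vvx_\alpha, \alpha) = 0$. This reduces differentiability of $\vvx_\alpha$ to a statement about a smooth implicit equation.

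Concretely, I would define $G : \R^n \times \R \to \R^n$ by $G(\vvx, \alpha) \defeq \nabla_x \fnf(\vvx, \alpha)$. Because $\fnf$ is twice differentiable, $G$ is continuously differentiable, and by construction $G(\vvx_\alpha, \alpha) = 0$ for every $\alpha$. The partial Jacobian of $G$ with respect to $\vvx$ is exactly the Hessian $\nabla_x^2 \fnf_\alpha(\vvx)$, which is negative semidefinite by concavity. Provided this Hessian is negative definite (hence invertible) at $\vvx_\alpha$, IFT produces a continuously differentiable function $\widetilde{\vvx}(\alpha)$, defined on a neighborhood of any fixed $\alpha_0$, satisfying $G(\widetilde{\vvx}(\alpha), \alpha) = 0$. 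Strict concavity of $\fnf_\alpha$ makes its critical point the unique maximizer, so $\widetilde{\vvx}(\alpha) = \vvx_\alpha$ on this neighborhood, and differentiability of $\vvx_\alpha$ in $\alpha$ follows.

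The main obstacle is verifying that $\nabla_x^2 \fnf_\alpha(\vvx_\alpha)$ is strictly negative definite rather than merely negative semidefinite, since a generic twice-differentiable strictly concave function can have a vanishing Hessian at its maximum (e.g.\ $-x^4$ at $x=0$). In every invocation of this lemma in our analysis, however, the optimand has the form $\fnga(\vvx, t) = \vd \cdot \vvx + \sd\, t - \lambda(\|\vvx\|_F^2 + t^2) + \alpha(\fnf(\vvx) - t)$ of Equation~\eqref{eq:fngxta}, so the Hessian picks up a uniform $-2\lambda I$ contribution from the quadratic regularizer on top of the negative semidefinite contribution of $\alpha \fnf$. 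Thus $\nabla^2 \fnga \preceq -2\lambda I$ is strictly negative definite for any $\lambda > 0$, and the invertibility hypothesis of IFT is automatic. This is the precise sense in which the argument is a ``slight modification'' of the geometric median proof: both settings introduce a strongly concave quadratic term so that IFT applies verbatim, and one then reads off continuity of $\vvx_\alpha$, $\fnHa(\alpha)$, $\fnf(\vvx_\alpha) - \sta$, and $\|\vvx_\alpha\|_F^2$ in $\alpha$ as immediate corollaries.
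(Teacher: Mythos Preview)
Your proposal is correct and takes essentially the same approach as the paper: differentiate the first-order optimality condition $\nabla_x \fnf_\alpha(\vvx_\alpha)=0$ with respect to $\alpha$ and invert the Hessian $\nabla_x^2 \fnf_\alpha$ (equivalently, apply the implicit function theorem). You are in fact more careful than the paper, correctly flagging that strict concavity alone need not yield a negative-definite Hessian (your $-x^4$ example) and that in the actual application the $-2\lambda I$ contribution from the quadratic regularizer is what guarantees invertibility---the paper's proof simply asserts ``all eigenvalues of this matrix are negative'' from strict concavity and glosses over this point.
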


\begin{proof} 
By the optimality conditions for $\vvx_\alpha$ we know that 
$\nabla \fnf_\alpha(\vvx_\alpha) = \vec{0}$. Consequently, since $\fnf$ is differentiable, differentiating with respect to $\alpha$ yields by chain rule that
\[
\nabla^2 \fnf_\alpha(\vvx_\alpha) \left[\frac{d}{d\alpha} \vvx_\alpha \right] + \left. \frac{d}{d\alpha} \nabla \fnf_\alpha(\vvx) \right|_{\vvx_\alpha}
= \vec{0}
~.
\]
However, since $\fnf$ is strictly concave, all eigenvalues of this matrix are negative and this matrix is invertible yielding the desired result.
\end{proof}

\begin{lemma}\label{lem:cont}
Functions $\fnHa(\alpha)$, $ \fnf(\vxa)-\sta$ and $\|\vxa\|_{F}^{2}$ are continuous in $\alpha$.
\end{lemma}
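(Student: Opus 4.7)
The plan is to exploit strict concavity and coercivity of $\fnga$ to apply a standard argmax-continuity argument (either Berge's maximum theorem or the differentiability lemma \Cref{lem:differentiability_of_paths} just proved), and then deduce continuity of the three quantities by composition.

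First I would observe that $\fnga$ separates cleanly into a function of $\vx$ and a function of $\st$:
\begin{equation*}
\fnga(\vx,\st)\;=\;\underbrace{\vd\cdot\vx-\lambda\|\vx\|_{F}^{2}+\alpha\,\fnf(\vx)}_{=:G_{1}(\vx;\alpha)}\;+\;\underbrace{(\sd-\alpha)\st-\lambda\st^{2}}_{=:G_{2}(\st;\alpha)},
\end{equation*}
so $\vxa$ and $\sta$ can be analyzed independently. The $\st$-problem is an explicit one-variable concave maximization on $[-\lt,\infty)$ whose solution is $\sta=\max(-\lt,(\sd-\alpha)/(2\lambda))$, which is visibly continuous (in fact piecewise affine) in $\alpha$, so the $\st$-half of the lemma is immediate.

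For the $\vx$-half I would verify two properties. First, strict (indeed strong) concavity of $G_{1}(\cdot;\alpha)$: the regularizer $-\lambda\|\vx\|_{F}^{2}$ is strongly concave with parameter $2\lambda>0$, and $\alpha\,\fnf(\vx)$ is concave for all $\alpha\geq 0$ by the log-concavity argument establishing \Cref{lemmalogconcave}; linear terms do not affect concavity. Second, coercivity uniformly on any compact $\alpha$-interval: because $\fnf(\vx)$ grows only like $\|\vx\|_{F}\log\|\vx\|_{F}$ while $-\lambda\|\vx\|_{F}^{2}$ dominates, $G_{1}(\vx;\alpha)\to-\infty$ as $\|\vx\|_{F}\to\infty$, uniformly in $\alpha$ lying in a bounded set. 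Together these two properties guarantee that the maximizer $\vxa$ exists, is unique, and lies in a common compact set as $\alpha$ ranges over any bounded interval. An application of Berge's maximum theorem (or alternatively, since the optimum is interior by the explicit optimality condition \Cref{eq:optimality}, a direct appeal to \Cref{lem:differentiability_of_paths}) then yields that $\alpha\mapsto\vxa$ is continuous.

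Once continuity of $\alpha\mapsto(\vxa,\sta)$ is established, the remaining claims are immediate by composition: $\|\vxa\|_{F}^{2}$ is continuous because the squared Frobenius norm is; $\fnf(\vxa)-\sta$ is continuous because $\fnf$ is continuous on its (open) domain and $\vxa$ stays bounded away from the boundary by the interior optimality condition; and $\fnHa(\alpha)=\fnga(\vxa,\sta)$ is continuous because $\fnga(\vx,\st)$ is jointly continuous in $(\vx,\st,\alpha)$ and we evaluate it at continuous arguments.

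The main technical subtlety I expect is handling the implicit domain restriction $\vx_{ij}\geq 0$ coming from the $\log\vx_{ij}$ terms in $\fnf$; without care one cannot directly invoke \Cref{lem:differentiability_of_paths}, which is stated for unconstrained maximization. The cleanest fix is to note from \Cref{eq:optimality} that every coordinate of $\vxa$ is strictly positive (the first-order condition forces $\log\vxa_{ij}$ to be finite), so the optimum lies in the interior of the natural domain of $G_{1}$ and the unconstrained differentiability/continuity argument applies verbatim; alternatively one can invoke Berge's theorem directly on the closed constraint set and use strict concavity to get single-valuedness.
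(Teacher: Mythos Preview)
Your proposal is correct and follows essentially the same approach as the paper: invoke \Cref{lem:differentiability_of_paths} to obtain continuity of $\vxa$ in $\alpha$, then conclude by composition since $\fnf$ and $\|\cdot\|_{F}^{2}$ are continuous. Your write-up is in fact more careful than the paper's three-line proof, since you separate out the explicit formula for $\sta$, verify coercivity, and explicitly address the domain issue (positivity of $\vxa_{ij}$) that the paper glosses over; the Berge alternative you mention is a nice touch but not needed given \Cref{lem:differentiability_of_paths}.
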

\begin{proof}
Since $\fnHa$ is twice differentiable and $\fnH$ is strictly concave,  Lemma~\ref{lem:differentiability_of_paths} implies that $\vxa$ is differentiable and therefore continuous as a function of $\alpha$. Since $\fnf$ and $\|\vx\|_{F}^{2}$ are continuous functions the result follows.
\end{proof}

\begin{lemma}\label{lem:inc}
Let $\vxo,\vxt$ be the optimum solutions to Optimization problem \ref{eq:fixka} with respect to $\ao$ and $\at$ respectively. For any $\ao < \at$: 
$$\fnf(\vxo)-\sto >0 \imply \fnHa(\ao) < \fnHa(\at)
\text{ and }
\fnf(\vxt)-\stt <0 \imply \fnHa(\ao) > \fnHa(\at)
~.
$$
\end{lemma}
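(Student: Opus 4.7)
The plan is to give a direct optimality-exchange argument that exploits two structural features of the objective $\fnga(\vx,\st)$: for a fixed pair $(\vx,\st)$ the function is affine in $\alpha$, and the feasible region $\{(\vx,\st) \mid \st \geq -\lt\}$ defining $\fnHa$ does not depend on $\alpha$. First I would split the objective as $\fnga(\vx,\st) = q(\vx,\st) + \alpha \cdot r(\vx,\st)$, where $q(\vx,\st) \defeq \vd\cdot \vx + \sd\, \st - \lambda(\|\vx\|_F^2 + \st^2)$ gathers the $\alpha$-independent terms and $r(\vx,\st) \defeq \fnf(\vx) - \st$ is the ``slack'' whose sign is controlled by the hypothesis of each implication. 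In particular, both $(\vxo,\sto)$ and $(\vxt,\stt)$ satisfy $\st \geq -\lt$, so each one is a valid competitor in the optimization problem defining $\fnHa$ at the other value of $\alpha$.

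For the first implication, I would use that $(\vxt,\stt)$ is optimal at $\at$ and compare it against the competitor $(\vxo,\sto)$:
\[
\fnHa(\at) \;\geq\; q(\vxo,\sto) + \at\cdot r(\vxo,\sto) \;=\; \fnHa(\ao) + (\at-\ao)\,r(\vxo,\sto),
\]
where the final equality uses that $(\vxo,\sto)$ is itself optimal at $\ao$, so $\fnHa(\ao) = q(\vxo,\sto) + \ao \cdot r(\vxo,\sto)$. Since $\at > \ao$ and by hypothesis $r(\vxo,\sto) = \fnf(\vxo) - \sto > 0$, the correction term is strictly positive and $\fnHa(\at) > \fnHa(\ao)$.

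The second implication is symmetric: I would instead evaluate the $\ao$-problem at the competitor $(\vxt, \stt)$, obtaining
\[
\fnHa(\ao) \;\geq\; q(\vxt,\stt) + \ao\cdot r(\vxt,\stt) \;=\; \fnHa(\at) + (\ao-\at)\, r(\vxt,\stt),
\]
where the last equality uses optimality of $(\vxt,\stt)$ at $\at$. Here $\ao - \at < 0$ and by hypothesis $r(\vxt,\stt) < 0$, so the correction is again strictly positive and $\fnHa(\ao) > \fnHa(\at)$.

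This proof has no real obstacle; it is essentially a one-line exchange argument once the affine-in-$\alpha$ decomposition is recognized. What is worth highlighting, however, is the implicit monotonicity this argument establishes: adding the two inequalities above and cancelling the $q$-terms yields $r(\vxo,\sto) \leq r(\vxt,\stt)$ whenever $\ao < \at$, i.e. $\fnf(\vxa) - \sta$ is nondecreasing in $\alpha$. This is the conceptual content of the lemma, and it is precisely what justifies performing a sign-based binary search on $\alpha$ to drive the slack $\fnf(\vxa) - \sta$ toward zero and thereby implement the oracle in \eqref{eq:oracle}.
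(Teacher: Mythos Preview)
Your proof is correct and is essentially the same exchange argument as the paper's: both evaluate the optimizer at one $\alpha$ as a competitor at the other $\alpha$, then use the sign of $\fnf(\vx)-\st$ together with $\ao<\at$ to get the strict inequality. Your explicit decomposition $\fnga=q+\alpha r$ and your closing remark that this forces $\alpha\mapsto r(\vxa,\sta)$ to be nondecreasing are nice clarifications, but the core argument matches the paper's.
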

\begin{proof}
Suppose that $\fnf(\vxo)-\sto >0 $ as the proof for when $\fnf(\vxt)-\stt <0$ is analogous. Then since $\ao < \at$ we have $\ao (\fnf(\vxo)-\stt) < \at (\fnf(\vxo)-\stt)$ and
\begin{align*}
\fngo(\vxo,\sto)&= \vd \cdot \vxo+\sd\sto-\lambda \left( \|\vxo\|_{F}^{2}+(\sto)^2\right) + \ao \left(\fnf(\vxo)-\sto \right)\\
& < \vd \cdot \vxo+\sd\sto-\lambda \left( \|\vxo\|_{F}^{2}+(\sto)^2\right) + \at \left(\fnf(\vxo)-\sto \right)
= \fngt(\vxo,\sto) 
\end{align*}
The result follows as $\fnHa(\ao) =\fngo(\vxo,\sto)$ and $\fngt(\vxo,\sto) \leq \fngt(\vxt,\stt)=\fnHa(\at)$.
\end{proof}

\begin{cor}\label{cor:staypositive}
Let $\vxo,\vxt$ be the optimum solutions to Optimization problem \ref{eq:fixka} with respect to $\ao$ and $\at$ respectively. For any $\ao < \at$: 
$$\fnf(\vxo)-\sto >0 \imply \fnf(\vxt)-\stt >0$$
\end{cor}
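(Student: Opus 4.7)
The plan is to establish the stronger fact that $h(\alpha) \defeq \fnf(\vxa) - \sta$ is monotonically non-decreasing in $\alpha$, from which the corollary is immediate: the hypothesis gives $h(\ao) > 0$, so $h(\at) \geq h(\ao) > 0$. Morally this is just the monotonicity of the ``subgradient'' of the convex function $\fnHa$, which is convex since it is the pointwise maximum, over $(\vx,\st)$ with $\st \geq -\lt$, of affine functions of $\alpha$. However, I would prove the monotonicity directly, using only the optimality of $(\vxo,\sto)$ and $(\vxt,\stt)$ in their respective problems, rather than invoking convex analysis machinery.

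The key observation is the elementary identity
$$\fngt(\vx,\st) - \fngo(\vx,\st) \;=\; (\at - \ao)\bigl(\fnf(\vx) - \st\bigr),$$
which follows directly from the definition \eqref{eq:fngxta}. Substituting $(\vxo,\sto)$ and $(\vxt,\stt)$ and combining with the optimality relations $\fnHa(\ao) = \fngo(\vxo,\sto) \geq \fngo(\vxt,\stt)$ and $\fnHa(\at) = \fngt(\vxt,\stt) \geq \fngt(\vxo,\sto)$ yields
\begin{align*}
\fnHa(\at) \;&\geq\; \fngt(\vxo,\sto) \;=\; \fnHa(\ao) + (\at - \ao)\,h(\ao), \\
\fnHa(\ao) \;&\geq\; \fngo(\vxt,\stt) \;=\; \fnHa(\at) - (\at - \ao)\,h(\at).
\end{align*}
Together these sandwich $(\at-\ao)h(\ao) \leq \fnHa(\at) - \fnHa(\ao) \leq (\at-\ao)h(\at)$, and since $\at > \ao$ we conclude $h(\ao) \leq h(\at)$, finishing the proof.

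The main pitfall I would avoid is a naive contradiction argument based solely on Lemma~\ref{lem:inc}. That lemma directly rules out the strict case $h(\at) < 0$ (since $h(\ao) > 0$ forces $\fnHa(\ao) < \fnHa(\at)$, while $h(\at) < 0$ forces the reverse), but the borderline case $h(\at) = 0$ would need a separate, somewhat delicate continuity argument via Lemma~\ref{lem:cont}. The sandwich proof above treats both cases uniformly and, in fact, recovers Lemma~\ref{lem:inc} as a corollary, since strict positivity of $h(\ao)$ makes the lower bound $\fnHa(\at) - \fnHa(\ao) \geq (\at-\ao)h(\ao)$ strictly positive.
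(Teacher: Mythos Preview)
Your proof is correct and takes a genuinely different route from the paper's. The paper argues by contradiction via Lemma~\ref{lem:inc}: from $\fnf(\vxo)-\sto>0$ it obtains $\fnHa(\ao)<\fnHa(\at)$, then assumes $\fnf(\vxt)-\stt<0$ to derive the reverse inequality and a contradiction. As you yourself flag, that argument as written only rules out the strict negative case and leaves $\fnf(\vxt)-\stt=0$ unaddressed. Your sandwich argument sidesteps this entirely by proving the monotonicity $h(\ao)\le h(\at)$ directly from the two optimality inequalities and the linear-in-$\alpha$ identity; the conclusion $h(\at)\ge h(\ao)>0$ then needs no case split and no appeal to continuity. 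A further bonus is that your inequalities $(\at-\ao)h(\ao)\le \fnHa(\at)-\fnHa(\ao)\le(\at-\ao)h(\at)$ immediately recover both implications of Lemma~\ref{lem:inc}, so your approach is strictly more informative while using only the same raw ingredients (definition~\eqref{eq:fngxta} and optimality of the two maximizers over the common feasible set $\{\st\ge -\lt\}$).
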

\begin{proof}
Given $\ao < \at$ and $\fnf(\vxo)-\sto >0 $. By first part of the Lemma \ref{lem:inc} $\fnHa(\ao) < \fnHa(\at)$. Suppose $\fnf(\vxt)-\stt <0$ by the second part of same Lemma \ref{lem:inc} we have $ \fnHa(\ao) > \fnHa(\at)$ A contradiction!
\end{proof}

\begin{lemma}\label{lem:xbounded} For any $\alpha >0$,
$$\|\vxa\|_{F}^{2} \leq \bxmax$$
where $\bxmax=\max(\boundx,1)$
\end{lemma}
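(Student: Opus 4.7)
The plan is to use the exact optimality identity for $\alpha\fnf(\vxa)$ derived inside the proof of Lemma~\ref{lem:gradg}, namely
\[
\alpha\,\fnf(\vxa) \;=\; \left(2\lambda + \tfrac{\alpha\reg}{n^{3}}\right)\|\vxa\|_F^{2} \;-\; \vd\cdot\vxa,
\]
together with a crude analytic upper bound on $\fnf(\vxa)$ of the form $\fnf(\vxa) \leq K\|\vxa\|_F - \tfrac{\reg}{n^{3}}\|\vxa\|_F^{2}$ for some $K$ polylogarithmic in $n$. Combining these two relations lets the $\alpha\reg/n^{3}$ quadratic contributions add up on the left-hand side, leaving a purely linear inequality in $\|\vxa\|_F$ that can be inverted directly and whose coefficients carry no $\sd$ dependence (which is essential, since the target $\bxmax$ has none).

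The upper bound on $\fnf(\vxa)$ comes from three elementary observations. First, $\vC=\mvec\logpvec^{T}$ has only non-positive entries: $\mvec\geq 0$ and $\log\pp_{i}\leq 0$ since $\pp_{i}\in\bP\subseteq(0,1]$; as $\vxa>0$ by interiority of the logarithmic terms in $\fnga$, this gives $\vC\cdot\vxa\leq 0$. Second, Jensen's inequality applied to the convex function $x\log x$ yields the row-entropy bound $\sum_{i}(\vxa\onevec)_{i}\log(\vxa\onevec)_{i} - \sum_{i,j}\vxa_{ij}\log\vxa_{ij} \;\leq\; \|\vxa\|_{1}\log\bttpo$, and Cauchy--Schwarz converts this into $K\|\vxa\|_F$ with $K\defeq \sqrt{\boo\bttpo}\log\bttpo = \poly(\log n,1/\sqrt{\epso\epst})$. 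Third, the $-\tfrac{\reg}{n^{3}}\|\vxa\|_F^{2}$ piece appears verbatim in the definition of $\fnf$. Adding the three bounds gives the stated upper estimate.

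Substituting the upper bound into $\alpha$ times the identity produces $(2\lambda + \tfrac{2\alpha\reg}{n^{3}})\|\vxa\|_F^{2} \leq (\alpha K + \|\vd\|_F)\|\vxa\|_F$ (the case $\|\vxa\|_F=0$ being trivial), so $\|\vxa\|_F \leq (\alpha K + \|\vd\|_F)/(2\lambda + 2\alpha\reg/n^{3})$. Bounding the right-hand side above by $\alpha K/(2\alpha\reg/n^{3}) + \|\vd\|_F/(2\lambda) = Kn^{3}/(2\reg) + \|\vd\|_F/(2\lambda)$ and squaring via $(a+b)^{2}\leq 2a^{2}+2b^{2}$ gives $\|\vxa\|_F^{2} \leq \|\vd\|_F^{2}/(2\lambda^{2}) + K^{2}n^{6}/(2\reg^{2})$. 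Since $K$ is polylogarithmic and $\reg$ is chosen small enough (certainly $\reg\leq n^{3}$), the second term is dominated by $n^{9}/\reg^{3}$, so this sits inside $\bxmax$; the trivial case $\|\vxa\|_F^{2}\leq 1$ is absorbed into the $\max(\cdot,1)$ in the definition.

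The main subtlety is engineering the quadratic cancellation: the $-\reg/n^{3}$ coefficient of $\|\vxa\|_F^{2}$ in the analytic upper bound must match the $+\alpha\reg/n^{3}$ coefficient on the right of the identity (after multiplying by $\alpha$), so that both quadratic terms collect on the left to give $(2\lambda + 2\alpha\reg/n^{3})\|\vxa\|_F^{2}$ and only a linear-in-$\|\vxa\|_F$ term survives on the right. Without this matching one is stuck with a quadratic on both sides of the inequality, and the solvability for $\|\vxa\|_F$ breaks down; once the cancellation is arranged, the remaining manipulations are routine.
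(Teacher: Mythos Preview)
Your argument is correct and takes a genuinely different route from the paper's. You bound $\fnf(\vxa)$ pointwise as $\fnf(\vxa)\leq K\|\vxa\|_F - \tfrac{\reg}{n^3}\|\vxa\|_F^2$ (via the sign of $\vC$, the row-entropy inequality, and Cauchy--Schwarz) and feed this into the optimality identity $\alpha\fnf(\vxa)=(2\lambda+\tfrac{\alpha\reg}{n^3})\|\vxa\|_F^2-\vd\cdot\vxa$, so that the two quadratic contributions combine and only a linear inequality in $\|\vxa\|_F$ remains. The paper instead uses the equivalent evaluation $(\lambda+\tfrac{\alpha\reg}{n^3})\|\vxa\|_F^2=\max_\vx\bigl[\vd\cdot\vx-\lambda\|\vx\|_F^2+\alpha\fnf(\vx)\bigr]$ and decouples the maximum as $\max_\vx[\vd\cdot\vx-\lambda\|\vx\|_F^2]+\alpha\max_\vx\fnf(\vx)\leq \tfrac{\|\vd\|_F^2}{4\lambda}+\alpha\cdot O(n^6/\reg^2)$, bounding $\max_\vx\fnf(\vx)$ by first localizing to the region $\|\vx\|_F^2\leq n^6/\reg$ where $\fnf$ can be nonnegative and then estimating crudely there. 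Your approach is tidier: you exploit $\vC\cdot\vx\leq 0$ and the explicit entropy bound rather than the rough $|C_{ij}|\leq O(n\log n)$, and you sidestep the informal ``the quadratic term dominates'' step. The paper's decoupling, however, lands on the exact coefficient $\tfrac{\|\vd\|_F^2}{4\lambda^2}$ after dividing through, whereas your final step $(a+b)^2\leq 2a^2+2b^2$ yields $\tfrac{\|\vd\|_F^2}{2\lambda^2}$; so the concluding claim that your bound ``sits inside $\bxmax$'' is off by a factor of two in the first summand. This is immaterial downstream---$\bxmax$ enters only logarithmically in the running-time bounds of Lemmas~\ref{lem:solvealpha} and~\ref{alg1}---but as literally stated the lemma asks for a constant your route does not quite reach.
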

\begin{proof}
Observe that we can optimize problem \ref{eq:fixka} with respect to $\vx$ and $\st$ independently. Lets look at the function behaviour $\fnHa(\alpha)$ with respect to $\vx$. From equation \ref{eq:eval}-\ref{eq:end} we have:
$$\max_{\vx}\vd \cdot \vx-\lambda  \|\vx\|_{F}^{2}  + \alpha \fnf(\vx)=(\lambda+\frac{\alpha \reg}{n^3}) \|\vxa\|_{F}^{2}$$
Also note that $\fnf(\vx )<0$ for $\|\vx\|_{F}^{2} \geq \frac{n^6}{\reg}$ because the term $\frac{\reg}{n^3}\|\vx\|_{F}^{2}$ dominates and also there is a trivial solution with $\fnf(0)=0$. Combining all we get $\max_{\vx} \fnf(\vx) =\max_{\{\vx~|~\|\vx\|_{F}^{2} \leq \frac{n^6}{\reg}\}}  \fnf(\vx)$ and the function $\fnf(\vx) \leq O(\frac{n^6}{\reg^2})$ because all $|C_{i,j}| \leq O(n\log n)$ are bounded.
\begin{align*}
(\lambda+\frac{\alpha \reg}{n^3}) \|\vxa\|_{F}^{2} & = \max_{\vx}\vd \cdot \vx-\lambda  \|\vx\|_{F}^{2}  + \alpha \fnf(\vx)\\
& \leq \max_{\vx}\vd \cdot \vx-\lambda  \|\vx\|_{F}^{2}  + \max_{\vx} \alpha \fnf(\vx)\\
&=\max_{\vx}\vd \cdot \vx-\lambda  \|\vx\|_{F}^{2}  + \max_{\vx} \alpha \fnf(\vx) \\
&\leq \frac{\|\vd\|_{F}^{2}}{4\lambda} + O(\frac{n^6}{\reg^2})
\end{align*}

\end{proof}


\newcommand{\timealpha}{O(\boo \btt \log(\|\vd\| c / (\lambda \delta))}

\begin{lemma}\label{lem:solvealpha}
For any $ \alpha >0$, we can find a solution $(\vxe,\ste)$ such that $ \|\vxe-\vxa\|_{1} \leq \epstwo \text{ and } \ste=\sta$ in time $O(\boo\cdot \btt\log \left(\frac{\bxmax}{\epstwo} \right))$.
\end{lemma}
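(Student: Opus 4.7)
The plan is to exploit the product structure of $\fnga(\vx,\st)$ and reduce the problem to a nearly-one-dimensional optimization per row of $\vx$.

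First I would handle $\ste$ trivially. Since $\fnga(\vx,\st) = [\vd \cdot \vx - \lambda\|\vx\|_F^2 + \alpha \fnf(\vx)] + [\sd\st - \lambda \st^2 - \alpha\st]$ separates between $\vx$ and $\st$, maximizing the $\st$-part over $\st \geq -\lt$ is a one-dimensional concave quadratic whose optimum is exactly $\sta = \max(-\lt, (\sd-\alpha)/(2\lambda))$ by Equation~\ref{eq:optimality}. So I set $\ste \gets \sta$ in $O(1)$ time, which already matches the requirement $\ste = \sta$ exactly (not approximately).

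Next I would exploit row-separability in $\vx$. Since $\|\vx\|_F^2 = \sum_i \sum_j \vx_{ij}^2$, the regularizer $\tfrac{\reg}{n^3}\|\vx\|_F^2$, the linear term $\logpvec^T \vx \mvec = \sum_{i,j}(\log\pvec)_i \mvec_j \vx_{ij}$, and the entropy-type term $(\vx\onevec)_i\log(\vx\onevec)_i - \sum_j \vx_{ij}\log\vx_{ij}$ all decouple across rows, the $\vx$-optimization splits into $\boo$ independent subproblems, one per row. So I would solve $\boo$ row problems in parallel, each over a single row $\bs = (\vx_{i0},\ldots,\vx_{i\btt}) \in \R^{\bttpo}_{\geq 0}$.

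To solve each row problem, I would reduce it to one dimension using the FOC from Equation~\ref{eq:optimality}: letting $r = \bs^T\onevec$ and $K = 2\lambda/\alpha + 2\reg/n^3 > 0$, each coordinate $\bs_j$ is characterized by
\[
\log \bs_j + K \bs_j \;=\; \log r - \vC_{ij} + \vd_{ij}/\alpha ~.
\]
The map $y \mapsto \log y + K y$ is strictly increasing on $(0,\infty)$ from $-\infty$ to $+\infty$, so $\bs_j$ is a uniquely determined, monotone function $\psi_j(r)$ of $r$, computable by scalar binary search (or equivalently the Lambert $W$ function) to any desired precision in $O(\log(\bxmax/\epstwo))$ time, using the size bound $\|\vxa\|_F^2 \leq \bxmax$ from Lemma~\ref{lem:xbounded} to initialize the search interval. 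Plugging back, the row objective becomes a concave function $\rho(r)$ of the single variable $r$ (as a maximum of a strictly concave objective over a linear constraint), with a unique maximizer bracketed in $[0,\sqrt{\bxmax}]$, which I would locate by bisection/ternary search on $r$ in $O(\log(\bxmax/\epstwo))$ outer iterations, each costing $O(\btt)$ arithmetic operations to evaluate $\sum_j \psi_j(r)$.

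The main obstacle is the precision bookkeeping: to obtain $\|\vxe-\vxa\|_1 \leq \epstwo$ it suffices to solve each row's $r$ to accuracy $\epstwo/(\poly(\boo\btt))$ and propagate this through the Lipschitz dependence of $\psi_j$ on $r$, which is controlled because $\psi_j'(r) = \psi_j(r)/[r(1+K\psi_j(r))] \leq 1$ by implicit differentiation — so errors do not amplify when summing over the $\bttpo$ coordinates. This only multiplies the inner and outer binary searches by $\log^{O(1)}(\boo\btt)$ factors that fold into the stated bound. Multiplying $O(\log(\bxmax/\epstwo))$ outer iterations by $O(\btt)$ per iteration, times $\boo$ rows, yields total time $O(\boo\cdot\btt\log(\bxmax/\epstwo))$ as claimed.
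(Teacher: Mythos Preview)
Your proposal is correct and follows essentially the same approach as the paper: separate off $\st$ using its closed-form optimum, decompose the $\vx$-problem row by row, and for each row binary-search on the row sum using the first-order conditions (your $\log \bs_j + K\bs_j = \log r + c_{ij}$ is exactly the paper's $\sa^{\vxa_{ij}}\vxa_{ij} = \sbb_{ij}(\vxa\onevec)_i$ after taking logs). The only cosmetic difference is that the paper reparameterizes via $\vya_{ij} = \vxa_{ij}/(\vxa\onevec)_i$ and binary-searches on $(\vxa\onevec)_i$ using the monotone criterion $\sum_j \vya_{ij} \lessgtr 1$, whereas you frame the outer search as maximizing the concave envelope $\rho(r)$; these are equivalent, and your added Lipschitz remark on $\psi_j'(r)$ is a nice touch the paper omits.
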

\begin{proof}
Lets recall the objective of optimization problem \ref{eq:fixka}:
$$\fnga(\vx,\st)\defeq\vd \cdot \vx+\sd\st-\lambda \left( \|\vx\|_{F}^{2} + \st^2\right) + \alpha \left(\fnf(\vx)-\st \right)$$
Lets recall the optimality conditions from Equation~\ref{eq:optimality}:
\begin{equation}\label{eq:optdummyy}
\frac{2 \lambda \vxa_{ij}-\vd_{ij}}{  \alpha}= \log (\vxa \onevec)_{i}-\vC_{ij}-\log \vxa_{ij} -\frac{2\reg}{n^3}\vx_{ij}
\text{ and }
\sta=\max(\frac{\sd-\alpha}{2\lambda},-n^2) ~.
\end{equation}
%
%
Rearranging terms and taking exponential on Equation~\ref{eq:optdummyy} yields:
\begin{equation}\label{eq:optdd}
\sa^{\vxa_{ij}}\vxa_{ij}=\sbb_{ij}(\vxa \onevec)_{i}
\end{equation}
where $\sa \defeq \expo{\frac{2\lambda}{\alpha}+\frac{2\reg}{n^3}} > 1$ and $\sbb_{ij}=\expo{\frac{\vd_{ij}-\vC_{ij}}{\alpha}}$. Let $\sKai \defeq (\vxa \onevec)_{i}$ and we define new variables $\vya_{ij}$ which satisfy the following conditions, 
$$\vxa_{ij}=\vya_{ij}\sKai$$
and we know that $(\vya \onevec)_{i}\sKai=(\vxa \onevec)_{i}=\sKai$ and $\vya_{ij}$ should satisfy $(\vya \onevec)_{i}=1$. Lets rewrite Equation \ref{eq:optdd} in terms of $\vya_{ij}$ variables:
$$\sa^{\sKai \vya_{ij}}\vya_{ij}\sKai=\sbb_{ij}\sKai$$
This can be written equivalently as:
\begin{equation}\label{eq:optfixK}
\left( \sa^{\sKai}\right)^{\vya_{ij}}\vya_{ij}=\sbb_{ij}
\end{equation}
From Lemma \ref{lem:xbounded}, we can do binary search in $[0,\bxmax]$ to guess $\sKai$. Let $\lb$ and $\ub$ be the current lower and upper bounds for the value of $\sKai$: Assign $\sK=\frac{\lb+\ub}{2}$ and we can do binary search to find $\vyk_{ij}$ such that $\left( \sa^{\sK}\right)^{\vyk_{ij}}\vyk_{ij}=\sbb_{ij}$ because for fixed $\sa$ and $\sK$ function $\left( \sa^{\sK}\right)^{\vy}\vy$ is monotone (increasing) in $\vy$ ($\because \sa^{\sK}>1$).
\begin{enumerate}
\item If $(\vyk \onevec)_{i}=1$, assign $\vxa_{ij}=\vyk_{ij}\sK$ and Equation \ref{eq:optdd} is satisfied and we are done.
\item If $(\vyk \onevec)_{i}<1$, update $\ub=\sK$ that is decrease our guess for $\sKai$ to $\frac{\lb+\sK}{2}$ and observe that next iteration values of all $\vyk_{ij}$ increase as $\sbb_{ij}$ is fixed.
\item Else If $(\vyk \onevec)_{i}>1$, update $\lb=\sK$ because of the similar analysis as case above.
\item Assign $\sK=\frac{\lb+\ub}{2}$ and repeat.
\end{enumerate}

Note we never have to work with $\vy_{ij}$ variables, we introduced them to better understand our binary search procedure. From Lemma \ref{lem:xbounded} we have a good bound on $\|\vxa\|_{F}^{2}$ and the above procedure finds a solution $(\vxe,\ste)$ such that $\|\vxe-\vxa\|_{1} \leq \epstwo$ and $\ste=\sta$ (because we have closed form expression for $\sta$)  in time $O(\boo\cdot \btt\log \left(\frac{\bxmax}{\epstwo} \right))$.  
\end{proof}

\begin{lemma}\label{alg1}
Optimization problem \ref{eq:oracle} can be solved to $\delta$ accuracy in time $\timeoracle$.
\end{lemma}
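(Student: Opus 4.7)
The plan is to solve the oracle problem \ref{eq:oracle} by binary searching over the penalty parameter $\alpha$, using Lemma~\ref{lem:solvealpha} as an inner subroutine to (approximately) compute the unconstrained maximizer $(\vxa, \sta)$ of $\fnga$ at each candidate $\alpha$. The search is driven by the sign of the constraint residual $\fnf(\vxa) - \sta$: this quantity tells us, at the current penalty level, whether the unconstrained maximizer overshoots or undershoots the feasible set $\setK$. Our goal is to converge on an $\alpha^*$ where the residual is essentially zero, so that the unconstrained and constrained optima agree up to the target accuracy $\delta$.

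For the search range, Corollary~\ref{cor:bounded} shows that any $\alpha > \bamaxo$ forces $\fnf(\vxa) - \sta > 0$, i.e.\ strict feasibility, while for small $\alpha$ the penalty is too weak to enforce feasibility. Corollary~\ref{cor:staypositive} says the feasibility-signed set of $\alpha$ is upward-closed, and Lemma~\ref{lem:cont} guarantees continuity of the residual in $\alpha$. Therefore binary searching over $[0, \bamaxo]$ is well-posed: maintaining an interval $[\alpha_L, \alpha_H]$ with $\fnf(\vx^{\alpha_L}) - \st^{\alpha_L} \le 0$ and $\fnf(\vx^{\alpha_H}) - \st^{\alpha_H} \ge 0$, halving it each iteration, yields in $O(\log(\bamaxo / \eta))$ steps a value $\alpha^*$ with $|\fnf(\vxa^*) - \sta^*| \le \eta$ for any prescribed $\eta > 0$.

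For correctness, equation~\ref{eq:optlowerbound} gives $\fnga(\vxa^*, \sta^*) \ge \opt$, and since the true oracle objective equals $\fnga(\vxa^*, \sta^*) - \alpha^*(\fnf(\vxa^*) - \sta^*)$, it lies within $\alpha^* \eta$ of $\opt$. Setting $\eta = \Theta(\delta/\bamaxo)$ makes this penalty gap at most $\delta/2$. We pick the inner precision $\epstwo$ in Lemma~\ref{lem:solvealpha} small enough that the resulting perturbation of the objective (controlled by $\|\vd\|_F$, $\lambda$, and the norm bound $\bxmax$ of Lemma~\ref{lem:xbounded}) is at most $\delta/2$. Each binary-search step costs $O(\boo \btt \log(\bxmax/\epstwo))$ by Lemma~\ref{lem:solvealpha}, and multiplying by $O(\log \bamaxo)$ iterations yields the claimed $\timeoracle$ runtime, since setting $\epstwo$ inverse-polynomially in $\delta, \bamaxo, \bxmax$ only adds a $\log$ factor that is already absorbed.

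The main technical obstacle is coupling the two error sources — binary-search residual and inner-solver slack — in such a way that the final pair $(\vxa^*, \sta^*)$ simultaneously (i) has objective within $\delta$ of $\opt$ and (ii) is usable as a separation/optimization oracle for the outer cutting-plane framework of Theorem~\ref{thm:cutting_plane_use}. A small subtlety is that $(\vxa^*, \sta^*)$ need not be exactly feasible: we may return the pair produced at the upper end of the bracket (where $\fnf \ge \st$) to enforce strict feasibility, or tolerate a controlled infeasibility of magnitude $\eta$, depending on what the cutting-plane machinery ingests. In either case the bookkeeping on $\eta$ and $\epstwo$ is routine once the bounds on $\bxmax$ and $\bamaxo$ from Lemma~\ref{lem:xbounded} and Corollary~\ref{cor:bounded} are used to translate feasibility slack into objective slack.
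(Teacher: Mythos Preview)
Your proposal is correct and follows essentially the same route as the paper: binary search over the penalty parameter $\alpha\in[0,\bamaxo]$ guided by the sign of the residual $\fnf(\vxa)-\sta$, with Lemma~\ref{lem:solvealpha} as the inner solver, Corollaries~\ref{cor:bounded} and~\ref{cor:staypositive} plus Lemma~\ref{lem:cont} to justify the search, and Lemma~\ref{lem:xbounded} together with \eqref{eq:optlowerbound} to convert residual and inner-solver slack into a $\delta$ objective error. The only cosmetic difference is that the paper targets the residual in a small strictly positive window $(\epsilon_1,2\epsilon_1)$ with $\epsilon_1=\delta/(4\alpha)$---guaranteeing feasibility directly---whereas you target $|\text{residual}|\le\eta$ and then fall back to the upper bracket endpoint for feasibility; both framings are equivalent here.
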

\begin{proof}
First we show that solving optimization problem \ref{eq:fixka} for $\astar$ for which the solution pair $(\vxstar,\ststar)$ satisfies $\epsilon_{1}<f(\vxstar)-\ststar < 2\epsilon_{1}$ for $\epsilon_{1}=\frac{\delta}{4\alpha}$ solves our main problem \ref{eq:oracle}. Observe that the solution pair $(\vxstar,\ststar)$ satisfies our constraint and also our objective value for problem \ref{eq:fixka} at $(\vxstar,\ststar)$ is greater than $\opt-\frac{\delta}{2}$ as shown below:\\
$$\vd \cdot \vxstar+\sd\ststar-\lambda \left( \|\vxstar\|_{F}^{2} + (\ststar)^2\right)=\fng^{(\alpha^{*})}(\vxstar,\ststar)-2\epsilon_{1}\alpha \geq \opt -\frac{\delta}{2}$$
The first inequality follows because $\epsilon_{1}<f(\vxstar)-\ststar < 2\epsilon_{1}$ and the later one follows from Equation \ref{eq:optlowerbound}.
By similar reasoning we are also done if at $\alpha=0$ the optimal solution pair $(\vxa,\sta)$ (has closed form solution) satisfies the constraint $\fnf(\vxa)>\sta$ and it is interesting if this constraint is not satisfied at $\alpha=0$. In such a case existence of $\astar$ such that $\epsilon_{1}<f(\vxstar)-\ststar < 2\epsilon_{1}$ follows from continuity (Lemma \ref{lem:cont}) and boundedness of $\alpha$ (Corollary \ref{cor:bounded}) for which the constraint $\epsilon_{1}<f(\vxstar)-\ststar < 2\epsilon_{1}$ is satisfied. Corollary \ref{cor:staypositive}, and \ref{lem:inc} suggests that we can find an $\alpha$ by binary search over the interval $(0,\bamaxo]$ such that $\epsilon_{1}<f(\vxstar)-\ststar < 2\epsilon_{1}$ and Lemma \ref{lem:solvealpha} finds a solution $\vxe$ such that $\|\vxe-\vxa\|_1 \leq \epsilon_{2}$.
Choose $\epsilon_{2} < \frac{\epssec}{poly(\bxmax,\bamax)n^{10}} $. \\
$\bullet$ If $(\vxa \onevec)_{i} \leq \frac{\epssec}{n^5}$ then so is $(\vxe \onevec)_{i} \leq \frac{2\epssec}{n^5}$ and value of $| \vxa_{i,j} \log (\vxa \onevec)_{i}-\vxe_{i,j} \log (\vxe \onevec)_{i}| \leq | \vxa_{i,j} \log (\vxa \onevec)_{i}|+|\vxe_{i,j} \log (\vxe \onevec)_{i}| \leq O(\frac{|\epssec \log \epssec|}{n^5})$.\\
$\bullet$ Else $(\vxa \onevec)_{i} > \frac{\epssec}{n^5}$ then so is $(\vxe \onevec)_{i} > \frac{\epssec}{2n^5}$ and value of $| \vxa_{i,j} \log (\vxa \onevec)_{i}-\vxe_{i,j} \log (\vxe \onevec)_{i}| \leq | \vxa_{i,j} \log (\frac{(\vxa \onevec)_{i}}{(\vxe \onevec)_{i}})|+|\epsilon_{2} \log (\frac{(\vxa \onevec)_{i}}{(\vxe \onevec)_{i}})| \leq | \vxa_{i,j} \log (1\pm\frac{\epsilon_{2}}{(\vxe \onevec)_{i}})|+|\epsilon_{2} \log (1\pm\frac{\epsilon_{2}}{(\vxe \onevec)_{i}})| \leq |\vxa_{i,j}\frac{\epsilon_{2}}{(\vxe \onevec)_{i}}|+|\epsilon_{2}\frac{\epsilon_{2}}{(\vxe \onevec)_{i}}| \leq O(\epsilon_{2})$.

We can do similar analysis for other terms in $\fnf(\vx)$ and the boundedness of $|\fnf(\vxa)-\fnf(\vxe)|$ follows because:
 \begin{align*}
|\fnf(\vxa)-\fnf(\vxe)| & \leq \sum_{i}\Big | \sum_{j}\vxa_{i,j} \log (\vxa \onevec)_{i} - \sum_{j}\vxe_{i,j} \log (\vxe \onevec)_{i} \Big| 
+\sum_{i}\sum_{j}|\vC_{i,j}||\vxa_{i,j}-\vxe_{i,j}| \\
&+\sum_{i}\sum_{j}|\vxa_{i,j} \log \vxa_{i,j}-\vxe_{i,j} \log \vxe_{i,j}| -\frac{\reg}{n^3}(\|\vxa\|_{F}^{2}-\|\vxe\|_{F}^{2})\\
& \leq \sum_{i} \left(O(\epsilon_{2})+O(\frac{|\epssec \log \epssec|}{n^5})\right)+O(n\log n) \epsilon_{2}+\sum_{i} \sum_{j}\left(O(\epsilon_{2})+O(\frac{|\epssec \log \epssec|}{n^5})\right)\\
&+\frac{\reg}{n^3}\sum_{i}\sum_{j}(\epsilon_{2}^2+2\epsilon_{2}\vxa_{ij})\\
&\leq o(\epssec)
\end{align*}
Recall $\ste=\sta$ and combined with inequality above and $\epsilon_{1}<f(\vxstar)-\ststar < 2\epsilon_{1}$ implies: 
$$\fnf(\vxe)>\ste$$
Now all that remains is to bound the objective value of optimization problem \ref{eq:oracle} $\vd \cdot \vxe+\sd \ste-\lambda \left( \|\vxe\|_{F}^{2} + (\ste)^2 \right)$.
\begin{align*}
\vd \cdot \vxe+\sd \ste-\lambda \left( \|\vxe\|_{F}^{2} + (\ste)^2 \right) &=\opt-\frac{\delta}{2}+ \vd \cdot (\vxe-\vxa)+\sd (\ste-\sta)\\
&-\lambda \left( \|\vxe\|_{F}^{2}-\|\vxa\|_{F}^{2} + (\ste)^2-(\sta)^2 \right)\\
&\geq \opt-\frac{\delta}{2} -\|\vd\|_{F}^{2}\epsilon_{2}-\lambda(|\|\vxe\|_{F}^{2}-\|\vxa\|_{F}^{2}|)\\
& \geq \opt-\frac{\delta}{2} -\|\vd\|_{F}^{2}\epsilon_{2} - \sum_{i}\sum_{j}(\epsilon_{2}^2+2\epsilon_{2}\vxa_{ij})\\
&\geq \opt-\frac{\delta}{2}-o(\epsilon_{1}) \\
& \geq \opt-\delta
\end{align*}
The whole procedure can be implemented in time $O(\boo\cdot \btt \log(\frac{\bxmax}{\epsilon_{2}})\log(\bamaxo))$
\end{proof}

Now we are in good shape to solve our main optimization problem \ref{convoptapp}. First we write our optimization problem in vector form:
\begin{equation}\label{convoptapp}
\argmax_{(x,\st) \in \conset}\st \text{ subject to } \ma x=\vb
\end{equation}
where the convex set $\conset \defeq \{(x,\st) \in (\R^{\boo \cdot \bttpo},\R) ~|~ \fnf(x) \geq \st \text{ and } \st \geq -\lt \}$  and our matrix $\ma \in \R^{\bttpo \times \boo \cdot \bttpo} $\footnote{Our matrix $\ma$ is a sparse matrix and matrix vector product with it can be computed in time $O(\boo \cdot \btt)$} and with vector $\vb \in \R^{\btt+1}$ represent the linear constraints in the set $\sK^{f}_{\phid}$.

\[
   \ma=
  \left[ {\begin{matrix}
 &  1 \dots 1  &  0\dots 0 & 0 \dots 0 & 0 \dots 0  \\
  & 0 \dots 0 & 1 \dots 1 & 0 \dots 0 & 0 \dots 0 \\
& \vdots & \ddots &\ddots & &\\
& 0 \dots 0 & 0 \dots 0 & 1 \dots 1 & 0 \dots 0\\
& \pp_1,\dots \pp_n & \dots & \pp_1,\dots \pp_n & \pp_1,\dots \pp_n\\
  \end{matrix} } \right]
\text{ and }
  \vb=\left[ {\begin{matrix}
  \Fd_{1}\\
  \Fd_{2}\\
  \vdots\\
  \Fd_{{\btt}}\\
  1
  \end{matrix}} \right]
\]
Formulation above is in the form of a general optimization problem $(11.14 )$ in \cite{LSW15}. For convenience we redefine the optimization problem $(11.14)$ from \cite{LSW15}:
\begin{equation}\label{eq:lsopt}
\max_{\vecx \in \conset \text{ and } \ma\vecx=\vecb} \vecc^T\vecx
\end{equation}
where $\setK$ is a convex set. To invoke the algorithm to solve this general optimization problem algorithm in \cite{LSW15} requires to implement a $\del$-2nd-order-optimization oracle which is define below:
\begin{defn}
Given a convex set $\conset$ and $\del > 0$. A $\del$-2nd-order-optimization oracle for $\conset$ is a function on $\R^{\dimn}$ such that for any input $\vecc \in \R^{\dimn}$ and $\lambda > 0$, it outputs $\vecy$ such that
$$\max_{\vecx \in \conset} \left(\vecc^T\vecx -\lambda \|\vecx\|^2\right) \leq \del+\vecc^T\vecy -\lambda \|\vecy\|^2$$
We denote by $\timeora$ the time complexity of this oracle
\end{defn}

Our simple optimization problem \ref{eq:oracle} is exactly the $\del$-2nd-order-optimization oracle for our main optimization problem \ref{convoptapp}. Consequently, all the remains to solve optimization problem \ref{convoptapp} is to bound the eigenvalues of $\ma \ma^\top$ and put together the results of this section to obtain our desired running time. We do this in Lemma~\ref{lem:eigenlower} and Theorem~\ref{oraclethm} respectively.

\begin{lemma}\label{lem:eigenlower}
The eigenvalues of matrix $\ma \ma^\top$ are either $\boo$ or of the form
\[
(\boo  + \bttpo \|\pvec\|_2^2)  \frac{1 \pm \sqrt{1 -\frac{4\boo \bttpo \|\pvec\|_2^2 - 4 \btt \|\pvec\|_1^2}{(\boo  + \bttpo \|\pvec\|_2^2)^2}}}{2}
\]
and therefore the smallest eigenvalue of $\ma \ma^\top$ is at least
\[
\frac{2\boo \bttpo \|\pvec\|_2^2 - 2 \btt \|\pvec\|_1^2}{\boo  + \bttpo \|\pvec\|_2^2}=\Omega\left( \boo \right)
~.
\] 
\end{lemma}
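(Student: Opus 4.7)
The plan is to compute $\ma \ma^\top$ explicitly and then block-diagonalize it by exploiting its symmetry. The matrix $\ma \in \R^{\bttpo \times \boo \bttpo}$ is structured as follows: for $j \in \otbtt$, row $j$ is the indicator of the $j$-th group of $\boo$ variables (i.e. $\boo$ ones on the block corresponding to $\{\bk_{ij}\}_{i=1}^\boo$ and zeros elsewhere), and row $\btt+1$ consists of $\bttpo$ copies of $(\pp_1,\dots,\pp_\boo)$. Direct inner-product computation then gives
\[
\ma\ma^\top \;=\; \begin{pmatrix} \boo \cdot I_\btt & \|\pvec\|_1 \cdot \onevec \\ \|\pvec\|_1 \cdot \onevec^\top & \bttpo \|\pvec\|_2^2 \end{pmatrix}.
\]

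Next I would diagonalize using the obvious symmetry. Any vector $v \in \R^{\bttpo}$ with $v_{\btt+1}=0$ and $\sum_{j=1}^\btt v_j = 0$ is an eigenvector with eigenvalue $\boo$, since the indicator rows are pairwise orthogonal and the last column contributes nothing. This gives the eigenvalue $\boo$ with multiplicity $\btt-1$. The orthogonal complement of this subspace is two-dimensional, spanned by $e_1 \defeq (1,\dots,1,0)/\sqrt{\btt}$ and $e_2 \defeq (0,\dots,0,1)$. Restricting $\ma\ma^\top$ to this subspace yields the $2\times 2$ matrix
\[
\begin{pmatrix} \boo & \|\pvec\|_1 \sqrt{\btt} \\ \|\pvec\|_1 \sqrt{\btt} & \bttpo \|\pvec\|_2^2 \end{pmatrix},
\]
whose characteristic polynomial $\mu^2 - (\boo + \bttpo \|\pvec\|_2^2)\mu + (\boo\bttpo\|\pvec\|_2^2 - \btt\|\pvec\|_1^2) = 0$ yields precisely the two eigenvalues claimed in the lemma statement.

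For the second part (lower bound on the smallest eigenvalue), I would apply the elementary inequality $1 - \sqrt{1-x} \geq x/2$ for $x \in [0,1]$ to the ``$-$'' root. After cancellation this gives
\[
\mu_{\min} \;\geq\; \frac{\boo\bttpo\|\pvec\|_2^2 - \btt\|\pvec\|_1^2}{\boo + \bttpo\|\pvec\|_2^2},
\]
which already recovers the stated expression up to a constant factor (a tighter use of $1-\sqrt{1-x} = x/(1+\sqrt{1-x})$ gives the exact form with the factor of $2$). To simplify the numerator I would invoke Cauchy-Schwarz, $\|\pvec\|_1^2 \leq \boo \|\pvec\|_2^2$, yielding $\boo\bttpo\|\pvec\|_2^2 - \btt\|\pvec\|_1^2 \geq \boo\|\pvec\|_2^2(\bttpo - \btt) = \boo\|\pvec\|_2^2$.

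The final and most delicate step is to convert $\frac{\boo\|\pvec\|_2^2}{\boo + \bttpo\|\pvec\|_2^2}$ into an $\Omega(\boo)$ bound, where I expect to use the explicit geometric form $\pvec_i = (1+\epso)^{1-i}$: since $\pvec_1 = 1$ we have $\|\pvec\|_2^2 \geq 1$, and tight estimates on the geometric sum in the relevant parameter regime should allow the denominator $\boo + \bttpo\|\pvec\|_2^2$ to be absorbed. This is the main obstacle; everything prior is straightforward linear algebra and a symmetry decomposition, so the entire work of the proof really lies in handling this last ratio carefully.
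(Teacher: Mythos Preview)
Your linear-algebra portion is correct and essentially matches the paper: you compute the same block form for $\ma\ma^\top$, isolate the eigenvalue $\boo$ on the $(\btt-1)$-dimensional subspace orthogonal to $\onevec_{\btt}$, reduce to the same $2\times 2$ problem, and apply the same inequality $1-\sqrt{1-x}\geq x/2$ to extract the lower bound. The paper does this by writing out the eigenvector equations directly rather than restricting to the span of $e_1,e_2$, but the content is identical.

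The genuine gap is the Cauchy--Schwarz step. After you pass from
\[
\frac{2\boo\bttpo\|\pvec\|_2^2 - 2\btt\|\pvec\|_1^2}{\boo+\bttpo\|\pvec\|_2^2}
\qquad\text{to}\qquad
\frac{2\boo\|\pvec\|_2^2}{\boo+\bttpo\|\pvec\|_2^2},
\]
the resulting quantity is \emph{not} $\Omega(\boo)$, no matter how carefully you use the geometric form of $\pvec$ afterwards. Indeed this ratio is at most $\min\{\|\pvec\|_2^2,\ \boo/\bttpo\}$; with $\|\pvec\|_2^2=\Theta(1/\epso)$, $\boo=\Theta(\log n/\epso)$, $\btt=\Theta(\log n/\epst)$ and $\epst<1$, it evaluates to $\Theta(\epst/\epso)$, which is far smaller than $\boo$. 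The Cauchy--Schwarz bound $\|\pvec\|_1^2\leq \boo\|\pvec\|_2^2$ is essentially tight only for flat vectors, and for a geometric $\pvec$ it throws away a factor of $\log n$ in the numerator while leaving the large $\bttpo\|\pvec\|_2^2$ term in the denominator untouched.

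The paper's fix is simply to skip Cauchy--Schwarz and plug the explicit geometric-sum estimates $\|\pvec\|_1=\Theta(1/\epso)$ and $\|\pvec\|_2^2=\Theta(1/\epso)$ directly into the unsimplified ratio. Then the numerator is $\Theta(\boo\btt/\epso)$, the denominator is $\Theta(\btt/\epso)$ (since $\bttpo\|\pvec\|_2^2$ dominates $\boo$), and the ratio is $\Theta(\boo)$ as claimed. So the ``delicate step'' you anticipated is not in handling the post-Cauchy--Schwarz ratio but in recognizing that Cauchy--Schwarz must be avoided altogether.
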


\begin{proof}
Direct calculation shows that if $\vec{1}_{\btt} \in \R^{\btt}$ is $\btt$-dimensional all ones vector, $I_{\btt} \in \R^{\btt \times \btt}$ is the $\btt$-dimensional identity matrix and $\pvec \in \R^{\boo}$ with $\pvec_i = \frac{1}{2n^2}(1+\epso)^{i}$ then for all $x \in \R^{\btt}$ and $\alpha \in \R$ we have
\[
\ma \ma^\top 
\left(
\begin{matrix}
x \\
\alpha
  \end{matrix} 
\right)
=
\left[ 
\begin{matrix}
 &  \boo I_{\btt}  & \|\pvec\|_1 \vec{1}_{\btt}  \\
  & \|\pvec\|_1 \vec{1}_{\btt}^\top  & \bttpo \|\pvec\|_2^2 
  \end{matrix} \right]
\left(
\begin{matrix}
x \\
\alpha
  \end{matrix} 
\right)
=
\left(
\begin{matrix}
\boo x + \alpha \|\pvec\|_1 \vec{1}_{\btt} \\
\|\pvec\|_1 \vec{1}_{\btt}^T x + \alpha \bttpo \|\pvec\|_2^2
  \end{matrix} 
\right)
~.
\]
Consequently $v = (x, \alpha)^T$ is an eigenvector of $\ma \ma^\top $ with eigenvalue $\lambda$ if and only if
\[
\boo x + \alpha \|\pvec\|_1 \vec{1}_{\btt} = \lambda x
\text{ and }
\|\pvec\|_1 \vec{1}_{\btt}^T x + \alpha \bttpo \|\pvec\|_2^2 = \lambda \alpha
\]
Now if $x \perp \vec{1}_{\btt}$ then we see the $v$ is an eigenvector if and only if $\alpha = 0$ in which case the eigenvalue is $\boo$. On the other hand if $x = \vec{1}_{\btt}$ then we see $v$ is an eigenvector of eigenvalue $\lambda$ if and only if 
\[
\lambda = \boo + \alpha \|\pvec\|_1
\text{ and }
\btt \|\pvec\|_1 + \alpha \bttpo \|\pvec\|_2^2 = \lambda \alpha ~.
\]
When this happens we have $\btt \|\pvec\|_1 + \alpha \bttpo \|\pvec\|_2^2 = \boo \alpha + \alpha^2 \|\pvec\|_1$ and solving for $\alpha$ yields that
\[
\alpha = \frac{\bttpo \|\pvec\|_2^2 - \boo \pm \sqrt{(\boo  - \bttpo \|\pvec\|_2^2)^2 + 4 \btt \|\pvec\|_1^2}}{2 \|\pvec\|_1}
\]
Substituting this into $\lambda = \boo + \alpha \|\pvec\|_1$ yields the eigenvalues. 
\begin{align*}
\lambda&=\frac{\bttpo \|\pvec\|_2^2 + \boo \pm \sqrt{(\boo  - \bttpo \|\pvec\|_2^2)^2 + 4 \btt \|\pvec\|_1^2}}{2}\\
&=\frac{\bttpo \|\pvec\|_2^2 + \boo \pm \sqrt{(\boo  + \bttpo \|\pvec\|_2^2)^2 -[4\boo \bttpo \|\pvec\|_2^2 - 4 \btt \|\pvec\|_1^2]}}{2}\\
&=(\boo  + \bttpo \|\pvec\|_2^2)  \frac{1 \pm \sqrt{1 -\frac{4\boo \bttpo \|\pvec\|_2^2 - 4 \btt \|\pvec\|_1^2}{(\boo  + \bttpo \|\pvec\|_2^2)^2}}}{2}
\end{align*}
The lower bound follows from the fact that for  $\sqrt{1 - a} \leq \sqrt{(1 - a / 2)^2} = 1 - a / 2$ when $a>0$ and therefore 
\[
\sqrt{1 -\frac{4\boo \bttpo \|\pvec\|_2^2 - 4 \btt \|\pvec\|_1^2}{(\boo  + \bttpo \|\pvec\|_2^2)^2}}
\leq  \left[1 - \frac{2\boo \bttpo \|\pvec\|_2^2 - 2 \btt \|\pvec\|_1^2}{(\boo  + \bttpo \|\pvec\|_2^2)^2}\right]
\]
The smallest eigenvalue is at least $\frac{2\boo \bttpo \|\pvec\|_2^2 - 2 \btt \|\pvec\|_1^2}{\boo  + \bttpo \|\pvec\|_2^2}$. Recall $\boo=\theta(\frac{\log n}{\epso})$ and $\boo$ is such that $\frac{1}{n^2}(1+\epso)^{\boo}\geq 1$ and  $\frac{1}{n^2}(1+\epso)^{\boo-1}<1$. Lemma statement follows because  $\|\pvec\|_1=\sum_{i=0}^{\boo}\frac{1}{2n^2}(1+\epso)^i=\theta(\frac{1}{\epso})$, $\|\pvec\|_2^2=\sum_{i=0}^{\boo}\frac{1}{4n^4}(1+\epso)^{2i}=\theta(\frac{1}{\epso})$.

\end{proof}

Below is the theorem we invoke to solve the optimization problem.

\begin{thm}[Theorem 56 from \cite{LSW15_arxiv}]\label{oraclethm}
Assume that $\max_{\vecx\in \conset}\normFull{\vecx}_{2}<M$, $\norm{\vecb}_{2}<M$,
$\norm{\vecc}_{2}<M$, $\norm{\ma}_{2}<M$ and $\lambda_{\min}(\ma)>1/M$.
Assume that $\conset \cap\{\ma\vecx=\vecb\}\neq \emptyset$ and we have $\epsilon$-2nd-order-optimization
oracle for every $\epsilon>0$. For $0<\del<1$, we can find
$\vecz\in \conset$ such that 
\[
\max_{\vecx\in \conset\text{ and }\ma\vecx=\vecb}\vecc^T\vecx\leq\delta+\vecc^T\vecz 
\]
and $\norm{\ma\vecz-\vecb}_{2}\leq\del$. This algorithm takes time
\[
O\left(r\timeora \log\left(\frac{nM}{\del}\right)+r^{3}\log^{O(1)}\left(\frac{nM}{\delta}\right)\right)
\]
where $r$ is the number of rows in $\ma$, $\eta=\left(\frac{\del}{nM}\right)^{\Theta(1)}$
and $\lambda=\left(\frac{\del}{nM}\right)^{\Theta(1)}$.
\end{thm}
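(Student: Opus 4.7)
The plan is to reduce the linearly constrained maximization to an unconstrained minimization over the dual multipliers of the equality constraint, and then drive the dual optimization with a cutting plane method that queries the $\epsilon$-2nd-order-optimization oracle at each step.

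First I would form the Lagrangian dual function
$$L(\vecy) \defeq \vecy^T \vecb + \max_{\vecx \in \conset}\left[(\vecc - \ma^T \vecy)^T \vecx\right].$$
Because $\conset \cap \{\ma\vecx = \vecb\} \neq \emptyset$ and $\conset$ is convex, strong duality gives $\min_{\vecy} L(\vecy) = \max_{\vecx \in \conset,\, \ma\vecx = \vecb} \vecc^T\vecx$. The oracle does not evaluate $L$, but for any $\lambda > 0$ it approximately evaluates a regularized variant $L_\lambda(\vecy) \defeq \vecy^T \vecb + \max_{\vecx \in \conset}[(\vecc - \ma^T \vecy)^T \vecx - \lambda \|\vecx\|^2]$; choosing $\lambda$ as an inverse polynomial in $nM/\del$ makes $|L_\lambda - L| \leq \lambda M^2$ negligible. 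Moreover, if $\tilde{\vecx}(\vecy)$ is the approximate maximizer returned by the oracle, then $\vecb - \ma \tilde{\vecx}(\vecy)$ is an approximate subgradient of $L_\lambda$ at $\vecy$, so a single oracle call yields both a value and a subgradient query for the dual.

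Next I would run the Lee--Sidford--Wong cutting plane method on $L_\lambda$ over a box in $\R^{r}$ of radius $\poly(M)$ (which suffices because $\|\vecc\|_2, \|\ma\|_2 < M$). Their method converges in $O(r \log(nM/\del))$ separation-oracle calls, and spends $O(r^2 \log^{O(1)}(nM/\del))$ additional work per iteration on its internal leverage-score and hybrid-barrier updates, which yields exactly the stated running time $O(r \cdot \timeora \log(nM/\del) + r^3 \log^{O(1)}(nM/\del))$. From a near-optimal dual $\vecy^\star$ we recover the primal by one final oracle call at $\vecy^\star$, producing the output $\vecz$.

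The hard part will be the error accounting. We must choose the oracle precision $\eta$, the smoothing parameter $\lambda$, and the cutting plane tolerance jointly so that three error sources---the gap $L - L_\lambda$, the inexactness of each oracle call, and the cutting plane method's own suboptimality---add up to at most $\del$ in both the primal objective $\vecc^T \vecz$ and in the feasibility residual $\|\ma \vecz - \vecb\|_2$. It is exactly here that the hypotheses $\|\vecb\|_2, \|\vecc\|_2, \|\ma\|_2 < M$ and $\lambda_{\min}(\ma) > 1/M$ enter: the norm bounds limit how fast errors can blow up in primal space, while the lower bound on $\lambda_{\min}(\ma)$ is what converts a $\del$-approximately optimal dual into a primal with residual $\|\ma \vecz - \vecb\|_2 = O(\del)$ without super-polynomial precision blowup, so that choosing $\eta, \lambda = (\del/(nM))^{\Theta(1)}$ suffices.
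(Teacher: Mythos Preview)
This theorem is not proved in the present paper at all: it is simply quoted as Theorem~56 from \cite{LSW15_arxiv} and used as a black box. There is therefore nothing in the paper to compare your proposal against.

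Your sketch is a plausible high-level outline of how such a result is typically obtained---dualize the linear constraint, use the regularized oracle to furnish approximate values and subgradients of the dual function, and drive the dual minimization with a cutting plane method whose per-iteration cost and iteration count give the claimed running time. Whether the actual proof in \cite{LSW15_arxiv} follows exactly this route (as opposed to, say, working directly with a penalized primal or using a different reduction to their main cutting-plane engine) is something you would have to check in that paper, not this one.
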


\begin{thm}
Optimization problem \ref{convoptapp} can be solved in time $\runningtime$
\end{thm}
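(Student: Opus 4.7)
The plan is to invoke Theorem~\ref{oraclethm} (Theorem 56 of \cite{LSW15_arxiv}) with the $\delta$-2nd-order-optimization oracle provided by Lemma~\ref{alg1}, and then to simplify the resulting running time using the structural facts already established about $\ma$, $\vb$, $\vecc$ and $\conset$. Concretely, I would first cast the problem exactly in the form of \eqref{eq:lsopt} using the explicit matrix $\ma$ and vector $\vb$ given just above, with $\vecx \leftarrow (x,\st)$, $\vecc$ the indicator vector of the $\st$-coordinate, and $\conset$ the convex set $\{\fnf(x)\ge \st \text{ and } \st \ge -n^2\}$. The number of equality constraints is $r=\btt+1=O(\btt)$.

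Next, I would verify the hypotheses of Theorem~\ref{oraclethm}. The bound $\max_{\vecx\in\conset}\|\vecx\|_2 \le M$ for some $M=\poly(n,\boo,\btt)$ follows from Lemma~\ref{lem:xbounded} together with the explicit lower cutoff $\st\ge -n^2$; the bounds $\|\vb\|_2,\|\vecc\|_2,\|\ma\|_2\le M$ follow from inspection of the matrix (entries are either $0$, $1$, or values $\pp_i\in(0,1]$, and entries of $\vb$ are at most $n$). The lower bound $\lambda_{\min}(\ma)\ge 1/M$ is exactly what is established in Lemma~\ref{lem:eigenlower}, which shows that the smallest singular value of $\ma$ is $\Omega(\sqrt{\boo})$, hence at least $1/\poly(n,\boo,\btt)$. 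Feasibility $\conset\cap\{\ma\vecx=\vb\}\neq\emptyset$ holds because any discrete profile in $\bK_{\phid}$ gives a feasible point when paired with $\st=\fnf(\vx)$. Finally, the oracle requirement is supplied by Lemma~\ref{alg1}, which implements an $\epsilon$-2nd-order-optimization oracle for $\conset$ in time $\timeora = O(\boo\btt\log^{O(1)}(\boo\btt/\epsilon))$.

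Applying Theorem~\ref{oraclethm} with a target accuracy $\delta = 1/\poly(n,\boo,\btt)$ (chosen so the $\otilde$ factors remain logarithmic and the output satisfies $\|\ma\vecz-\vb\|_2\le \delta$, from which a feasible solution in $\bK^{f}_{\phid}$ can be recovered with only negligible loss in the objective $\st$), the total running time becomes
\[
O\!\left(r\cdot \timeora \log\!\tfrac{nM}{\delta}+r^3\log^{O(1)}\!\tfrac{nM}{\delta}\right)
=O\!\left(\btt\cdot \boo\btt\log^{O(1)}(\boo\btt)+\btt^3\log^{O(1)}(\boo\btt)\right),
\]
which matches the claimed $\runningtime$.

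The main obstacle I anticipate is the bookkeeping at the boundary between the theorem's guarantee and our intended use: Theorem~\ref{oraclethm} returns $\vecz$ that is only approximately feasible ($\|\ma\vecz-\vb\|_2\le\delta$), whereas downstream we need $\vx \in \bK^{f}_{\phid}$ exactly in order to feed it to the rounding algorithm of Section~\ref{subsec:algmain}. I would handle this by taking $\delta$ polynomially small (this only introduces extra $\log(1/\delta)=O(\log(n\boo\btt))$ factors absorbed into $\log^{O(1)}(\boo\btt)$) and then projecting the output onto the affine subspace $\{\ma\vx=\vb\}\cap \R_{\ge 0}^{\boo\times\bttpo}$, which perturbs $\fnf$ by $o(1)$ given the $\poly$-bounded quantities established above. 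The second, more cosmetic, point is that Theorem~\ref{oraclethm} is stated for linear objectives over a convex set, while our formulation keeps $\fnf(x)\ge \st$ as a convex inequality inside $\conset$; this is exactly why we folded the nonlinear objective into the set description, so no additional work is required beyond the oracle of Lemma~\ref{alg1}.
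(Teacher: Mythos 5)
Your proposal is correct and takes essentially the same route as the paper: the paper's proof is a one-line combination of Theorem~\ref{oraclethm}, Lemma~\ref{lem:solvealpha}, and Lemma~\ref{alg1}, together with the observation that the relevant parameters are $\poly(\boo,\btt)$ so only logarithmic factors are incurred, which is precisely the argument you spell out. You additionally make explicit the use of Lemma~\ref{lem:xbounded} and Lemma~\ref{lem:eigenlower} to verify the hypotheses of Theorem~\ref{oraclethm}, and you flag (and patch, by projection) the approximate-feasibility issue with $\|\ma\vecz-\vb\|_2\le\delta$; the paper elides both of these minor points, so your writeup is a more careful account of the same proof rather than a different approach.
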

\begin{proof}
The proof follows by combining Lemmas \ref{oraclethm}, \ref{lem:solvealpha}, \ref{alg1} and noting that all the parameters in the running time $\|d\|_2,|c|$, $1/\lambda$ are all bounded by $O(poly(\boo,\btt))$ and we only pay logarithm in these terms.
\end{proof}

\section{Proofs for multidimensional PML}\label{app:multipmlstart}
Here we show how our techniques built throughout this paper apply to a general setting. In particular, we provide an efficient algorithm for computing approximate PML in higher dimensions when the dimension is constant. The proofs and techniques are analogous to one dimensional PML but there are few places such as, minimum probability lemma proof, singular value lower bound for the constraint matrix (for optimization) where we require general proofs.

\renewcommand{\simplexd}{\Delta^{\bX,\md}}
\newcommand{\psimplexd}{\Delta_{\mathrm{pseudo}}^{\bX,\md}}
\renewcommand{\vsimplexd}{\Delta_{\mathrm{vector}}^{\bX,\md}}
\newcommand{\dsimplexd}{\Delta_{\mathrm{discrete}}^{\bX,\md}}
\renewcommand{\bpd}{\textbf{p}}

\renewcommand{\disclossd}{\sum_{k=1}^{\md}\epsok \nkk+ \sum_{k=1}^{\md}\epstk \nkk}
\newcommand{\problossd}{\sum_{k=1}^{\md}\epsok \nkk}
\newcommand{\multlossd}{\left(\sum_{k=1}^{\md}\epstk \nkk \right)\log \left( \sum_{k=1}^{\md} \epstk \nkk \right)}
\newcommand{\otmultlossd}{\sum_{k=1}^{\md}\epstk \nkk )}
\renewcommand{\grouplossepsd}{\prod_{k=1}^{\md}\frac{\log^3 \nkk}{\epsok \epstk}}
\newcommand{\grouplossepsdtilde}{\prod_{k=1}^{\md}\frac{1}{\epsok \epstk}}
\newcommand{\sdpmllossd}{\prod_{k=1}^{\md}\frac{\log^3 \nkk}{\epsok \epstk}}
\newcommand{\otsdpmllossd}{\prod_{k=1}^{\md}\frac{1}{\epsok \epstk}}

\renewcommand{\bp}{\bpd}
\renewcommand{\bq}{\bqd}
\newcommand{\minploss}{O(\md)}

\renewcommand{\vvec}{\textbf{v}}
\renewcommand{\vveco}{\vvec{(1)}}
\renewcommand{\vvecd}{\vvec{(\md)}}
\renewcommand{\vveck}{\vvec{(k)}}
\renewcommand{\vveci}{\vvec{(i)}}
\renewcommand{\vvecj}{\vvec{(j)}}

\newcommand{\bpdo}{\bpd{(1)}}
\newcommand{\bpdd}{\bpd{(\md)}}
\newcommand{\bpdk}[1]{\bpd_{#1}(k)}
\newcommand{\bpdi}{\bpd{(i)}}
\newcommand{\bpdj}{\bpd{(j)}}
\renewcommand{\bqd}{\textbf{q}}
\newcommand{\bqdo}{\bqd{(1)}}
\newcommand{\bqdd}{\bqd{(\md)}}
\newcommand{\bqdi}{\bqd{(i)}}
\newcommand{\bqdk}{\bqd{(k)}}
\newcommand{\bqdj}{\bqd{(j)}}

\renewcommand{\n}{\textbf{n}}
\renewcommand{\nd}{\textbf{n}'}
\newcommand{\noo}{\n{(1)}}
\newcommand{\nii}{\n{(i)}}
\newcommand{\njj}{\n{(j)}}
\newcommand{\ndd}{\n{(\md)}}
\renewcommand{\pvecext}{\pvec_{\mathrm{ext}}}
\newcommand{\ndoo}{\nd{(1)}}
\newcommand{\ndii}{\nd{(i)}}
\newcommand{\ndjj}{\nd{(j)}}
\newcommand{\nddd}{\nd{(\md)}}
\newcommand{\ndkk}{\nd(k)}

\renewcommand{\yn}{\textbf{y}^{\n}}
\renewcommand{\xn}{\textbf{x}^{\n}}
\renewcommand{\ynk}{\textbf{y}^{\nkk}}
\renewcommand{\yni}{\textbf{y}^{\nii}}
\renewcommand{\ynj}{\textbf{y}^{\njj}}
\renewcommand{\yno}{\textbf{y}^{\noo}}
\renewcommand{\ynt}{\textbf{y}^{\n(2)}}
\renewcommand{\ynd}{\textbf{y}^{\ndd}}
\renewcommand{\bffi}{\textbf{f}_{i}}
\renewcommand{\bffk}{\textbf{f}_{k}}

\newcommand{\phiho}{\lphi(1)}
\newcommand{\phihi}{\lphi(i)}
\newcommand{\phihk}[1]{\lphi_{#1}(k)}
\newcommand{\phihdk}[1]{\phihd_{#1}(k)}
\newcommand{\phihj}{\lphi(j)}
\newcommand{\phihdd}{\lphi(\md)}

\renewcommand{\freq}{\textbf{f}}
\renewcommand{\freqk}{\textbf{f}(k)}
\renewcommand{\freqj}{\textbf{f}_{j}}
\renewcommand{\Freq}{\textbf{f}}
\renewcommand{\Freqn}{\textbf{F}^{\n}}
\renewcommand{\Freqnd}{\textbf{F}^{\nd}}
\renewcommand{\otmd}{[1,\md]}

\renewcommand{\cc}{\textbf{c}}
\newcommand{\op}{\mathrm{op}}

\renewcommand{\epso}{\epsilon}
\newcommand{\epsoi}{\epso(i)}
\newcommand{\epsoj}{\epso(j)}
\renewcommand{\epsok}{\epso(k)}
\newcommand{\epsod}{\epso(\md)}
\newcommand{\epsoo}{\epso(1)}

\renewcommand{\epst}{\gamma}
\newcommand{\epsti}{\epst(i)}
\newcommand{\epstj}{\epst(j)}
\renewcommand{\epstk}{\epst(k)}
\newcommand{\epstd}{\epst(\md)}
\newcommand{\epsto}{\epst(1)}

\renewcommand{\boo}{\textbf{b}}
\newcommand{\booo}{\textbf{b}_{1}}
\newcommand{\bood}{\textbf{b}_{\md}}
\newcommand{\booi}{\textbf{b}_{i}}
\newcommand{\booj}{\textbf{b}_{j}}
\newcommand{\book}{\textbf{b}_{k}}

\renewcommand{\btt}{\textbf{e}}
\newcommand{\btto}{\btt_{1}}
\newcommand{\bttd}{\btt_{\md}}
\newcommand{\btti}{\btt_{i}}
\newcommand{\bttj}{\btt_{j}}
\newcommand{\bttk}{\btt_{k}}

\newcommand{\ik}{i_{k}}
\newcommand{\io}{i_{1}}
\newcommand{\id}{i_{\md}}

\newcommand{\otbook}{[1,\book]}

\subsection{Preliminaries for $\md$-dimensional objects}\label{sec:prelimsd}

\noindent \textbf{$\md$-tuple}: $\cc$ is a $\md$-tuple if $\cc \in \R^{\md}$. For all $k \in \otmd$, we use $\cc(k)$ to denote its $k$'th element.\\

\noindent \textbf{Arithmetic operations on $\md$-tuples}: For any two $\md$-tuples $\cc$, $\cc'$ and an arithmetic operator $\op \in  \{+, \times, -,/\}$, the operation $\cc~ \op ~\cc'$ denotes element wise operation, meaning it outputs another $\md$-tuple equal to $(\cc(1) ~\op~ \cc'(1), \dots , \cc(\md)~\op~\cc'(\md))$. Further for any $\md$-tuple $\cc$ and scalar $s$, the operation $\cc~ \op ~ s$ denotes element wise scalar operation, meaning it outputs another $\md$-tuple equal to $(\cc(1) ~\op~ s, \dots , \cc(\md)~\op~s)$. Just in the case of power operation $\cc^{\cc'}$ we return a scalar value and is equal to:
$$\cc^{\cc'} \defeq \prod_{k=1}^{\md} \cc(k)^{\cc'(k)}$$
Also for a $\md$-tuple $\cc$ and scalar $s$ we define:
$$\cc^{s} \defeq \prod_{k=1}^{\md} \cc(k)^{s} \text{ and }s^{\cc} \defeq \prod_{k=1}^{\md} s^{\cc(k)}$$

\noindent \textbf{Logic operations on $\md$-tuples}: For any two $\md$-tuples $\cc$ and $\cc'$ and a logic operator $\op \in  \{\leq, \geq,=\}$, the operation $\cc ~\op~ \cc'$ is true if and only if $\cc(k) ~\op~ \cc'(k)$ is true for all $k \in \otmd$. Further for any $\md$-tuple $\cc$ and scalar $s$, the logic operation $\cc~ \op ~ s$ is true iff $\cc(k) ~\op~ s$ is true for all $k \in \otmd$.\\

\noindent \textbf{Floor and ceil operations on $\md$-tuples}: For a $\md$-tuple $\cc$ and set $\bS$ of $\md$-tuples we use the notation $\floor{\cc}_{\bS}$ and $\ceil{\cc}_{\bS}$ to denote the following $\md$-tuples:\\
$$ \floor{\cc}_{\bS}\defeq\max_{\cc' \in \bS: \cc' \leq \cc}\cc' \quad \text{ and } \quad \ceil{\cc}_{\bS}\defeq\min_{\cc' \in \bS: \cc' \geq \cc}\cc'$$

We next recall (defined in \Cref{subsec:higherPML}) the setting for higher dimensions.

\noindent \textbf{Setting for higher dimension}: For each $k \in \otmd$, we receive a sequence $\ynk$ that consists of $\nkk$ independent samples drawn from an underlying distribution $\bpd(k)$ supported on same domain $\bX$, further $\ynk$ is independent of other sequences $\textbf{y}^{\n(k')}$ for $k' \in \otmd$ and $k' \neq k$. We call $\yn=(\yno,\dots \ynd)$ a $\md$-sequence and $\n=(\n(1),\dots, \n(\md))$ its $\md$-length. Let $\bX^{\n}$ be the set of all $\md$-sequences of $\md$-length equal to $\n$. We use $\bpd_{x}(k)$ to denote the probability of domain element $x$ in distribution $\bpd(k)$. We also refer $\bpd=(\bpd(1),\dots,\bpd(\md))$ as a $\md$-distribution and let $\simplexd$ be the set of all $\md$-distributions.

For any $\md$-distribution $\bp \in \simplexd$, the probability of a $\md$-sequence $\yn$ is defined as:
$$\bbP(\bpd,\yn) \defeq \prod_{k=1}^{\md}\prod_{x \in \bX}(\bpd_{x}(k))^{\bff(\ynk,x)}$$

Recall for each $k \in \otmd$, $\bff(\ynk,x)$ is the frequency of domain element $x$ in sequence $\ynk$. For any $\md$-sequence $\yn$, we call $\bff(\yn,x)=(\bff(\yno,x),\dots, \bff(\ynd,x))$ the $\md$-frequency of domain element $x$ in $\yn$. Let $\Freqn$ be the set of all $\md$-frequencies generated by different domain elements in all possible $\md$-sequences in $\bX^{\n}$ and we use $\freqj \in \Freqn$ to denote its $j$th element.\\ 

We next define few more $\md$-dimensional objects of interest.

\noindent \textbf{$\md$-vector}: $\vvec=(\vveco,\dots,\vvecd)$ is a $\md$-vector if for each element $k \in \otmd$, $\vveck$ is a vector supported on the same domain $\bX$. We use $\vvec_{x}$ to denote the row corresponding to domain element $x$ and $\vvec(k)$ to denote its $k$'th column. Let $\vsimplexd$ be the set of all $\md$-vectors and note that $\md$-distribution is a $\md$-vector.\\

\noindent \textbf{Norm of $\md$-vectors}: For a $\md$-vector $\vvec$, its norm denoted by $\|\vvec\|$ is a $\md$-tuple equal to $(\|\vveco\|,\dots,\|\vvecd\|)$.\\

\noindent \textbf{$\md$-pseudodistribution}: $\bqd=(\bqdo,\dots,\bqdd)$ is a $\md$-pseudodistribution if for each element $k \in \otmd$, $\bqdk$ is a pseudo-distribution supported on the same domain $\bX$ or equivalently $\|\bqd\|_{1} \leq 1$. Let $\psimplexd$ be the set of all $\md$-pseudodistributions and $\simplexd \subset \psimplexd \subset \vsimplexd$.\\

\noindent \textbf{$\md$-level set}: For a $\md$-distribution $\bpd$ and $\md$-pseudodistribution $\bqd$, we call $\bpd_{x}$ and $\bqd_{x}$ $\md$-level sets corresponding to $x$ respectively.\\




\noindent \textbf{$\md$-Type}: For any $\md$-sequence $\yn$, $\phih=(\Phih(\yno),\dots, \Phih(\ynd))$ represents $\md$-type of $\yn$ and we call $\n$ its $\md$-length. Recall $\phih(k)\defeq\Phih(\ynk)$ is the type of sequence $\ynk$ and we overload notation and let $\phih=\Phih(\yn)$ denote $(\Phih(\yno),\dots, \Phih(\ynd))$. We use $\phih_{x}=\bff(\yn,x)$ to denote the row corresponding to domain element $x$ and $\phih_{x}(k)=\phih(k)_{x}=\bff(\ynk,x)$ all mean the same thing. Let $\Tn$ be the set of all $\md$-types of $\md$-length equal to $\n$.

For a $\md$-distribution $\bp \in \simplexd$, the probability of a $\md$-type  $\phih \in \Tn$ is:
\[
\bbP(\bp,\phih) \defeq \sum_{\{\yn \in \bX^{\n}~|~ \Phih (\yn)=\phih \}}\bbP(\bp,\yn)=\prod_{k=1}^{\md}\binom{\nkk}{\phihk{}}\prod_{x \in \bX}\bpdk{x}^{\phihk{x}}
\text{ recall, }
\binom{\nkk}{\phihk{}}=\frac{\nkk !}{\prod_{x \in \bX}\phihk{x}!}
\footnote{$0!\defeq1$}
\]
We use the following shorthand notation to denote the counting term in the above expression.
 \[
\binom{\n}{\phih}= \prod_{k=1}^{\md}\binom{\nkk}{\phihk{}}
 \]

\noindent \textbf{$\md$-Profile}: For any $\md$-sequence $\yn \in \bX^{\n}$, $\phi=\Phi(\yn)$ is a $\md$-profile if $\phi=(\F_j)_{j=1\dots |\Freqn|}$ and $\F_j=|\{x\in \bX ~|~\Freq(\yn,x)=\freqj \}|$\footnote{The $\md$-profile does not contain $(0,\dots,0)$ $\md$-frequency element because we don't know the number of unseen domain symbols.} is the number of domain elements with $\md$-frequency $\freqj$.
We call $\n$ the $\md$-length of $\F$ and use $\Phi^n$ to denote the set of all $\md$-profiles of $\md$-length equal to $\n$.\\

For any $\md$-distribution $\bp \in \simplexd$, the probability of a $\md$-profile $\phi \in \Phi^n$ is defined as:
\begin{equation}\label{eqpml1d}
\probpml(\bp,\phi)\defeq\sum_{\{\yn \in \bX^{\n}~|~ \Phi (\yn)=\phi \}} \bbP(\bp,\yn)\\
\end{equation}
We can also define the $\md$-profile of a $\md$-type  $\phih$. We overload notation and use $\phi=\Phi(\phih)$ to denote the $\md$-profile associated with $\md$-type  $\phih$ and $\F_j=\F_j(\phih)\defeq|\{x\in \bX ~|~\phih_x = \freqj \}|$. Consider all types $\lphi$ such that $\Phi(\lphi)=\phi$ and observe that they all have the same $\binom{\n}{\lphi}$ value. We use notation $\cphi$ to represent this quantity:
\begin{equation}\label{cphi}
\cphi \defeq \binom{\n}{\phih}= \prod_{k=1}^{\md}\binom{\nkk}{\phihk{}}
\end{equation}
\begin{equation}\label{eqlabeledd}
\probpml(\bp,\phi)=\sum_{\{\yn \in \bX^{\n}| \Phi (\yn)=\phi \}} \bbP(\bp,\yn) 
=\sum_{\{\phih\in \Tn | \Phi(\phih)=\phi\}}\bbP(\bp,\phih) =\cphi \sum_{\{\phih \in \Tn | \Phi(\phih)=\phi\}}  \prod_{k=1}^{\md} \prod_{x \in \bX}\bpdk{x}^{\phihk{x}}
\end{equation}

\noindent \textbf{Profile maximum likelihood}: For any $\md$-profile $\phi \in \Phi^{n}$, a \emph{Profile Maximum Likelihood} (PML) $\md$-distribution $\bp_{pml,\phi} \in \simplexd$ is:
$$\bp_{pml,\phi}=\argmax_{\bp} \probpml(\bp,\phi)$$
and $\probpml(\bp_{pml,\phi},\phi)$ is the maximum PML objective value.\\

\noindent \textbf{Approximate profile maximum likelihood}: For any $\md$-profile $\phi \in \Phi^{n}$, a $\md$-distribution $\bp^{\beta}_{pml,\phi} \in \simplexd$ is a $\beta$-\emph{approximate PML} $\md$-distribution if 
$$\probpml(\bp^{\beta}_{pml,\phi},\phi)\geq \beta \cdot \probpml(\bp_{pml,\phi},\phi)$$

\noindent \textbf{Note}: As in the case of one dimension, we extend and use the following definition for $\bbP(\vvec,y^{\n})$ for any $\md$-vector. Further, for any probability terms defined in the future involving $\bp$, we assume those expressions are also defined for any $\md$-vector $\vvec$ just by replacing $\bp_{x}(k)$ by $\vvec_{x}(k)$ everywhere and $\vvec(k)_{x}=\vvec_{x}(k)$ mean the same thing.

\noindent \textbf{Probability discretization}: Let $\bP\defeq\{\pp_{i}:i=1,\dots \boo\}$ be the set representing discretization of $\md$-probability space where for each $i \in \otboo$, $\pp_{i}$ is a $\md$-level set. Further all elements in $\bP$ are of the form $((1+\epsoo)^{1-\io},\dots, (1+\epsod)^{1-\id})$ for some fixed $\epso \in \R_{>0}^{1\times \md}$ and for all possible index $\ik \in \otbook$, where for each $k \in \otmd$, $\book$ is such that $(1+\epsok)^{1-\book}\leq \frac{1}{2\n(k)^2}$ and $\boo=\prod_{k=1}^{\md} \book$.

\noindent \textbf{Discrete $\md$-pseudodistribution}: For any $\md$-distribution $\bp \in \simplexd$, its \emph{discrete} $\md$-pseudodistribution $\bq=disc(\bp) \in \psimplexd$ is defined as:
 $$\bq_x\defeq\floor{\bp_x}_{\bP} \quad \forall x \in \bX$$
 We use $\dsimplexd$ to denote the set of all discrete $\md$-pseudodistributions. Note that $\floor{\bp_x}_{\bP} \geq \frac{\bp_x}{1+\epso}$ and $\frac{1}{1+\epso} \leq ||\bqd||_1 \leq 1$.\\
 
\noindent \textbf{Multiplicity discretization}: Let $\bM=\{\nn_{j}:j=1\dots \btt \}$ be the set representing discretization of multiplicity space where each element $\nn_{j}$ represents a $\md$-frequency. Further each element $\nn_{j}$ is of the following form: for each $k \in \otmd$, $\nn_{j}(k) \in \{ 1,\ceil{(1+\epstk/2)^1},\ceil{(1+\epstk/2)^2},\dots , \ceil{(1+\epstk/2)^{\bttk-1}},n \} \cup \{1,2,3,\dots ,\ceil{\frac{1}{\epstk}} \}$ for some fixed $\epst \in \R^{1\times \md}_{>0}$ and $\bttk \in O(\frac{\log \nkk}{\epstk})$ is such that $\ceil{(1+\epstk/2)^{\bttk}}\geq \nkk$, $\ceil{(1+\epstk/2)^{\bttk-1}}< \nkk$ and as before $0<\epstk<1$. Note that $\btt = |\bM|=\prod_{k=1}^{\md} \bttk \in O(\prod_{k=1}^{\md}\frac{\log \nkk}{\epstk})$.\\

\noindent \textbf{Discrete $\md$-type}:
For a sequence $\yn \in \bX^{n}$, $\phihd=\Phihd(\yn) \in \R^{\bX \times \md}$ is its \emph{discrete} $\md$-type if $\phihd_x=\ceil{\bff(\yn,x)}_{\bM}$.\\

\noindent \textbf{Discrete $\md$-profile}: For a $\md$-sequence $\yn \in \bX^{\n}$, $\phid=\Phid(\yn) \in \Z^{\bM}$ is a \emph{discrete} $\md$-profile if $\phid=(\Fd_j)_{j=1\dots \btt} $, where $\Fd_j=|\{x\in \bX ~|~\ceil{\bff(\yn,x)}_{\bM}=\nn_j \}|$ and $\nd=\sum_{x\in \bX}\ceil{\bff(\yn,x)}_{\bM} \leq (1+\epst) \times \n$ is its $\md$-length.\\
\newcommand{\bppdk}[1]{\bpd''_{#1}(k)}
\subsection{Existence of Structured Approximate Solution}\label{sec:structured}
Here we show the existence of an approximate PML $\md$-distribution with a nice structure over the next several lemmas. First, we first show that one can assume the minimum non-zero probability of the PML $\md$-distribution is $\Omega(\frac{1}{\nkk^2})$ for each $k \in \otmd$ by only loosing $\expo{-O(\md)}$ in the PML objective value.

\begin{restatable}[{Minimum probability lemma}]{lemma}{lemmamind}
\label{lemmind}
For any $\md$-profile $\phi \in \Phi^{n}$, there exists a $\md$-distribution $\bp'' \in \simplexd$ such that $\bp''$ is a $\expo{-\minploss}$-approximate PML $\md$-distribution with $\min_{x \in \bX:\bppdk{x} \neq 0}\bppdk{x} \geq \frac{1}{2\nkk^2}$ for all $k\in \otmd$.
\end{restatable}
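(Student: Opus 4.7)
The plan is to generalize the independent rounding argument of \Cref{lemminmain} coordinate by coordinate across the $\md$ dimensions. For each $k \in \otmd$, apply the one-dimensional rounding to the $k$-th component distribution $\bpdk{}$ of $\bp_{pml,\phi}$: let $\bS_{k} = \{x \in \bX : 0 < \bpdk{x} < \frac{1}{\nkk^2}\}$ and, for each $x \in \bS_{k}$, introduce an independent random variable $Y_{x,k}$ taking value $\frac{1}{\nkk^2}$ with probability $\nkk^2 \bpdk{x}$ and $0$ otherwise; for $x \notin \bS_k$ set $Y_{x,k} = \bpdk{x}$ deterministically. Collecting these across $k$ yields a random $\md$-vector $\bp$ whose entries are mutually independent both across domain elements and across dimensions.

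The key observation is that the probability of a $\md$-sequence factorizes along the dimension axis,
\[
\bbP(\bp,\yn) = \prod_{k=1}^{\md}\prod_{x \in \bX} \bp_{x}(k)^{\bff(\ynk,x)},
\]
so by the same moment calculation as in \Cref{lemminmain} (each $Y_{x,k}$ has $\E[Y_{x,k}]=\bpdk{x}$ and $\E[Y_{x,k}^i] \geq \bpdk{x}^i$ for integer $i \geq 2$), together with independence across $k$, we obtain $\E[\bbP(\bp,\yn)] \geq \bbP(\bp_{pml,\phi},\yn)$ for every $\md$-sequence. Summing over $\{\yn : \Phi(\yn)=\phi\}$ yields $\E[\probpml(\bp,\phi)] \geq \probpml(\bp_{pml,\phi},\phi)$.

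Next I would control the $\ell_1$ norm of each component $\|\bp(k)\|_1$ simultaneously. For each fixed $k$ the same Chernoff/Bernoulli analysis as in \Cref{lemminmain} gives $\bbP[\|\bp(k)\|_1 \geq 1 + t/\nkk] \leq \expo{-t^2/3}$, while conditional on $\|\bp(k)\|_1 \leq 1+t/\nkk$ the factor of the PML objective contributed by the $k$-th dimension grows by at most $\expo{t}$ (via \Cref{clm1} applied per-dimension). Because the rounding is independent across $k$, these per-coordinate bounds compose: integrating $t$ from $c$ to $\infty$ for each dimension with a large enough constant $c$, one shows that the contribution to $\E[\probpml(\bp,\phi)]$ from the event $\mathcal{E}=\{\|\bp(k)\|_1 \leq 1 + c/\nkk \text{ for all } k \in \otmd\}$ is at least $\expo{-O(\md)} \probpml(\bp_{pml,\phi},\phi)$. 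Hence there exists a realization $\bp'$ in $\mathcal{E}$ with $\probpml(\bp',\phi) \geq \expo{-O(\md)} \probpml(\bp_{pml,\phi},\phi)$.

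Finally, define $\bp''$ by $\bppdk{x} = \bp'_x(k)/\|\bp'(k)\|_1$. Normalizing the $k$-th component loses a factor $(1+c/\nkk)^{-\nkk} \geq \expo{-c}$ in $\probpml(\cdot,\phi)$ (since the total number of samples from distribution $k$ is $\nkk$), giving a total normalization loss of $\expo{-O(\md)}$. The construction ensures that every nonzero entry satisfies $\bppdk{x} \geq \frac{1}{\nkk^2(1+c/\nkk)} \geq \frac{1}{2\nkk^2}$, as required. The main obstacle is the joint conditional expectation argument across $\md$ dimensions; the crux is that the factorization of $\bbP(\bp,\yn)$ along the dimension axis and the independence of the random perturbations $Y_{x,k}$ across $k$ let the one-dimensional integration-by-parts estimate be applied coordinate-by-coordinate, keeping the aggregate loss at $\expo{-O(\md)}$ rather than something exponential in the sample size.
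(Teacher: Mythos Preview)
Your simultaneous-rounding plan differs from the paper's proof, which rounds the $\md$ coordinates \emph{sequentially}: it first proves an auxiliary statement (\Cref{lemminK}) that, with any subset of coordinates already fixed, a single remaining coordinate $k'$ can be rounded by the one-dimensional randomized argument at a cost of $\expo{-6}$ while leaving the fixed coordinates untouched; induction over $k'=1,\ldots,\md$ then accumulates a total loss of $\expo{-6\md}$. The reason the paper goes sequentially is that at each step only one coordinate is random, so the $\md$-dimensional bound of \Cref{clm1d} collapses to \Cref{clm1} (the other norms are deterministically $1$) and the tail integration from \Cref{lemminmain} transfers verbatim.

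That collapse is precisely where your sketch has a gap. You write that ``conditional on $\|\bp(k)\|_1 \le 1+t/\nkk$ the factor of the PML objective contributed by the $k$-th dimension grows by at most $\expo{t}$ (via \Cref{clm1} applied per-dimension)'' and that independence lets these bounds compose. But $\probpml(\bp,\phi)$ is not a product over $k$; it is a sum over $\yn$ of such products, and the only pointwise upper bound you have is $\probpml(\bp,\phi)\le\bigl(\prod_{k'}\|\bp(k')\|_1^{\n(k')}\bigr)\probpml(\bp_{pml,\phi},\phi)$ from \Cref{clm1d}. If you bound the contribution from $\{\|\bp(k)\|_1>1+c/\nkk\}$ this way and integrate in $t$, the coordinates $k'\ne k$ each contribute a factor $\E\bigl[\|\bp(k')\|_1^{\n(k')}\bigr]$ that is a constant strictly greater than $1$; after a union bound over $k$ you get a quantity of order $\md\cdot C^{\md-1}$ rather than something below $1$, and the subtraction step fails.

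Your approach can be repaired, but it needs a per-sequence argument in place of the profile-level one. For fixed $\yn$ write $W_k=\prod_{x}Y_{x,k}^{\bff(\ynk,x)}$ and $E_k=\{\|\bp(k)\|_1\le 1+c/\nkk\}$; by independence across $k$, $\E\bigl[\bbP(\bp,\yn)\prod_{k}\mathbf{1}_{E_k}\bigr]=\prod_k\E[W_k\,\mathbf{1}_{E_k}]$. Now $W_k$ is a scaled Bernoulli: it vanishes unless $Y_{x,k}=1/\nkk^{2}$ for every $x\in\bS_k$ that appears in $\ynk$, and there are at most $\nkk$ such $x$. Conditioning on $W_k\ne 0$ therefore shifts the mean of $\|\bp(k)\|_1$ by at most $1/\nkk$ and leaves the remaining $Y_{x,k}$ independent, so the same Chernoff bound yields $\Pr[E_k\mid W_k\ne 0]\ge 1-\expo{-(c-1)^2/3}$. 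This gives $\E[W_k\,\mathbf{1}_{E_k}]\ge\bigl(1-\expo{-(c-1)^2/3}\bigr)\E[W_k]$ uniformly in $\yn$; multiplying over $k$ and summing over $\yn$ then yields the desired $\expo{-O(\md)}$ lower bound on $\E[\probpml(\bp,\phi)\,\mathbf{1}_{\mathcal{E}}]$, after which your normalization step finishes the proof.
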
 
 \begin{proof}
See \Cref{app:multipmlmind}.
 \end{proof}
Next we show that working with discrete $\md$-level sets and $\md$-frequencies doesn't significantly decrease the PML objective value. Our next lemma formally proves this statement.

 \begin{lemma}[{Probability discretization lemma}]\label{lem:probdiscd}
For any $\md$-profile $\phi \in \Phi^{n}$ and $\md$-distribution $\bp \in \simplexd$, its discrete $\md$-pseudodistribution $\bq=\disc(\bp)$ satisfies:
$$\probpml(\bp,\phi) \geq \probpml(\bq,\phi) \geq \expo{-\sum_{k=1}^{\md}\epsok \nkk}\probpml(\bp,\phi)$$
\end{lemma}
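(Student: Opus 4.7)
The plan is to mimic the one-dimensional argument in \Cref{lem:probdisc} almost verbatim, exploiting the product structure of $\bbP(\bp,\yn)$ across the $\md$ dimensions. First I would establish the upper bound $\probpml(\bp,\phi) \geq \probpml(\bq,\phi)$: since $\bq = \disc(\bp)$ is defined coordinatewise by $\bq_x(k) = \floor{\bp_x(k)}_{\bP} \leq \bp_x(k)$ for every $x \in \bX$ and every $k \in \otmd$, and all exponents $\bff(\ynk,x)$ are nonnegative integers, term-by-term monotonicity gives $\bbP(\bq,\yn) \leq \bbP(\bp,\yn)$ for each $\md$-sequence $\yn$; summing over the set $\{\yn \in \bX^{\n} : \Phi(\yn) = \phi\}$ yields the desired inequality.

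The lower bound requires the key property of the probability discretization, namely that $\floor{\bp_x(k)}_{\bP} \geq \bp_x(k)/(1+\epsok)$ in every dimension $k$ separately (this is exactly the construction of $\bP$ as a geometric grid with ratio $(1+\epsok)$ in the $k$th coordinate). For a fixed $\md$-sequence $\yn$, I would then expand
\[
\bbP(\bq,\yn) = \prod_{k=1}^{\md}\prod_{x \in \bX}\bq_x(k)^{\bff(\ynk,x)} \geq \prod_{k=1}^{\md}\prod_{x \in \bX}\left(\frac{\bp_x(k)}{1+\epsok}\right)^{\bff(\ynk,x)} = \prod_{k=1}^{\md}(1+\epsok)^{-\nkk} \cdot \bbP(\bp,\yn),
\]
using the identity $\sum_{x \in \bX}\bff(\ynk,x) = \nkk$ in each dimension. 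Since $(1+\epsok)^{-\nkk} \geq \expo{-\epsok \nkk}$, the product is bounded below by $\expo{-\sum_{k=1}^{\md}\epsok \nkk}$.

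Finally, summing the pointwise inequality $\bbP(\bq,\yn) \geq \expo{-\sum_{k=1}^{\md}\epsok\nkk}\bbP(\bp,\yn)$ over all $\md$-sequences $\yn$ with $\Phi(\yn) = \phi$ and invoking the definition of $\probpml$ from \Cref{eqpml1d} yields $\probpml(\bq,\phi) \geq \expo{-\sum_{k=1}^{\md}\epsok \nkk}\probpml(\bp,\phi)$, completing the proof. There is no genuine obstacle here: the only thing to be careful of is bookkeeping with the $\md$-tuple notation (treating $1+\epso$ as a coordinatewise operation and keeping the product over $k$ outside the product over $x$), and verifying that the constant $(1+\epsok)^{-\nkk} \geq \expo{-\epsok \nkk}$ step gives exactly the additive loss $\sum_{k=1}^{\md}\epsok \nkk$ in the exponent claimed by the lemma.
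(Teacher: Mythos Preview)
Your proposal is correct and follows essentially the same argument as the paper's proof: establish the upper bound by coordinatewise monotonicity of $\bq_x \leq \bp_x$, then for the lower bound use $\floor{\bp_x(k)}_{\bP} \geq \bp_x(k)/(1+\epsok)$ together with $\sum_{x}\bff(\ynk,x)=\nkk$ to extract the factor $\prod_k (1+\epsok)^{-\nkk} \geq \expo{-\sum_k \epsok\nkk}$ at the sequence level, and finally sum over $\{\yn:\Phi(\yn)=\phi\}$. The only cosmetic difference is that the paper writes things in the $\md$-tuple shorthand (e.g.\ $\bp_x/(1+\epso)$) whereas you unpack the coordinates explicitly.
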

 \begin{proof}
 The first inequality is immediate because $\bq_x=\floor{\bp_x}_{\bP} \leq \bp_{x}$ for all $x \in \bX$. To show second inequality consider any $\md$-sequence $\yn \in \bX^{\n}$,
 \begin{align*}
 \bbP(\bq,\yn)& =\prod_{x \in \bX}\bq_x^{\bff(\yn,x)}=\prod_{x \in \bX}\floor{\bp_x}_{\bP}^{\bff(\yn,x)} \geq \prod_{x \in \bX} \left( \frac{\bp_x}{1+\epso} \right)^{\bff(\yn,x)} \\
 & = \left[\prod_{i=1}^{\md} \prod_{x \in \bX} \frac{1}{(1+\epsok)}^{\bff(\ynk,x)} \right] \bbP(\bp,\yn) \geq  \expo{-\sum_{k=1}^{\md}\epsok \nkk}\bbP(\bp,\yn)
 \end{align*}
 In the inequality above we use $\sum_{x \in \bX}\bff(\ynk,x)=\nkk$ for all $k \in \otmd$. Now,
 \begin{align*}
 \probpml(\bq,\phi)&=\sum_{\{\yn \in \bX^{\n}: \Phi (\yn)=\phi \}} \bbP(\bq,\yn)\geq \sum_{\{\yn \in \bX^{\n}: \Phi (\yn)=\phi \}}  \expo{-\sum_{k=1}^{\md}\epsok \nkk}\bbP(\bp,\yn)\\
 &\geq  \expo{-\sum_{k=1}^{\md}\epsok \nkk}\probpml(\bp,\phi)
  \end{align*}
 \end{proof}

Our previous lemma showed that we can work in the discretized probability space and in our next lemma we show that discretization of multiplicities also doesn't change our objective value by much. For a $\md$-sequence $\yn \in \bX^{n}$, we first provide an equivalent formulation for {the} probability of its $\md$-profile $\phi=\Phi(\yn)$ (from Equation 20 in \cite{OSZ03}, Equation 15 in \cite{PJW17}) in terms of its $\md$-type $\phih=\Phih(\yn)$. The formulations provided \cite{OSZ03}, \cite{PJW17} are for two dimensions and it is not hard to see these formulations generalize to higher dimension in the following way:

\begin{equation}\label{eqpml2d}
\probpml(\bp,\phi)=\left(\prod_{j=0}^{|\Freqn|}\frac{1}{\F_j!}\right)\binom{\n}{\phih}\sum_{\sigma \in S_{\bX}}\prod_{x\in X}\bp_{x}^{\phih_{\sigma(x)}}
=\left(\prod_{j=0}^{|\Freqn|}\frac{1}{\F_j!}\right)\cphi \sum_{\sigma \in S_{\bX}}\prod_{x\in X}\bp_{x}^{\phih_{\sigma(x)}}
\end{equation}
where $S_{\bX}$ is the set of all permutations of domain set $\bX$ and $\phi_0$ is the number of domain elements with frequency $(0,\dots 0)$ (unseen domain elements). The difference between \Cref{eqpml2d} and \Cref{eqlabeledd} is the index set over which they are summed.
 
\begin{restatable}[{Profile discretization lemma}]{lemma}{lemmaprofilediscd}
\label{lemprofilediscd}
For any $\md$-distribution $\bp\in \simplexd$, and a $\md$-sequence $\yn \in \bX^{\n}$:
$$\expo{\otilde\left(\sum_{k=1}^{\md}\epstk \nkk \right)}\probpml(\bp,\phi)\leq \probpml(\bp,\phid) \leq \expo{\otilde\left(\sum_{k=1}^{\md}\epstk \nkk \right)}\probpml(\bp,\phi)\footnote{Recall our $\otilde$ notation in multidimenional setting hides all $\prod_{i=1}^{\md} poly \log \nkk $ factors}$$
where $\phi=\Phi(\yn)$ and $\phid=\Phid(\yn)$ are {the} $\md$-profile and discrete $\md$-profile of $\yn$ respectively.
\end{restatable}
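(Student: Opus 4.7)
The plan is to mirror the one-dimensional argument of \Cref{lem:profiledisc} using the multi-dimensional labelled-sequence identity in \Cref{eqpml2d}. Applying that identity to both $\phi$ and $\phid$ yields
\[
\probpml(\bp,\phi) = \left(\prod_{j=0}^{|\Freqn|}\frac{1}{\F_j!}\right)\cphi \sum_{\sigma \in S_{\bX}} \prod_{x\in\bX}\bp_x^{\phih_{\sigma(x)}}, \quad \probpml(\bp,\phid) = \left(\prod_{j=0}^{|\bM|}\frac{1}{\Fd_j!}\right)\cphid \sum_{\sigma \in S_{\bX}} \prod_{x\in\bX}\bp_x^{\phihd_{\sigma(x)}},
\]
so it suffices to control, term by term, three ratios arising from $(\phih,\phihd)$: the probability factor, the multinomial coefficient, and the factorial prefactor. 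As in the one-dimensional proof, I would first reduce, at an $\expo{-O(\md)}$ loss through \Cref{lemmind}, to $\md$-distributions satisfying $\bp_x(k) \geq 1/(2\nkk^2)$ on their support for every $k \in \otmd$.

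For each fixed permutation $\sigma \in S_{\bX}$, I would compare the probability contributions coordinatewise by writing
\[
\frac{\prod_{x\in\bX}\bp_x^{\phihd_{\sigma(x)}}}{\prod_{x\in\bX}\bp_x^{\phih_{\sigma(x)}}} = \prod_{k=1}^{\md}\prod_{x\in\bX}\bp_x(k)^{\phihdk{\sigma(x)}-\phihk{\sigma(x)}}.
\]
Because $\phihk{x} \leq \phihdk{x} \leq (1+\epstk)\phihk{x}$ with equality whenever $\phihk{x} \leq \ceil{1/\epstk}$, the total excess exponent in coordinate $k$ satisfies $\sum_x(\phihdk{x}-\phihk{x}) \leq \epstk \nkk$; combined with the minimum-probability bound this produces a ratio in $[\expo{-\otilde(\sum_k \epstk \nkk)},\,1]$, uniformly in $\sigma$.

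Next, I would handle the multinomial ratio $\cphi/\cphid = \prod_{k=1}^{\md}\binom{\nkk}{\phihk{}}\Big/\binom{\ndkk}{\phihdk{}}$ by invoking the one-dimensional argument of \Cref{lem:profiledisc} coordinatewise, giving $\cphi/\cphid \in [\expo{-\otilde(\sum_k \epstk \nkk)}, \expo{\otilde(\sum_k \epstk \nkk)}]$. For the factorial prefactor, the key observation is that $\ceil{\cdot}_{\bM}$ acts as the identity on any $\md$-frequency all of whose coordinates lie in $[1,\ceil{1/\epstk}]$, so discretization can only alter the buckets of domain elements whose frequency exceeds $1/\epstk$ in at least one coordinate; there are at most $\sum_k \epstk \nkk$ such elements. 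Hence $\F_j$ and $\Fd_j$ can differ only on ``large'' buckets whose total mass is at most $\sum_k \epstk \nkk$ in both $\phi$ and $\phid$, which bounds $\prod_j \Fd_j!\,/\,\prod_j \F_j! \in [\expo{-\otilde(\sum_k \epstk \nkk)},\expo{\otilde(\sum_k \epstk \nkk)}]$. Multiplying the three bounds yields the lemma.

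The principal obstacle is the last step: unlike in one dimension, $\md$-dimensional discretization can collapse several distinct single-coordinate frequency patterns into a common $\md$-frequency bucket, so the combinatorial accounting must show that the change in $\prod_j \F_j!$ is governed by the total count of large-frequency domain elements rather than by the (potentially much larger) number of buckets they could occupy. Once this accounting is handled, the rest of the argument is a direct multidimensional transcription of the one-dimensional proof.
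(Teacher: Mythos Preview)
Your proposal is correct and follows essentially the same route as the paper: the paper also decomposes $\probpml(\bp,\phi)/\probpml(\bp,\phid)$ via \Cref{eqpml2d} into the same three factors (the per-$\sigma$ probability term, the multinomial coefficient $\cphi/\cphid$, and the factorial prefactor $\prod_j\Fd_j!/\prod_j\F_j!$), and bounds each exactly as you outline. The ``obstacle'' you flag for the factorial prefactor is handled just as you anticipate---the paper observes that discretization only merges buckets, so $\prod_j\Fd_j!/\prod_j\F_j!\geq 1$, and for the upper bound uses $\prod_{\text{large }j}\Fd_j!\leq\bigl(\sum_{\text{large }j}\Fd_j\bigr)!\leq\bigl(\sum_k\epstk\nkk\bigr)!$, which depends only on the total count of large-frequency elements and not on the number of buckets.
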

\begin{proof}
 Let $\phih=\Phih(\yn)$ and $\phihd=\Phihd(\yn)$ be $\md$-type and discrete $\md$-type of $\md$-sequence $\yn$ respectively. By \Cref{eqpml2d}:
 $$\probpml(\bp,\phi)=\left(\prod_{j=0}^{|\Freqn|}\frac{1}{\F_j!}\right)\binom{\n}{\phih} \left(\sum_{\sigma \in S_{\bX}}\prod_{x\in X}\bp_{x}^{\phih_{\sigma(x)}}\right)$$
Similarly: 
 $$\probpml(\bp,\phid)\defeq\left(\prod_{j=0}^{|\bM|}\frac{1}{\Fd_j!}\right)\binom{\nd}{\phihd} \left(\sum_{\sigma \in S_{\bX}}\prod_{x\in X}\bp_{x}^{\phihd_{\sigma(x)}}\right)$$
where $\phid_{0}$ is the number of unseen domain elements in profile $\phid$. Note $\phid_{0}=\phi_{0}$ because our discretization procedure does not change the number of unseen domain elements. We now analyze both objectives term by term. For any permutation $\sigma \in S_{\bX}$
\begin{align*}
\prod_{x \in \bX}\bp_{x}^{\phihd_{\sigma(x)}} & \geq \prod_{x \in \bX}\bp_{x}^{\phih_{\sigma(x)}\times(1+\epst)}=\prod_{x \in \bX}\bp_{x}^{\epst\phih_{\sigma(x)}} \prod_{x \in \bX}\bp_{x}^{\phih_{\sigma(x)}} \geq \left[\prod_{k=1}^{\md}\prod_{x \in \bX}\left(\frac{1}{2\nkk^{2}}\right)^{\epstk \phih_{\sigma(x)}(k)}\right] \prod_{x \in \bX}\bp_{x}^{\phih_{\sigma(x)}} \\
& \geq \expo{-3\sum_{k=1}^{d}\epstk \nkk \log \nkk}\prod_{x \in \bX}\bp_{x}^{\phih_{\sigma(x)}}
\end{align*}
The first inequality above follows because $\phihd_{\sigma(x)} \leq \phih_{\sigma(x)} \times (1+\epst)$ and using $\phih_{\sigma(x)} \leq \phihd_{\sigma(x)}$ we get the right hand side of the following inequality.
\begin{equation}\label{eqpro1dd}
\expo{-3\sum_{k=1}^{d}\epstk \nkk \log \nkk}\prod_{x \in \bX}\bp_{x}^{\phihd_{\sigma(x)}} \geq \prod_{x \in \bX}\bp_{x}^{\phih_{\sigma(x)}}\geq \prod_{x \in \bX}\bp_{x}^{\phihd_{\sigma(x)}}
\end{equation}
Lets consider terms $\binom{\n}{\phih}$ and $\binom{\nd}{\phihd}$, we upper bound their ratio next:
\begin{align*}
\frac{\binom{\n}{\phih}}{\binom{\nd}{\phihd}}=&\prod_{k=1}^{\md}\frac{\binom{\nkk}{\phihk{}}}{\binom{\ndkk}{\phihdk{}}}=\prod_{k=1}^{\md}\frac{\nkk!}{\ndkk!}\prod_{x\in \bX}\frac{\phihdk{x}!}{\phihk{x}!} \leq \prod_{k=1}^{\md} \prod_{x\in \bX}\frac{\floor{\bff(\ynk,x)(1+\epstk)}!}{\bff(\ynk,x)!} \\
& \leq\prod_{k=1}^{\md} \prod_{x \in \bX}(\nkk(1+\epstk))^{\epstk \bff(\ynk,x)} = \prod_{k=1}^{\md} (\nkk(1+\epstk))^{\epstk \nkk}\\
& \leq \expo{2\sum_{k=1}^{\md}\epstk \nkk \log \nkk}
\end{align*}
Next we will lower bound the ratio considered above.
\begin{align*}
\frac{\binom{\n}{\phih}}{\binom{\nd}{\phihd}}=&\prod_{k=1}^{\md}\frac{\binom{\nkk}{\phihk{}}}{\binom{\ndkk}{\phihdk{}}}=\prod_{k=1}^{\md}\frac{\nkk!}{\ndkk!}\prod_{x\in \bX}\frac{\phihdk{x}!}{\phihk{x}!} \geq \prod_{k=1}^{\md}\frac{\nkk!}{\ndkk!} \geq \prod_{k=1}^{\md} \frac{\nkk!}{\floor{\nkk(1+\epstk)}!}\\
& \geq \prod_{k=1}^{\md}(\nkk(1+\epstk))^{-\epstk \nkk} \geq \expo{-2\sum_{k=1}^{\md}\epstk \nkk \log \nkk}
\end{align*}
Combining both we get:
\begin{equation}\label{eqpro2d}
\expo{-2\sum_{k=1}^{\md}\epstk \nkk \log \nkk}\binom{\nd}{\phihd} \leq \binom{\n}{\phih}\leq \expo{2\sum_{k=1}^{\md}\epstk \nkk \log \nkk}\binom{\nd}{\phihd}
\end{equation}
For final term consider all $\md$-frequencies generated by domain elements $x$ in $\md$-sequence $\yn$.  Observe that during our discretization procedure all $\md$-frequencies less than $\ceil{\frac{1}{\epst}}$ are never affected and we upper bound the number of $\md$-frequencies that change.

Analogous to proof in one dimension, for each $k \in \otmd$, the number of domain elements $x \in \bX$ with $\bff(\ynk,x)>\ceil{\frac{1}{\epstk}}$ is less than $\epstk \nkk$. Further, the number of domain elements $x \in \bX$ with $\bff(\ynk,x)>\ceil{\frac{1}{\epstk}}$ for any $k \in \otmd$ is less than $\sum_{k=1}^{\md}\epstk \nkk$. The previous statement upper bounds $\sum_{\{j \in \otbtt\} ~|~\exists k \in \otmd \text{ with }\freq_{j}(k)>\ceil{\frac{1}{\epstk}}\} } \phi_{i} \leq \sum_{k=1}^{\md}\epstk \nkk$. This further implies $\sum_{\{j \in \otbtt\} ~|~\exists k \in \otmd \text{ with }\nn_{j}(k)>\ceil{\frac{1}{\epstk}}\} } \phid_{j} \leq \sum_{k=1}^{\md}\epstk \nkk$. Combining the previous reasoning with the fact that all $\md$-frequencies less than $\ceil{\frac{1}{\epst}}$ are never changed we get the following inequality.
\begin{align*}
1\leq \frac{ \prod_{j=0}^{|\bM|} \Fd_j!}{\prod_{j=0}^{|\Freqn|}\F_j!} \leq \prod_{\{j \in \otbtt\} ~|~\exists k \in \otmd \text{ with }\nn_{j}(k)>\ceil{\frac{1}{\epstk}}\}} \Fd_j! \leq \expo{\left(\sum_{k=1}^{\md}\epstk \nkk \right)\log \left( \sum_{k=1}^{\md} \epstk \nkk \right)}
\end{align*}
Combining previous inequality with \cref{eqpro1dd}, \cref{eqpro2d} we have our result.
\end{proof}

Our next corollary captures the impact of discretizing both probabilities and multiplicities.
 \begin{cor}[{Discretization lemma}]\label{cordiscd}
For any $\md$-distribution $\bp\in \simplexd$, and a $\md$-sequence $\yn \in \bX^{\n}$. If $\bq=\disc(\bp)$ is {the} discrete $\md$-distribution of $\bp$ then,
$$\frac{1}{\alpha}\probpml(\bp,\phi)\leq \probpml(\bq,\phid) \leq \alpha \probpml(\bp,\phi) \text{ for }\alpha=\expo{\otilde\left(\disclossd \right)}$$
where $\phi=\Phi (\yn)$ and $\phid=\Phid(\yn)$ are {the} $\md$-profile and {discrete} $\md$-profile of $\yn$ respectively.
\end{cor}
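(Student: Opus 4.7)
The plan is to obtain this corollary simply by chaining the two discretization lemmas proved just above (\Cref{lem:probdiscd} and \Cref{lemprofilediscd}), exactly mirroring the one-dimensional argument used to derive \Cref{cordiscmain}. The key observation is that \Cref{lemprofilediscd} is stated for an arbitrary $\md$-distribution $\bp$, but since all probability expressions in the paper are extended to arbitrary $\md$-vectors (and in particular to discrete $\md$-pseudodistributions $\bq \in \dsimplexd$, which have $\|\bq\|_1 \leq 1$), the same bound applies with $\bq$ in place of $\bp$.

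The steps I would carry out are: (i) apply \Cref{lem:probdiscd} to $\bp$ with discrete pseudo-distribution $\bq = \disc(\bp)$, obtaining
\[
\expo{-\sum_{k=1}^{\md}\epsok \nkk}\,\probpml(\bp,\phi)\ \leq\ \probpml(\bq,\phi)\ \leq\ \probpml(\bp,\phi);
\]
(ii) apply \Cref{lemprofilediscd} with the vector $\bq$ in place of $\bp$ (using the convention that the profile-probability is well-defined for any nonnegative vector), giving
\[
\expo{-\otilde\!\left(\sum_{k=1}^{\md}\epstk \nkk\right)}\probpml(\bq,\phi)\ \leq\ \probpml(\bq,\phid)\ \leq\ \expo{\otilde\!\left(\sum_{k=1}^{\md}\epstk \nkk\right)}\probpml(\bq,\phi);
\]
(iii) multiply these inequalities together. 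The resulting multiplicative slack is $\expo{\otilde(\sum_{k}\epsok\nkk + \sum_{k}\epstk\nkk)} = \expo{\otilde(\disclossd)}$, which is exactly $\alpha$ as defined in the statement. This yields $\alpha^{-1}\probpml(\bp,\phi)\leq \probpml(\bq,\phid)\leq \alpha\,\probpml(\bp,\phi)$.

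The only non-routine point is verifying that \Cref{lemprofilediscd} continues to hold when $\bp$ is replaced by a discrete pseudo-distribution $\bq$ with $\|\bq\|_1 \leq 1$. Inspecting the proof of \Cref{lemprofilediscd}, the argument uses only (a) the formula \Cref{eqpml2d} expressing $\probpml(\cdot,\phi)$ and $\probpml(\cdot,\phid)$ as sums over permutations of products of entries raised to frequency/discrete-frequency powers, (b) the entrywise bound that the minimum nonzero entry is at least $\frac{1}{2\nkk^2}$ in each coordinate, and (c) combinatorial ratios of the $\F_j!$ factors which do not depend on $\bp$ or $\bq$ at all. Since $\bq=\disc(\bp)$ inherits the lower bound $\bq_x(k)\geq \frac{1}{2\nkk^2}$ on its nonzero entries from $\bP$ (and zero entries contribute zero on both sides), (a)--(c) go through verbatim. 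Thus I anticipate no real obstacle, and the main care is just bookkeeping of the multiplicative error terms to confirm they combine to $\otilde(\disclossd)$.
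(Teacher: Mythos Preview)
Your proposal is correct and matches the paper's own proof, which simply states that the corollary follows immediately by combining \Cref{lem:probdiscd} and \Cref{lemprofilediscd}. Your explicit chaining and the observation that \Cref{lemprofilediscd} extends to the pseudo-distribution $\bq=\disc(\bp)$ (since the only ingredient beyond the permutation formula and the purely combinatorial $\phi_j!$ ratios is an entrywise lower bound on the nonzero coordinates, which $\bq$ inherits from the grid $\bP$) fill in exactly the detail the paper leaves implicit.
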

 \begin{proof}
 Corollary follows immediately by combining \Cref{lem:probdiscd} and \Cref{lemprofilediscd}.
 \end{proof}
 
{The discretization lemma above motivates the definition of a new objective function which we introduce and study next.}
 \subsection{Discrete PML Optimization}
Here we define a new optimization problem that can be solved efficiently and returns a $\md$-distribution which has a good approximation to the PML objective value. First we define the discrete profile maximum likelihood which is just the PML objective maximized over discrete $\md$-pseudodistributions.
\begin{defn}[Discrete profile maximum likelihood]
Let $\yn \in \bX^{\n}$ be any $\md$-sequence, $\phi=\Phi (\yn)$ and $\phid=\Phid(\yn)$ be its $\md$-profile and discrete $\md$-profile respectively, a \emph{Discrete Profile Maximum Likelihood} (DPML) $\md$-pseudodistribution $\bq_{dpml,\phid}$ is:
 \begin{equation}\label{eqdpmld}
 \bq_{dpml,\phid}\defeq\argmax_{\bq \in \dsimplexd} \probpml(\bq,\phid)
 \end{equation}
$\probpml(\bq_{dpml,\phid},\phid)$ is {the} maximum objective value.
\end{defn}
\begin{cor}[DPML is an approximate PML]\label{corfinaldiscd}
For any $\md$-sequence $\yn \in \bX^{\n}$, $~~~~~\probpml(\bq_{dpml,\phid},\phid) \geq \expo{-\otilde\left(\disclossd \right)} \probpml(\bp_{pml,\phi},\phi)$
\end{cor}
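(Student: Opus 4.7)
The plan is to mirror the one-dimensional argument used in \Cref{corfinaldiscmain}, with the discretization lemma replaced by its $\md$-dimensional analogue \Cref{cordiscd}. The key observation is that $\bq_{dpml,\phid}$ is defined as the \emph{maximizer} of $\probpml(\cdot,\phid)$ over the entire set $\dsimplexd$ of discrete $\md$-pseudodistributions, so it suffices to exhibit \emph{any} discrete $\md$-pseudodistribution whose $\probpml(\cdot,\phid)$ value is large compared to $\probpml(\bp_{pml,\phi},\phi)$.

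First I would invoke \Cref{lemmind} on the $\md$-profile $\phi$ to obtain a $\md$-distribution $\bp''$ that is an $\expo{-\minploss}$-approximate PML $\md$-distribution and satisfies the minimum non-zero probability guarantee $\min_{x: \bppdk{x}\neq 0} \bppdk{x}\geq \frac{1}{2\nkk^{2}}$ for every coordinate $k\in\otmd$. This last condition is precisely what is needed so that the componentwise floor $\bq\defeq\disc(\bp'')$ lands inside the discrete set $\dsimplexd$ (rather than having some coordinate drop below the smallest value in $\bP$); by definition of $\disc$ we then have $\bq\in\dsimplexd$.

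Next, I would apply \Cref{cordiscd} to $\bp''$ (in place of $\bp$) together with the $\md$-sequence $\yn$: this yields
\[
\probpml(\bq,\phid) \;\geq\; \expo{-\otilde\bigl(\disclossd\bigr)} \,\probpml(\bp'',\phi).
\]
Combining this with the minimum-probability lemma gives
\[
\probpml(\bq,\phid) \;\geq\; \expo{-\minploss}\,\expo{-\otilde(\disclossd)} \,\probpml(\bp_{pml,\phi},\phi),
\]
and the $\expo{-\minploss}$ factor is absorbed into the $\otilde(\cdot)$ since $\md$ is constant. Finally, by the maximizing definition of $\bq_{dpml,\phid}$ and the fact that $\bq\in\dsimplexd$, we conclude $\probpml(\bq_{dpml,\phid},\phid)\geq \probpml(\bq,\phid)$, which is the desired inequality.

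There is no real obstacle here: all of the heavy lifting is done by \Cref{lemmind} and \Cref{cordiscd}. The only care needed is to confirm that the $\md$-dimensional floor $\disc(\bp'')$ is indeed in $\dsimplexd$, which uses both the minimum-probability bound (so that each nonzero coordinate stays inside the discretization grid $\bP$) and the fact that $\|\disc(\bp'')\|_1\leq\|\bp''\|_1\leq 1$ coordinatewise in $k$.
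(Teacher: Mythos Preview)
Your proposal is correct and follows essentially the same route as the paper: apply the discretization lemma (\Cref{cordiscd}) to a PML (or near-PML) $\md$-distribution, then use the maximizing property of $\bq_{dpml,\phid}$. The paper applies \Cref{cordiscd} directly to $\bp_{pml,\phi}$ without first passing through \Cref{lemmind}; your extra step of invoking the minimum-probability lemma is a bit more careful (it guarantees $\disc(\cdot)$ lands in $\dsimplexd$), and the additional $\expo{-\minploss}$ factor is indeed absorbed into the $\otilde$.
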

\begin{proof}
Note that $\bq_{pml,\phi}=\disc(\bp_{pml,\phi})$ is a discrete $\md$-pseudodistribution.
The result follows from \Cref{cordiscd} applied to $\bp_{pml,\phi}$.
  \end{proof}

In {the} next two lemmas we rephrase the DPML optimization problem in forms that are amenable to convex relaxation. To do this, we introduce some new notation.\\
\noindent $\bullet$ Let $\pvec \in \R^{\boo \times \md}$ be the matrix with rows indexed between $1$ to $\boo$ and $i$th row is equal to $\md$-level set $\pp_i \in \bP$. Also let $\mvec \in \R^{\bttpo \times \md}$ be the vector with rows indexed between $0$ to $\btt$. Its zeroth row (denoted by $\nn_{0}$) is equal to $\md$-frequency $(0,\dots 0)$ and $j$th row is equal to $\md$-frequency $\nn_j \in \bM$. We use $\nn(k)$ and $\pp(k)$ to denote the $k$th column of matrix $\nn$ and $\pp$ respectively.

\noindent $\bullet$ Let $\bk \in \Z^{\boo \times \bttpo}$ be a variable matrix and we use $\bk_{ij}$ for $i \in \otboo,j \in \ztbtt$ to denote elements of this matrix. As in the case for vector $\nn$, our second index $j$ of variable matrix $\bk$ starts at $0$ and not at $1$. Here the variable $\bk_{ij}$ counts the number of domain elements $x \in \bX$ with $\md$-level set $\pp_i$ and have $\md$-frequency equal to $\nn_j$. $\bk_{i,0}$ is counting the number of domain elements $x \in \bX$ with $\md$-level set $\pp_i$ and $\md$-frequency equal to $(0,\dots 0)$. We use function $\log \pp(k)$ and $\log \pp$ to perform entrywise operations returning entities of same dimension as $\pp(k)$ and $\pp$ respectively with $\log$ applied on every entry.

\noindent $\bullet$ For any matrix $\vvec$ and set $S$, we use $\vvec_{S}$ to denote the matrix with $|S|$ rows corresponding to index set $S$.

\noindent $\bullet$ For a discrete $\md$-profile $\phid=(\Fd_{j})_{j=1\dots \btt}$ (corresponding to $\md$-sequence $\yn$), define:\\
 $~~~~\bK_{\phid}\defeq\{\bk \in \Z^{\boo \times \bttpo}~\Big|~ ~(\bk^{T}\onevec)_{\otbtt} = \Fd, \text{ and } (\bk\onevec)^{T} \pvec \leq 1 \}$\\
 Note in the expression above $(\bk\onevec)^{T} \pvec$ is a $\md$-tuple and $(\bk\onevec)^{T} \pvec \leq 1$ means each entry of this $\md$-tuple is less than 1 (as described in the preliminaries section).
 
\noindent $\bullet$ For a discrete $\md$-profile $\phid=(\Fd_{j})_{j=1\dots \btt}$ (of $\yn$) and a discrete $\md$-pseudodistribution $\bq$, also define:\\
$~~~~\bK_{\bq,\phid}\defeq\{\bk \in \Z^{\boo \times \bttpo}~\Big|~ ~(\bk^{T}\onevec)_{\otbtt} = \Fd, \text{ and } \bk \onevec=\levelq \}$
where $\levelq \in \R^{\boo}$ and $\levelq_i$ denote the number of domain elements with $\md$-level set $\pp_i \in \bP$ in $\md$-pseudodistribution $\bq$.\\

One of the most important advantages of $\md$-level set and $\md$-frequency discretization we described earlier is that many $\md$-types in the set $\{\phih ~|~ \Phi(\phih)=\phid\}$ share the same probability value of being observed and our goal is to group them using the $\bk_{ij}$ variables. Exploiting this idea, we next give a different formulation for {the} DPML objective.
\begin{lemma}[DPML objective reformulation]\label{DPMLequalityd} 
For any discrete $\md$-pseudodistribution $\bq \in \simplexd$ and discrete $\md$-profile $\phid \in \Phi^{\nd}$:
\begin{equation}\label{eq:formulation2d}
\probpml(\bq,\phid) = \cphid \sum_{\bk\in \bK_{\bq,\phid}} \prod_{i=1}^{\boo}\left(\pp_i^{(\bk \mvec)_i}\frac{(\bk\onevec)_i!}{\prod_{j=0}^{\btt}\bk_{ij}!}\right)
\end{equation}
\end{lemma}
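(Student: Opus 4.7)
The plan is to mirror the proof of the one-dimensional reformulation lemma (\Cref{DPMLequalitymain}), replacing scalar frequencies by $\md$-frequencies and scalar probabilities by $\md$-level sets. The starting point is \Cref{eqlabeledd}, which, specialized to a discrete $\md$-profile $\phid$ and a discrete $\md$-pseudodistribution $\bq$, gives
\begin{equation*}
\probpml(\bq,\phid) \;=\; \cphid \sum_{\{\phih \in \Tn \,|\, \Phi(\phih)=\phid\}} \prod_{k=1}^{\md}\prod_{x \in \bX} \bq_x(k)^{\phih_x(k)}
\;=\; \cphid \sum_{\{\phih \,|\, \Phi(\phih)=\phid\}} \prod_{x \in \bX} \bq_x^{\phih_x},
\end{equation*}
where the last expression uses the $\md$-tuple power convention $\bq_x^{\phih_x} = \prod_{k=1}^{\md}\bq_x(k)^{\phih_x(k)}$.

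Next I would set up the bijection-then-grouping argument. For every valid $\md$-type $\phih$ with $\Phi(\phih)=\phid$, associate the matrix $\bk \in \bK_{\bq,\phid}$ defined by $\bk_{ij} = |\{x \in \bX : \bq_x = \pp_i \text{ and } \phih_x = \nn_j\}|$. The map from types to matrices is well-defined because $\bq$ is a discrete $\md$-pseudodistribution (so each $\bq_x$ lies in $\bP \cup \{0\}$) and each $\phih_x$ equals $\ceil{\bff(\yn,x)}_{\bM} \in \bM \cup \{(0,\dots,0)\}$ for some witness $\yn$, and the row sum / column sum constraints defining $\bK_{\bq,\phid}$ are satisfied by construction. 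Crucially, every $\md$-type mapping to the same $\bk$ contributes the same probability factor, namely
\begin{equation*}
\prod_{x \in \bX}\bq_x^{\phih_x} \;=\; \prod_{i=1}^{\boo}\prod_{j=0}^{\btt} \pp_i^{\bk_{ij}\,\nn_j} \;=\; \prod_{i=1}^{\boo}\pp_i^{(\bk \mvec)_i},
\end{equation*}
where the last equality uses $(\bk\mvec)_i = \sum_{j=0}^{\btt}\bk_{ij}\nn_j$ as a $\md$-tuple and the convention $\pp_i^{v} = \prod_k \pp_i(k)^{v(k)}$. This is exactly the multidimensional analogue of \Cref{eq:probtype}, and it only exploits the exponent law $a^{b+c}=a^b a^c$ coordinatewise.

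The remaining step is to count the preimages of a fixed $\bk$. For each fixed row index $i$, I need to partition the $(\bk\onevec)_i$ domain elements of $\md$-level set $\pp_i$ into groups of sizes $\{\bk_{ij}\}_{j \in \ztbtt}$ according to the $\md$-frequency they realize; this contributes the multinomial $\frac{(\bk\onevec)_i!}{\prod_{j=0}^{\btt}\bk_{ij}!}$. Since these choices across different rows $i$ are independent, the number of valid $\md$-types mapping to $\bk$ is $\prod_{i=1}^{\boo}\frac{(\bk\onevec)_i!}{\prod_{j=0}^{\btt}\bk_{ij}!}$. Substituting into the expression above and reorganizing the double sum (first over $\bk \in \bK_{\bq,\phid}$, then over valid $\phih$ in the preimage of $\bk$) yields
\begin{equation*}
\probpml(\bq,\phid) \;=\; \cphid \sum_{\bk \in \bK_{\bq,\phid}} \prod_{i=1}^{\boo}\!\left(\pp_i^{(\bk\mvec)_i} \frac{(\bk\onevec)_i!}{\prod_{j=0}^{\btt}\bk_{ij}!}\right).
\end{equation*}

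There is no real obstacle here: the argument is a direct dimensional lift of the one-dimensional proof, and the only thing to be careful about is the $\md$-tuple notation — in particular, ensuring that $\pp_i^{(\bk\mvec)_i}$ is interpreted via the scalar-valued convention $\cc^{\cc'}=\prod_k \cc(k)^{\cc'(k)}$ defined in \Cref{sec:prelimsd}, and that the multinomial counting step is independent of the number of dimensions because labels/partitions are based purely on the element index $x$, not on any coordinate $k$.
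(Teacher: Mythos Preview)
Your proposal is correct and follows essentially the same approach as the paper's proof: both start from \Cref{eqlabeledd}, associate to each valid $\md$-type the matrix $\bk_{ij}=|\{x:\bq_x=\pp_i,\ \phih_x=\nn_j\}|$, verify via the $\md$-tuple power convention that the probability term equals $\prod_i \pp_i^{(\bk\mvec)_i}$, and then count preimages by the product of multinomials. The only cosmetic difference is that the paper spells out the intermediate step $\prod_i\prod_j(\prod_k \pp_i(k)^{\nn_j(k)})^{\bk_{ij}}$ explicitly, which you summarize in words.
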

\begin{proof}
Recall from \Cref{eqlabeledd}
$$\probpml(\bq,\phid) = \cphid \sum_{\{\phih ~|~ \Phi(\phih)=\phid\}} \prod_{x \in X}\bq_x^{\phih_x}$$
For convenience, we call a $\md$-type $\lphi$ valid if it belongs to set $\{\phih ~|~ \Phi(\phih)=\phid\}$. Recall variable $\bk_{ij}$ represents the number of domain elements with $\md$-level set $\pp_{i}$ and have $\md$-frequency equal to $\nn_j$. In this representation and for the discrete $\md$-pseudodistribution $\bq$, each valid $\md$-type $\lphi$ corresponds to the following unique variable assignment $\bk \in \bK_{\bq,\phid}$:
$$\bk_{ij}=|\{x \in \bX ~| ~\bq_x=\pp_i \text { and } \lphi_x=\nn_j\}|$$ 
and from the expression above it is not hard to write the exact expression for the probability term associated with the valid $\md$-type $\lphi$:
\begin{equation}\label{eq:probtyped}
\begin{split}
\prod_{x \in \bX}\bq_x^{\phih_x}&=\prod_{i=1}^{\boo}\prod_{j=0}^{\btt}\prod_{\{x \in \bX|\bq_{x}=\pp_{i} \text{ and }\phih_{x}=\nn_{j}\}}\prod_{k=1}^{\md}\pp_{i}(k)^{\nn_{j}(k)}=\prod_{i=1}^{\boo}\prod_{j=0}^{\btt} (\prod_{k=1}^{\md}\pp_{i}(k)^{\nn_{j}(k)})^{\bk_{ij}}\\
&=\prod_{i=1}^{\boo}\prod_{j=0}^{\btt} (\pp_{i}^{\nn_{j}})^{\bk_{ij}}=\prod_{i=1}^{\boo} \pp_{i}^{\sum_{j=0}^{\btt}\nn_{j}\bk_{ij}}=\prod_{i=1}^{\boo}\pp_{i}^{(\bk \mvec)_i}
\end{split}
\end{equation}

For any variable assignment $\bk$, it is clear from the middle term in Equation \ref{eq:probtyped} that all valid $\md$-types $\lphi$ associated with $\bk$ share the same probability value of being observed. With this observation, it is now enough to argue about the number of valid $\md$-types associated with a variable assignment $\bk$ to prove our lemma. We make this argument next by constructing all valid $\md$-types associated with $\bk$.

First consider all domain elements with a fixed $\md$-level set $\pp_{i}$ and number of such elements is equal to $\sum_{j=0}^{\btt}\bk_{ij}$. We can now generate part of a valid $\md$-type corresponding to the domain elements with $\md$-level set equal to $\pp_{i}$ by picking any partition of these $\sum_{j=0}^{\btt}\bk_{ij}$ domain elements into groups of sizes $\{ \bk_{ij}\}_{j \in \ztbtt}$. This corresponds to multinomial coefficient and therefore the number of types associated with $\bk$ is just:
$$\frac{(\bk \onevec)_i!}{\prod_{j=0}^{\btt}\bk_{ij}!}$$
Here we only generated partial valid $\md$-types corresponding to domain elements with $\md$-level set equal to $\pp_{i}$. To generate a full valid $\md$-type we just need to combine these partial valid $\md$-types generated for each $\md$-level set $\pp_{i}$. Let $S_{\bk}$ denote all such full valid $\md$-types associated with a variable assignment $\bk$ and generating a full valid $\md$-type corresponds to groups (for each $\md$-level set $\pp_{i}$) of independent possibilities considered conjointly. Further the cardinality of set $S_{\bk}$ is just the multiplication of cardinalities of each of these groups and is explicitly written below, 
$$|S_{\bk}|=\prod_{i=1}^{\boo}\frac{(\bk\onevec)_i!}{\prod_{j=0}^{\btt}\bk_{ij}!}$$
We are almost done and all we do next is formally derive the expression in our lemma statement to complete the proof.  From \Cref{eqlabeledd},
\begin{align*}
\probpml(\bq,\phid) & =\cphid \sum_{\{\phih | \Phi(\phih)=\phid\}} \prod_{x \in X}\bq_x^{\phih_x}=\cphid \sum_{\bk\in \bK_{\bq,\phid}} \sum_{\{\phih \in S_{\bk}\}} \prod_{x \in X}\bq_x^{\phih_x}\\
&=\cphid \sum_{\bk\in \bK_{\bq,\phid}} \sum_{\{\phih \in S_{\bk}\}} \prod_{i=1}^{\boo}\pp_{i}^{(\bk \mvec)_i}=\cphid \sum_{\bk\in \bK_{\bq,\phid}}  |S_{\bk}| \prod_{i=1}^{\boo}\pp_{i}^{(\bk \mvec)_i} \\
& = \cphid \sum_{\bk\in \bK_{\bq,\phid}} \prod_{i=1}^{\boo}\left(\pp_i^{(\bk \mvec)_i}\frac{(\bk\onevec)_i!}{\prod_{j=0}^{\btt}\bk_{ij}!}\right)
\end{align*}
\end{proof}

\begin{lemma}[DPML objective relaxed]\label{lem:DPMLupperboundd}
For any $\md$-sequence $\yn \in \bX^{\n}$, and a discrete $\md$-pseudodistribution $\bq \in \simplexd$ the DPML objective can be upper bounded by: 
\begin{equation}\label{eq:w1d}
\probpml(\bq,\phid) \leq \cphid \sum_{\bk\in \bK_{\phid}} \prod_{i=1}^{\boo}\left(\pp_i^{(\bk \mvec)_i}\frac{(\bk\onevec)_i!}{\prod_{j=0}^{\btt}\bk_{ij}!}\right)
\end{equation}
where $\phid = \Phid(\yn) \in \Phi^{\nd}$ is discrete $\md$-profile of $\yn$.
\end{lemma}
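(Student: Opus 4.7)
The plan is to mirror the proof of the one-dimensional analogue (\Cref{lem:DPMLupperboundmain}). The key observation is that \Cref{DPMLequalityd} already gives us an \emph{exact} expression for $\probpml(\bq,\phid)$ as a sum indexed by $\bk\in\bK_{\bq,\phid}$, whose summand is precisely the same nonnegative quantity appearing on the right-hand side of \Cref{eq:w1d}. So the lemma reduces to the set inclusion $\bK_{\bq,\phid}\subseteq\bK_{\phid}$, after which enlarging the index set only adds nonnegative terms and produces the desired upper bound.

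To verify the inclusion, I would take an arbitrary $\bk\in\bK_{\bq,\phid}$ and check both defining constraints of $\bK_{\phid}$. The constraint $(\bk^T\onevec)_{\otbtt}=\Fd$ is shared by both sets, so it is automatic. For the second constraint $(\bk\onevec)^T\pvec\leq 1$ (understood as a $\md$-tuple inequality, i.e.\ componentwise in each of the $\md$ coordinates), note that by definition $\bk\onevec=\levelq$, where $\levelq_i$ counts the number of domain elements assigned the $\md$-level set $\pp_i$ in $\bq$. Hence for each coordinate $k\in\otmd$,
\[
((\bk\onevec)^T\pvec)(k) \;=\; \sum_{i=1}^{\boo} \levelq_i\,\pp_i(k) \;=\; \sum_{x\in\bX}\bq_x(k) \;=\; \|\bq(k)\|_1 \;\leq\; 1,
\]
where the last inequality uses that $\bq$ is a $\md$-pseudodistribution. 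This gives $\bk\in\bK_{\phid}$.

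Putting the two pieces together, \Cref{DPMLequalityd} yields
\[
\probpml(\bq,\phid) \;=\; \cphid \sum_{\bk\in\bK_{\bq,\phid}}\prod_{i=1}^{\boo}\!\left(\pp_i^{(\bk\mvec)_i}\frac{(\bk\onevec)_i!}{\prod_{j=0}^{\btt}\bk_{ij}!}\right),
\]
and since every summand is nonnegative and $\bK_{\bq,\phid}\subseteq\bK_{\phid}$, replacing the index set by the larger $\bK_{\phid}$ can only increase the sum, which produces exactly the claimed bound \Cref{eq:w1d}. There is no real obstacle here: the only substantive step is the componentwise check of the $\pvec$-inequality, and that is immediate from the definition of a $\md$-pseudodistribution together with the preliminaries in \Cref{sec:prelimsd} on how logic operations on $\md$-tuples are to be interpreted.
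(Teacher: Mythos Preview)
Your proposal is correct and takes exactly the same approach as the paper: invoke \Cref{DPMLequalityd} and then use the containment $\bK_{\bq,\phid}\subseteq\bK_{\phid}$. You spell out the verification of the containment more carefully than the paper (which simply asserts it), but the argument is identical.
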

 \begin{proof}
The proof follows because $ \bK_{\bq,\phid}\subseteq \bK_{\phid} $ and invoking \Cref{DPMLequalityd}.
 \end{proof}
 
We are half way through in defining our final optimization problem which exhibits efficient algorithms. In our final optimization problem we just optimize over one term in the set $\bK_{\phid}$ instead of working with summation over all the terms and next two lemmas serve as the motivation for working with single term over the summation of terms by showing that the optimizing $\md$-pseudodistribution of our final optimization problem is still an approximate PML $\md$-distribution.
  \begin{lemma}[Cardinality of $\bK_{\phid}$]\label{lemcardd}
For any $\md$-sequence $\yn \in \bX^{\n}$ and its associated discrete $\md$-profile $\phid = \Phid(\yn)$: $$|\bK_{\phid}| \leq  \expo{O\left(\grouplossepsd \right)}.$$
  \end{lemma}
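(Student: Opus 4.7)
The plan is to mimic the one-dimensional argument (the earlier \Cref{lemcardmain}), just being careful about how the dimensionwise factors $\book$ and $\bttk$ combine into $\boo$ and $\btt$. First I would note that every $\bk \in \bK_{\phid}$ lives in $\Z^{\boo \times \bttpo}$, so it is enough to upper bound each coordinate $\bk_{ij}$ by a small integer and then raise that bound to the $\boo(\btt+1)$ power.

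To get a coordinatewise bound I would invoke \Cref{lemmind} in the $\md$-dimensional form: we may assume every nonzero entry $\pp_i(k)$ of the underlying discrete $\md$-pseudodistribution is at least $\frac{1}{2\nkk^2}$. Combined with the constraint $(\bk \onevec)^T \pvec \leq 1$ in the definition of $\bK_{\phid}$, the coordinate bound $\pp_i(k) \cdot (\bk \onevec)_i \leq 1$ forces $(\bk \onevec)_i \leq 2\nkk^2$ for every $k$, hence $\bk_{ij} \leq 2\min_k \nkk^2 \leq 2(\max_k \nkk)^2$. This gives
\[
|\bK_{\phid}| \leq \bigl(2(\max_k \nkk)^2\bigr)^{\boo(\btt+1)} \leq \exp\!\bigl(O(\log \max_k \nkk)\cdot \boo \btt\bigr).
\]

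It then remains to translate this into the claimed bound $\expo{O(\grouplossepsd)}$. Substituting $\boo = \prod_{k=1}^{\md} \book$ and $\btt = \prod_{k=1}^{\md} \bttk$ with $\book = O(\log \nkk/\epsok)$ and $\bttk = O(\log \nkk/\epstk)$ from the definitions in \Cref{sec:prelimsd}, we get
\[
\boo \btt = O\!\left(\prod_{k=1}^{\md} \frac{\log^2 \nkk}{\epsok \epstk}\right),
\]
and multiplying by the extra $\log(\max_k \nkk)$ factor only changes one of the $\log \nkk$ powers into $\log^3 \nkk$ (the other factors are already at least $\log \nkk$). Thus $O(\log \max_k \nkk)\cdot \boo\btt \leq O(\prod_{k=1}^{\md} \log^3\nkk/(\epsok \epstk))$, which is exactly $O(\grouplossepsd)$, and the lemma follows.

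This proof is essentially routine bookkeeping; there is no real obstacle once \Cref{lemmind} is in hand. The only step worth being a bit careful about is making sure that the $\log$ factors multiply together correctly across the $\md$ dimensions so that the final bound fits inside $\grouplossepsd = \prod_{k=1}^{\md}\log^3 \nkk/(\epsok \epstk)$, but this is a one-line calculation as above.
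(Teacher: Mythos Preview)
Your proposal is correct and follows essentially the same approach as the paper: both invoke \Cref{lemmind} together with the constraint $(\bk\onevec)^{T}\pvec \leq 1$ to bound each coordinate $\bk_{ij}$ by $\min_{k}2\nkk^{2}$, then raise this to the $\boo(\btt+1)$ power and substitute $\boo,\btt$ in terms of $\epsok,\epstk,\log\nkk$. Your write-up simply spells out the logarithmic bookkeeping that the paper leaves implicit.
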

 \begin{proof}
   $\bK_p$ is a set of vectors in $\Z^{\boo \times \bttpo}$ and because of \Cref{lemmind} combined with the constraint $(\bk\onevec)^{T} \pvec\leq 1$, each $\bk_{ij}$ takes only positive integer values less than $\min_{k \in \otmd}2\nkk^{2}$. The lemma statement follows by substituting the values of $\boo$ and $\btt$.
  \end{proof}

As described earlier \Cref{lemcardd} motivates us to consider the following objective, define:
$$\probsdpml(\bk)\defeq \prod_{i=1}^{\boo}\left(\pp_i^{(\bk \mvec)_i}\frac{(\bk \onevec)_i!}{\prod_{j=0}^{\btt}\bk_{ij}!}\right)$$
It is important to note that there is a discrete $\md$-pseudodistribution $\bq_{\bk}$ that correspond to each variable assignment $\bk\in \bK_{\phid}$. The description of this $\md$-distribution is as follows: For each $i \in \otboo$, the number of domain elements that have $\md$-level set $\pp_{i}$ in $\bq$ is equal to $(\bk \onevec)_{i}$. This description only provides non zero $\md$-level sets and also does not provide any labels, however it is sufficient for estimating all symmetric properties mentioned in this paper.
  \begin{defn}[Single discrete profile maximum likelihood]
  For any $\md$-sequence $\yn \in \bX^{\n}$ and its associated discrete $\md$-profile $\phid = \Phid(\yn) \in \Phi^{\nd}$, a \emph{Single Discrete Profile Maximum Likelihood} (SDPML) $\md$-pseudodistribution $\bq_{sdpml,\phid}$ is:
 \begin{equation}\label{eq:sdpmld}
 \bk_{sdpml,\phid}\defeq\argmax_{\bk\in \bK_{\phid}}\cphid \probsdpml(\bk)=\argmax_{\bk\in \bK_{\phid}}\probsdpml(\bk)
 \end{equation}
 and $\bq_{sdpml,\phid}$ is the $\md$-pseudodistribution corresponding to $\bk_{sdpml,\phid}$.
  \end{defn}
  
  \begin{lemma}[SDPML relationd to PML]\label{lemsinglediscd} 
  For any $\md$-sequence $\yn \in \bX^{\n}$,
  $$\cphid\probsdpml(\bk_{sdpml,\phid}) \geq \expo{-\otilde\left(\disclossd +\grouplossepsdtilde \right) }\probpml(\bp_{pml,\phi},\phi)$$
  where $\phi = \Phi(\yn)$ and $\phid = \Phid(\yn)$ are $\md$-profile and discrete $\md$-profile associated with $\yn$.
  \end{lemma}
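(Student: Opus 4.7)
The plan is to mimic the one-dimensional argument of Lemma \ref{lemsinglediscmain} almost verbatim, using the multidimensional counterparts established earlier in the appendix. First, since $\bk_{sdpml,\phid}$ is by definition the maximizer of $\probsdpml(\cdot)$ over $\bK_{\phid}$, its value upper bounds every term in the sum appearing in the DPML reformulation. In particular, if $\bq_{dpml,\phid}$ is any discrete $\md$-pseudodistribution, then Lemma \ref{lem:DPMLupperboundd} gives
\[
\probpml(\bq_{dpml,\phid},\phid) \;\leq\; \cphid \sum_{\bk \in \bK_{\phid}} \probsdpml(\bk) \;\leq\; \cphid \,|\bK_{\phid}|\, \probsdpml(\bk_{sdpml,\phid}) ~.
\]

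Next, I would bound the multiplicative loss $|\bK_{\phid}|$ using Lemma \ref{lemcardd}, which says $|\bK_{\phid}| \leq \exp(\otilde(\grouplossepsdtilde))$, where the $\otilde$ absorbs the $\prod_{k} \log^{O(1)} \nkk$ factors coming from the sizes $\boo$ and $\btt$ of the discretized probability and multiplicity spaces. Rearranging the previous display then yields
\[
\cphid \,\probsdpml(\bk_{sdpml,\phid}) \;\geq\; \expo{-\otilde(\grouplossepsdtilde)} \,\probpml(\bq_{dpml,\phid},\phid) ~.
\]

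Finally, I would invoke the discretization corollary \ref{corfinaldiscd}, which guarantees
\[
\probpml(\bq_{dpml,\phid},\phid) \;\geq\; \expo{-\otilde(\disclossd)} \, \probpml(\bp_{pml,\phi},\phi) ~,
\]
and multiply this with the previous inequality to obtain the claimed bound $\cphid \probsdpml(\bk_{sdpml,\phid}) \geq \exp(-\otilde(\disclossd + \grouplossepsdtilde)) \cdot \probpml(\bp_{pml,\phi},\phi)$.

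There is no real obstacle here since all the heavy lifting was done in the multidimensional versions of the discretization lemma, the DPML relaxation lemma, and the cardinality bound. The only thing I would double-check is that the polylogarithmic blow-ups hidden in the $\otilde$ from $|\bK_{\phid}|$ (which scale as $\prod_k \log^{O(1)} \nkk$) and the analogous $\otilde$ factors in the discretization lemma both remain inside the single $\otilde$ on the right-hand side; this is immediate from our convention that $\otilde$ absorbs all $\prod_k \log^{O(1)}\nkk$ terms, so the two error sources combine additively in the exponent exactly as stated.
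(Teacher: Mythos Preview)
Your proposal is correct and follows essentially the same approach as the paper: both chain together Lemma~\ref{lem:DPMLupperboundd}, the cardinality bound Lemma~\ref{lemcardd}, and Corollary~\ref{corfinaldiscd} in the same order, with the only cosmetic difference being that the paper phrases the first step via an intermediate $\bk_{dpml,\phid}$ rather than writing out the sum and bounding it by $|\bK_{\phid}|$ times the max term.
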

  \begin{proof} $~~~~~\cphid \probsdpml(\bk_{sdpml,\phid}) \geq \cphid \probsdpml(\bk_{dpml,\phid}) \geq \expo{-O\left(\grouplossepsd \right)}\probpml(\bq_{dpml,\phid},\phid)~~~~~$\\
  $$\geq  \expo{-\otilde\left(\disclossd +\grouplossepsdtilde \right)}\probpml(\bp_{pml,\phi},\phi)$$
  The second inequality follows from \Cref{lemcardd}, \ref{lem:DPMLupperboundd} and last follows from \Cref{corfinaldiscd}.
 \end{proof}

\subsection{Convex relaxation of SDPML}\label{subsec:convexrelaxd}
We showed in the previous subsection that the SDPML objective is a good approximation to the PML objective. However the objective function of SDPML is defined only over the integers and in this subsection we present a convex relaxation of SDPML.

First, we consider the feasible set $\bK_{\phid}$ of SDPML, which is the following integral polytope 
\[
\bK_{\phid}=\{\bk \in \Z^{\boo \times \bttpo}~\Big|~ ~(\bk^{T}\onevec)_{\otbtt} = \Fd, \text{ and } (\bk\onevec)^{T} \pvec \leq 1 \} ~.
\]
We relax the integer constraint on variables $\bk_{ij}$:
\begin{equation}\label{setrelaxd}
\bK^{f}_{\phid}\defeq\{\bk \in \R^{\boo \times \bttpo}~\big|~ (\bk^{T}\onevec)_{\otbtt} = \Fd, \text{ and } (\bk\onevec)^{T} \pvec \leq  1 \} ~.
\end{equation}
In the later subsections, we show how we deal with these fractional solutions by presenting a rounding algorithm with a good approximation ratio.

Secondly, we relax the objective function of SDPML itself. The objective of SDPML is defined only on the integral set. We next define a continuous relaxation of this objective function which is also log-concave. To do so, we use an approximation of the factorial function (similar to Stirling's approximation) which handles $0!$ terms as well. 
We use the following function as the continuous proxy of the SDPML objective (using the convention that $0 \log 0 = 0$):
\begin{equation}\label{contsdpmlorigd}
\begin{split}
 \bg(\bk)&\defeq\prod_{i=1}^{\boo}\left(\pp_i^{(\bk \mvec)_i}\frac{\expo{(\bk \onevec)_i \log (\bk \onevec)_i-(\bk \onevec)_i}}{\prod_{j=0}^{\btt}\expo{\bk_{ij}\log \bk_{ij}-\bk_{ij}}}\right)\\
 &=\left[\prod_{i=1}^{\boo}\prod_{j=0}^{\btt} \prod_{k=1}^{\md}\pp_{i}(k)^{\nn_{j}(k)\bk_{ij}}\right] \expo{\sum_{i=1}^{\boo}(\bk \onevec)_i \log (\bk \onevec)_i-\sum_{i=1}^{\boo}\sum_{j=0}^{\btt}\bk_{ij}\log \bk_{ij}}\\
 &=\left[\prod_{k=1}^{\md}\expo{\pp(k)^{T}\bk \nn(k)}\right] \expo{\sum_{i=1}^{\boo}(\bk \onevec)_i \log (\bk \onevec)_i-\sum_{i=1}^{\boo}\sum_{j=0}^{\btt}\bk_{ij}\log \bk_{ij}}\\
 & =\expo{\sum_{k=1}^{\md}\log \pp(k)^{T}\bk \nn(k) + \sum_{i=1}^{\boo}(\bk \onevec)_i \log (\bk \onevec)_i-\sum_{i=1}^{\boo}\sum_{j=0}^{\btt}\bk_{ij}\log \bk_{ij}}\\
 &=\expo{\tr(\log \pp^{T}\bk \nn) + \sum_{i=1}^{\boo}(\bk \onevec)_i \log (\bk \onevec)_i-\sum_{i=1}^{\boo}\sum_{j=0}^{\btt}\bk_{ij}\log \bk_{ij}}
 \end{split}
 \end{equation}
The lemma below states that continuous version is not far from the actual SDPML objective. 
\begin{restatable}[$\bg(\cdot)$ approximates SDPML objective]{lemma}{lemmastirlingsapproxd} 
\label{app:sterlingsapproxd}
For any $\md$-sequence $\yn \in \bX^{\n}$ and its associated discrete $\md$-profile $\phid = \Phid(\yn) \in \Phi^{\nd}$. If $\bk \in \bK_{\phid}$, then
$$\expo{-O\left(\prod_{k=1}^{\md}\frac{\log^3 \nkk}{\epsok \epstk} \right)} \bg(\bk) \leq \probsdpml(\bk)\leq \expo{O\left(\prod_{k=1}^{\md}\frac{\log^2 \nkk}{\epsok}\right)} \bg(\bk)$$
 \end{restatable}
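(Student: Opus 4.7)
The plan is to mirror the one-dimensional argument of \Cref{app:sterlingsapproxmain}, tracking how the parameter counts $\boo$ and $\btt$ inflate in the multidimensional setting. First I would apply the weak form of Stirling's approximation valid for all integers $m\geq 0$, namely $1\leq m!/\expo{m\log m - m}\leq e\sqrt{m+1}$, term by term to the counting factors that appear in $\probsdpml(\bk)$. Since
\[
\frac{\probsdpml(\bk)}{\bg(\bk)}=\prod_{i=1}^{\boo}\left(\frac{(\bk\onevec)_i!}{\expo{(\bk\onevec)_i\log(\bk\onevec)_i-(\bk\onevec)_i}}\prod_{j=0}^{\btt}\frac{\expo{\bk_{ij}\log\bk_{ij}-\bk_{ij}}}{\bk_{ij}!}\right),
\]
the upper bound on this ratio uses only the $e\sqrt{(\bk\onevec)_i+1}$ Stirling factor on the numerators of the $(\bk\onevec)_i!$ terms, while the lower bound uses the $1/(e\sqrt{\bk_{ij}+1})$ Stirling factor on the denominators of the $\bk_{ij}!$ terms.

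The next step is to bound every integer that appears inside these Stirling factors. Since $\bk\in\bK_{\phid}$, we have $(\bk\onevec)^T\pvec\leq 1$ entrywise, i.e. $\sum_i (\bk\onevec)_i \pp_i(k)\leq 1$ for every $k\in\otmd$. The multidimensional minimum probability lemma (\Cref{lemmind}) combined with our probability discretization guarantees $\pp_i(k)\geq 1/(2\nkk^2)$ for all $k$, so for each $i$
\[
(\bk\onevec)_i \leq \min_{k\in\otmd} \frac{1}{\pp_i(k)} \leq \min_{k\in\otmd}2\nkk^2,
\]
and hence $\log(1+(\bk\onevec)_i)=O(\min_{k}\log\nkk)$. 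The same bound applies to every $\bk_{ij}\leq(\bk\onevec)_i$. I will crudely upper bound $\min_k\log\nkk\leq\prod_k\log\nkk$ (valid when each $\log\nkk\geq 1$) in order to match the form appearing in the lemma statement.

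Putting the pieces together, the upper bound becomes
\[
\frac{\probsdpml(\bk)}{\bg(\bk)}\leq \prod_{i=1}^{\boo} e\sqrt{1+(\bk\onevec)_i}\leq \expo{O(\prod_{k=1}^{\md}\log\nkk)\cdot \boo}=\expo{O\!\left(\prod_{k=1}^{\md}\frac{\log^2\nkk}{\epsok}\right)},
\]
using $\boo=\prod_k\book=\prod_k O(\log\nkk/\epsok)$. For the lower bound,
\[
\frac{\probsdpml(\bk)}{\bg(\bk)}\geq \prod_{i=1}^{\boo}\prod_{j=0}^{\btt}\frac{1}{e\sqrt{1+\bk_{ij}}}\geq \expo{-O(\prod_{k=1}^{\md}\log\nkk)\cdot\boo\bttpo},
\]
and combining with $\boo\btt=\prod_k O(\log^2\nkk/(\epsok\epstk))$ this yields the claimed $\expo{-O(\prod_{k}\log^3\nkk/(\epsok\epstk))}$ factor.

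There is no serious obstacle here — the proof is essentially identical to the one-dimensional case. The only subtlety worth checking carefully is the bound $(\bk\onevec)_i\leq \min_k 2\nkk^2$, which requires confirming that the multidimensional variant of the minimum probability lemma yields the coordinatewise lower bound $\pp_i(k)\geq 1/(2\nkk^2)$ needed to convert the entrywise constraint $(\bk\onevec)^T\pvec\leq 1$ into a per-coordinate bound on $(\bk\onevec)_i$. Once this is in hand, the rest is bookkeeping of the form $\boo=\prod_k\log\nkk/\epsok$ and $\btt=\prod_k\log\nkk/\epstk$.
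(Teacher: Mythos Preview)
Your proposal is correct and follows essentially the same approach as the paper: apply the weak Stirling bound $1\leq m!/\expo{m\log m-m}\leq e\sqrt{m+1}$ to the ratio $\probsdpml(\bk)/\bg(\bk)$, bound $(\bk\onevec)_i\leq\min_{k}2\nkk^2$ via \Cref{lemmind} and the constraint $(\bk\onevec)^T\pvec\leq 1$, and then absorb the resulting $\boo$ and $\boo\btt$ factors using $\boo=\prod_k O(\log\nkk/\epsok)$, $\btt=\prod_k O(\log\nkk/\epstk)$. Your identification of the one subtle step (the coordinatewise lower bound on $\pp_i(k)$) is exactly what the paper also singles out.
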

\begin{proof}
 By Stirling's approximation for all integer $n \geq 1$:
 $$\sqrt{2\pi}\leq \frac{n!}{\sqrt{n}\expo{n\log n-n}} \leq e $$
 We slightly use a weaker version of this inequality that holds all integers $n\geq 0$,
 $$1\leq \frac{n!}{\expo{n\log n-n}} \leq e \sqrt{n+1}$$
 \begin{align*}
 \frac{\probsdpml(\bk)}{\bg(\bk)}&=\prod_{i=1}^{\boo}\left( \frac{(\bk \onevec)_i!}{\expo{(\bk \onevec)_i\log (\bk \onevec)_i-(\bk \onevec)_i}} \prod_{j=0}^{\btt}\frac{\expo{\bk_{ij}\log \bk_{ij} -\bk_{ij}}}{\bk_{ij}!} \right)\\
& \leq \prod_{i=1}^{\boo}e \sqrt{1+(\bk \onevec)_i} \leq  \expo{O\left(\prod_{k=1}^{\md}\frac{\log^2 \nkk}{\epsok} \right)}
\end{align*}
In the final inequality we used the fact that each $i \in \otboo$, $(\bk \onevec)_{i}\leq \min_{k \in \otmd}2\nkk^{2}$ (Lemma \ref{lemmind} combined with the constraint $(\bk\onevec)^{T} \pvec\leq 1$ ensures this fact) and substituted the value of $\boo$. Also,
 \begin{align*}
 \frac{\probsdpml(\bk)}{\bg(\bk)} & \geq \prod_{i=1}^{\boo}\prod_{j=0}^{\btt}\frac{\expo{\bk_{ij}\log \bk_{ij} -\bk_{ij}}}{\bk_{ij}!} \geq \prod_{i=1}^{\boo}\prod_{j=0}^{\btt} \frac{1}{e\sqrt{1+\bk_{ij}}}\\
 &\geq \expo{-O\left(\prod_{k=1}^{\md}\frac{\log^3 \nkk}{\epsok \epstk} \right)}
 \end{align*}
\end{proof}

A key fact about function $\bg(\bk)$ is that it is log-concave, so we can apply optimization machinery from convex optimization.
 
 \begin{restatable}{lemma}{lemmalogconcaved}
 Function $\bg(\bk)$ is log-concave in $\bk$.
 \end{restatable}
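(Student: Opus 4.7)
The plan is to reduce the proof to the one-dimensional case essentially verbatim, since the only structural difference between the multidimensional and one-dimensional definitions of $\bg(\bk)$ is the linear ``probability'' term, and the combinatorial terms involving $\bk$ alone are identical.

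First I would take logarithms and write
\[
\log \bg(\bk) = \tr(\log \pp^{T}\bk \nn) + \sum_{i=1}^{\boo}(\bk \onevec)_i \log (\bk \onevec)_i-\sum_{i=1}^{\boo}\sum_{j=0}^{\btt}\bk_{ij}\log \bk_{ij}.
\]
Observe that the first term $\tr(\log \pp^{T}\bk \nn)$ is linear in the entries of $\bk$, because $\log \pp$ and $\nn$ are constants and the trace is a linear functional. Since log-concavity is preserved under multiplication by an exponential of a linear function, it suffices to show that the remaining piece
\[
\bh(\bk) \defeq \sum_{i=1}^{\boo}\sum_{j=0}^{\btt}\bk_{ij}\log \bk_{ij} - \sum_{i=1}^{\boo}(\bk \onevec)_i \log (\bk \onevec)_i
\]
is convex in $\bk$.

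Next I would decompose $\bh$ along rows. Writing $\bk_i \in \R^{\bttpo}$ for the $i$th row of $\bk$, we have $(\bk \onevec)_i = \bk_i^{T}\onevec$ and
\[
\bh(\bk) = \sum_{i=1}^{\boo} \Big( \sum_{j=0}^{\btt} \bk_{ij}\log \bk_{ij} - (\bk_i^{T}\onevec)\log(\bk_i^{T}\onevec) \Big) = \sum_{i=1}^{\boo} \bh_i(\bk_i),
\]
where each $\bh_i$ is exactly the per-row function that Lemma~\ref{lemfconvex} proves to be convex on $\R^{\bttpo}_{\geq 0}$. Since convexity is preserved under summation over disjoint blocks of variables, $\bh$ is convex, so $-\bh$ is concave, and hence $\log \bg(\bk)$ is a sum of a linear function and a concave function, which is concave. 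This yields log-concavity of $\bg(\bk)$.

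I do not expect a real obstacle: the multidimensional parameters $\pp$ and $\nn$ enter $\bg$ only through the linear trace term, so the convexity analysis is completely inherited from the single-dimensional Lemma~\ref{lemfconvex} and its invocation in the proof of the one-dimensional log-concavity lemma. The one small point to be careful about is the boundary behavior at $\bk_{ij} = 0$ or $(\bk\onevec)_i = 0$, which we handle with the standard convention $0\log 0 = 0$; the function $\bh_i$ extends continuously to the closed nonnegative orthant and convexity is preserved under this extension, so no additional argument is needed.
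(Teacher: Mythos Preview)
Your proof is correct and follows essentially the same approach as the paper: take logs, note the trace term is linear in $\bk$, and reduce the remaining combinatorial piece to a row-wise sum of the convex functions handled by Lemma~\ref{lemfconvex}. The paper's own proof is terser (it simply cites Lemma~\ref{lemfconvex} for the concavity of the non-linear part), but the content is identical.
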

\begin{proof}
Taking $\log$ on both sides of \Cref{contsdpmlorigd} we get,
$$\log \bg(\bk) =\tr(\log \pp^{T}\bk \nn) + \sum_{i=1}^{\boo}(\bk \onevec)_i \log (\bk \onevec)_i-\sum_{i=1}^{\boo}\sum_{j=0}^{\btt}\bk_{ij}\log \bk_{ij}$$
The first term $\tr(\log \pp^{T}\bk \nn)$ is linear in $\bk$ and refer \Cref{lemfconvex} for the concavity of the second term. Combining both we get, $\log \bg(\bk)$ is a sum of linear plus concave term and is therefore concave. Therefore, the function $\bg(\bk)$ is $\log$ concave.
\end{proof}
 
Maximizing log concave objective function $\bg(\cdot)$ over the relaxed convex set $\bK^{f}_{\phid}$ is a convex optimization problem and can be solved efficiently. Below is the convex relaxation of our SDPML objective which can be solved efficiently as summarized by our next theorem.
\begin{equation}\label{eq:convoptoriginald}
\argmax_{\bk\in \bK^{f}_{\phid}}\log \bg(\bk)
\end{equation}

\begin{restatable}[Solver for convex relaxation to SDPML]{thm}{theoremcuttingplaned}
\label{thm:cutting_plane_used}
Optimization problem \ref{eq:convoptoriginald} can solved in time $\runningtime$
\end{restatable}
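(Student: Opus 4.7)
The plan is to mirror the one dimensional argument in \Cref{app:cuttingplane}, reducing the problem to the framework of \cite{LSW15} and carefully adjusting the parameters to the $\md$ dimensional setting. First, reformulate the problem as
\[
\argmax_{(\bk,\st)\in \conset} \st \quad \text{subject to} \quad (\bk^T\onevec)_{\otbtt}=\Fd \text{ and } (\bk\onevec)^T \pvec \leq \onevec,
\]
where $\conset\defeq\{(\bk,\st)~|~\fnf(\bk)\geq \st \text{ and } \st\geq -\lt\}$ and $\fnf(\bk)\defeq \log\bg(\bk) - \frac{\reg}{n^3}\|\bk\|_F^2$ is the regularized objective (the regularization is negligible on the feasible set but keeps $\|\bk\|_F$ bounded outside). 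The equality and inequality constraints are encoded by a matrix $\ma$ with $\btt+\md$ rows: $\btt$ type equalities plus $\md$ pseudo-distribution constraints, one per dimension (unlike the single scalar probability constraint in the one dimensional case, here $(\bk\onevec)^T\pvec \leq \onevec$ is a $\md$-tuple inequality).

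Next, as in the one dimensional case, implement a $\delta$-2nd-order-optimization oracle for $\conset$, i.e.\ solve
\[
\max_{(\bk,\st)\in\conset} \vd\cdot \bk + \sd\st - \lambda(\|\bk\|_F^2+\st^2).
\]
I would introduce the Lagrangian $\fnga(\bk,\st) = \vd\cdot \bk + \sd\st - \lambda(\|\bk\|_F^2+\st^2) + \alpha(\fnf(\bk)-\st)$ and its supremum $\fnHa(\alpha)$. The analogues of Lemmas \ref{lem:gradg}, \ref{lem:cont}, \ref{lem:inc}, \ref{cor:staypositive}, \ref{lem:xbounded} and \ref{cor:bounded} go through essentially verbatim because they only use smoothness, strict concavity, and boundedness on the feasible set; the only change is that the $\md$ probability constraints each contribute to the penalty. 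Binary search over $\alpha\in (0,\bamaxo]$ then locates a value where the constraint slack $\fnf(\bk^{(\alpha)})-\st^{(\alpha)}$ is $\Theta(\delta/\alpha)$. Inside each binary search step, one must solve the unconstrained KKT system, which (as in \Cref{lem:solvealpha}) reduces to a one dimensional monotone equation of the form $\sa^{\sK}\vy=\sbb_{ij}$ per entry, solvable by nested binary search; this remains unchanged in higher dimensions because the log-sum-exp structure of $\fnf$ is identical entrywise.

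The main new obstacle is bounding the smallest singular value of $\ma\ma^T$, since now the bottom block has $\md$ rows of the form $(\pp(k)^T,\dots,\pp(k)^T)$ (one per dimension). Specifically, $\ma\ma^T$ has a block structure
\[
\ma\ma^T = \begin{pmatrix} \boo I_\btt & M \\ M^T & \bttpo\,\pvec^T\pvec \end{pmatrix},
\]
where $M\in\R^{\btt\times \md}$ with $M_{j,k}=\|\pvec(k)\|_1$ and $\pvec^T\pvec\in\R^{\md\times\md}$ is the Gram matrix of the columns of $\pvec$. The analogue of \Cref{lem:eigenlower} follows by a Schur complement / rank $\md$ decomposition: any eigenvector orthogonal to the all-ones vector in the first block contributes eigenvalue $\boo$, and the remaining $\md+1$ eigenvalues come from a small $(\md+1)\times(\md+1)$ problem whose determinant is bounded below using $\|\pvec(k)\|_2^2\geq \frac{1}{\md}\|\pvec(k)\|_1^2/\bttpo$ together with $\pvec(k)_i \geq \Omega(1/\nkk^2)$; the resulting lower bound is $\Omega(\boo)$ up to $\prod_k \polylog \nkk$ factors, which is exactly what the $\otilde$ notation absorbs.

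Finally, plug the oracle and the eigenvalue bound into Theorem 56 of \cite{LSW15_arxiv} with $r=\btt+\md$ rows. The number of oracle calls is $\otilde(r)$ and each oracle call costs $\otilde(\boo\btt)$, giving total time
\[
\otilde\bigl(\boo\btt^2\bigr) + \otilde\bigl(\btt^3\bigr) = \otilde\Bigl(\prod_{k=1}^{\md}\frac{1}{\epsok (\epstk)^2}\Bigr) + \otilde\Bigl(\prod_{k=1}^{\md}\frac{1}{(\epstk)^3}\Bigr),
\]
which matches the claimed running time $\runningtime$ (with $\otilde$ absorbing all $\prod_k \polylog\nkk$ factors as well as the $\log(nM/\delta)$ factors from the cutting plane machinery).
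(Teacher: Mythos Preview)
Your overall strategy is exactly the paper's: reuse the one-dimensional cutting-plane machinery from \Cref{app:cuttingplane} (oracle via Lagrangian in $\alpha$, binary search, nested monotone root-finding), observe that the only new ingredient in $\md$ dimensions is the singular-value lower bound for the constraint matrix $\ma$, and then plug into Theorem~56 of \cite{LSW15_arxiv} with $r=\btt+\md$ rows. The block structure you write for $\ma\ma^\top$ and the observation that eigenvectors with $x\perp\onevec_{\btt}$ contribute eigenvalue $\boo$ both match the paper.

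The gap is in your treatment of the remaining eigenvalues. You reduce correctly to a small problem whose bottom-right block is $\bttpo\,\pvec^\top\pvec$, but then you try to control it via the diagonal bound $\|\pvec(k)\|_2^2\geq \frac{1}{\md}\|\pvec(k)\|_1^2/\bttpo$. That inequality (besides being dimensionally off---$\pvec(k)\in\R^{\boo}$, not $\R^{\bttpo}$) only controls column norms; it says nothing about how close the $\md$ columns of $\pvec$ are to being linearly dependent, which is precisely what governs $\lambda_{\min}(\pvec^\top\pvec)$. A priori the columns could be nearly parallel and the smallest eigenvalue exponentially small, and nothing in your sketch rules this out.

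The paper handles this with a structural argument you are missing: it exploits that the rows of $\pvec$ form a \emph{tensor grid}, i.e.\ every $\md$-tuple $\bigl((1+\epsoo)^{1-i_1},\dots,(1+\epsod)^{1-i_\md}\bigr)$ appears. For each fixed $k$ one can partition the rows into $\prod_{k'\neq k}\boo_{k'}$ blocks of size $\boo_k$ on which every other column is constant (hence lies in $\mathrm{span}(\onevec)$), while column $k$ runs through its full geometric progression. This gives a direct lower bound on the norm of the projection of $\pvec(k)$ onto $\mathrm{span}\{\pvec(k'):k'\neq k\}^\perp$, namely $\Omega(\boo/\log^2\nkk)$; via $\lambda_{\min}(\mg)\geq 1/\tr(\mg^{-1})$ applied to the Gram matrix this yields $\lambda_{\min}(\pvec^\top\pvec)\geq\Omega\bigl(\boo/\sum_k\log^2\nkk\bigr)$. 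Feeding that into the small eigenvalue problem (the paper does this via the SVD of $\pvec$ and a monotonicity argument in $\lambda$, not a Schur complement) gives $\lambda_{\min}(\ma\ma^\top)\geq\Omega(\boo/\btt)$ up to polylog factors. Once you supply this tensor-grid argument, the rest of your sketch goes through and the running time bound follows exactly as you state.
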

\begin{proof}
The optimization problem \ref{eq:convoptoriginald} is already in the form of optimization problem studied for one dimension in \Cref{app:cuttingplane}. To invoke the result in \Cref{app:cuttingplane} all we need is a lower bound on the minimum eigenvalue of matrix $\ma^{T}\ma$, where $\ma$ is the constraint matrix when the optimization problem \ref{eq:convoptoriginald} is written in the vector form (described in \Cref{app:cuttingplane}). We state this constraint matrix $\ma$ for the optimization problem \ref{eq:convoptoriginald} and provide lower bound on the minimum eigenvalue of matrix $\ma^{T}\ma$ in \Cref{app:eigenlb}. The number of variables in the optimization problem \ref{eq:convoptoriginald} is $\boo \times \btt$ and the number of constraint is $\btt +\md\leq 2\btt$. In this notation of \Cref{app:cuttingplane}, the value of parameters $b_{1}=\boo$ and $b_{2}=\btt$ and the running time we get for the optimization problem \ref{eq:convoptoriginald} is that stated in the lemma statement.
\end{proof} 

 \subsection{Algorithm and Runtime Analysis}
In this section we give an algorithm to find a $\md$-distribution that approximates PML objective and our analysis in previous sections suggests that it suffices to find a $\md$-distribution that approximates SDPML objective, which we replaced by a convex proxy. We now present an algorithm that takes an optimal solution to this convex proxy and produces a $\md$-distribution that approximates PML objective. 
Recall that $\bK^{f}_{\phid}\defeq\{\bk \in \R^{\boo \times \bttpo}~\big|~ (\bk^{T}\onevec)_{\otbtt} = \phid, \text{ and } (\bk\onevec)^{T} \pvec \leq  1 \}$.
 \begin{algorithm}[H]
\caption{Algorithm for approximate PML}\label{euclidd}
\begin{algorithmic}[1]
\Procedure{Approximate PML}{}
\State Solve $\bk'=\argmax_{\bk\in \bK^{f}_{\phid}}\bg(\bk)$. 
\State Round the fractional solution $\bk'$ to a integral solution $\bk \in \bK_{\phid}$.
\State Construct the discrete $\md$-pseudodistribution $\bq_{X}$ corresponding to $\bk$.
\State \textbf{return} $\frac{\bq_{X}}{\|\bq_{X}\|_1}$
\EndProcedure
\end{algorithmic}
\end{algorithm}

\begin{algorithm}[H]
\caption{Rounding algorithm}
\begin{algorithmic}[1]
\Procedure{Rounding}{$\bk'$}
\State Define $\bk=\mzero^{(\boo+\btt)\times \bttpo}$
\State $\bk_{ij}= \floor{\bk'_{ij}} \in \Z \quad \forall i \in \otboo,j \in \ztbtt$\Comment{$\bk \notin \bK_{\phid}$ and we fix it next.}
\For{$j \in \otbtt$}
\State Create a new $\md$-level set $\pvec_{\boo+j}=\frac{\sum_{i=1}^{\boo}(\bk'_{ij}-\bk_{ij})\pvec_{i}}{\sum_{i=1}^{\boo}(\bk'_{ij}-\bk_{ij})}$
\State Assign $\bk_{\boo+j,j}=\sum_{i=1}^{\boo}(\bk'_{ij}-\bk_{ij})=\Fd_{j}-\sum_{i=1}^{\boo}\bk_{ij} \in \Z$
\EndFor
\State \textbf{return} $\bk$
\EndProcedure
\end{algorithmic}
\label{alg:roundingd}
\end{algorithm} 

The solution $\bk$ returned by the rounding procedure is defined on an extended discretized $\md$-probability space $\bP'$, where $\bP' \defeq  \bP \cup \{\pvec_{\boo+j} \}_{j \in \otbtt}$. To derive the relation between solution $\bk$ and PML objective value we need to extend some definitions studied earlier. First, we define $\pvecext$ as the matrix whose rows are exactly the elements of $\bP'$ and we call it the extended $\md$-level set matrix. Note we still use $\pvec_{i}$ for all $i\in [1,\boo+\btt]$ to refer rows of $\pvecext$.
Further, for any $\md$-pseudodistribution $\bq$ with $\bq_{x} \in \bP'$ for all $x \in \bX$ (we call it extended discrete $\md$-pseudodistribution) and discrete $\md$-profile $\phid$, we first define following extensions of sets $\bK_{\bq,\phid}$ and $\bK_{\phid}$,
$$\bK^{ext}_{\bq,\phid}\defeq\{\bk \in \Z^{(\boo+\btt) \times \bttpo}~\Big|~ ~(\bk^{T}\onevec)_{\otbtt} = \Fd, \text{ and } \bk \onevec=\levelq \}$$
$$\bK^{ext}_{\phid}\defeq\{\bk \in \Z^{(\boo+\btt) \times \bttpo}~\Big|~ ~(\bk^{T}\onevec)_{\otbtt} = \Fd, \text{ and } \pvecext^{T} \bk \onevec \leq  1 \}$$
where $\levelq \in \R^{\boo+\btt}$ and $\levelq_i$ denote the number of domain elements with $\md$-level set $\pp_i \in \bP'$. 

 Further by \Cref{DPMLequalityd}, for any extended discrete $\md$-pseudodistribution $\bq$ and a discrete $\md$-profile $\phid$, the following equality holds,
\begin{equation}\label{eq:formulation3d}
\probpml(\bq,\phid) = \cphid \sum_{\bk\in \bK^{ext}_{\bq,\phid}} \prod_{i=1}^{\boo+\btt}\left(\pp_i^{(\bk \mvec)_i}\frac{(\bk\onevec)_i!}{\prod_{j=0}^{\btt}\bk_{ij}!}\right)
\end{equation}
Similarly for any $\bk \in \bK^{ext}_{\bq,\phid}$, below are the natural extension of definitions of functions $\probsdpml(\cdot)$ and $\bg(\cdot)$,
$$\probsdpml(\bk)\defeq \prod_{i=1}^{\boo+\btt}\left(\pp_i^{(\bk \mvec)_i}\frac{(\bk \onevec)_i!}{\prod_{j=0}^{\btt}\bk_{ij}!}\right) \quad \bg(\bk)\defeq\prod_{i=1}^{\boo+\btt}\left(\pp_i^{(\bk \mvec)_i}\frac{\expo{(\bk \onevec)_i \log (\bk \onevec)_i-(\bk \onevec)_i}}{\prod_{j=0}^{\btt}\expo{\bk_{ij}\log \bk_{ij}-\bk_{ij}}}\right)$$

We are now ready to analyze our rounding algorithm. First we provide some interesting properties solution $\bk$ returned by our rounding procedure satisfies,

\begin{clm}\label{clmroundingd} The solution $\bk \in \Z^{(\boo+\btt) \times \bttpo}$ returned by rounding procedure (\ref{alg:rounding}) above satisfies:
\begin{enumerate}
\item $(\bk' \onevec)_{i}-\bttpo \leq (\bk \onevec)_{i} \leq (\bk' \onevec)_{i} \quad \forall i \in \otboo$
\item $\bk \in \bK^{ext}_{\phid}$.
\end{enumerate}
\end{clm}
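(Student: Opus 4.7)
The proof will mirror that of Claim~\ref{clmrounding} in the one-dimensional setting, since the $\md$-dimensional rounding procedure differs only in that the discretized probability values $\pvec_i$ are $\md$-tuples and the constraint $\pvecext^T \bk \onevec \leq 1$ is read componentwise. The plan is to verify each part by direct calculation; I do not anticipate a real obstacle because every operation that appears (addition, scaling, and the convex-combination definition of $\pvec_{\boo+j}$) acts coordinatewise on the $\md$-tuple structure.

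For Part~(1), I would simply observe that $\bk_{ij}=\floor{\bk'_{ij}}$ for all $i\in\otboo$, $j\in\ztbtt$ forces $\bk'_{ij}-1\leq \bk_{ij}\leq \bk'_{ij}$, and summing over the $\bttpo$ indices $j\in\ztbtt$ gives the two-sided bound on $(\bk\onevec)_i$.

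For Part~(2) there are two constraints to check. The column-sum condition $(\bk^T\onevec)_{\otbtt}=\Fd$ is immediate from the assignment $\bk_{\boo+j,j}=\Fd_j-\sum_{i=1}^{\boo}\bk_{ij}$, together with the fact that every other entry of row $\boo+j$ of $\bk$ is zero by initialization.

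The main step, and the only one that uses the multidimensional structure nontrivially, is verifying $\pvecext^T\bk\onevec\leq 1$ as a componentwise $\md$-tuple inequality. The plan is to use the identity
\[
\pvec_{\boo+j}\,\bk_{\boo+j,j} \;=\; \sum_{i=1}^{\boo}(\bk'_{ij}-\bk_{ij})\,\pvec_i,
\]
which holds as an equality of $\md$-tuples by the very definition of $\pvec_{\boo+j}$ as the $(\bk'_{ij}-\bk_{ij})$-weighted average of $\{\pvec_i\}_{i=1}^{\boo}$. Substituting this into the splitting $\pvecext^T\bk\onevec = \sum_{i=1}^{\boo}\pvec_i(\bk\onevec)_i + \sum_{j=1}^{\btt}\pvec_{\boo+j}\bk_{\boo+j,j}$, the $\bk_{ij}$ contributions for $j\geq 1$ cancel and we are left with $\sum_{i=1}^{\boo}\pvec_i\bk_{i,0} + \sum_{i=1}^{\boo}\pvec_i\sum_{j=1}^{\btt}\bk'_{ij}$. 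Since $\bk_{i,0}\leq \bk'_{i,0}$ and all $\pvec_i$ are nonnegative $\md$-tuples, this quantity is dominated componentwise by $\pvec^T\bk'\onevec$, which is in turn $\leq 1$ componentwise because $\bk'\in\bK^{f}_{\phid}$. This completes the plan; every step is algebraic and passes through coordinate by coordinate, so no new ideas beyond the one-dimensional argument are required.
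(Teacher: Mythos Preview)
Your proof is correct and follows essentially the same approach as the paper's own proof of this claim. In fact, your handling of the $j=0$ column is slightly more careful: the paper's displayed chain asserts $\pvecext^{T}\bk\onevec = \pvec^{T}\bk'\onevec$ as an equality, which tacitly drops the $\sum_{i}\pvec_i\bk_{i,0}$ contribution, whereas you correctly observe that $\bk_{i,0}=\lfloor \bk'_{i,0}\rfloor\leq \bk'_{i,0}$ yields only the componentwise inequality $\pvecext^{T}\bk\onevec \leq \pvec^{T}\bk'\onevec$, which is all that is needed.
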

\begin{proof}
Claims (1) follows because $\bk'_{ij}-1\leq \bk_{ij} \leq \bk'_{ij}$ for all $i \in \otboo,j \in \ztbtt$. Now note $\sum_{i=1}^{\boo+\btt}\bk_{ij}=\sum_{i=1}^{\boo}\bk'_{ij}=\Fd_{\nn_j} \quad \forall j \in \otbtt$ because of the adjustments made by new level sets. Further, 
\begin{align*}
\pvecext^{T} \bk \onevec=\sum_{i=1}^{\boo+\btt}\pvec_{i}(\bk \onevec)_i & = \sum_{i=1}^{\boo}\pvec_{i}(\bk \onevec)_i+\sum_{j=1}^{\btt}\pvec_{\boo+j}(\bk \onevec)_{\boo+j}\\
&=\sum_{i=1}^{\boo}\pvec_{i}(\bk \onevec)_i+\sum_{j=1}^{\btt}\sum_{i=1}^{\boo}(\bk_{ij}'-\bk_{ij})\pvec_i\\
&=\sum_{j=1}^{\btt}\sum_{i=1}^{\boo}\bk_{ij}'\pvec_i =\pvec^{T} \bk' \onevec \leq  1
\end{align*}
The final inequality follows because $\bk' \in \bK^{f}_{\phid}$ and therefore $\bk \in \bK^{ext}_{\phid}$ and Claim (2) follows.
\end{proof}

The solution $\bk$ returned by (\ref{alg:roundingd}) always belongs to $\bK^{ext}_{\phid}$, further values $\probsdpml(\bk)$ and $\bg(\bk)$ are close to each other and we summarize this result in our next lemma.
\begin{restatable}{lemma}{lemroundsterd} 
\label{app:roundsterd}
For any $\bk \in \bK^{ext}_{\phid}$ returned by rounding procedure above satisfies:
\begin{equation}\label{relationd}
\expo{-O\left(\prod_{k=1}^{\md}\frac{\log^3 \nkk}{\epsok \epstk} \right)} \bg(\bk) \leq \probsdpml(\bk)\leq \expo{O\left(\prod_{k=1}^{\md}\frac{\log^2 \nkk}{\epsok}+\prod_{k=1}^{\md}\frac{\log^2 \nkk}{\epstk}\right)} \bg(\bk)
\end{equation}
 \end{restatable}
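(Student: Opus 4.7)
The proof will follow the same two-step strategy as the one-dimensional version (\Cref{app:roundster}), using the weak form of Stirling's approximation $1 \leq n!/\expo{n\log n - n} \leq e\sqrt{n+1}$ valid for all integers $n \geq 0$. First I will form the ratio
$$\frac{\probsdpml(\bk)}{\bg(\bk)} = \prod_{i=1}^{\boo+\btt}\left(\frac{(\bk\onevec)_i!}{\expo{(\bk\onevec)_i\log(\bk\onevec)_i - (\bk\onevec)_i}} \prod_{j=0}^{\btt}\frac{\expo{\bk_{ij}\log\bk_{ij}-\bk_{ij}}}{\bk_{ij}!}\right)$$
and bound each direction separately.

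For the upper bound, only the numerator factorials matter, giving
$\frac{\probsdpml(\bk)}{\bg(\bk)} \leq \prod_{i=1}^{\boo+\btt} e\sqrt{1+(\bk\onevec)_i}$.
To control each $(\bk\onevec)_i$ I will use that $\bk \in \bK^{ext}_{\phid}$ implies the coordinatewise constraint $\pvecext^T \bk\onevec \leq 1$. Combined with the fact that every extended level set $\pp_{\boo+j}(k) = \tfrac{\sum_i (\bk'_{ij}-\bk_{ij})\pp_i(k)}{\sum_i (\bk'_{ij}-\bk_{ij})}$ is a convex combination of the original $\pp_i(k)$'s and hence inherits the coordinatewise lower bound $\pp_i(k) \geq 1/(2\nkk^2)$ from \Cref{lemmind}, I conclude $(\bk\onevec)_i \leq 2\min_k \nkk^2$ uniformly in $i$. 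Each factor is therefore $O(\min_k \nkk)$, and since $\boo+\btt = \prod_k O(\log\nkk/\epsok) + \prod_k O(\log\nkk/\epstk)$, after taking logs and absorbing a single extra $\log$ factor into the exponents I obtain the claimed $\expo{O(\prod_k \log^2\nkk/\epsok + \prod_k \log^2\nkk/\epstk)}$ upper bound.

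For the lower bound, only the denominator factorials matter, giving
$\frac{\probsdpml(\bk)}{\bg(\bk)} \geq \prod_{i=1}^{\boo+\btt}\prod_{j=0}^{\btt}\frac{1}{e\sqrt{1+\bk_{ij}}}$.
Here the key observation is that by construction of \Cref{alg:roundingd} the new rows satisfy $\bk_{\boo+j,k}=0$ for all $k \neq j$; hence for $i > \boo$ only the single term with $j = i-\boo$ contributes non-trivially to the inner product. The total number of non-trivial factors is $\boo(\btt+1) + \btt = O(\boo\btt)$, and the same bound $\bk_{ij} \leq (\bk\onevec)_i \leq 2\min_k \nkk^2$ shows each contributes at most $O(\log\min_k \nkk)$ after taking logs. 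Substituting the sizes of $\boo$ and $\btt$ and absorbing one extra $\log$ factor gives the $\expo{-O(\prod_k \log^3\nkk/(\epsok\epstk))}$ lower bound.

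The main obstacle is purely the bookkeeping around the extended probability space $\bP'$: I must verify that the newly introduced $\pp_{\boo+j}$'s inherit the minimum probability guarantee of \Cref{lemmind} coordinatewise, so that the uniform bound $(\bk\onevec)_i \leq 2\min_k\nkk^2$ really does apply across the whole extended range $i \in [1,\boo+\btt]$, and that the structural zero pattern $\bk_{\boo+j,k}=0$ for $k \neq j$ from \Cref{alg:roundingd} is used correctly in the lower-bound product. Once both of these are in place, the remaining arithmetic is a direct generalization of the one-dimensional computation in \Cref{app:roundster}.
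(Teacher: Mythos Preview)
Your proposal is correct and follows essentially the same approach as the paper's own proof: form the ratio $\probsdpml(\bk)/\bg(\bk)$, bound the upper side via $\prod_{i=1}^{\boo+\btt} e\sqrt{1+(\bk\onevec)_i}$ using that each extended level set $\pp_{\boo+j}$ is a convex combination of the original $\pp_i$'s (hence inherits the coordinatewise lower bound $1/(2\nkk^2)$) together with the constraint $\pvecext^T\bk\onevec \leq 1$, and bound the lower side by exploiting the structural zeros $\bk_{\boo+j,k}=0$ for $k\neq j$ to reduce the double product to $\boo(\btt+1)+\btt$ nontrivial factors. The bookkeeping you flag is exactly what the paper does as well.
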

\begin{proof}
For all integers $n\geq 0$, recall the weaker version of sterlings approximation we used earlier ,
 $$1\leq \frac{n!}{\expo{n\log n-n}} \leq e \sqrt{n+1}$$
 Now, 
 \begin{align*}
 \frac{\probsdpml(\bk)}{\bg(\bk)}&=\prod_{i=1}^{\boo+\btt}\left( \frac{(\bk \onevec)_i!}{\expo{(\bk \onevec)_i\log (\bk \onevec)_i-(\bk \onevec)_i}} \prod_{j=0}^{\btt}\frac{\expo{\bk_{ij}\log \bk_{ij} -\bk_{ij}}}{\bk_{ij}!} \right)
 \end{align*}
 and 
 $$ \frac{\probsdpml(\bk)}{\bg(\bk)}\leq \prod_{i=1}^{\boo+\btt}e \sqrt{1+(\bk \onevec)_i} \leq \left(e\sqrt{1+\nkk^2} \right)^{\boo+\btt} \leq \expo{O(\log \nkk) (\boo+\btt)}$$
Now $\bP'=\bP \cup \{ \pp_{\boo+j}\}_{j \in \otbtt}$ and for any $j \in \otbtt$, $\pp_{\boo+j}$ is a convex combination of elements in $\bP$ and therefore $\pp_{\boo+j}(k) \geq \frac{1}{2\nkk^2}$ for all $k\in \otmd$. In the above expression we used the fact that each $i \in \otboo$, $(\bk \onevec)_{i}\leq 2\nkk^2$ for all $k \in \otmd$ (For any $i \in [1,\boo+\btt]$, $\pp_{i}(k) \geq 1/2\nkk^2$ and further combined with the constraint $\pvecext^{T}\bk \onevec\leq 1$ (because $\bk \in \bK^{ext}_{\phid}$) ensures this fact). Also,
 \begin{align*}
 \frac{\probsdpml(\bk)}{\bg(\bk)}& \geq \prod_{i=1}^{\boo+\btt}\prod_{j=0}^{\btt}\frac{\expo{\bk_{ij}\log \bk_{ij} -\bk_{ij}}}{\bk_{ij}!}\\ 
&  \geq \left(\prod_{i=1}^{\boo}\prod_{j=0}^{\btt} \frac{1}{e\sqrt{1+\bk_{ij}}}\right) \left( \prod_{j=1}^{\btt}\frac{1}{e\sqrt{1+\bk_{\boo+j,j}}}\right)\\
& \geq \left(\frac{1}{e\sqrt{1+2\nkk^2}}\right)^{\boo \bttpo +\btt}\\
 &\geq \expo{-O(\log \nkk) \boo \btt}
 \end{align*}
 In the second inequality we used the fact that solution $\bk$ returned by our rounding procedure always satisfies $\bk_{\boo+j,k}=0$ for all $j \in \otbtt$, $k \in \ztbtt$ and $k \neq j$.
\end{proof}

Using \Cref{eq:formulation3d}, for any $\bk \in \bK^{ext}_{\phid}$, if $\bq_{\bk}$ is its corresponding extended discrete $\md$-pseudodistribution, then
\begin{equation}\label{othersidedd}
\probpml(\bq_{\bk},\phid) \geq \Cphid \probsdpml(\bk)
\end{equation}

\begin{lemma}\label{lem:roundedsolnd}
The solution $\bk \in \bK^{ext}_{\phid}$ returned by \Cref{alg:roundingd} satisfies:
$$\probsdpml(\bk) \geq \expo{-O(\prod_{k=1}^{\md}\frac{\log^3 \nkk}{\epsok \epstk})}\probsdpml(\bk_{sdpml})$$
\end{lemma}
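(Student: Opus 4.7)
The plan is to mirror the one-dimensional argument of \Cref{lem:roundedsoln} almost verbatim, replacing scalar probability values with $\md$-tuple probability values and interpreting $\pp_i^{\nn_j}$ via the convention $\prod_{k=1}^{\md}\pp_i(k)^{\nn_j(k)}$ from the preliminaries. The goal is to establish the chain $\probsdpml(\bk)\gtrsim \bg(\bk)\gtrsim \bg(\bk')\geq \bg(\bk_{sdpml})\gtrsim \probsdpml(\bk_{sdpml})$, where the first and last ``$\gtrsim$'' are controlled by \Cref{app:roundsterd} and \Cref{app:sterlingsapproxd}, the third inequality uses the fact that $\bk_{sdpml}\in \bK_{\phid}\subseteq \bK^{f}_{\phid}$ and $\bk'$ maximizes $\bg$ over $\bK^{f}_{\phid}$, and the middle ``$\gtrsim$'' is the new content specific to the rounding step.

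First I would write out $\bg(\bk')$ and $\bg(\bk)$ explicitly using the extended definitions, separate each into a probability part and a counting part, and treat these two parts independently. For the probability part, the identity $(\bk'\mvec)_i=(\bk\mvec)_i+\sum_{j=1}^{\btt}\nn_j(\bk'_{ij}-\bk_{ij})$ (since $\mvec_0=(0,\dots,0)$) reduces the ratio to $\prod_{j=1}^{\btt}\prod_{i=1}^{\boo}\pp_i^{\nn_j(\bk'_{ij}-\bk_{ij})}$. To bound this by $\prod_{j=1}^{\btt}\pp_{\boo+j}^{\nn_j\bk_{\boo+j,j}}$, I apply weighted AM--GM coordinate-by-coordinate: for each fixed $j\in\otbtt$ and each dimension $k\in\otmd$,
\[
\prod_{i=1}^{\boo}\pp_i(k)^{\bk'_{ij}-\bk_{ij}}
\leq
\left(\frac{\sum_{i=1}^{\boo}(\bk'_{ij}-\bk_{ij})\pp_i(k)}{\sum_{i=1}^{\boo}(\bk'_{ij}-\bk_{ij})}\right)^{\sum_i(\bk'_{ij}-\bk_{ij})}
= \pp_{\boo+j}(k)^{\bk_{\boo+j,j}},
\]
by the very definition of the new level set $\pp_{\boo+j}$ produced by \Cref{alg:roundingd}; raising to the power $\nn_j(k)$ and taking the product over $k$ recovers the desired inequality on $\md$-tuples. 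For the counting part, the analysis is identical to the one-dimensional case because it only involves the row sums $(\bk\onevec)_i$, the entries $\bk_{ij}$, and the truncation $(\bk'\onevec)_i-\bttpo\leq(\bk\onevec)_i\leq(\bk'\onevec)_i$ from Claim~\ref{clmroundingd}, yielding $\bg(\bk)/\bg(\bk')\geq \expo{-O(\boo\bttpo\log(\max_k\nkk))}$.

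Combining these two parts gives $\bg(\bk)\geq \expo{-O(\boo\btt\log(\max_k\nkk))}\bg(\bk')$. Stringing the four inequalities together, all the multiplicative losses come from \Cref{app:sterlingsapproxd} and \Cref{app:roundsterd} together with the counting-term slack, and each one is of order $\expo{O(\prod_{k=1}^{\md}\log^{O(1)}\nkk/(\epsok\epstk))}$. Since $\boo\btt\log(\max_k\nkk)\leq \otilde(\prod_{k=1}^{\md}\log^{3}\nkk/(\epsok\epstk))$, the worst loss dominates and yields precisely the $\expo{-O(\prod_{k=1}^{\md}\log^{3}\nkk/(\epsok\epstk))}$ factor claimed in the lemma.

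The main obstacle is verifying that the weighted AM--GM step genuinely goes through in $\md$ dimensions rather than degrading with $\md$: the subtlety is that we must handle $\md$-tuple bases and $\md$-tuple exponents, but the rounding algorithm defines $\pp_{\boo+j}$ coordinate-wise as the same convex combination across all dimensions, so the coordinate-wise AM--GM bound is tight enough to glue back together after taking the product over $k\in\otmd$ without any extra $\md$-dependent loss. The remainder of the proof is a bookkeeping exercise on the exponents, completely analogous to the one-dimensional case.
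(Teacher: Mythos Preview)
Your proposal is correct and follows essentially the same approach as the paper's proof: you separate $\bg(\bk')$ into probability and counting parts, handle the probability part via a coordinate-by-coordinate weighted AM--GM (exactly as the paper does in \Cref{eq:probtermd}), handle the counting part via the row-sum truncation from Claim~\ref{clmroundingd} (matching \Cref{eq:counttermd}), and then chain through $\bg(\bk')\geq \bg(\bk_{sdpml})$ together with \Cref{app:roundsterd} and \Cref{app:sterlingsapproxd}. Your observation that the rounding algorithm defines $\pp_{\boo+j}$ via the \emph{same} convex weights in every coordinate, so that the per-coordinate AM--GM bounds reassemble without any extra $\md$-dependent loss, is precisely the point the paper exploits.
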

\begin{proof}
For any $\bk' \in \bK^{f}_{\phid}$ and $\bk \in \bK^{ext}_{\phid}$ returned by our rounding procedure below are the explicit expressions for $\bg(\bk)$ and $\bg(\bk')$:
$$ \bg(\bk)=\left(\prod_{i=1}^{\boo}\pp_i^{(\bk \mvec)_i}\frac{\expo{(\bk \onevec)_i \log (\bk \onevec)_i}}{\prod_{j=0}^{\btt}\expo{\bk_{ij}\log \bk_{ij}}}\right)\left(\prod_{j=1}^{\btt} \pp_{\boo+j}^{\nn_j\bk_{\boo+j,j} }\cdot 1\right)$$
$$ \bg(\bk')=\prod_{i=1}^{\boo}\left(\pp_i^{(\bk' \mvec)_i}\frac{\expo{(\bk ' \onevec)_i \log (\bk ' \onevec)_i}}{\prod_{j=0}^{\btt}\expo{\bk'_{ij}\log \bk'_{ij}}}\right)$$
We first bound the probability term:
\begin{equation}\label{eq:probtermd}
\begin{split}
\prod_{i=1}^{\boo}\pp_i^{(\bk' \mvec)_{i}}&=\left(\prod_{i=1}^{\boo}\pp_i^{(\bk \mvec)_i}\right)\left(\prod_{i=1}^{\boo}\pp_i^{\sum_{j=1}^{\btt}\nn_j(\bk'_{ij}-\bk_{ij})}\right)
=\left(\prod_{i=1}^{\boo}\pp_i^{(\bk \mvec)_i}\right)\left(\prod_{j=1}^{\btt} \prod_{i=1}^{\boo}\pp_i^{\nn_j(\bk'_{ij}-\bk_{ij})}\right)\\
&=\left(\prod_{i=1}^{\boo}\pp_i^{(\bk \mvec)_i}\right)\left(\prod_{j=1}^{\btt} \left(\prod_{i=1}^{\boo}\pp_i^{(\bk'_{ij}-\bk_{ij})}\right)^{\nn_j}\right)\\
&=\left(\prod_{i=1}^{\boo}\pp_i^{(\bk \mvec)_i}\right)\left(\prod_{j=1}^{\btt} \prod_{k=1}^{\md}\left(\prod_{i=1}^{\boo}\pp_i(k)^{(\bk'_{ij}-\bk_{ij})}\right)^{\nn_j(k)}\right)\\
&\leq \left(\prod_{i=1}^{\boo}\pp_i^{(\bk \mvec)_i}\right)\left( \prod_{j=1}^{\btt}\prod_{k=1}^{\md}\left(\frac{\sum_{i=1}^{\boo}\pp_i(k)(\bk'_{ij}-\bk_{ij})}{\sum_{i=1}^{\boo}(\bk'_{ij}-\bk_{ij})}\right)^{\nn_j(k)\sum_{i=1}^{\boo}(\bk'_{ij}-\bk_{ij})}\right) ~ (\because \text{AM-GM}~\forall  k)\\
&\leq \left(\prod_{i=1}^{\boo}\pp_i^{(\bk \mvec)_i}\right)\left(\prod_{j=1}^{\btt}\prod_{k=1}^{\md} \pp_{\boo+j}(k)^{\nn_j(k)\bk_{\boo+j,j}}\right)\\
&= \left(\prod_{i=1}^{\boo}\pp_i^{(\bk \mvec)_i}\right)\left(\prod_{j=1}^{\btt} \pp_{\boo+j}^{\nn_j\bk_{\boo+j,j}}\right)
\end{split}
\end{equation}
Final expression above is the probability term associated with $\bk$ and the equation above shows that our rounding procedure only increases the probability term and all that matters is to bound the counting term that we do next.
\begin{equation}\label{eq:counttermd}
\begin{split}
\frac{\bg(\bk)}{\bg(\bk')} & \geq \prod_{i=1}^{\boo}\frac{\expo{(\bk \onevec)_i \log (\bk \onevec)_i-(\bk' \onevec)_{i} \log (\bk' \onevec)_{i}}}{\prod_{j=0}^{\btt}\expo{\bk_{ij}\log \bk_{ij}-\bk'_{ij}\log \bk'_{ij}}} \geq \prod_{i=1}^{\boo}\expo{(\bk \onevec)_i \log (\bk \onevec)_i-(\bk' \onevec)_{i} \log (\bk' \onevec)_{i}}\\
&\geq \prod_{i=1}^{\boo} \expo{-O(\bttpo \log (\bk' \onevec)_{i}}) \geq \expo{-O(\prod_{k=1}^{\md}\frac{\log^3 \nkk}{\epsok \epstk})}
\end{split}
\end{equation}
In the derivation above we used (1) in Claim \ref{clmroundingd} and $(\bk' \onevec)_{i} \leq \min_{k \in \otmd}2\nkk^2$. It remains now to lower bound the quantity $\probsdpml(\bk)$:
\begin{align*}
\probsdpml(\bk) & \geq \expo{-O(\prod_{k=1}^{\md}\frac{\log^3 \nkk}{\epsok \epstk})}\bg(\bk) \geq \expo{-O(\prod_{k=1}^{\md}\frac{\log^3 \nkk}{\epsok \epstk})}\bg(\bk')\\
& \geq \expo{-O(\prod_{k=1}^{\md}\frac{\log^3 \nkk}{\epsok \epstk})}\bg(\bk_{sdpml})\geq \expo{-O(\prod_{k=1}^{\md}\frac{\log^3 \nkk}{\epsok \epstk})}\probsdpml(\bk_{sdpml})
\end{align*}

The first and second inequality follow from \Cref{app:roundsterd} and \Cref{eq:counttermd} respectively. In the third inequality we used $\bg(\bk') \geq \bg(\bk_{sdpml})$ because $\bk'$ is the optimal solution over the relaxed constraint set $\bK^{f}_{\phid}$ and finally invoked \Cref{app:sterlingsapproxd} to relate $\probsdpml$ and $\bg$.
\end{proof}

Now construct the $\md$-pseudodistribution $\pk$ corresponding to the solution $\bk$ returned by \Cref{alg:roundingd} by assigning $(\bk \onevec)_{i}$ elements to $\md$-level set $\pp_{i}$ $(\forall i \in [\boo+\btt])$. Our next theorem proves that the $\md$-distribution $\frac{\pk}{\|\pk\|_1}$ is an approximate PML $\md$-distribution.

\begin{thm}[Efficient and approximate PML for higher dimension]\label{thm:approxpmld}
Let $\md$ be a constant and $\yn$ be a $\md$-sequence of $\md$-length $\n=(\n(1),\dots,\n(\md))$. Let $\epsilon,\gamma\in \R^{1\times \md}$ be $\md$-tuples such that for each $k\in \otmd$, $\frac{1}{poly(\nkk)}<\epsok <1$, $\frac{1}{poly(\nkk)}<\epstk<1$, we can compute an $\exps{-\otilde\left(\otdisclossd+\otgrouplossepsd \right)}$-approximate PML $\md$-distribution $\bp_{approx}$ in time $\otilde\left(\writeprofile+ \runningtimed \right)$.
\end{thm}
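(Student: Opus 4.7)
The plan is to mirror the proof of the one-dimensional Theorem~\ref{thm:approxpml} step for step, substituting in the multidimensional analogues developed in Sections~\ref{sec:structured} and~\ref{subsec:convexrelaxd}. The algorithm is the natural $\md$-dimensional version: use Theorem~\ref{thm:cutting_plane_used} to solve $\bk'=\argmax_{\bk\in\bK^{f}_{\phid}}\bg(\bk)$ to high accuracy, apply the rounding procedure (\Cref{alg:roundingd}) to obtain $\bk\in\bK^{ext}_{\phid}$, construct the corresponding extended discrete $\md$-pseudodistribution $\pk$, and output $\bp_{approx}\defeq \pk/\|\pk\|_1$.

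To bound $\probpml(\bp_{approx},\phi)$ from below, I would chain the following inequalities exactly as in the one-dimensional proof. Since $\|\pk(k)\|_1\leq 1$ for every $k\in\otmd$, we have $\probpml(\bp_{approx},\phi)\geq \probpml(\pk,\phi)$. The profile discretization \Cref{lemprofilediscd} then gives $\probpml(\pk,\phi)\geq \exps{-\otilde(\sum_k \epstk\nkk)}\,\probpml(\pk,\phid)$. Because $\bk\in\bK^{ext}_{\pk,\phid}$ by construction, $\probsdpml(\bk)$ is a single summand in the expansion of $\probpml(\pk,\phid)$ given by \Cref{eq:formulation3d}, so $\probpml(\pk,\phid)\geq \cphid\,\probsdpml(\bk)$. \Cref{lem:roundedsolnd} supplies $\probsdpml(\bk)\geq \exps{-O(\prod_k \log^3\nkk/(\epsok\epstk))}\,\probsdpml(\bk_{sdpml})$, and finally \Cref{lemsinglediscd} yields $\cphid\,\probsdpml(\bk_{sdpml})\geq \exps{-\otilde(\disclossd + \grouplossepsdtilde)}\,\probpml(\bp_{pml,\phi},\phi)$. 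Multiplying these factors together, and folding all $\prod_k\log^{O(1)}\nkk$ terms (and the $O(\md)$ term, since $\md$ is constant) into $\otilde$, produces the stated $\exps{-\otilde(\otdisclossd + \otgrouplossepsd)}$ approximation guarantee.

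For the runtime, I would account for three stages. Writing down the discrete $\md$-profile $\phid$ from the $\md$-sequence $\yn$ takes $\otilde(\sum_k\nkk)=\otilde(\writeprofile)$. Solving the convex relaxation~\eqref{eq:convoptoriginald} via \Cref{thm:cutting_plane_used} costs $\otilde(\runningtimed)$. The rounding procedure only touches $(\boo+\btt)\cdot\bttpo = \otilde\bigl(\prod_k 1/(\epsok\epstk)\bigr)$ entries and is dominated by the convex optimization. Summing gives the claimed $\otilde(\writeprofile + \runningtimed)$ bound, and the parameter conditions $1/\poly(\nkk)<\epsok,\epstk<1$ ensure that every $\log$-factor hidden inside $\otilde$ is polynomial in $\log\nkk$.

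There is no genuinely new technical obstacle since all ingredients have already been established; the only thing requiring care is bookkeeping how the various $\prod_k\log^{O(1)}\nkk$ and $O(\md)$ factors accumulate across the five-step inequality chain, and verifying that they indeed fit inside the $\otilde$ notation as defined in the theorem's footnote. Once that is checked, the statement follows by direct combination.
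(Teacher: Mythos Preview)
Your proposal is correct and follows essentially the same approach as the paper: the same five-step inequality chain (normalization, \Cref{lemprofilediscd}, single-term lower bound via \Cref{eq:formulation3d}, \Cref{lem:roundedsolnd}, and \Cref{lemsinglediscd}) and the same three-stage runtime accounting. The only cosmetic differences are in how the minor additive terms in the running time (writing the discrete profile, rounding) are tallied, all of which are dominated by the convex-optimization cost and absorbed into the $\otilde$.
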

\begin{proof}
Let $\pk$ be the $\md$-pseudodistribution corresponding to solution $\bk$ returned by \Cref{alg:roundingd}. Set $\bp_{approx}=\frac{\pk}{\|\pk\|_1}$, then:
\begin{align*}
\probpml(\frac{\pk}{\|\pk\|_1},\phi) &\geq \probpml(\pk,\phi) \\
 &\geq \expo{-\otilde\left(\otmultlossd \right)} \probpml(\pk,\phid) \\
 &\geq  \expo{-\otilde\left(\otmultlossd \right)} \Cphid \probsdpml(\bk) \\
&\geq \expo{-\otilde\left(\otsdpmllossd+\otmultlossd \right)}\Cphid \probsdpml(\bk_{sdpml}) \\
& \geq \expo{-\otilde\left(\otdisclossd+\otgrouplossepsd \right) }\probpml(\bp_{pml,\phi},\phi)
\end{align*}

The first inequality follows because $\|\pk\|_1\leq 1$, second inequality from \Cref{lemprofilediscd}, third inequality follows because $\bk \in \bK^{ext}_{\pk,\phid}$ (because we constructed $\pk$ from $\bk$) and $\probsdpml(\bk)$ computes just one term in the summation over $\bK^{ext}_{\pk,\phid}$ (look at the representation of $\probpml(\pk,\phid)$ as summation over $\bK^{ext}_{\pk,\phid}$ from \Cref{othersidedd}), fourth inequality comes from \Cref{lem:roundedsolnd} and last inequality follows from \Cref{lemsinglediscd}.

The total running time of our algorithms is the following: Given a $\md$-sequence $\yn$, it takes $\otilde(\writeprofile + \prod_{k=1}^{\md}\frac{1}{\epstk})$ to write down the discrete $\md$-profile $\phid$, then we need to solve the convex optimization problem \ref{eq:convoptoriginald} which further takes $\otilde\left(\runningtimed \right)$ and our final rounding algorithm can be implemented in time $\otilde(\md\prod_{k=1}^{\md}\frac{1}{\epsok\epstk} )$ ($=O(\md\boo\btt)$). The total running time combining all three steps in summarized in the lemma statement.
\end{proof}

To simplify the expression, for each $k \in [1,\md]$ substitute $\epsok=\epstk=\nkk^{-1/(2\md+1)}$ in the theorem above and in this parameter setting we achieve our best possible approximation ratio.
\thmPMLd*

\renewcommand{\condo}{\textbf{C1}}
\renewcommand{\condt}{\textbf{C2}}
\renewcommand{\condth}{\textbf{C3}}

\subsection{Optimal sample complexity for KL divergence}\label{app:universald}
In this section we study the connection between optimal estimation of KL divergence and approximate PML $\md$-distribution. We restate theorem of \cite{ADOS16} we use earlier in one dimensional PML in terms of higher dimensional case.
 \begin{thm}[Theorem 4 of \cite{ADOS16}]\label{thmbetad}
For a symmetric property $\bff$, suppose there is an estimator $\hat{\bff}:\Phi^{\n}\rightarrow \R$, such that for any $\bp$ $\md$-distribution and observed $\md$-profile $\phi$, 
$$\bbP(|\bff(\bp)-\hat{\bff}(\phi)|\geq \epsilon) \leq \delta$$
any $\beta$-approximate PML distribution satisfies:
$$\bbP(|\bff(\bp)-\bff(\bp^{\beta}_{pml,\phi})|) \geq 2 \epsilon) \leq \frac{\delta |\Phi^{\n}|}{\beta}$$
\end{thm}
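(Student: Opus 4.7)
The plan is to reproduce, essentially verbatim in the $\md$-dimensional setting, the proof of Theorem 4 from \cite{ADOS16}. The argument is purely a combinatorial/probabilistic manipulation that uses only (i) the defining property of the estimator $\hat{\bff}$, (ii) the defining inequality of a $\beta$-approximate PML $\md$-distribution, and (iii) a two-way triangle inequality. None of these ingredients interact with the dimension $\md$, so I expect no genuine obstacle beyond bookkeeping.

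First I would apply the assumed estimator guarantee to \emph{two} distributions. Applied to $\bp$ itself, it gives that the ``bad'' set $B \defeq \{\phi \in \Phi^{\n} : |\bff(\bp) - \hat{\bff}(\phi)| \geq \epsilon\}$ has $\probpml(\bp, B) \leq \delta$. Applied to the approximate PML $\md$-distribution $\bp^{\beta}_{pml,\phi}$ for each fixed $\phi$, and then restricting the resulting sum to the single outcome $\phi' = \phi$, it yields the pointwise bound
\[
\probpml(\bp^{\beta}_{pml,\phi},\phi)\cdot \mathbb{1}\!\left[\,|\bff(\bp^{\beta}_{pml,\phi}) - \hat{\bff}(\phi)| \geq \epsilon\,\right] \;\leq\; \delta
\qquad \text{for every } \phi \in \Phi^{\n}.
\]

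Next I would let $T \defeq \{\phi : |\bff(\bp) - \bff(\bp^{\beta}_{pml,\phi})| \geq 2\epsilon\}$ and perform a case split on each $\phi \in T$. Since $|\bff(\bp) - \bff(\bp^{\beta}_{pml,\phi})| \geq 2\epsilon$, the triangle inequality forces at least one of the two events $|\bff(\bp) - \hat{\bff}(\phi)| \geq \epsilon$ (i.e.\ $\phi \in B$) or $|\bff(\bp^{\beta}_{pml,\phi}) - \hat{\bff}(\phi)| \geq \epsilon$. In the second case, the pointwise bound above gives $\probpml(\bp^{\beta}_{pml,\phi}, \phi) \leq \delta$; combining with the approximate-PML inequality $\probpml(\bp^{\beta}_{pml,\phi}, \phi) \geq \beta \cdot \probpml(\bp_{pml,\phi}, \phi) \geq \beta \cdot \probpml(\bp, \phi)$ (where the last step just uses that $\bp$ is a feasible $\md$-distribution for the PML maximization over $\phi$), we obtain $\probpml(\bp, \phi) \leq \delta/\beta$.

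Finally I would sum these two cases over $T$:
\[
\bbP\!\left(|\bff(\bp) - \bff(\bp^{\beta}_{pml,\phi})| \geq 2\epsilon \right) = \probpml(\bp, T) \leq \probpml(\bp, B) + \sum_{\phi \in T\setminus B}\frac{\delta}{\beta} \leq \delta + \frac{|\Phi^{\n}|\,\delta}{\beta},
\]
and absorb the additive $\delta$ into the $|\Phi^{\n}|\delta/\beta$ term (which dominates whenever $\beta \leq |\Phi^{\n}|$, in particular for any non-trivial regime). This matches the stated bound $\delta|\Phi^{\n}|/\beta$. The only ``hard'' step is conceptual rather than technical: one has to notice that the estimator's probabilistic guarantee can be instantiated at the single profile $\phi$ used to define $\bp^{\beta}_{pml,\phi}$, which is precisely the trick that couples the $\beta$-approximation inequality to the estimator's error budget.
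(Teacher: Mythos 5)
The paper does not actually prove this theorem; it is stated as a cited result (Theorem~4 of \cite{ADOS16}), so there is no in-paper proof to compare against, but your argument can be assessed on its own. Your overall plan is the right one and matches the known proof idea: apply the estimator guarantee both to $\bp$ (giving a bound $\probpml(\bp,B)\le\delta$ on the bad set $B$) and, pointwise at the single profile $\phi$, to $\bp^{\beta}_{pml,\phi}$ (giving $\probpml(\bp^{\beta}_{pml,\phi},\phi)\le\delta$ whenever $|\bff(\bp^{\beta}_{pml,\phi})-\hat{\bff}(\phi)|\ge\epsilon$), and then couple the two through a triangle inequality and the defining inequality of the $\beta$-approximate PML. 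None of this depends on the dimension $\md$, so the extension is indeed only bookkeeping.

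The one genuine slip is in the final summation. You derive $\probpml(\bp,T)\le\delta+\frac{|\Phi^{\n}|\,\delta}{\beta}$ and then claim to ``absorb the additive $\delta$,'' but $\delta+\frac{|\Phi^{\n}|\delta}{\beta}>\frac{|\Phi^{\n}|\delta}{\beta}$ strictly, so as written you have only shown a bound that exceeds the target by an additive $\delta$ (or a factor of $2$ after crude upper bounding). The fix is to handle case~(a) pointwise as well rather than in aggregate: for any single $\phi\in B$ you have $\probpml(\bp,\phi)\le\probpml(\bp,B)\le\delta\le\delta/\beta$, using $\beta\le1$. Combined with your case~(b) bound $\probpml(\bp,\phi)\le\probpml(\bp_{pml,\phi},\phi)\le\frac{1}{\beta}\probpml(\bp^{\beta}_{pml,\phi},\phi)\le\delta/\beta$, this gives the uniform pointwise inequality $\probpml(\bp,\phi)\le\delta/\beta$ for every $\phi\in T$. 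Summing over $T$ then gives $\probpml(\bp,T)\le|T|\,\delta/\beta\le|\Phi^{\n}|\,\delta/\beta$, which is exactly the stated bound with no slack.
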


Let $\bp$ be a $2$-distribution, meaning it is $2$ dimensional with two distributions $\bp(1)$ and $\bp(2)$. Let $B$ be such that, $\forall x \in \bX$, $\frac{\bp(1)_{x}}{\bp(2)_{x}} \leq B$. We next define two conditions under which we get the optimal samples complexity for estimating KL divergence of distributions $\bp(1)$ and $\bp(2)$. $\bullet$ $\condo$ $\epsilon$, the estimation error satisfies $\epsilon > \frac{\log^3 N}{N}$. $\bullet$ $\condt$ $B \leq \epsilon^{2.24}N^{0.24}$.

 \begin{lemma}[Theorem 5 of \cite{Acharya18}]\label{KLchange}
Suppose $\condo$ and $\condt$ hold. Let $\alpha >0$ be a fixed (small) constant. There are constant $c_{1}$ and $c_{2}$ such that if $\n=(\n(1),\n(2))$
$$\n(1) \geq c_{1} \frac{N}{\epsilon \log N} \text{ and } \n(2) \geq c_{2} \frac{N\cdot B}{\epsilon \log N}$$
Given $\n(1)$ independent samples $\yno$ from distribution $\bp(1)$ and $\n(2)$ independent samples $\ynt$ from distribution $\bp(2)$, there exists an estimator $\hat{f}$ for estimating KL divergence $KL(\bp(1),\bp(2))$ that satisfies,
$$\probpml(|KL(\bp(1),\bp(2))-\hat{f}(\yno,\ynt)| \geq \epsilon) \leq \expo{-2\epsilon^2\min\{\n(1),\n(2)\}^{1-2\alpha}}$$
\end{lemma}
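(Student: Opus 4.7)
The plan is to construct $\hat{f}$ by the \emph{best polynomial approximation} method developed in \cite{BZLV16, HJW16} for smooth functionals of two distributions, and then bound its tails by McDiarmid's inequality. Decompose
\[
KL(\bp(1),\bp(2)) \;=\; \sum_{x \in \bX} \bp(1)_x \log \bp(1)_x \;-\; \sum_{x \in \bX} \bp(1)_x \log \bp(2)_x
\]
and treat the two sums separately, since the first depends only on $\yno$ while the second is a cross-term.

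For each symbol $x$, I would partition into a \emph{plug-in regime} and a \emph{polynomial regime} according to whether $\bff(\yno,x) \geq c \log N$ (and $\bff(\ynt,x) \geq cB \log N$ for the cross term) or not. In the plug-in regime, use a Miller--Madow-type bias-corrected plug-in for $\log$; a second-order Taylor expansion gives pointwise bias $O(1/\bff(\yn,x))$, which sums to $O\!\left(N/(\n(1)\log N)\right) + O\!\left(NB/(\n(2)\log N)\right) = O(\epsilon)$ once the hypothesized sample-complexity lower bounds on $\n(1),\n(2)$ are invoked. In the polynomial regime, replace $\log(\cdot)$ by the degree-$d$ best polynomial approximant on $[0,c\log N/\n(1)]$ and on $[0,cB\log N/\n(2)]$ with $d = \Theta(\log N)$; each monomial $p^k$ has the unbiased estimator $(\bff)_k/\n^k$ (scaled falling factorial of the count), giving an unbiased estimator of the approximant. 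Chebyshev approximation on a scaled interval yields pointwise polynomial error $\exp(-\Omega(d))$, which, weighted against the probability mass, contributes at most $\epsilon$ to the total bias. Condition $\condo$ (namely $\epsilon > \log^3 N/N$) ensures that the threshold $c\log N$ separating the two regimes is meaningful for the bias budget, and the factor $B$ appears in the $\n(2)$ bound because the $\log \bp(2)_x$ term has to be approximated over an interval of length scaled by $B$.

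For concentration, I would invoke McDiarmid's bounded differences inequality. Changing a single sample in $\yno$ or $\ynt$ increments/decrements exactly one empirical count; the resulting change in each per-symbol polynomial estimator is bounded by $\max_k |a_k|/\n$ where $a_k$ are the coefficients of the polynomial approximant. Since Chebyshev coefficients can be as large as $\exp(\Omega(d))$, I would \emph{truncate} each per-symbol estimator to a range of width $O(\n^{\alpha}/\n)$; a standard Chernoff bound on the multinomial counts shows the truncation is inactive except on an event of probability at most $\expo{-\Omega(\n)}$, which is absorbed into the overall failure bound and does not affect the bias beyond $O(\epsilon)$. The per-sample sensitivity is then $c_\ast = O(\n^{\alpha}/\n)$ where $\n = \min\{\n(1),\n(2)\}$, and McDiarmid yields
\[
\bbP\bigl(\,|\hat{f}(\yno,\ynt) - \E[\hat{f}(\yno,\ynt)]| \geq \epsilon \,\bigr) \;\leq\; \expo{-\frac{2\epsilon^2}{\n c_\ast^2}} \;=\; \expo{-2\epsilon^2 \, \n^{1-2\alpha}},
\]
which combined with the $|\E[\hat{f}] - KL(\bp(1),\bp(2))| \leq \epsilon$ bias bound establishes the claimed tail inequality.

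The main obstacle is the truncation step: one must choose the clipping window small enough to get a $c_\ast = O(\n^{\alpha}/\n)$ bounded-differences constant, yet large enough that with high probability no per-symbol estimator actually hits the clip and thereby spoils the bias. Condition $\condt$ ($B \leq \epsilon^{2.24} N^{0.24}$) is what makes this feasible: it caps the length of the approximation interval for $\log \bp(2)_x$, thereby capping the Chebyshev coefficient magnitudes, so that clipping at width $\n^{\alpha}/\n$ catches only tail events of negligible probability. Working out the precise matching of the exponents $2.24$ and $0.24$ to the requirement that truncation error remains $o(\epsilon)$ is the technical heart of the argument, and mirrors the analogous coefficient-control calculations in \cite{BZLV16, HJW16, Acharya18}.
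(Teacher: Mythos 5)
The statement you set out to prove is imported, not proved, in this paper: it is quoted verbatim as Theorem~5 of \cite{Acharya18}, and the paper never supplies a proof of it --- it is used as a black box in the proof of \Cref{theoremADOSuniversald}, exactly paralleling how the one-dimensional argument cites Lemma~2 of \cite{ADOS16} (here \Cref{lemchange}). So there is no proof in the paper against which to compare your sketch.

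That said, your proposal is a sensible reconstruction of the kind of argument that underlies such results, and it is structurally consistent with how the paper handles the one-dimensional analog: there, \Cref{lemchange} supplies the bias bound and the bounded-differences constant $c\cdot n^{\alpha}/n$, and the McDiarmid step is carried out inside the proof of \Cref{theoremADOSuniversal}. Your sketch collapses both into one argument (build an estimator with $\epsilon$ bias via the two-regime polynomial-approximation technique of \cite{BZLV16, HJW16}, clip it to enforce bounded differences of order $n^{\alpha}/n$, then apply McDiarmid over the $\n(1)+\n(2)$ samples). Two things to tighten up if you want this to stand on its own. First, your McDiarmid step needs per-sample sensitivities for the two sample sets separately, and the resulting exponent is $-2\epsilon^2 / \bigl(\n(1)(c^*_1)^2+\n(2)(c^*_2)^2\bigr)$; to land exactly on $\min\{\n(1),\n(2)\}^{1-2\alpha}$ you should verify the constants absorb correctly and not just wave at $c_* = O(n^{\alpha}/n)$ with a single $n$. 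Second, and more substantively, your treatment of the cross-entropy term $\sum_x \bp(1)_x\log\bp(2)_x$ needs a genuinely bivariate estimator: you need unbiased estimates of products $p_x^j q_x^k$ from two independent count sequences, not just a one-variable monomial estimator with an interval ``scaled by $B$.'' This is exactly where the specific exponents in $\condt$ ($B\leq\epsilon^{2.24}N^{0.24}$) enter, and it is the part of the argument you flag as the technical heart; without that computation the claim that the truncation is inactive with high probability \emph{and} preserves $O(\epsilon)$ bias is not established. For the purposes of this paper, though, none of that work is required --- the lemma is cited, not derived.
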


\begin{theorem}[\cite{Das12},\cite{BPA97}]\label{thm:card}
Let $\md >1$, and $\n=(\noo,\dots ,\ndd)$. The number of $\md$-profiles of $\md$-length equal to $\n$ is upper bounded by
$$|\Phi^{\n}| \leq \expo{3\sum_{k=1}^{\md} \nkk^{\md/(\md+1)}}$$
\end{theorem}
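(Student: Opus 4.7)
The bound in \Cref{thm:card} is the natural $\md$-dimensional analogue of the classical estimate $|\Phi^n| \leq \exp(3\sqrt{n})$ used as \Cref{lem:partition} in the one-dimensional case, and I would prove it by generalizing the combinatorial counting argument behind that classical bound; the explicit constant comes from the multidimensional partition asymptotics carried out in the cited references \cite{Das12, BPA97}.

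The first step is to reformulate $\Phi^\n$ as a pure counting problem. A $\md$-profile $\phi$ of $\md$-length $\n$ is uniquely determined by the multiset of $\md$-frequencies of the domain elements, and equivalently by a finitely-supported function $\phi : \mathbb{Z}_{\geq 0}^{\md} \setminus \{\mathbf{0}\} \to \mathbb{Z}_{\geq 0}$ satisfying the vector equation $\sum_{\bff} \phi(\bff)\, \bff = \n$. Hence $|\Phi^\n|$ equals the number of such functions, i.e., the coefficient of $x_1^{\n(1)} \cdots x_\md^{\n(\md)}$ in the multidimensional partition generating function $\prod_{\bff \neq \mathbf{0}} \bigl(1 - \prod_{k=1}^\md x_k^{\bff(k)}\bigr)^{-1}$, where the product runs over $\bff \in \mathbb{Z}_{\geq 0}^\md \setminus \{\mathbf{0}\}$.

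The second (and crucial) step is to bound the number $K$ of distinct non-zero $\md$-frequencies in $\mathrm{supp}(\phi)$. Since each multiplicity in the support is at least $1$, one has $\sum_{\bff \in \mathrm{supp}(\phi)} \|\bff\|_1 \leq \sum_{\bff} \phi(\bff)\,\|\bff\|_1 = \sum_{k=1}^\md \n(k)$. On the other hand, the $K$ distinct non-zero vectors in $\mathbb{Z}_{\geq 0}^\md$ of smallest $\ell_1$-norm must occupy an $\ell_1$-ball of radius $L$ with $L^\md \gtrsim K$, giving a total $\ell_1$-norm of at least $\Omega_\md\bigl(K^{(\md+1)/\md}\bigr)$. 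Comparing these two inequalities yields $K \leq C_\md \bigl(\sum_k \n(k)\bigr)^{\md/(\md+1)} \leq C_\md \sum_k \n(k)^{\md/(\md+1)}$, where the last step uses concavity of $x \mapsto x^{\md/(\md+1)}$.

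The third step converts this support-size bound into a count of valid $\phi$'s. A saddle-point / Meinardus-type asymptotic analysis of the generating function above, carried out in \cite{Das12, BPA97}, shows that the number of valid $\phi$'s is $\expo{O_\md\bigl(\sum_k \n(k)^{\md/(\md+1)}\bigr)}$, with the explicit constant $3$ obtained by pushing the one-dimensional Hardy--Ramanujan constant through the generalized argument. The \emph{main obstacle} lies precisely here: obtaining the clean constant $3$ without stray logarithmic factors requires the detailed multidimensional partition asymptotics rather than any crude combinatorial encoding (which would give only $\expo{\widetilde{O}_\md(\sum_k \n(k)^{\md/(\md+1)})}$), and I would defer this step to the cited references, treating the resulting bound as a black-box consequence of multidimensional partition asymptotics.
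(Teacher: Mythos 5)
The paper gives no proof of this theorem; it is stated as a black-box citation to \cite{Das12,BPA97}, exactly as the one-dimensional bound $|\Phi^n|\le\exps{3\sqrt{n}}$ is cited from \cite{HR18}. Your outline is consistent with that: the reformulation of $|\Phi^{\n}|$ as the number of solutions to $\sum_{\ff}\phi(\ff)\,\ff=\n$ is correct, and the $\ell_1$-ball packing bound on the support size is valid, but as you yourself acknowledge it only yields $\exp(\widetilde{O}(\sum_k \nkk^{\md/(\md+1)}))$ on its own and the clean constant is deferred to the very references the paper cites, so you and the paper ultimately take the same route.
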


\theoremADOSuniversald*
\begin{proof}
Invoke \Cref{KLchange} with $\alpha=0.01$ and \Cref{thmbetad} with $\delta=\expo{-2\epsilon^2\min\{\n(1),\n(2)\}^{0.98}}$ we get:
\begin{align*}
\bbP\left(|\bff(\bp)-\bff(\bp^{\beta}_{pml,\phi})|\geq 2\epsilon \right) & \leq \frac{\delta |\Phi^{\n}|}{\beta}  \leq \frac{\expo{-2\epsilon^2\min\{\n(1),\n(2)\}^{0.98}}\expo{3 (\n(1)^{2/3}+\n(2)^{2/3})}}{\expo{-\otilde\left(\n(1)^{4/5} + \n(2)^{4/5} \right)}}\\
& \leq \expo{-2\epsilon^2\min\{\n(1),\n(2)\}^{0.98}}\expo{ \otilde\left(\n(1)^{4/5} + \n(2)^{4/5} \right)}\\
& \leq \expo{-2\epsilon^2 \left(\frac{N}{\epsilon \log N} \right)^{0.98}}\expo{\otilde\left(\left(\frac{BN}{\epsilon \log N}\right)^{4/5} \right)}\\
& \leq \expo{-O(N^{4/5})}
\end{align*}
In the first inequality we use \Cref{thm:card}.
\end{proof}

\newcommand{\pmlset}{\textbf{K}}
\newcommand{\setpml}{\textbf{p}^{*}_{\pmlset,\givenp,\phi}}
\newcommand{\pstar}{\textbf{p}^{*}}
\newcommand{\givenp}{\textbf{r}}

\section{Remaining proofs for multidimensional PML}
\subsection{Minimum Probability}\label{app:multipmlmind}
In this section we provide the proof for our first technical lemma which states that one can assume the minimum non-zero probability of the PML distribution is $\Omega(\frac{1}{\n(k')^2})$ by only loosing a constant factor in the PML objective value.
To show such a result we use an independent rounding algorithm described in the lemma below.
\begin{claim}\label{clm1d}
For any non-negative and non-zero $\md$-vector $\vvec$ and a $\md$-profile $\phi \in \Phi^n$,
$$\probpml(\vvec,\phi)\leq \left(\prod_{k=1}^{\md}\|\vvec(k)\|_1^{\nkk}\right) \probpml(\bp_{pml,\phi},\phi)$$
\end{claim}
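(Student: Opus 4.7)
The plan is to mimic the one-dimensional proof of Claim \ref{clm1} using the scaling behaviour of $\probpml(\vvec,\phi)$ under coordinate-wise rescaling of the $\md$-vector $\vvec$. Specifically, I would first define a normalized $\md$-distribution $\bp \in \simplexd$ by setting $\bp(k) \defeq \vvec(k)/\|\vvec(k)\|_1$ for each $k \in \otmd$. This is well-defined because $\vvec$ is non-negative and non-zero, so each $\|\vvec(k)\|_1 > 0$ (if some $\|\vvec(k)\|_1 = 0$ then $\probpml(\vvec,\phi) = 0$ whenever any $\ynk$ has length $\nkk \ge 1$, and the bound holds trivially).

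Next, I would observe that for any fixed $\md$-sequence $\yn \in \bX^{\n}$ the probability factors as
\[
\bbP(\vvec,\yn) = \prod_{k=1}^{\md} \prod_{x \in \bX} \vvec_x(k)^{\bff(\ynk,x)} = \prod_{k=1}^{\md} \|\vvec(k)\|_1^{\nkk} \prod_{x \in \bX} \bp_x(k)^{\bff(\ynk,x)} = \left(\prod_{k=1}^{\md}\|\vvec(k)\|_1^{\nkk}\right) \bbP(\bp,\yn),
\]
using $\sum_{x \in \bX} \bff(\ynk,x) = \nkk$ for each coordinate $k$. Summing this identity over all $\md$-sequences with $\Phi(\yn) = \phi$ and invoking the definition of $\probpml$ (\Cref{eqpml1d}) yields
\[
\probpml(\vvec,\phi) = \left(\prod_{k=1}^{\md}\|\vvec(k)\|_1^{\nkk}\right) \probpml(\bp,\phi).
\]

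Finally, since $\bp \in \simplexd$ is a valid $\md$-distribution, the definition of the PML $\md$-distribution gives $\probpml(\bp,\phi) \le \probpml(\bp_{pml,\phi},\phi)$, which combined with the displayed equality above gives the desired bound. This proof is essentially routine once one tracks the right factorization; there is no genuine obstacle, and the argument is a direct lift of the one-dimensional version via the coordinate-wise structure of the $\md$-dimensional probability.
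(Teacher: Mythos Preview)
Your proof is correct and is essentially the same as the paper's: the paper writes the identical argument in one line using its $\md$-tuple notation, namely $\probpml(\vvec,\phi) = \|\vvec\|_1^{\n}\,\probpml(\vvec/\|\vvec\|_1,\phi) \le \|\vvec\|_1^{\n}\,\probpml(\bp_{pml,\phi},\phi)$, which unpacks to exactly your factorization and final inequality. Your explicit handling of the degenerate case $\|\vvec(k)\|_1 = 0$ is a nice addition that the paper leaves implicit.
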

 \begin{proof}
 $$\probpml(\vvec,\phi) = \|\vvec\|_1^{\n}\probpml\left(\frac{\vvec}{\|\vvec\|_1},\phi\right)\leq \left(\prod_{k=1}^{\md}\|\vvec(k)\|_1^{\nkk}\right) \probpml(\bp_{pml,\phi},\phi)$$
 \end{proof}

For notational convenience we need the following definition of $\pmlset$-profile maximum likelihood $\md$-distribution.
\begin{defn}
For any set $\pmlset \subset \otmd$, $\md$-distribution $\givenp$ and profile $\phi \in \Phi^{\n}$, the $(\pmlset,\givenp)$-profile maximum likelihood $\md$-distribution denote by $\setpml$ is,
$$\setpml=\argmax_{\{\bp \in \simplexd ~|~ \forall k \in \pmlset , \bp(k)=\givenp(k) \}} \probpml(\bp,\phi)$$
\end{defn}

 \begin{restatable}{lemma}{lemmaminK}
\label{lemminK}
For any set $\pmlset \subset \otmd$, $\md$-distribution $\givenp$, index $k' \notin \pmlset$ and profile $\phi \in \Phi^{\n}$, there exists a $\md$-distribution $\bp'' \in \simplexd$ such that, 
\[
\bullet \min_{\{x \in \bX:\bpd''(k')_x \neq 0\}}\bpd''(k')_x \geq \frac{1}{2\n(k')^2}
~\bullet ~
\probpml(\bpd'',\phi) \geq \expo{-6}\probpml(\pstar_{\pmlset,\givenp,\phi},\phi)
~\bullet~
\bp''(k)=\givenp(k)~\forall k \in \pmlset
\]
\end{restatable}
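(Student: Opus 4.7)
The plan is to reduce to the one-dimensional argument used in the proof of \Cref{lemminmain} by applying the independent rounding only to the $k'$th coordinate of $\pstar \defeq \setpml$, and leaving every other coordinate fixed. Define a randomized $\md$-vector $\bp$ by setting $\bp(k) = \pstar(k)$ for all $k \neq k'$, and for the $k'$th coordinate, independently for each $x \in \bX$ with $\pstar(k')_x < 1/\n(k')^2$, draw $\bp(k')_x$ to be $1/\n(k')^2$ with probability $\n(k')^2\,\pstar(k')_x$ and $0$ otherwise; leave $\bp(k')_x = \pstar(k')_x$ for $x$ with $\pstar(k')_x \geq 1/\n(k')^2$. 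By construction $\bp$ agrees with $\givenp$ on $\pmlset$, so every realization lies in the class over which $\pstar$ is the maximizer.

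Since only the $k'$th coordinate is randomized, factoring the deterministic contributions in \Cref{eqpml1d} gives
\[
\E[\probpml(\bp,\phi)] = \sum_{\{\yn : \Phi(\yn)=\phi\}} \Big(\prod_{k \neq k'}\prod_x \pstar(k)_x^{\bff(\ynk,x)}\Big)\cdot \E\Big[\prod_x \bp(k')_x^{\bff(\yn(k'),x)}\Big] \geq \probpml(\pstar,\phi),
\]
where the inequality uses $\E[\bp(k')_x^i] \geq \pstar(k')_x^i$ for all integers $i \geq 1$, exactly as in the proof of \Cref{lemminmain}. For the matching upper bound, scaling the $k'$th coordinate back to a distribution and invoking the optimality of $\pstar$ in the $(\pmlset,\givenp)$-class yields
\[
\probpml(\bp,\phi) \leq \|\bp(k')\|_1^{\n(k')}\, \probpml(\pstar,\phi).
\]

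I would then repeat the Chernoff/integration step of the one-dimensional proof with $n$ replaced by $\n(k')$: converting the $\bp(k')_x$ to Bernoulli random variables and applying Chernoff to $\|\bp(k')\|_1 - \mu_{\bS}$, then integrating the bound above against the tail distribution of $\|\bp(k')\|_1$, shows that conditioning on $\|\bp(k')\|_1 \leq 1 + 3/\n(k')$ captures at least $1/4$ of the expectation. Hence there is a realization $\bp'$ with $\|\bp'(k')\|_1 \leq 1 + 3/\n(k')$, minimum nonzero entry in $\bp'(k')$ at least $1/\n(k')^2$, and $\probpml(\bp',\phi) \geq \tfrac{1}{4}\probpml(\pstar,\phi)$. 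Normalizing only the $k'$th coordinate (set $\bp''(k') = \bp'(k')/\|\bp'(k')\|_1$ and $\bp''(k) = \bp'(k)$ for $k \neq k'$) yields $\bp'' \in \simplexd$ agreeing with $\givenp$ on $\pmlset$, with minimum nonzero entry in $\bp''(k')$ at least $\frac{1}{\n(k')^2(1+3/\n(k'))} \geq \frac{1}{2\n(k')^2}$, and
\[
\probpml(\bp'',\phi) \geq \left(1+\tfrac{3}{\n(k')}\right)^{-\n(k')}\tfrac{1}{4}\probpml(\pstar,\phi) \geq \tfrac{\expo{-3}}{4}\probpml(\pstar,\phi) \geq \expo{-6}\probpml(\pstar,\phi).
\]

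The main obstacle is confirming that the Chernoff/integration machinery of \Cref{lemminmain} transfers cleanly despite the multidimensional setup; but since the randomness is confined to a single coordinate and all remaining coordinates enter the PML expression only as multiplicative constants, the inequality $\probpml(\bp,\phi) \leq \|\bp(k')\|_1^{\n(k')} \probpml(\pstar,\phi)$ isolates the randomness into a one-dimensional quantity, and the integration-by-parts against $\expo{-t^2/3}$ proceeds verbatim with $n \mapsto \n(k')$. Iterating this lemma $\md$ times (each time adding the freshly bounded coordinate to $\pmlset$) yields \Cref{lemmind} with $\minploss = O(\md)$.
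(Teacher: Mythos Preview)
Your proposal is correct and follows essentially the same approach as the paper: randomize only the $k'$th coordinate of $\pstar$ via the same independent rounding, use $\E[\bp(k')_x^i] \geq \pstar(k')_x^i$ for the expectation lower bound, bound $\probpml(\bp,\phi) \leq \|\bp(k')\|_1^{\n(k')}\probpml(\pstar,\phi)$ by normalizing coordinate $k'$ and using optimality of $\pstar$ in the $(\pmlset,\givenp)$-class, run the identical Chernoff/tail-integration argument with $n \mapsto \n(k')$ and $c=3$, and normalize only the $k'$th coordinate at the end. Your justification of the upper bound via constrained optimality is in fact slightly cleaner than the paper's appeal to \Cref{clm1d} (which nominally yields $\bp_{pml,\phi}$ rather than $\pstar$), but the argument and constants are otherwise identical.
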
 
 
 \begin{proof}
 We do independent rounding to show the existence of such a solution. For notational convenience let $\pstar=\setpml$ and for $k' \in \otmd$ define $\bS_{k'}\defeq\{ x \in \bX ~|~ \pstar(k')_{x} <\frac{1}{\nkk^2} \}$ and we fix all the probability values in these sets next.

For all $x \in \bS_{k'}$ define a random variable $Y_x$ as follows:
 $$Y_x\defeq
 \begin{cases}
 \frac{1}{\n(k')^2} \quad \text{with probability } ~\n(k')^2\pstar(k')_{x}\\
 0 \quad ~~\text{otherwise} \\
 \end{cases}
 $$
 Clearly $\forall x \in S$, 
 \begin{equation}\label{eqS1}
 \E \left[ Y_x\right]=\pstar(k')_{x}
 \end{equation}
 and in general for any integer power $i$ of random variable $Y_x$ we have:
 \begin{equation}\label{eqS2}
 \E \left[ Y_x^i\right ] \geq  \bp_{pml,\phi}^i(x) \quad \forall i=2,\dots 
 \end{equation}
For the remaining $x \in {\bar{\bS}}_{k'}$ (${\bar{\bS}}_{k'}\defeq\bX \backslash \bS$) with $\pstar(k')_{x} \geq \frac{1}{\n(k')^2}$ we define:
 $$Z_x \defeq \pstar(k')_{x} \quad \text{with probability } ~1$$
 Define $\bY\defeq(Y_x)_{x\in \bS}$ and $\bZ\defeq(Z_x)_{x\in {\bar{\bS}}_{k'}}$. 
 $$\mu_{\bS}\defeq\E \left[ \|\bY\|_1\right]= \E \left[\sum_{x \in \bS_{k'}} Y_x\right]=\sum_{x \in \bS_{k'}}\E \left[ Y_x\right]=\sum_{x \in \bS_{k'}}p_{pml,\phi}(x)$$ 
 $$\mu_{{\bar{\bS}}_{k'}}\defeq\E \left[ \|\bZ\|_1\right]= \E \left[\sum_{x \in {\bar{\bS}}_{k'}} Z_x\right]=\sum_{x \in {\bar{\bS}}_{k'}}\E \left[ Z_x\right]=\sum_{x \in {\bar{\bS}}_{k'}}p_{pml,\phi}(x)$$
 $$\mu_{\bS}+\mu_{{\bar{\bS}}_{k'}}=1$$
 
Define $\bp$ as follows:
\[
\bp(k')=(\bY,\bZ)
\text{ and }
\bp(k)=\pstar(k)  \forall k\neq k'
\]
where $(\bY,\bZ)$ is the concatenation of random vectors $\bY$ and $\bZ$. All random variables $Y_x , Z_x$ are mutually independent and we have:

\begin{align*}
\E \left[ \probpml(\bp,\phi)\right ] \geq \probpml(\pstar,\phi)
\label{eq:expected_w1}
\end{align*}
(From Equation~\ref{eqS1},\ref{eqS2} and the fact that $Z_x$ is a constant).
 
We have a lower bound on the expected value of $\probpml(\bp,\phi)$ but this is misleading since $\bp$ may not be a $\md$-distribution as $\|\bp(k')\|_1$ could be greater than 1. Scaling norm of $\bp(k')$ to 1 could significantly reduce the value of $\probpml(\bp,\phi)$ if $\|\bp(k')\|_1$ is large. However, we show that a constant fraction of the expectation of $\probpml(\bp,\phi)$ comes from the sample space with bounded $\|\bp(k')\|_1\leq 1+\frac{c}{\n(k')}$. Here $c$ is a constant and assume $c\geq 3$. Note that:
 $$\|\bp(k')\|_1 \leq 1+\frac{c}{\n(k')} \Leftrightarrow \|\bY\|_1 + \|\bZ\|_1 \leq 1+\frac{c}{\n(k')} \Leftrightarrow \|\bY\|_1 \leq \mu_{\bS}+\frac{c}{\n(k')}$$
 The last inequality follows because $\bZ$ is a constant random vector.
\begin{equation}\label{eqcondexp}
\begin{split}
\probpml(\pstar,\phi) \leq \E \left[ \probpml(\bp,\phi)\right ]  &=\E \left[ \probpml(\bp,\phi) ~\Big| ~\|\bY\|_1 \leq \mu_{\bS}+\frac{c}{\n(k')} \right]  \bbP \left[ \|\bY\|_1 \leq \mu_{\bS}+\frac{c}{\n(k')} \right] \\
&+ \E \left[ \probpml(\bp,\phi) ~\Big| ~\|\bY\|_1 > \mu_{\bS}+\frac{c}{\n(k')} \right]\bbP \left[ \|\bY\|_1 > \mu_{\bS}+\frac{c}{\n(k')} \right]
\end{split}
\end{equation}
To argue that a constant fraction of the expectation comes from the sample space with small $\|\bp\|_1$ we need a tight upper bound for:
$$\E \left[ \probpml(\bp,\phi) ~\Big| ~\|\bY\|_1 > \mu_{\bS}+\frac{c}{\n(k')} \right]\bbP \left[ \|\bY\|_1 > \mu_{\bS}+\frac{c}{\n(k')} \right]$$
For $t \geq c$, we first upper bound the probability term: 
 $$\bbP \left[ \|\bY\|_1 \geq \mu_{\bS}+\frac{t}{\n(k')} \right]$$
 We will use Chernoff bounds here and to apply them, we convert the $Y_x$ random variables into $\{0,1\}$ Bernoulli random variables. Define $\forall x \in \bS_{k'}$,
 $$Y'_x=\n(k')^2Y_x$$
 Equivalently:
  $$Y'_x\defeq
 \begin{cases}
 1 \quad \text{with probability} ~\n(k')^2\pstar(k')_{x}\\
 0 \quad ~~\text{otherwise} \\
 \end{cases}
 $$
 Define $\bY'\defeq(Y'_x)_{x \in \bS_{k'}}$ and $\mu'_{\bS}\defeq\E \left[\|\bY'\|_1 \right]=\n(k')^2\mu_{\bS} \leq \n(k')^2$. For any $t>0$,
 $$\|\bY\|_1 \geq \mu_{\bS}+\frac{t}{\n(k')} \Leftrightarrow \|\bY'\|_1 \geq \n(k')^2\mu_{\bS}+t\n(k') \Leftrightarrow \|\bY'\|_1 \geq \mu'_{\bS}+t\n(k')$$
 
 Since $\|\bY'\|_1$ is a sum of Bernoulli random variables, by Chernoff bounds:
\begin{equation}\label{eqchernoff}
\begin{split}
\bbP \left[ \|\bY'\|_1 \geq \mu'_{\bS}+t\n(k') \right] &= \bbP \left[ \|\bY'\|_1 \geq \left( 1+\frac{t\n(k')}{\mu'_{\bS}} \right) \mu'_{\bS} \right]\leq \expo{-\frac{t^{2}\n(k')^2}{3\mu'^2_{\bS}}\mu'_{\bS}}=\expo{-\frac{t^2\n(k')^2}{3\mu'_{\bS}}}\\
& \leq \expo{\frac{-t^2}{3}}
\end{split}
\end{equation}
Note $\|\bp(k)\|_1=1$ for all $k \neq k'$ and further applying \Cref{clm1d} we get:
\begin{equation}\label{eq:massbound}
\E \left[ \probpml(\bp,\phi) ~\Big| \|\bY\|_1 \leq \mu_{\bS}+\frac{t}{\n(k')}\right] \leq \probpml(\pstar,\phi) \left( 1+\frac{t}{\n(k')}\right)^{\n(k')} \leq \probpml(\pstar,\phi) \cdot e^t \triangleq H(t)
\end{equation}
\begin{align*}
\bbP \left[ \|\bY\|_1 > \mu_{\bS}+\frac{c}{\n(k')} \right] & \E \left[ \probpml(\bp,\phi) ~\Big| ~\|\bY\|_1 > \mu_{\bS}+\frac{c}{\n(k')}  \right]\\
& = \int_{t=c}^{\infty} \E \left[ \probpml(\bp,\phi) ~\Big| \|\bY\|_1 = \mu_{\bS}+\frac{t}{n}\right] \bbP \left[ \|\bY\|_1 = \mu_{\bS}+\frac{t}{n} \right] dt\\
&\leq  \int_{t=c}^{\infty} H(t) \bbP \left[ \|\bY\|_1 = \mu_{\bS}+\frac{t}{n} \right] dt \quad \text{(By \Cref{eq:massbound})}\\
& \leq \int_{t=c}^{\infty} \frac{d H(t)}{dt} \bbP \left[ \|\bY\|_1 > \mu_{\bS}+\frac{t}{\n(k')} \right] dt\\
& = \probpml(\pstar,\phi) \int_{t=c}^\infty e^t \expo{\frac{-t^2}{3}} dt\\
&= \probpml(\pstar,\phi) \frac{\expo{3/4} \sqrt{3 \pi}}{2} \left( 1-\mathrm{erf}\left(\frac{2c-3}{2 \sqrt{3}}\right)\right)\\
&\leq 0.75 \cdot \probpml(\pstar,\phi) \quad \mbox{for $c \geq 3$}
\end{align*}
Substituting back in Equation \ref{eqcondexp} we have (for $c \geq 3$),
 $$\E \left[ \probpml(\bp,\phi) ~\Big| ~\|\bY\|_1 \leq \mu_{\bS}+\frac{c}{\n(k')} \right]  \bbP \left[ \|\bY\|_1 \leq \mu_{\bS}+\frac{c}{\n(k')} \right] \geq \frac{1}{4}\probpml(\pstar,\phi)$$
 $$\Rightarrow \E \left[ \probpml(\bp,\phi) ~\Big| ~\|\bY\|_1 \leq \mu_{\bS}+\frac{c}{\n(k')} \right] \geq \frac{1}{4}\probpml(\pstar,\phi)$$
 $$\Rightarrow \E \left[ \probpml(\bp,\phi) ~\Big| ~\|\bp\|_1 \leq 1+\frac{c}{\n(k')} \right] \geq \frac{1}{4}\probpml(\pstar,\phi)$$
 The above inequality implies existence of a $\bp'$ with $\probpml(\bp',\phi)\geq \frac{1}{4}\probpml(\pstar,\phi)$ and $ \|\bp'\|_1 \leq 1+\frac{c}{\n(k')} $. Define $\bp''$,
 $$\bp''(k')\defeq \frac{\bp'(k')}{\|\bp'(k')\|_1} \text{ and } \bp''(k)\defeq\bp'(k)=\pstar(k)~ \forall k\neq k'$$
 The above inequality further implies,
 $$\bp''(k')=\pstar(k)=\givenp(k)~ \forall k\in \pmlset$$
 $$\probpml(\bp'',\phi) = \|\bp'(k')\|_1 ^{-\n(k')} \probpml(\bp',\phi) \geq (1+\frac{c}{\n(k')})^{-\n(k')}\frac{1}{4}\probpml(\pstar,\phi)\geq \frac{\expo{-c}}{4}\probpml(\pstar,\phi)$$
 In the final inequality substitute $c=3$ and observe $\frac{\expo{-c}}{4}\geq \expo{-6}$. Also our rounding procedure always ensures that minimum non-zero entry of $\bp'$ is $\geq \frac{1}{\n(k')^2}$ that further implies a lower bound on the minimum non-zero probability value of $\bp''$ to be $\frac{1}{\n(k')^2}\frac{1}{\|\bp'\|_1}=\frac{1}{\n(k')^2}\frac{1}{1+3/ \n(k')}\geq \frac{1}{2\n(k')^2}$. Hence $\bp''$ is our final distribution satisfying the conditions of lemma.
 \end{proof}
 
\lemmamind*
\begin{proof}
The Lemma follows by induction and call to \Cref{lemminK}.

\noindent \textbf{Induction statement}: For $i \in \otmd$, let $\bp^{(i)}$ be the $\md$-distribution satisfying $\min_{x \in \bX:\bp^{(i)}(k)_x \neq 0}\bp^{(i)}_x(k) \geq \frac{1}{2\nkk^2}$ for all $k \leq i$ and is a $\expo{-6i}$-approximate PML $\md$-distribution.

\noindent \textbf{Base Case}: Apply \Cref{lemminK} by setting $\pmlset=\{\}$ an empty set, $\givenp=\bp_{pml,\phi}$ and $k'=1$. Note that $\pstar_{\pmlset,\givenp,\phi}=\bp_{pml,\phi}$ and the $\md$-distribution returned by \Cref{lemminK} is $\expo{-6i}$-approximate PML $\md$-distribution.

\noindent \textbf{Induction step for $i+1$}: Apply \Cref{lemminK} by setting $\pmlset=\{[1,i]\}$, $\givenp=\bp^{(i)}$ and $k'=i+1$. Note that $\probpml(\pstar_{\pmlset,\givenp,\phi},\phi)\geq \probpml(\bp^{(i)},\phi) \geq \expo{-6i}\probpml(\bp_{pml,\phi},\phi)$ (By induction step) and the $\md$-distribution returned by \Cref{lemminK} $\bp^{(i+1)}$ further satisfies  $\probpml(\bp^{(i+1)},\phi) \geq \expo{-6}\probpml(\pstar_{\pmlset,\givenp,\phi},\phi)\geq \expo{-6(i+1)}\probpml(\bp_{pml,\phi},\phi)$ and is therefore a $\expo{-6(i+1)}$-approximate PML $\md$-distribution. Also by \Cref{lemminK} $\bp^{(i+1)}(k)=\bp^{(i)}(k)$ for all $k\leq i$ and $\min_{x \in \bX:\bp^{(i+1)}_x \neq 0}\bp^{(i+1)}_x \geq \frac{1}{2\n(i+1)^2}$. Combining everything we satisfy induction step for $i+1$.

Set $\bp''=\bp^{(\md)}$ and by induction we get that induction step holds for $i=\md$ and the lemma statement follows.
\end{proof}

\newcommand{\mg}{\textbf{G}}
\newcommand{\mv}{\textbf{V}}
\renewcommand{\mu}{\textbf{U}}
\newcommand{\vv}{\textbf{v}}
\newcommand{\vu}{\textbf{u}}
\newcommand{\vvp}{\tilde{\textbf{v}}}
\newcommand{\ppp}{\tilde{\pp}}
\newcommand{\eigengram}{\Omega(\boo\frac{1}{\sum_{k \in \otmd} \log^2 \nkk})}

\subsection{Eigenvalue bounds for Gram matrix}\label{app:eigenlb}
Here we provide a lower bound for the minimum eigenvalue of a invertible Gram matrix. First, in Lemma~\ref{lem:gram} we provide an explicit expression for the trace of inverse of a Gram matrix. Then, leveraging that $\lambda_{\min}(\mg) \geq 1/\tr(\mg^{-1})$  we obtain Corollary~\ref{cor:grameigenlb}, our desired lower bound.

\begin{lemma}
\label{lem:gram}
For an invertible Gram matrix $\mg \in \R^{\md \times \md}$ of a set of vectors $\vv_{1},\dots, \vv_{\md} \in \R^{\boo}$.
$$\tr(\mg^{-1})=\sum_{k=1}^{\md} \frac{1}{\|\vvp_{k}\|_2^2}$$
where $\vvp_{k}$ is the orthogonal projection of $\vv_{k}$ onto $span(\vv_{1},\dots,\vv_{k-1},\vv_{k+1},\dots,\vv_{\md})^{\perp}$.
\end{lemma}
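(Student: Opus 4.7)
The plan is to give a clean linear-algebraic proof by showing the stronger statement that each diagonal entry of $\mg^{-1}$ is exactly $1/\|\vvp_k\|_2^2$, so that the trace identity follows by summing over $k$.

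First I would invoke Cramer's rule (equivalently, the cofactor formula for the inverse). Since $\mg$ is symmetric, $(\mg^{-1})_{kk} = \det(\mg_{-k,-k})/\det(\mg)$, where $\mg_{-k,-k}$ denotes the principal $(\md-1)\times(\md-1)$ submatrix obtained by deleting the $k$th row and column. Note that $\mg_{-k,-k}$ is itself the Gram matrix of the vectors $\{\vv_j\}_{j\neq k}$.

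Next I would use the standard geometric interpretation of the Gram determinant: for any collection of vectors $\vu_1,\dots,\vu_m$ in $\R^\boo$, $\det(\mathrm{Gram}(\vu_1,\dots,\vu_m))$ equals the squared $m$-dimensional volume of the parallelepiped they span. Applied here this gives
\[
\det(\mg) = \mathrm{Vol}(\vv_1,\dots,\vv_\md)^2 \quad\text{and}\quad \det(\mg_{-k,-k}) = \mathrm{Vol}(\vv_1,\dots,\vv_{k-1},\vv_{k+1},\dots,\vv_\md)^2.
\]
The classical base-times-height identity for parallelepipeds says that $\mathrm{Vol}(\vv_1,\dots,\vv_\md)$ equals $\mathrm{Vol}(\vv_1,\dots,\vv_{k-1},\vv_{k+1},\dots,\vv_\md)$ multiplied by the distance from $\vv_k$ to $\mathrm{span}(\vv_j : j\neq k)$. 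That distance is precisely $\|\vvp_k\|_2$ by definition of $\vvp_k$. Therefore $\det(\mg)/\det(\mg_{-k,-k}) = \|\vvp_k\|_2^2$, which yields $(\mg^{-1})_{kk} = 1/\|\vvp_k\|_2^2$, and summing gives the claim.

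I do not expect any serious obstacle; the only mild care needed is to verify invertibility of the denominator. Invertibility of $\mg$ is assumed, which forces $\vv_1,\dots,\vv_\md$ to be linearly independent, so each $\vvp_k \neq 0$ and the reciprocal is well-defined; moreover the subfamily $\{\vv_j\}_{j\neq k}$ is also linearly independent, hence $\det(\mg_{-k,-k}) \neq 0$ and the base-times-height factorization applies. If one prefers a slightly more self-contained route, the same identity $(\mg^{-1})_{kk} = 1/\|\vvp_k\|_2^2$ can be read off from the $\bQ\bR$-decomposition of $V = [\vv_1|\dots|\vv_\md]$ after a permutation that puts $\vv_k$ last: then $\mg = \bR^T \bR$, and the $(\md,\md)$ entry of $\bR$ is exactly $\|\vvp_k\|_2$, so the bottom-right entry of $\mg^{-1} = \bR^{-1}\bR^{-T}$ is $1/\|\vvp_k\|_2^2$.
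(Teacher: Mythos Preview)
Your proof is correct and follows essentially the same approach as the paper: both show $(\mg^{-1})_{kk}=1/\|\vvp_k\|_2^2$ via the cofactor ratio $\det(\mg_{-k,-k})/\det(\mg)$ and the identity $\det(\mv^T\mv)=\|\vvp_k\|_2^2\det(\mv_k^T\mv_k)$. The paper cites an external reference for that identity, whereas you supply the geometric volume/base-times-height justification directly, which is a slight improvement in self-containedness.
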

\begin{proof}
Recall,
\begin{equation}\label{eq:gram}
\tr(\mg^{-1})=\sum_{k=1}^{\md}(\mg^{-1})_{kk}
\end{equation}
Let $\mv \in \R^{\boo \times \md}$ be the matrix with columns $\vv_{1},\dots, \vv_{\md}$. For each $k \in \otmd$ we next give explicit formula for scalar $(\mg^{-1})_{kk}$. Let $\mv_{k} \in \R^{\boo \times (\md-1)}$ be the matrix with $k$th column removed from matrix $\mv$. From the definition of $\mg^{-1}$ and for all $k \in \otmd$, the $k$'th diagonal entry of $\mg^{-1}$ is given by:
$$(\mg^{-1})_{kk}=\frac{\det(\mv_{k}^{T}\mv_{k})}{\det(\mv^{T}\mv)}$$
Using Theorem (3) combined with Equation (3.2) in \cite{MR16} we get,
$$\det(\mv^{T}\mv)=\|\vvp_{k}\|_2^2 \det(\mv_{k}^{T}\mv_{k})  \implies (\mg^{-1})_{kk}=\frac{1}{\|\vvp_{k}\|_2^2}$$
The lemma statement follows by substituting value of $(\mg^{-1})_{kk}$ in \Cref{eq:gram}.
\end{proof}

\begin{cor}\label{cor:grameigenlb}
For an invertible Gram matrix $\mg \in \R^{\md \times \md}$ of a set of vectors $\vv_{1},\dots, \vv_{\md} \in \R^{\boo}$.
$$\lambda_{min}(\mg) \geq \frac{1}{\sum_{k=1}^{\md} \frac{1}{\|\vvp_{k}\|_2^2}}$$
where $\vvp_{k}$ is the orthogonal projection of $\vv_{k}$ onto $span(\vv_{1},\dots,\vv_{k-1},\vv_{k+1},\dots,\vv_{\md})^{\perp}$.
\end{cor}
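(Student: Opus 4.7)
The plan is to deduce the corollary directly from Lemma~\ref{lem:gram} together with standard facts about symmetric positive definite matrices. Since $\mg$ is a Gram matrix it is symmetric positive semidefinite, and by the invertibility hypothesis it is in fact positive definite. Consequently $\mg^{-1}$ is also symmetric positive definite, so all of its eigenvalues are strictly positive, and they are precisely $\{1/\lambda_i(\mg)\}_{i=1}^{\md}$ where $\lambda_i(\mg)$ denote the eigenvalues of $\mg$.

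The key observation is that for any symmetric positive definite matrix the largest eigenvalue is bounded above by the trace (since the trace is the sum of the eigenvalues, all of which are positive). Applying this to $\mg^{-1}$ yields
\[
\frac{1}{\lambda_{\min}(\mg)} \;=\; \lambda_{\max}(\mg^{-1}) \;\leq\; \tr(\mg^{-1}).
\]
Substituting the explicit formula $\tr(\mg^{-1})=\sum_{k=1}^{\md} \frac{1}{\|\vvp_{k}\|_2^2}$ from Lemma~\ref{lem:gram} and taking reciprocals (which reverses the inequality since both sides are positive) yields the claimed bound
\[
\lambda_{\min}(\mg) \;\geq\; \frac{1}{\sum_{k=1}^{\md} \frac{1}{\|\vvp_{k}\|_2^2}}.
\]

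There is essentially no obstacle here: the content is entirely in Lemma~\ref{lem:gram}, and the corollary is a two-line consequence via the trace bound on eigenvalues. The only small thing worth checking explicitly is that each $\|\vvp_k\|_2^2$ is strictly positive, which follows from the invertibility of $\mg$ (equivalently, the linear independence of $\vv_1,\dots,\vv_{\md}$), since otherwise $\vv_k$ would lie in the span of the other vectors and $\vvp_k$ would be zero.
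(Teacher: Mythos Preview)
Your proof is correct and follows exactly the approach the paper indicates: it explicitly says to leverage $\lambda_{\min}(\mg) \geq 1/\tr(\mg^{-1})$ together with the trace formula of Lemma~\ref{lem:gram}, which is precisely what you do. The only extra detail you add is the remark that invertibility forces each $\|\vvp_k\|_2^2>0$, which is a harmless and correct sanity check.
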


\subsection{Singular value lower bound for constraint matrix}
Here we show a lower bound for the minimum singular value of our constraint matrix $\ma$ for multidimensional PML. First in \Cref{lem:gramnorm}, we give a lower bound on the norm of orthogonal projection of each column onto span of remaining columns for the $\md$-level set matrix $\pp$ (defined in \Cref{sec:structured}). This result combined with \Cref{cor:grameigenlb} gives a lower bound for the minimum singular value for $\pp$. Then in \Cref{lem:eigenlower}, we lower bound the minimum singular value of $\ma$ in terms of minimum singular value of $\pp$ to achieve our desired lower bound.

Now, recall that $\bP$ is the set of all vectors $x \in \R^d$ where $x(k) = (1 + \epsok)^{1 - j}$ for some $j \in [1, \book]$, where $\book$ for each $k \in \otbook$ is such that $(1+\epsok)^{1-\book}\leq \frac{1}{2\n(k)^2}$ and $\boo=\prod_{k=1}^{\md} \book$. Further, the $\md$-level set matrix $\pvec \in \R^{\boo \times \md}$ is the defined as the matrix whose rows are exactly the elements of $\bP$. 

\begin{lemma}\label{lem:gramnorm}
For $\pp \in \R^{\boo \times \md}$ and $k \in \otmd$, if $\pp(k)$ is its $k$'th column, then the following inequality holds,
$$\|\ppp(k)\|_2^{2} \geq \Omega(\frac{\boo}{\log^2 \nkk})$$
where $\ppp(k)$ is the orthogonal projection of $\pp(k)$ onto $span(\pp(1),\dots,\pp(k-1),\pp(k+1),\dots,\pp(\md))^{\perp}$.
\end{lemma}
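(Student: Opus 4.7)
The plan is to exhibit an explicit vector in $\mathrm{span}(\pp(1),\dots,\pp(k-1),\pp(k+1),\dots,\pp(\md))^{\perp}$ whose norm we can bound from below, and then use the fact that the norm of the orthogonal projection is at least the norm of any such vector in the component decomposition of $\pp(k)$.

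First I would exploit the product (tensor) structure of the $\md$-level set matrix. Identify $\R^{\boo}\cong \R^{\book_{1}}\otimes\cdots\otimes\R^{\book_{\md}}$ and, for each $k$, let $e_{k}\in\R^{\book_{k}}$ be the vector with entries $(1+\epsok)^{1-j}$ for $j\in[1,\book_{k}]$. Then
\[
\pp(k)=\1_{\book_{1}}\otimes\cdots\otimes e_{k}\otimes\cdots\otimes\1_{\book_{\md}},
\]
so each column is a rank-one tensor that is constant along every coordinate except the $k$-th. Writing $\bar{e}_{k}\defeq\frac{1}{\book_{k}}\sum_{j}e_{k}(j)$ and $c_{k}\defeq e_{k}-\bar{e}_{k}\1_{\book_{k}}$ (so $\1^{T}c_{k}=0$), decompose
\[
\pp(k)=\bar{e}_{k}\,\1_{\boo}+C_{k},\qquad C_{k}\defeq\1_{\book_{1}}\otimes\cdots\otimes c_{k}\otimes\cdots\otimes\1_{\book_{\md}}.
\]

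The key step is to observe that $C_{k}$ is orthogonal to every other column $\pp(k')$, $k'\neq k$. Indeed, $\pp(k')$ has $e_{k'}$ in slot $k'$ and all-ones vectors elsewhere, so the tensor-inner-product factorizes and includes the factor $\1_{\book_{k}}^{T}c_{k}=0$. Since $\ppp(k)$ is the projection of $\pp(k)$ onto the orthogonal complement of $\mathrm{span}\{\pp(k'):k'\neq k\}$ and $C_{k}$ already lies in that complement, the components of $\pp(k)$ split orthogonally and yield $\|\ppp(k)\|_{2}^{2}\geq\|C_{k}\|_{2}^{2}=\|c_{k}\|_{2}^{2}\cdot\boo/\book_{k}$.

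It remains to lower bound $\|c_{k}\|_{2}^{2}$, which is the (unnormalized) variance of the geometric progression $(1+\epsok)^{1-j}$, $j=1,\dots,\book_{k}$. A direct geometric-series computation gives
\[
\sum_{j=1}^{\book_{k}}(1+\epsok)^{2(1-j)}=\Theta\!\left(\tfrac{1}{\epsok}\right),\qquad \book_{k}\bar{e}_{k}^{\,2}=\Theta\!\left(\tfrac{1}{\book_{k}\epsok^{2}}\right)=\Theta\!\left(\tfrac{1}{\epsok\log\nkk}\right),
\]
using $\book_{k}=\Theta(\log\nkk/\epsok)$ and $(1+\epsok)^{1-\book_{k}}\leq 1/(2\nkk^{2})$. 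Subtracting yields $\|c_{k}\|_{2}^{2}=\Theta(1/\epsok)$, hence
\[
\|\ppp(k)\|_{2}^{2}\;\geq\;\|C_{k}\|_{2}^{2}\;=\;\frac{\boo}{\book_{k}}\,\|c_{k}\|_{2}^{2}\;=\;\Omega\!\left(\frac{\boo}{\book_{k}\epsok}\right)\;=\;\Omega\!\left(\frac{\boo}{\log\nkk}\right),
\]
which is stronger than the claimed $\Omega(\boo/\log^{2}\nkk)$.

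The main obstacle is essentially bookkeeping: verifying the orthogonality $\langle \pp(k'),C_{k}\rangle=0$ for $k'\neq k$ (which is where the tensor view pays off) and being careful with the geometric-series estimates so that the subtracted term $\book_{k}\bar{e}_{k}^{\,2}$ is strictly smaller than $\sum_{j}e_{k}(j)^{2}$ by a constant factor; this works because the geometric progression puts almost all its squared mass at its largest entries, whereas the mean $\bar{e}_{k}$ is smaller by a factor of $\book_{k}$.
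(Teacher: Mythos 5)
Your proof is correct and takes essentially the same route as the paper: both exploit the block/tensor structure, namely that for $k'\neq k$ the column $\pp(k')$ is constant on each of the $\prod_{k'\neq k}\textbf{b}_{k'}$ blocks of size $\book$ along the $k$-th coordinate, so that the block-centered part $C_k$ of $\pp(k)$ already lies in $\mathrm{span}(\pp(k'):k'\neq k)^{\perp}$ and its norm lower bounds $\|\ppp(k)\|_2$. Your tensor-product formulation just makes the orthogonality $\langle C_k,\pp(k')\rangle=0$ explicit, and your geometric-series computation is tighter, yielding $\Omega(\boo/\log\nkk)$, which is a slightly stronger bound than the $\Omega(\boo/\log^2\nkk)$ the paper claims per block without detailed justification.
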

\begin{proof}
For each index $k \in \otmd$, there are multiple blocks each of size $\book$ and for each $k_{i}$th block $I_{k_{i}} \subset \otboo$ and $k' \in \otmd$ and $k'\neq k$,
\[
\pp_{I_{k_{i}}}(k')=c_{k'}\onevec_{\book}
\text{ and }
\pp_{I_{k_{i}}}(k)=(\frac{1}{2\nkk^2},\dots \frac{1}{2\nkk^2}(1+\epsok)^{i},\dots 1)
\]
for each scalar $c_{k'} \in \{\frac{1}{2\n(k')^2},\dots \frac{1}{2\n(k')^2}(1+\epso(k'))^{i},\dots 1\}$
and the number of blocks satisfying above equalities is equal to $\prod_{k' \in \otmd|k'\neq k}\boo_{k'}$. 

Note $span(\pp_{I_{k_{i}}}(1),\dots, \pp_{I_{k_{i}}}(k-1),\pp_{I_{k_{i}}}(k+1),\dots, \pp_{I_{k_{i}}}(\md))^{\perp}$ is same as $span(\onevec_{\book})^{\perp}$ and if $\ppp_{I_{k_{i}}}(k)$ is the orthogonal projection of $\pp_{I_{k_{i}}}(k)$ onto $span(\pp_{I_{k_{i}}}(1),\dots, \pp_{I_{k_{i}}}(k-1),\pp_{I_{k_{i}}}(k+1),\dots, \pp_{I_{k_{i}}}(\md))^{\perp} =span(\onevec_{\book})^{\perp}$, then:
$$\|\ppp_{I_{k_{i}}}(k)\|_2^2 \in \Omega(\frac{\book}{\log^2 \nkk})$$
The above result combined with number of such blocks gives:
$$\|\ppp(k)\|_2^2 \geq \Omega(\frac{\book}{\log^2 \nkk}) \times \prod_{k' \in \otmd|k'\neq k}\boo_{k'} \geq \Omega(\frac{\boo}{\log^2 \nkk})$$
\end{proof}

\begin{cor}\label{cor:eigenlbgram}
The minimum eigenvalue of matrix $\pp^{T}\pp$ is at least $\eigengram$.
\end{cor}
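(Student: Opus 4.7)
The plan is to obtain this bound as a direct consequence of the two preceding results: Corollary \ref{cor:grameigenlb}, which expresses a lower bound on $\lambda_{\min}$ of a Gram matrix in terms of the orthogonal projection norms $\|\vvp_k\|_2^2$, and Lemma \ref{lem:gramnorm}, which lower bounds these projection norms for the specific matrix $\pp$.

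Concretely, I would first identify $\pp^T \pp$ as the Gram matrix of the columns $\pp(1), \dots, \pp(\md) \in \R^{\boo}$. Provided this Gram matrix is invertible (which follows, e.g., because each column has a distinct nontrivial projection onto the orthogonal complement of the others by Lemma \ref{lem:gramnorm}, so $\|\ppp(k)\|_2^2 > 0$ for every $k$), Corollary \ref{cor:grameigenlb} applies and yields
\[
\lambda_{\min}(\pp^T \pp) \;\geq\; \frac{1}{\sum_{k=1}^{\md}\frac{1}{\|\ppp(k)\|_2^2}} ~.
\]

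Next, I would plug in Lemma \ref{lem:gramnorm}, which gives $\|\ppp(k)\|_2^{2} \geq \Omega\!\left(\frac{\boo}{\log^2 \nkk}\right)$ for each $k \in \otmd$. Inverting and summing yields $\sum_{k=1}^{\md}\frac{1}{\|\ppp(k)\|_2^2} \leq O\!\left(\frac{\sum_{k \in \otmd} \log^2 \nkk}{\boo}\right)$, and taking reciprocals gives the claimed bound $\lambda_{\min}(\pp^T \pp) \geq \Omega\!\left(\boo\big/\sum_{k \in \otmd}\log^2 \nkk\right)$.

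There is no real obstacle here; the two ingredients are already assembled. The only subtlety worth a sentence in the actual writeup is verifying invertibility of $\pp^T \pp$ so that Corollary \ref{cor:grameigenlb} is legitimately applicable, but this is handled by the strict positivity of each $\|\ppp(k)\|_2^2$ coming from Lemma \ref{lem:gramnorm}.
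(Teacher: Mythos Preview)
Your proposal is correct and matches the paper's intended argument: the corollary is stated without proof immediately after Lemma~\ref{lem:gramnorm}, so it is meant to follow by combining Corollary~\ref{cor:grameigenlb} with the projection-norm lower bound from Lemma~\ref{lem:gramnorm}, exactly as you describe. Your remark that strict positivity of each $\|\ppp(k)\|_2^2$ supplies the invertibility needed to invoke Corollary~\ref{cor:grameigenlb} is the only point worth making explicit.
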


\noindent Now lets consider our constraint matrix $\ma \in \R^{(\btt+1+\md) \times \boo \cdot \bttpo}$ for multidimensional PML, \footnote{Our matrix $\ma$ is a sparse matrix and matrix vector product with it can be computed in time $O(\boo \cdot \btt)$}

\[
   \ma=
  \left[ {\begin{matrix}
 &  1 \dots 1  &  0\dots 0 & 0 \dots 0 & 0 \dots 0  \\
  & 0 \dots 0 & 1 \dots 1 & 0 \dots 0 & 0 \dots 0 \\
& \vdots & \ddots &\ddots & &\\
& 0 \dots 0 & 0 \dots 0 & 1 \dots 1 & 0 \dots 0\\
& \pp^\top & \dots & \pp^\top & \pp^\top\\
  \end{matrix} } \right]
\]

\newcommand{\mi}{\textbf{I}}

\begin{lemma}\label{lem:eigenlower}
The eigenvalues of matrix $\ma \ma^\top$ are at least $\Omega(\frac{\boo}{\btt})$.
\end{lemma}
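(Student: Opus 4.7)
The plan is to adapt the one-dimensional argument of \Cref{lem:eigenlower} to the block-vector setting. First, I would compute $\ma\ma^\top$ explicitly in block form:
\[
\ma\ma^\top \;=\; \begin{bmatrix} \boo\, I_\btt & \onevec_\btt \vu^\top \\ \vu\, \onevec_\btt^\top & \bttpo\, \pp^\top\pp \end{bmatrix}, \qquad \vu \defeq \pp^\top \onevec_\boo \in \R^\md.
\]
The top-left block is diagonal because the $\btt$ indicator rows of $\ma$ are pairwise orthogonal with squared norm $\boo$; the bottom-right block equals $\bttpo\,\pp^\top\pp$ because the probability rows $\pp^\top$ appear once in each of the $\bttpo$ column blocks indexed by $j \in \ztbtt$; and entry $(j,k)$ of the cross block collects exactly one copy of $\vu_k = \|\pp(k)\|_1$.

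Next, I would reduce to a smaller $(\md+1)\times(\md+1)$ matrix. For any unit vector $v = (x,y) \in \R^{\btt+\md}$, write $x = x_\perp + (s/\sqrt{\btt})\onevec_\btt$ with $x_\perp \perp \onevec_\btt$. A direct expansion yields
\[
v^\top \ma\ma^\top v \;=\; \boo\|x_\perp\|^2 \;+\; (s,y)^\top M' (s,y), \qquad M' \defeq \begin{bmatrix} \boo & \sqrt{\btt}\,\vu^\top \\ \sqrt{\btt}\,\vu & \bttpo\,\pp^\top\pp \end{bmatrix}.
\]
Since $\|v\|^2 = \|x_\perp\|^2 + s^2 + \|y\|^2$, this implies $\lambda_{\min}(\ma\ma^\top) \geq \min\bigl(\boo,\, \lambda_{\min}(M')\bigr)$, so it suffices to establish $\lambda_{\min}(M') \geq \Omega(\boo/\btt)$.

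Then, I would invoke a secular-equation argument mirroring the one-dimensional proof. Eliminating $y$ in $M'(s,y) = \lambda(s,y)$ for $\lambda < \bttpo\sigma$, where $\sigma \defeq \lambda_{\min}(\pp^\top\pp)$, yields the scalar identity $\lambda = \boo - \btt\, \vu^\top(\bttpo\pp^\top\pp - \lambda I)^{-1}\vu$. Upper-bounding the right-hand side by $\btt\|\vu\|^2/(\bttpo\sigma - \lambda)$ and rearranging the resulting quadratic $(\boo - \lambda)(\bttpo\sigma - \lambda) \leq \btt\|\vu\|^2$, the same $\sqrt{1-a} \leq 1 - a/2$ manipulation used in \Cref{lem:eigenlower} gives
\[
\lambda_{\min}(M') \;\geq\; \frac{\boo\,\bttpo\,\sigma \;-\; \btt\,\|\vu\|^2}{\boo \;+\; \bttpo\,\sigma}.
\]
It then remains to instantiate quantitative bounds: \Cref{cor:eigenlbgram} furnishes $\sigma \geq \Omega(\boo/\sum_k \log^2 \nkk)$; a geometric-series evaluation (using that $\pp$ contains $\boo/\book$ rows with any prescribed $k$-th coordinate) shows $\|\pp(k)\|_1 = \Theta(\boo/(\book\epsok)) = \Theta(\boo/\log \nkk)$, hence $\|\vu\|^2 = O(\boo^2 \sum_k 1/\log^2 \nkk)$. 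Substituting, using the Cauchy--Schwarz identity $\bigl(\sum_k \log^2 \nkk\bigr)\bigl(\sum_k 1/\log^2 \nkk\bigr) = O(\md^2)$ for constant $\md$, and using $\bttpo \geq \btt$, the ratio simplifies to $\Omega(\boo/\btt)$.

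The main obstacle is this last simplification: both summands $\boo\bttpo\sigma$ and $\btt\|\vu\|^2$ are of order $\Theta(\boo^2\bttpo/\log^{O(1)}(n))$ up to $\md$-dependent constants, so careful constant-tracking---leveraging that $\md$ is constant and that $\bttpo$ is comparable to $\btt$---is needed to confirm the numerator is at least a positive constant fraction of the denominator, yielding the claimed $\Omega(\boo/\btt)$ bound.
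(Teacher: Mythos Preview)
Your block computation and the reduction to the $(\md+1)\times(\md+1)$ matrix $M'$ are correct, and you arrive at the same secular equation as the paper. The gap is in how you bound the secular term. Replacing every eigenvalue of $\pp^\top\pp$ by the smallest one, $\vu^\top(\bttpo\pp^\top\pp-\lambda I)^{-1}\vu \le \|\vu\|^2/(\bttpo\sigma-\lambda)$, leads to a numerator $\boo\bttpo\sigma-\btt\|\vu\|^2$ that need not be positive: with the bounds you quote, $\boo\bttpo\sigma = \Omega\bigl(\boo^2\bttpo/\sum_k\log^2\nkk\bigr)$ while $\btt\|\vu\|^2 = \Theta\bigl(\boo^2\btt\sum_k 1/\log^2\nkk\bigr)$, and their ratio carries the factor $\bigl(\sum_k\log^2\nkk\bigr)\bigl(\sum_k 1/\log^2\nkk\bigr)\ge\md^2$ on the wrong side. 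This is not a constant-tracking issue; the underlying obstruction is the condition number of $\pp^\top\pp$, which the one-dimensional version of this argument never confronts because there $\pp^\top\pp$ is a scalar.

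The paper instead expands the secular term in the singular-value basis of $\pp$ as $\sum_k \sigma_k^2 a_k^2/(\bttpo\sigma_k^2-\lambda)$, where $a_k$ are the coefficients of $\onevec_{\boo}$ along the left singular vectors. For $\lambda\le\tfrac12\sigma_1^2$ each denominator is at least $(\btt+\tfrac12)\sigma_k^2$, so the $\sigma_k^2$ cancel \emph{mode by mode} and the whole sum is at most $\sum_k a_k^2/(\btt+\tfrac12)\le \boo/(\btt+\tfrac12)$ by Bessel's inequality. Evaluating the secular function at $\lambda=\min\bigl(\tfrac12\sigma_1^2,\ \boo/(2(2\btt+1))\bigr)$ then shows it is below $\boo$, so the smallest root exceeds this value; combined with \Cref{cor:eigenlbgram} this gives $\Omega(\boo/\btt)$. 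The per-mode cancellation is precisely what your operator-norm bound discards.
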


\begin{proof}
Direct calculation shows that if $\vec{1}_{\btt} \in \R^{\btt}$, $\vec{1}_{\boo} \in \R^{\boo}$ are $\btt$,$\boo$ dimensional all ones vector respectively and $\mi_{\btt} \in \R^{\btt \times \btt}$ is the $\btt$-dimensional identity matrix then for all $x \in \R^{\btt}$ and $\alpha \in \R^{\md}$ we have
\[
\ma \ma^\top 
\left(
\begin{matrix}
x \\
\alpha
  \end{matrix} 
\right)
=
\left[ 
\begin{matrix}
 &  \boo \mi_{\btt}  &  \vec{1}_{\btt} \vec{1}_{\boo}^{T}\pp  \\
  &  (\vec{1}_{\btt} \vec{1}_{\boo}^{T}\pp)^\top  & \bttpo \pvec^{T} \pvec 
  \end{matrix} \right]
\left(
\begin{matrix}
x \\
\alpha
  \end{matrix} 
\right)
=
\left(
\begin{matrix}
\boo x +   \vec{1}_{\btt} (\vec{1}_{\boo}^{T}\pp \alpha) \\
 (\vec{1}_{\btt} \vec{1}_{\boo}^{T}\pp)^\top x +  \bttpo \pvec^{T} \pvec \alpha
  \end{matrix} 
\right)
~.
\]
Consequently $v = (x, \alpha)^T$ is an eigenvector of $\ma \ma^\top $ with eigenvalue $\lambda$ if and only if
\[
\boo x +   \vec{1}_{\btt} (\vec{1}_{\boo}^{T}\pp \alpha) = \lambda x
\text{ and }
 (\vec{1}_{\btt} \vec{1}_{\boo}^{T}\pp)^\top x +  \bttpo \pvec^{T} \pvec \alpha = \lambda \alpha
\]
Now if $x \perp \vec{1}_{\btt}$ then we see the $v$ is an eigenvector if and only if $\alpha \perp\pp^{T} \vec{1}_{\boo}$ in which case the eigenvalues are $\boo$. On the other hand if $x = \vec{1}_{\btt}$ then we see $v$ is an eigenvector of eigenvalue $\lambda$ if and only if 
\[
\lambda = \boo +  (\vec{1}_{\boo}^{T}\pp \alpha)
\text{ and }
\btt \pp^{T}\vec{1}_{\boo} +  \bttpo \pvec^{T} \pvec \alpha = \lambda \alpha ~.
\]
When this happens we either have $\lambda \geq \bttpo \lambda_{\min}(\pvec^{T} \pvec)$ or in the case of $\lambda < \bttpo \lambda_{\min}(\pvec^{T} \pvec)$ the following holds,
\[
\alpha=\btt (\lambda \mi_{\md} -\bttpo \pvec^{T} \pvec)^{-1}\pp^{T}\vec{1}_{\boo}
\text{ and }
\lambda = \boo +  \btt \vec{1}_{\boo}^{T}\pp (\lambda \mi_{\md} -\bttpo \pvec^{T} \pvec)^{-1}\pp^{T}\vec{1}_{\boo}
\]
To simplify the expression above, let the following be the SVD for $\pvec$, 
$$\pvec=\sum_{i=k}^{\md}\sigma_{k}\vu_{k}\vv_{k}^{T}$$
where $\sigma_{1}\leq \sigma_{2}\dots \leq \sigma_{\md}$ are singular values and $\sigma_1^2=\lambda_{\min}(\pp^{T} \pp)$. In this notation the eigenvalue decomposition of matrix $\pp (\lambda \mi_{\md} -\bttpo \pvec^{T} \pvec)^{-1}\pp^{T}$ is equal to:
$$\pp (\lambda \mi_{\md} -\bttpo \pvec^{T} \pvec)^{-1}\pp^{T}=\sum_{k=1}^{\md}\frac{\sigma_{k}^2}{\lambda-\bttpo\sigma_{k}^2}\vu_{k}\vu_{k}^{T}$$
Further we can write closed form expression for $\lambda$ in terms of singular values and left singular value vectors of matrix $\pp$.
\begin{equation}\label{eq:eigen}
\lambda + \btt\sum_{k=1}^{\md}\frac{\sigma_{k}^2(\vu_{k}^{T}\vec{1}_{\boo})^2}{\bttpo\sigma_{k}^2-\lambda}= \boo
\end{equation}
We use $\fnh(\lambda)$ to denote the expression on the left hand side,
$$\fnh(\lambda)=\lambda + \btt\sum_{k=1}^{\md}\frac{\sigma_{k}^2(\vu_{k}^{T}\vec{1}_{\boo})^2}{\bttpo\sigma_{k}^2-\lambda}$$
We know that $\lambda \geq 0$ because $\ma \ma^{T}$ is PSD. For $\lambda \in [0, \bttpo \lambda_{\min}(\pvec^{T} \pvec))$, $\fnh(\lambda)>0$ and is strictly increasing in $\lambda$. Further \Cref{eq:eigen} has a unique solution $\lambda^{*}$ (if a solution exists) in the interval $[0, \bttpo \lambda_{\min}(\pvec^{T} \pvec))$.

To give a lower bound of $\ell$ on $\lambda^{*}$, if suffices to find a $\lambda$, such that $\fnh(\lambda) < \boo$ and we get $\ell \geq \lambda$. For $\lambda =\min(\frac{1}{2} \sigma_{1}^2, \frac{\boo}{2(2\btt+1)})$, we have $\btt\sum_{k=1}^{\md}\frac{\sigma_{k}^2(\vu_{k}^{T}\vec{1}_{\boo})^2}{\bttpo\sigma_{k}^2-\lambda} \leq \boo \frac{\btt}{\btt+1/2}$, then later combined with $\lambda \leq \frac{\boo}{2(2\btt+1)}$, we get $\fnh(\lambda) <\boo$ and therefore $\lambda^{*}  \geq \min( \frac{1}{2} \sigma_{1}^2, \frac{\boo}{2(2\btt+1)})$.
Combining all cases together we have that $\lambda_{min}(\ma \ma^{T}) \geq \min(\boo, \bttpo \lambda_{min}(\pp^{T}\pp),\frac{1}{2} \sigma_{1}^2, \frac{\boo}{2(2\btt+1)})$. Combined with \Cref{cor:eigenlbgram} we have our result.

\end{proof}

\end{document}